\documentclass[12pt]{article}
\usepackage[right=1in,left=1in,top=1.1in,bottom=1.1in]{geometry}
\usepackage{graphicx}
\usepackage{url}
\usepackage{amsmath,amsthm}
\usepackage{amssymb}
\usepackage{amsfonts}
\usepackage{bbm}
\usepackage{dsfont}
\usepackage{engord}
\usepackage{float}
\usepackage{subfig}
\usepackage{pdflscape}
\usepackage{booktabs}
\usepackage{pgfplots}
\usepackage{enumitem}
\usepackage{natbib}
\usepackage{bibunits}
\usepackage{comment}
\usepackage{rotating}
\usepackage{array}
\usepackage{makecell}
\usepackage{multirow}
\usepackage{xcolor}
\setcitestyle{round}
\defaultbibliographystyle{plainnat}
\defaultbibliography{Exp}
\pgfplotsset{compat=1.14}
\pgfplotsset{every axis label/.append style={font=\tiny}}

\usepackage{xr}

\usepackage{setspace}
\usepackage{sectsty}
\sectionfont{\large}
\subsectionfont{\normalsize}
\subsubsectionfont{\normalsize}

\usepackage{chngcntr}
\numberwithin{equation}{section}
\usepackage{hyperref}
\hypersetup{colorlinks, citecolor=blue, filecolor=blue, linkcolor=blue, urlcolor=blue}

\newcommand{\Var}{\mathrm{Var}}

\theoremstyle{plain}
\newtheorem{theorem}{Theorem}[section]
\newtheorem{proposition}{Proposition}[section]
\newtheorem{lemma}{Lemma}[section]

\theoremstyle{definition}

\newtheorem{assumption}{Assumption}[section]
\newtheorem{example}{Example}[section]
\newtheorem{remark}{Remark}[section]

\begin{document}
\begin{bibunit}


\title{\vspace*{-2.5cm} When and How to Pilot: \\ Design Rules for Two-Wave Experiments}
\author{\large Juan C. Yamin\thanks{Brown University, Department of Economics, Robinson Hall, 64 Waterman Street, Providence, RI 02912. Email: \href{mailto:juan_yamin_silva@brown.edu}{\texttt{juan\_yamin\_silva@brown.edu}}. I am grateful for the generous advice and support of Toru Kitagawa, Soonwoo Kwon, and Jonathan Roth. I thank Andrew Chesher, Yuchen Hu, Bobby Pakzad-Hurson, Moritz Poll, Chen Qiu, Panos Toulis, Stefan Wager, Kohei Yata, and participants at the 2025 Advances with Field Experiments (AFE) Conference and Brown's Econometric Coffee for their helpful comments. R and Python implementations of the Conditional Minimax Regret rule, covering the two-arm, multi-arm, and stratified designs, are available at \url{https://github.com/juancyamin/cmrdesign}.}\\ { Brown University }}
\date{\vspace*{0.5cm} \small \today}

\bgroup
\let\footnoterule\relax

\begin{singlespace}
\maketitle

\begin{abstract}
\noindent Experimenters often run pilots, but how much a small pilot should shape the main-wave design has no settled answer. This paper shows how noisy pilot evidence should guide treatment assignment probabilities in two-wave experiments. Two canonical rules mark the extremes. Balanced assignment guards against worst cases but ignores evidence that one arm is noisier. Feasible Neyman allocation adapts, but with a finite pilot it can overreact to noise, producing arbitrarily large precision losses. I propose a Conditional Minimax Regret (CMR) rule that minimizes worst-case regret over a finite-sample confidence set for the treatment and control variances. CMR retains balance's worst-case protection with high probability, converges to the Neyman allocation as the pilot grows, and attains the minimax-regret rate up to constants. It extends to multi-arm and stratified designs, and simulations calibrated to four field experiments show it avoids feasible Neyman's severe small-pilot losses while matching its large-pilot gains.
\end{abstract}
\end{singlespace}
\vspace{0.15cm}
\noindent \textbf{Keywords:} Experimental design; Statistical decision theory; Minimax regret.\\
\noindent \textbf{JEL Codes:} C44, C90, C93, D81.
\thispagestyle{empty}

\clearpage
\egroup
\setcounter{page}{1}

\section{Introduction}

Nearly one in ten of the 10{,}905 trials registered in the AEA RCT Registry over
the past decade reports running a pilot.\footnote{Between 2016 and 2025, 991 of
the 10{,}905 first-registered trials (9.1\%) report a pilot, but only 24
(0.22\% of all trials) report using it to set assignment probabilities.
Because informal piloting often goes unmentioned in registrations, both counts
are plausibly lower bounds on actual use. Online Appendix~\ref{app:empirics}
provides details.} Experimenters use them to test logistics, refine survey instruments, and learn about the study population before launching the main wave. When a pilot measures outcomes under both treatment and control, its data can also inform a consequential design choice, the treatment assignment probability. By determining how many observations each arm receives, this probability directly affects the variance of the average treatment effect estimator and hence the experiment's power. The promise of pilot-based assignment is that it can use early evidence to allocate more observations where they are most valuable, and hence deliver more precise estimates from the same sample size.

If the potential-outcome variances under treatment and control were known, the variance-minimizing assignment rule would be the Neyman allocation \citep{neyman1992two}. It assigns a larger share of the experimental sample to the noisier arm, reflecting the principle that precision requires more observations where outcomes are more variable. In practice, these variances are unknown when the main-wave design is chosen. A natural rule estimates them from the pilot and substitutes the resulting sample variances into the Neyman formula, yielding the feasible Neyman allocation (FNA). When the pilot is large enough for the estimates to be reliable, this plug-in logic is well founded. For example, \citet{hahn2011adaptive} show that pilot-based estimates of conditional variances can deliver asymptotically efficient assignment probabilities when both pilot and main-wave samples grow large.

When the pilot is small, however, the premise of this plug-in logic fails, and the experimenter must decide how much the design should respond to variance estimates that are not necessarily trustworthy. Keeping the main wave evenly split ensures that neither arm ends up with too few observations, which caps the possible precision loss whatever the true variances. The cost is that it forgoes the gains the pilot could deliver. Following the pilot estimates as if they were true removes this protection, and can tilt the main wave too far, or toward the wrong arm, precisely when the pilot is least informative.

These small-sample concerns are empirically relevant, since in the AEA RCT Registry the median reported pilot has about 300 observations and roughly a tenth have fewer than 30, sizes at which variance estimates can remain very noisy. In fact, \citet{cai2024performance} show that a fixed pilot can leave the feasible Neyman allocation less precise than balance even when the main wave is large. This makes pilot-based assignment a finite-sample design problem, not a plug-in calculation.

I formalize the experimenter's choice as a finite-sample statistical decision problem. In a two-wave experiment with binary treatment, bounded potential outcomes, and no parametric restrictions, the experimenter observes a pilot and then chooses the main-wave assignment probability. The goal is to minimize the variance of the average treatment effect estimator. I evaluate decision rules by their regret, the excess variance relative to the infeasible Neyman allocation that would be chosen if the variance pair were known. Although the setting is deliberately simple, the question it isolates is general. Design rules should be optimized for the information actually available at the moment of design, not for an asymptotic world in which pilot estimates are treated as known. Pilot-based assignment is the tractable model case in which that principle can be developed exactly.

In this setup, the two canonical assignment rules mark the endpoints of the finite-pilot design problem. Balanced assignment, the no-data rule, is optimal under minimax risk and is uniquely minimax-regret optimal without a pilot. It therefore has a formal foundation for caution, but it gives the realized pilot no authority over the assignment. Feasible Neyman gives the pilot's point estimates full authority. This plug-in logic attains the large-pilot efficiency benchmark \citep{armstrong2022asymptotic}, but with a finite pilot nothing limits how far noisy estimates can tilt the assignment, so the precision loss can be arbitrarily large---in the extreme, an arm whose pilot observations all coincide is estimated to have zero variance and receives no main-wave observations at all. In short, balance refuses to move, while feasible Neyman can move too far. The finite-pilot question is how far the realized evidence should move the assignment from balance toward the feasible Neyman allocation.

Exact minimax regret is the natural ex ante criterion for finite-pilot adaptation \citep{savage1951theory, manski2021econometrics}. It asks, before the pilot is drawn, which complete mapping from pilot realizations to assignments has the smallest worst-case expected regret. By this criterion, with as few as two observations per pilot arm, the minimax-regret value falls strictly below its no-pilot level, so any exact minimax-regret rule must move the assignment away from balance for some pilot realizations. The criterion does not, however, yield an operational design rule for this problem. The regret of an assignment depends on the population only through its treatment and control variances, but the expected regret of a rule does not, since populations with identical variances can induce different pilot distributions. The minimax problem therefore ranges over joint distributions of the potential outcomes, not variance pairs, and no finite-dimensional reduction is known. The search over decision rules is also not finite-dimensional, since each candidate is a complete mapping from pilot realizations to assignments. A conditional approach starts instead from the single pilot actually drawn and asks which variance pairs the evidence has ruled out and how much movement from balance the surviving configurations justify.\footnote{\citet{limpark2026dynamic} characterize ex ante criteria under which constructing the rule realization by realization is optimal.}

The Conditional Minimax Regret (CMR) rule answers this question with a principle from the literature on inference for decision making, acting on what the data have not ruled out rather than on a point estimate \citep{manski2021econometrics, chernozhukov2025policy}. Once the pilot is observed, the only uncertainty that matters for the loss is which variance pair is true. CMR therefore builds a finite-sample confidence set for the treatment and control variances, the configurations still consistent with the realized pilot, and chooses the main-wave assignment that minimizes worst-case regret over that set. In the binary-treatment case the rule has a closed form, the Neyman formula applied to the midpoint of each arm's standard-deviation confidence interval. When the set is wide, CMR stays close to balance. As the set contracts around the true variances, CMR moves toward the Neyman allocation. Alongside the assignment, CMR reports a certificate, a bound on the regret the chosen assignment can incur \citep{andrews2025certified}. Movement from balance is thus earned by what the pilot rules out, not imposed by a point estimate.

The CMR rule comes with finite-sample and large-pilot guarantees. In finite samples, the certificate bounds the realized regret of the chosen assignment with probability at least \(1-\alpha\), turning a loss the experimenter cannot observe into a number computed from the pilot alone. This certificate is never larger than the guarantee available without a pilot, and it tightens as soon as the pilot rules out one of the adversarial configurations that make a large tilt dangerous. As the pilot grows, the caution that holds CMR near balance relaxes, and at interior variance pairs the rule converges to the infeasible Neyman allocation at the same rate as feasible Neyman. Thus asymptotic efficiency and finite-sample safety hold together rather than being traded off. Specifically, CMR's worst-case expected regret matches the minimax-regret rate up to constants, the best rate any pilot-based rule can achieve.

The CMR construction is flexible, and it remains simple to compute across the designs experimenters actually run. In multi-arm experiments with a shared control and in stratified designs, the assignment probability becomes an allocation vector, and CMR again moves from the no-information allocation toward the Neyman allocation as the pilot narrows the confidence set. The same recipe also adapts to the outcomes themselves. For binary outcomes, exact confidence sets sharpen the rule precisely at the small pilots where it matters most. For unbounded outcomes, a kurtosis condition can replace the assumption of a known outcome bound. The construction carries over to designs that target several primary outcomes and to settings in which the pilot observes only a short-run proxy for the outcome that defines the loss.

I also ask when a pilot is worth running for design in the first place, and how large it should be. Observations spent on the pilot could have gone to the main wave instead, so adaptation has to repay that diversion. A worthwhile pilot must be large enough for CMR to move away from balance at all, yet small enough that even perfect adaptation recovers the observations it cost. Folding pilot observations into the final estimator softens this trade-off. No known method computes the worst-case optimal design, but a simple rule approximates it. For a total budget of \(n\) observations, a balanced pilot of total size roughly \(n^{2/3}\)---about 100 observations when \(n=1{,}000\)---followed by CMR comes within a constant factor of the best worst-case performance available to any balanced pilot size paired with any pilot-based assignment rule. Larger experiments therefore run larger pilots but devote a shrinking share of the budget, of order \(n^{-1/3}\), to piloting.

Simulations calibrated to four well-known field experiments \citep{miguel2004worms, bertrand2004emily, thornton2008demand, abel2020value} compare the rules at pilot sizes from 30 to 500, a range that brackets the registry's median reported pilot, in two-arm, shared-control multi-arm, and stratified designs. I measure performance by the percentage increase in the estimator's variance relative to the infeasible Neyman allocation. A loss of one percent means the experimenter would need one percent more main-wave observations to reach the same precision. By this measure, balanced assignment loses at most 0.84 percent in the two-arm designs, 1.76 percent in the multi-arm design, and 5.59 percent in the stratified design, and closing these gaps is the most adaptation can deliver. Feasible Neyman improves as the pilot grows, but with a pilot of 30 its mean loss is infinite in six of the nine calibrations and severe where finite, reaching 72 percent in the stratified design. CMR stays at balance until the confidence set becomes informative, and its mean loss never exceeds balance's by more than 0.09 percentage points in any reported design or pilot size. With 500 pilot observations, it has captured 81 percent of the attainable gain in the multi-arm design and 47 percent in the stratified design, against 77 and 31 percent for feasible Neyman. In the two-arm designs, where unequal assignment can deliver little additional precision, CMR's value is protection against overreaction. In the richer designs it delivers both the protection and the gains.

The broader contribution is a template for building statistical uncertainty into experimental design itself. Everything in the paper runs on three features of the assignment problem. First, the unknown parameters enter the design loss only through a low-dimensional vector of variances. Second, the loss is linear in each variance, which makes regret convex, so worst-case regret over any rectangle of plausible values is attained at one of its corners. Third, a small pilot delivers a finite-sample confidence bound for each variance at any prespecified level. Any design problem with these features admits the same treatment, a decision paired with a finite-sample certificate for the loss it can incur, and every extension in this paper is an instance of that recipe.

Two literatures frame these contributions. The first is adaptive experimental design and pilot-based assignment. \citet{hahn2011adaptive} set assignment probabilities from pilot data, \citet{tabord2023stratification} builds adaptive stratification rules, and \citet{bai2022optimality} and \citet{cytrynbaum2021optimal} design matched and stratified experiments from pre-experimental information, with \citet{armstrong2022asymptotic} characterizing the efficiency frontier these designs target. \citet{higbee2024experimental} studies a Bayes version of the two-wave problem, designing the main wave to maximize the prior-averaged welfare of a downstream policy choice. These guarantees are asymptotic. They identify the efficient design once the pilot estimates the variances reliably, but they are silent when the pilot is too small to be trusted. Building on the finite-pilot warning of \citet{cai2024performance}, I formulate pilot-based assignment as a finite-sample decision problem and derive a computable post-pilot rule that pairs the treatment share with a regret certificate.\footnote{A growing literature studies adaptive Neyman allocation in sequential or batched experiments, where the assignment is updated during the experiment \citep{dai2023clip, zhao2024adaptive}. The rule in this paper chooses the main-wave assignment once, after a single finite pilot.} The contribution is a characterization of how far noisy pilot evidence should move a design while the variances remain uncertain.

The second is inference for decision making. \citet{manski2021econometrics} examines as-if decisions with set estimates, which act on the parameter values a set estimate has not ruled out. \citet{chernozhukov2025policy} select policies by balancing estimated welfare against estimation risk, and \citet{andrews2025certified} pair recommended decisions with high-probability bounds on their loss. In these settings, the data are already in hand and the decision is which policy to implement. The decision studied here comes one stage earlier. The action is the main-wave assignment probability, and the loss is the precision of an estimator whose data do not yet exist.\footnote{The regret criterion follows the statistical treatment choice tradition of \citet{manski2004statistical}. The closest design papers are \citet{manski2016sufficient}, who choose trial size, and \citet{hu2024minimax}, who choose whom to enroll in a Gaussian model with known variances.} This structure yields a rule that is closed form in the binary case, immune to the failures of plug-in rules, and accompanied by a certificate for the realized pilot.

The rest of the paper is organized as follows. Sections~\ref{sec:two_wave_design_problem} and~\ref{sec:benchmark_assignment_rules} set up the decision problem and the canonical rules, Section~\ref{sec:conditional_minimax_regret_rule} develops the CMR rule and its guarantees, Section~\ref{sec:extensions} treats multi-arm and stratified designs, Section~\ref{sec:sims} reports the calibrated simulations, and Section~\ref{sec:conclusion} concludes. Online Appendix~\ref{sec:appendix_extensions} extends the rule to binary, unbounded, multiple, and delayed outcomes, and Online Appendix~\ref{sec:appendix_when_to_pilot} analyzes when a pilot is worth running and how large it should be.

\section{The Two-Wave Design Problem}
\label{sec:two_wave_design_problem}

\subsection{Experimental Setup}
\label{sub:experimental_setup}

\paragraph{Population, Outcomes, and Estimand.}

An experimenter aims to estimate the average treatment effect (ATE) of a binary treatment in a target population. The population is characterized by the joint distribution \(F\) of the potential outcomes \((Y(1),Y(0))\), where \(Y(1)\) denotes the outcome under treatment and \(Y(0)\) denotes the outcome under control. Throughout, we assume that potential outcomes take values in the unit interval, \(Y(d)\in[0,1]\) for each treatment status \(d\in\{0,1\}\), so \(F\in\mathcal F=\mathcal P([0,1]^2)\), the collection of all Borel
probability measures on the compact set \([0,1]^2\). Normalizing outcomes to \([0,1]\) is without loss of generality whenever outcomes are supported on a finite interval, and the particular choice of zero and one is adopted only to simplify notation and exposition. Online Appendix~\ref{sub:unbounded_continuous_outcomes} relaxes boundedness, replacing the known outcome bound with a bound on kurtosis.

The estimand of interest is the ATE, \(\operatorname{ATE}(F)=\mathbb E_F[Y(1)-Y(0)]\), where the expectation is taken with respect to \(F\in\mathcal F\). To increase the precision of the ATE estimator, the experiment proceeds in two waves. A smaller pilot precedes a larger main wave and serves only to inform its design. The single design choice is the main-wave assignment probability. Lower estimator variance translates directly into higher power and a smaller minimum detectable effect at any target power, so variance-minimizing assignment makes the experiment as informative as possible about the ATE.

\paragraph{Pilot Wave.}

The pilot is a sample of \(M\) units whose potential outcomes
\(\{(Y_i(1),Y_i(0))\}_{i=1}^M\) are i.i.d. draws from \(F\). For each unit \(i\), the
experimenter observes the treatment indicator \(D_i\in\{0,1\}\) and the realized
outcome \(Y_i=D_iY_i(1)+(1-D_i)Y_i(0)\). Treatment in the pilot is assigned by a completely randomized design (CRD).\footnote{The same arguments extend to Bernoulli assignment conditional on both realized arm sizes being at least two.} In the pilot CRD, exactly \(M_1\) units are assigned to treatment and the remaining
\(M_0\) to control, with \(M_1+M_0=M\). The assignment vector \((D_1,\ldots,D_M)\) is sampled uniformly at random from all binary vectors with exactly \(M_1\) treated units. 

The pilot realization is denoted by \(\omega=\{(Y_i,D_i)\}_{i=1}^M\), and the pilot
sample space is \(\Omega=([0,1]\times\{0,1\})^M\), with the fixed-arm-size restriction
imposed by the pilot design. Each pilot arm holds at least two units, \(M_1,M_0\geq 2\), so the within-arm sample
variances are well defined. This minimum is maintained wherever the pilot is fixed in
advance, and is relaxed only in Appendix~\ref{sec:appendix_when_to_pilot}, where the pilot size is
itself the object of choice. The extensions in Section~\ref{sec:extensions} impose the
same minimum on each relevant cell.

\paragraph{Main Wave.}

The main wave is a new sample of \(N\) units, independent of the pilot, whose potential
outcomes \(\{(Y_j(1),Y_j(0))\}_{j=1}^N\) are i.i.d. draws from \(F\). The main wave also
uses a completely randomized design. For a treatment assignment probability \(\pi\in(0,1)\), the
design assigns \(N_1=N\pi\) units to treatment and \(N_0=N(1-\pi)\) units to
control.\footnote{We suppress integer constraints on \(N\pi\) throughout, with \(\pi\)
understood as the target treatment assignment probability implemented by the closest feasible
fixed-arm design.} The experimenter estimates the ATE using the difference in means,
\(\widehat{\operatorname{ATE}}=\bar Y_1-\bar Y_0\), where \(\bar Y_d\) is the main-wave
average outcome among units assigned to arm \(d\).

\subsection{Main-Wave Variance and the Neyman Allocation}
\label{sub:main_wave_variance_neyman_allocation}

The marginal potential-outcome variance in arm \(d\) is
\(\sigma_d^2(F)=\operatorname{Var}_F(Y(d))\), abbreviated \(\sigma_d^2\) when
\(F\) is clear from context. Under the main-wave CRD,
\(\operatorname{Var}_F(\widehat{\operatorname{ATE}})=\sigma_1^2/N_1+\sigma_0^2/N_0\), with no cross-arm
covariance term because treatment and control are measured on independent units. With
\(N_1=N\pi\) and \(N_0=N(1-\pi)\), this equals \(V(\pi,\sigma_1^2,\sigma_0^2)/N\), where \(V(\pi,\sigma_1^2,\sigma_0^2)=\sigma_1^2/\pi+\sigma_0^2/(1-\pi)\) for \(\pi\in(0,1)\).
The main-wave sample size \(N\) does not depend on the assignment, so the sampling
variance is proportional to \(V\) with fixed constant \(1/N\), and minimizing the
variance over \(\pi\) is the same as minimizing \(V\), which we take as the variance
criterion for the assignment problem. Boundary recommendations, \(\pi\in\{0,1\}\), are
assigned infinite variance by convention, since assigning all main-wave units to one arm
leaves the other arm mean unobserved.

Minimizing \(V(\cdot,\sigma_1^2,\sigma_0^2)\) over \(\pi\in(0,1)\) yields the Neyman
allocation \citep{neyman1992two}, which for strictly positive variances is \(\pi^*(\sigma_1^2,\sigma_0^2)=\sigma_1/(\sigma_1+\sigma_0)\).
The Neyman allocation gives the larger share of the main wave to the arm with the more
variable outcomes, because that arm's sample mean is noisier and benefits more from a
larger allocation. 

The benchmark is the smallest variance an interior assignment can achieve,
\begin{equation}
\label{eq:neyman_value}
    V^*(\sigma_1^2,\sigma_0^2)=\inf_{\pi\in(0,1)}V(\pi,\sigma_1^2,\sigma_0^2)
    =(\sigma_1+\sigma_0)^2,
\end{equation}
attained at the Neyman allocation when both variances are strictly positive. Writing
\(V^*\) as an infimum keeps it well defined when one variance is zero, since the
minimizer then lies on the boundary and is only approached from the interior.\footnote{With
\(\sigma_0=0\), for instance, the Neyman formula returns the boundary value
\(\pi^*=1\). When both variances are zero, every interior assignment attains the value
zero, and we normalize \(\pi^*(0,0)=1/2\).} Because \(V^*\) depends on the unknown
variance pair, it is the infeasible benchmark against which feasible assignment rules are
evaluated.

\begin{remark}[How much adaptation can gain]
\label{rem:adaptation_ceiling}
The benchmark caps what any pilot-based rule can gain over balance. For every variance pair with \(\sigma_1+\sigma_0>0\), \(V(1/2,\sigma_1^2,\sigma_0^2)/V^*(\sigma_1^2,\sigma_0^2)=2(\sigma_1^2+\sigma_0^2)/(\sigma_1+\sigma_0)^2\le 2\), with equality exactly when one variance is zero. Even perfect adaptation can therefore at most halve the estimator's variance, equivalent to doubling the effective sample size, and realistic configurations deliver far less. Balance's excess variance relative to the benchmark is \((\sigma_1-\sigma_0)^2/(\sigma_1+\sigma_0)^2\), about eleven percent when one standard deviation is twice the other and four percent when it is fifty percent larger. The ceiling rises in the designs of Section~\ref{sec:extensions}, where the no-information allocation must divide the sample more finely. With \(K\) treatment arms sharing a control, perfect adaptation can cut variance by up to a factor of \(K+\sqrt K\) rather than two. In a stratified design the factor is two divided by the smallest stratum share, so with \(S\) equally sized strata it is \(2S\).
\end{remark}

\subsection{States, Actions, and Pilot-Based Rules}
\label{sub:states_actions_pilot_based_rules}

The two-wave design problem is a statistical decision problem. The unknown state of the
world is the population distribution \(F\). It both
fixes the main-wave variance of any assignment and generates the pilot data through which
the experimenter learns about that variance. Because \(F\) is left unrestricted beyond
bounded support, the parameter is the full distribution and the parameter space is the
nonparametric class \(\mathcal F\). However, the main-wave variance depends on \(F\)
only through the pair of marginal potential-outcome variances
\(\theta(F)=(\sigma_1^2(F),\sigma_0^2(F))\), which ranges over \(\Theta=[0,1/4]^2\). This
pair is the payoff-relevant parameter.

Crucially, the variance pair governs the payoff but not the data. At a fixed assignment, the loss is
the same under any two distributions sharing a variance pair, since it depends on \(F\)
only through \(\theta(F)\). The expected loss of a rule can still differ between them,
because the rule chooses its assignment from the pilot, whose distribution depends on the
arm-specific marginal distributions of \(Y(1)\) and \(Y(0)\) rather than on the variance
pair alone. Under the fixed-arm pilot design each unit reveals only one potential outcome,
so the pilot distribution depends on \(F\) only through these two marginals, and never on the
within-unit dependence between \(Y(1)\) and \(Y(0)\). The statistical experiment is thus
indexed by the pair of marginal outcome distributions and not by the variance pair.

The experimenter's action is the main-wave assignment probability \(\pi\in\mathcal A=[0,1]\).
The boundary actions \(\pi\in\{0,1\}\) are kept in \(\mathcal A\) so that the framework can
evaluate rules that would place the entire main wave in one arm, and the infinite-variance
convention of Subsection~\ref{sub:main_wave_variance_neyman_allocation} applies to them.

The experimenter does not know the variance pair and estimates it from the pilot. For each
arm \(d\), the natural estimate of \(\sigma_d^2\) is the within-arm sample variance
\begin{equation*}
    \hat\sigma_d^2(\omega)
    =\frac{1}{M_d-1}\sum_{i:D_i=d}(Y_i-\bar Y_d)^2,
    \qquad
    \bar Y_d=\frac{1}{M_d}\sum_{i:D_i=d}Y_i. 
\end{equation*}
Under the fixed-arm pilot design, $\hat\sigma_d^2$ is unbiased for $\sigma_d^2$. The main analysis focuses on variance-based decision rules, meaning rules that use the pilot only through the two within-arm sample variances. Formally, the class of
decision rules is
\begin{equation*}
    \mathcal D
    =\bigl\{\,p:\Omega\to\mathcal A
    \;\big|\;
    p(\omega)=g\bigl(\hat\sigma_1^2(\omega),\hat\sigma_0^2(\omega)\bigr)
    \text{ for some measurable } g:\mathbb R_{+}^2\to\mathcal A\,\bigr\}.
\end{equation*}
The class includes balance, feasible Neyman, trimmed feasible Neyman, and the Conditional Minimax Regret rules developed below. This focus is deliberate, reflecting how pilot-based assignment is typically implemented in practice and keeping the finite-sample design problem low-dimensional enough to yield tractable rules. However, the restriction is substantive since, relative to the unrestricted class $\mathcal{D}_0 = \{p \in \mathcal{A}^\Omega \mid p \text{ is measurable}\}$, features of the full pilot beyond $(\hat\sigma_1^2,\hat\sigma_0^2)$ may contain additional information about $(\sigma_1^2,\sigma_0^2)$ in the nonparametric model.

Once a decision rule is fixed, the realized pilot outcome $\omega$ remains random under $F$. For each $F\in\mathcal F$, let $P_F$ denote the distribution of $\omega$ induced by the fixed-arm pilot CRD with arm sizes $(M_1,M_0)$, suppressing this dependence in the notation. The statistical experiment is the family $\{P_F:F\in\mathcal F\}$ on $\Omega$. 

\subsection{Loss and Regret}
\label{sub:loss_and_regret}

The variance criterion \(V(\pi,\sigma_1^2,\sigma_0^2)\) is the loss from using treatment assignment probability \(\pi\) when the variance pair is \(\theta=(\sigma_1^2,\sigma_0^2)\). Regret compares this loss with the infeasible benchmark in \eqref{eq:neyman_value}. For an interior assignment probability, $r(\pi,\theta) = V(\pi,\sigma_1^2,\sigma_0^2) - V^*(\sigma_1^2,\sigma_0^2)$.
For boundary assignments, regret is infinite by the same convention used for \(V\). For a pilot-based rule \(p\in\mathcal D\), the realized regret after observing pilot sample \(\omega\) is \(r(p(\omega),\theta(F))\). Regret measures excess main-wave sampling variance relative to the infeasible Neyman benchmark, so it is always nonnegative. For nondegenerate variance pairs with strictly positive variances, regret is zero exactly when the chosen assignment equals the Neyman allocation. At degenerate variance pairs, regret remains well defined because the infeasible benchmark is an infimum over interior assignments.

For every interior assignment \(\pi\in(0,1)\), regret has the equivalent representations
\begin{equation}
\label{eq:regret_allocation_mistake_baseline}
    r(\pi,\theta)
    =
    (\sigma_1+\sigma_0)^2
    \frac{\bigl(\pi-\pi^*(\theta)\bigr)^2}
    {\pi(1-\pi)}
    =
    \frac{\bigl((1-\pi)\sigma_1-\pi\sigma_0\bigr)^2}{\pi(1-\pi)},
\end{equation}
with the normalization \(\pi^*(0,0)=1/2\) in the first. Expression~\eqref{eq:regret_allocation_mistake_baseline} separates the size of the assignment mistake from the penalty attached to it. The squared distance \((\pi-\pi^*(\theta))^2\) to the Neyman target is weighted by \(1/[\pi(1-\pi)]\), so a given mistake is most costly near the boundary, where one arm is barely sampled, and the factor \((\sigma_1+\sigma_0)^2\) scales the loss with the outcome noise. The second form, the square of an expression affine in the two standard deviations, is the version the analysis of Section~\ref{sec:conditional_minimax_regret_rule} exploits.

\section{Canonical Assignment Rules}
\label{sec:benchmark_assignment_rules}

This section studies the two assignment rules that current practice and the existing literature bring to the finite-pilot design problem. The first canonical rule is complete balance, which ignores the pilot and assigns treatment with probability one half. Although simple, balance has a decision-theoretic foundation as the solution to the minimax-risk problem. The second is the feasible Neyman allocation, which uses the pilot in the most direct way, replacing the unknown potential-outcome variances with their pilot estimates.

These two rules expose the central tension of the design problem. Balance is safe but does not adapt, while feasible Neyman adapts but is not safe. We then turn to exact minimax regret, the standard decision-theoretic criterion for resolving this tension. The criterion identifies the best worst-case performance any pilot-based rule can achieve, but it is not itself an operational assignment rule, because the game between the experimenter and Nature ranges over joint distributions of the potential outcomes rather than variance pairs.

\subsection{Minimax Risk and Balanced Assignment \label{sub:minimax_risk_balanced_assignment}}

Intuitively, minimax risk asks for the safest assignment rule when the true variance configuration is unknown, even if the pilot is misleading. Formally, the experimenter may choose any pilot-based rule \(p\in\mathcal D_0\), and each rule is evaluated by its frequentist risk, $\mathcal{R}(p,F) = \mathbb E_{P_F} \!\left[ V\!\bigl(p(\omega),\sigma_1^2,\sigma_0^2\bigr) \right]$, the expected variance of the main-wave ATE estimator when the pilot is generated under \(F \in \mathcal F\) and assignment follows rule \(p\). The minimax-risk criterion ranks rules by their worst-case risk over the model class \(\mathcal F\). The following proposition shows that complete balance is a minimax-risk rule.

\begin{proposition}[{\normalfont\hyperlink{proof:minimax_risk}{Minimax risk}}]
\label{prop:minimax_risk} 
The minimax-risk problem $\inf_{p \in \mathcal D_0} \sup_{F \in \mathcal F} \mathcal R(p,F)$ has value \(1\), and the constant balanced rule \(p_{\mathrm{mm}}(\omega) = 1/2\) for all \(\omega \in \Omega\) attains this value.
\end{proposition}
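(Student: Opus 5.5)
The plan is to prove two matching bounds: an upper bound showing that the balanced rule never has risk above \(1\), and a lower bound showing that no rule in \(\mathcal D_0\) can guarantee worst-case risk below \(1\).

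\textbf{Upper bound.} For the constant rule \(p_{\mathrm{mm}}\equiv 1/2\), the risk does not depend on the pilot:
\[
\mathcal R(p_{\mathrm{mm}},F)=V(1/2,\sigma_1^2,\sigma_0^2)=2(\sigma_1^2+\sigma_0^2).
\]
Because \(Y(d)\in[0,1]\), Popoviciu's inequality gives \(\sigma_d^2\le 1/4\), and this bound is attained by the two-point law on \(\{0,1\}\) with equal weights. Hence \(\sup_F\mathcal R(p_{\mathrm{mm}},F)=2\cdot(1/4+1/4)=1\), with the supremum attained at any \(F\) whose two marginals are both Bernoulli\((1/2)\).

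\textbf{Lower bound.} For an arbitrary measurable \(p\in\mathcal D_0\), I will find a distribution \(F\) at which \(\mathcal R(p,F)\ge 1\). The natural choice is the least favorable one from the upper bound. Take \(F^\star\) with Bernoulli\((1/2)\) marginals; any coupling will do, say independence. The pilot law \(P_{F^\star}\) is fixed, so \(p(\omega)\) is simply a random variable on \([0,1]\). Pointwise,
\[
V(\pi,1/4,1/4)=\frac14\Bigl(\frac1\pi+\frac1{1-\pi}\Bigr)=\frac{1}{4\pi(1-\pi)}\ge 1,
\]
with equality only at \(\pi=1/2\). At the boundary actions the value is \(+\infty\) by convention, which is also \(\ge 1\). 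Taking expectations gives \(\mathcal R(p,F^\star)\ge 1\), and therefore \(\sup_F\mathcal R(p,F)\ge 1\) for every \(p\). Combined with the upper bound, the value of the problem is \(1\) and \(p_{\mathrm{mm}}\) attains it.

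\textbf{Main obstacle.} The only real subtlety is measure-theoretic bookkeeping. The integrand may equal \(+\infty\) on the event \(\{p(\omega)\in\{0,1\}\}\), so the expectation has to be read as an extended-real integral of a nonnegative function. That integral is well defined, and the pointwise bound \(\ge 1\) carries over to it. It is also worth noting explicitly why no data-dependent rule can do better: at the least favorable \(F^\star\), balance is pointwise optimal for every pilot realization, so any information in the pilot is irrelevant there. No minimax-theorem machinery is needed, because a single least favorable \(F\) already supports the lower bound for all rules at once.
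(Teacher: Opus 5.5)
Your proposal is correct and follows the paper's proof: the same upper bound \(\mathcal R(p_{\mathrm{mm}},F)=2\sigma_1^2+2\sigma_0^2\le 1\), and the same lower bound from the Bernoulli\((1/2)\) least favorable distribution using \(1/[4\pi(1-\pi)]\ge 1\). The only cosmetic difference is that you treat boundary actions through a single pointwise extended-real bound, whereas the paper splits into the case where \(p\) hits \(\{0,1\}\) with positive probability and the almost-surely-interior case.
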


Proofs of the main results are collected in Appendix~\ref{sec:proofs_main_results}. Proofs of the remaining main-text results are collected in Online Appendix~\ref{sec:online_appendix_main_text_proofs}. The intuition behind the proof is that the least favorable distribution makes both arms as noisy as possible. With outcomes bounded in $[0,1]$, this happens when both potential outcomes are Bernoulli with success probability one half, so $\sigma_1^2=\sigma_0^2=1/4$. Under this distribution, the two arms are equally variable, so the best assignment is the balanced split $\pi=1/2$, and even it yields a main-wave variance of $1$. A rule that lets the pilot tilt the main wave is reacting to noise, protecting one arm only by taking observations from an equally noisy other. This single distribution therefore prevents every rule from having worst-case risk below $1$. Balance, for its part, never exceeds this value, because its risk is $2\sigma_1^2+2\sigma_0^2$ and each variance is capped at $1/4$. The two bounds meet at $1$, so balance is minimax.

This optimality gives balance a decision-theoretic foundation.\footnote{With fixed potential outcomes and no pilot, \citet{hull2026optimal} shows that, within a broad class of affine estimators, balanced complete randomization is strictly suboptimal under worst-case mean squared error. The adversary exploits negative correlation between assignments, a force averaged out by the i.i.d.\ sampling assumed in this paper.} It is not a default adopted by convention but the symmetric hedge against the worst-case distribution, which can make either arm maximally noisy. The same result, however, reveals the limits of the criterion. Its least favorable distribution is the same whatever the pilot shows, so minimax risk gives the pilot no value, and balance remains minimax-optimal however strongly the data suggest that one arm is noisier. The reason is that minimax risk evaluates total main-wave variance, most of which no assignment can avoid. The worst case is then driven by this unavoidable component rather than by the part the pilot can improve, a well-understood conservativeness of minimax criteria \citep{berger1985statistical}. Analogous no-data results arise in statistical treatment choice, where maximin welfare rules retain the status quo whatever the data show, in point-identified problems \citep{manski2004statistical} and under partial identification \citep{montielolea2026decision}.

\subsection{Asymptotic Efficiency and the Feasible Neyman Allocation}
\label{sub:asymptotic_efficiency_feasible_neyman_allocation}

The second canonical rule takes the opposite view from minimax risk. Instead of
asking for the safest rule when the pilot may mislead, it treats the pilot as
a source of estimates for the variance pair that determines the Neyman
allocation. If those estimates are accurate, the natural choice is to plug
them into the Neyman formula. The feasible Neyman allocation does so regardless, treating the sample
variances as if they were known even when the pilot is small and noisy. With
\(\hat\sigma_d(\omega)=\sqrt{\hat\sigma_d^2(\omega)}\) the pilot standard deviation
in arm \(d\), the rule is $\hat p(\omega) = \frac{\hat\sigma_1(\omega)}{\hat\sigma_1(\omega)+\hat\sigma_0(\omega)}$
whenever the two are not both zero, with \(\hat p(\omega)=1/2\) when they are. It assigns the larger share of the main wave to the arm whose pilot outcomes are more
variable, the tilt the Neyman allocation prescribes when the true standard
deviations are known. The appeal of the rule is asymptotic. Outcomes are bounded, so the pilot sample variances are consistent, and whenever \(\sigma_1+\sigma_0>0\) the assignment \(\hat p(\omega)\) converges to the Neyman allocation \(\pi^*(\theta)\). Feasible Neyman thus inherits the precision of the infeasible Neyman allocation, which no design can improve on in large samples \citep{hahn1998role,hahn2011adaptive,armstrong2022asymptotic}.

This large-sample appeal, however, masks a finite-sample fragility that comes from
the ratio form of the rule. On the event that both pilot
standard deviations are positive, substituting \(\hat p(\omega)\) into \(V\) gives
\[
V\!\left(\hat p(\omega),\sigma_1^2,\sigma_0^2\right)
=
\sigma_1^2
\left(
1+\frac{\hat\sigma_0(\omega)}{\hat\sigma_1(\omega)}
\right)
+
\sigma_0^2
\left(
1+\frac{\hat\sigma_1(\omega)}{\hat\sigma_0(\omega)}
\right).
\]
Each term pairs a true variance with a ratio of pilot standard deviations that
scales it. When the pilot estimates are close to the truth and both true standard deviations are positive, the two ratios approach
\(\sigma_0/\sigma_1\) and \(\sigma_1/\sigma_0\), and the expression collapses to
\(V^*=(\sigma_1+\sigma_0)^2\). When one pilot standard deviation is far too small, the ratio that divides by it
grows without bound and turns a true variance of at most \(1/4\) into an
arbitrarily large contribution to \(V\). The loss comes from trusting a small estimate, not from any large variance in the population. When a pilot standard deviation is exactly zero, \(\hat p(\omega)\) lands on the boundary, the main wave assigns no observations to one arm, and that arm's outcome mean cannot be estimated.

The finite-sample consequence, established in the next proposition, is an unbounded worst-case risk. The problem is not only that feasible Neyman sometimes makes a noisy adjustment. The problem is that, with a finite pilot, nothing limits how far a noisy estimate can move the assignment, up to and including assigning no main-wave observations to an arm whose true variance is positive.

\begin{proposition}[{\normalfont\hyperlink{proof:fna_boundary}{Boundary vulnerability of feasible Neyman}}]
\label{prop:fna_boundary}
For every finite pilot size, the feasible Neyman rule has infinite worst-case risk under the boundary convention for \(V\), $\sup_{F\in\mathcal F} \mathcal R(\hat p,F) = \infty$.
\end{proposition}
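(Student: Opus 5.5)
The plan is to exhibit one population \(F\in\mathcal F\) under which the feasible Neyman rule lands on the boundary with strictly positive probability. The boundary convention then assigns infinite variance on that event, so \(\mathcal R(\hat p,F)=\infty\), and taking the supremum over \(\mathcal F\) gives the claim. No limiting sequence of distributions is needed: a single distribution already makes the expectation infinite.

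The construction I would use takes the two potential outcomes independent, with \(Y(1)\sim\mathrm{Bernoulli}(q)\) for some \(q\in(0,1)\), say \(q=1/2\), and \(Y(0)\) degenerate at a point, say \(Y(0)\equiv 0\). Then \(\sigma_1^2=q(1-q)>0\) and \(\sigma_0^2=0\). Because every control observation equals zero, \(\hat\sigma_0(\omega)=0\) for every pilot realization. On the event that the treated pilot observations are not all equal, \(\hat\sigma_1(\omega)>0\). On that event \(\hat p(\omega)=\hat\sigma_1/(\hat\sigma_1+0)=1\), so the rule assigns the entire main wave to treatment and the control mean cannot be estimated.

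The next step is to show this event has positive probability. Under the fixed-arm pilot CRD, the \(M_1\ge 2\) treated outcomes are i.i.d. Bernoulli\((q)\) draws. The probability that they are not all equal is \(1-q^{M_1}-(1-q)^{M_1}\), which is strictly positive for every finite \(M_1\ge 2\) because \(q\in(0,1)\). By the convention of Subsection~\ref{sub:main_wave_variance_neyman_allocation}, \(V(1,\sigma_1^2,\sigma_0^2)=\infty\) on this event. The risk is the expectation of a nonnegative extended-valued variable, and that variable is infinite on a set of positive probability, so \(\mathcal R(\hat p,F)=\infty\).

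The main subtlety is making sure the convention really applies in this case. Here \(\sigma_0^2=0\), so one could argue that putting \(\pi=1\) costs nothing, since \(V^*=\sigma_1^2\) is approached as \(\pi\to 1\). The paper, however, assigns infinite variance to boundary recommendations unconditionally, and the risk is stated in terms of \(V\), so the argument stands as given. If one preferred both true variances to be positive, so that the conclusion does not lean on a degenerate arm, I would replace \(Y(0)\) with a Bernoulli\((\epsilon)\) variable. Then \(\sigma_0^2>0\), and the event that all \(M_0\) control pilot outcomes equal zero has probability \((1-\epsilon)^{M_0}>0\). Intersecting it with the independent event that the treated outcomes are not all equal again forces \(\hat p=1\) with positive probability, and now the convention matches the substantive fact that an arm with positive variance goes unsampled. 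I would present this second version as the main argument, since it is robust to any objection about degenerate arms.
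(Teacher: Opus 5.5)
Your proposal is correct and follows essentially the same route as the paper's proof: exhibit one distribution with both arm variances positive under which, with positive probability, one pilot arm is constant and the other is not, so feasible Neyman lands on the boundary and the convention gives \(\mathcal R(\hat p,F)=\infty\). The paper uses independent Bernoulli\((1/2)\) arms and makes the treated arm the constant one (forcing \(\hat p=0\)), while your main version makes the control arm constant (forcing \(\hat p=1\)), which changes nothing of substance.
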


The same inverse-probability structure that gives the Neyman allocation its efficiency makes the plug-in rule vulnerable at the boundary.

This failure is not driven by an exotic distribution or by a knife-edge pilot sample. It can arise in the simplest binary-outcome experiment. The next remark makes this point explicit.

\begin{remark}[Discrete outcomes and zero pilot variance]
\label{rem:bernoulli}
For a Bernoulli outcome \(Y(d)\sim\mathrm{Bernoulli}(q_d)\) with \(q_d\in(0,1)\), the population
variance \(\sigma_d^2=q_d(1-q_d)\) is strictly positive, while the pilot sample
variance is zero whenever every observed outcome in that arm is equal, an event
with probability \(q_d^{M_d}+(1-q_d)^{M_d}\). Under independent sampling across pilot arms, the probability that exactly one arm shows zero pilot variance is therefore strictly positive for every finite \(M_0,M_1\ge 2\) and every \(q_0,q_1\in(0,1)\). On this event, FNA places the entire main wave in one arm despite both
population variances being strictly positive.

The difficulty is not confined to exact zeros. For \(M_d\ge 3\), the smallest positive sample
variance for a Bernoulli arm occurs when a single observation differs from the
rest, giving \(\hat\sigma_d^2=1/M_d\) under the unbiased estimator, with
probability \(M_d q_d(1-q_d)^{M_d-1}+M_d(1-q_d)q_d^{M_d-1}\). A draw of this kind
makes an estimated standard deviation as small as \(M_d^{-1/2}\), which keeps the
realized variance finite but, by the mechanism above, can still make it very large.
\end{remark}

Remark~\ref{rem:bernoulli} suggests an immediate repair. If the problem is that feasible Neyman can assign an arm zero probability, one can force the rule to remain away from zero and one. This is the trimmed feasible Neyman rule, which is discussed in the next remark.

\begin{remark}[Trimmed feasible Neyman]
\label{rem:trimmed_fna}

A natural fix trims feasible Neyman away from the boundary, $\hat p_\tau(\omega)=\min\{\max\{\hat p(\omega),\tau\},1-\tau\}$ with \(\tau\in(0,1/2)\), which caps the realized variance at \(1/(2\tau)\) and removes the infinite-risk pathology. The protection, however, is fixed before any data arrive rather than set by the strength of the pilot evidence, and no single \(\tau\) works well. A large \(\tau\) guards against extreme assignments but stays far from Neyman even after a pilot that has all but resolved the variances, while a small \(\tau\) tracks feasible Neyman but leaves only the weak guarantee \(1/(2\tau)\).

The conflict persists even asymptotically. At \((\sigma_1^2,\sigma_0^2)=(1/4,0)\) the Neyman target \(\pi^*(\theta)=1\) exceeds the cap \(1-\tau\), so regret converges to \(\tfrac{\tau}{4(1-\tau)}>0\) rather than to zero, and shrinking \(\tau\) with the pilot removes this gap only by surrendering the finite-sample protection. Trimming is therefore a blunt safeguard against extreme assignments, not a criterion for how far the realized pilot justifies moving away from balance.
\end{remark}

The feasible Neyman allocation therefore captures the large-pilot ideal but not the finite-pilot problem. It uses the pilot through point estimates and gives no way to distinguish a reliable variance imbalance from a noisy one.

\subsection{Minimax Regret and Finite-Pilot Adaptation}
\label{sub:minimax_regret_finite_pilot_adaptation}

Minimax risk is too conservative for pilot adaptation because it ranks rules by
total main-wave variance, most of which no design can avoid. The regret of
Subsection~\ref{sub:loss_and_regret} instead charges a rule only for the excess
variance its assignment creates relative to the infeasible Neyman benchmark,
isolating the component a pilot can reduce. Ranking rules by worst-case regret
rather than worst-case risk is the standard, less conservative alternative and the
more appropriate criterion here \citep{savage1951theory}.

A statistical decision rule \(p \in \mathcal{D}_0\) is fixed before the pilot, prescribing the main-wave
assignment probability \(p(\omega)\) at every realization \(\omega\). Its expected
regret at \(F\), $R(p,F) := \mathbb E_{P_F}\!\left[ r\!\left(p(\omega),\theta(F)\right) \right]$
is the frequentist risk of \(p\) under regret loss. The exact minimax-regret criterion selects the rule with the smallest worst-case
expected regret. For fixed pilot arm sizes \(M_1,M_0\), with their dependence
suppressed elsewhere in the notation, the minimax-regret value is
\[
R^{\mathrm{mmr}}_{M_1,M_0}
:= \inf_{\delta\in\mathcal D_0} \sup_{F\in\mathcal F}
\mathbb E_{P_{F,M_1,M_0}}\!\left[ r\!\left(\delta(\omega),\theta(F)\right) \right]
= \inf_{\delta\in\mathcal D_0} \sup_{F\in\mathcal F} R(\delta,F).
\]
The three operations read in order. The experimenter chooses a rule \(\delta\),
Nature responds with the least favorable distribution \(F\), and the expectation
averages the resulting regret over the pilots that \(F\) generates. The infimum
ranges over all measurable pilot-to-assignment rules in \(\mathcal D_0\), not only
those that use the pilot through its two sample variances, which makes this the
most demanding ex ante minimax-regret criterion available. When the infimum is
attained, an exact minimax-regret rule is any \(\delta^{\mathrm{mmr}}_{M_1,M_0}\in
\mathcal D_0\) achieving the value.

Proposition~\ref{prop:minimax_regret} records the basic properties of the minimax-regret value and shows when finite-pilot information must be used by an exact minimax-regret rule.

\begin{proposition}[{\normalfont\hyperlink{proof:minimax_regret}{Exact minimax regret}}]
\label{prop:minimax_regret}
The minimax-regret value satisfies the following properties.
\begin{enumerate}
\item[\rm (i)] For every pilot size \((M_1,M_0)\), \(R^{\mathrm{mmr}}_{M_1,M_0} \leq 1/4\). In the no-pilot problem, the minimax-regret value is \(1/4\), and the unique minimax-regret action is balanced assignment.

\item[\rm (ii)] If \(M_1,M_0 \geq 2\), then \(R^{\mathrm{mmr}}_{M_1,M_0} < 1/4\). Consequently, every exact minimax-regret rule \(\delta^{\mathrm{mmr}}_{M_1,M_0}\) satisfies $\Pr_{P_{F,M_1,M_0}}\!\left(\delta^{\mathrm{mmr}}_{M_1,M_0}(\omega)\neq \tfrac12\right)>0$ for some \(F\in\mathcal F\).

\item[\rm (iii)] If \(M_d\to\infty\) for each \(d\in\{0,1\}\), then \(R^{\mathrm{mmr}}_{M_1,M_0}\to 0\).
\end{enumerate}
\end{proposition}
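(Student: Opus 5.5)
The plan is to prove the three parts by different devices: explicit corner computations for (i), a perturbation of balance on a well-chosen pilot event for (ii), and a smoothed feasible Neyman rule with uniform concentration for (iii). Throughout I work with the second representation in \eqref{eq:regret_allocation_mistake_baseline}. Writing $s_d=\sigma_d\in[0,1/2]$, for fixed interior $\pi$ it is a square of an affine function of $(s_1,s_0)$ divided by $\pi(1-\pi)$, hence convex on $[0,1/2]^2$.

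\textbf{Part (i).} For the upper bound, plug $\pi=1/2$ into the regret formula. This gives $r(1/2,\theta)=(\sigma_1-\sigma_0)^2\le 1/4$ pointwise, so the constant balanced rule has worst-case expected regret at most $1/4$, and therefore $R^{\mathrm{mmr}}_{M_1,M_0}\le 1/4$.

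For the no-pilot problem, fix an action $\pi\in(0,1)$. By convexity, Nature's worst case is at a corner of $[0,1/2]^2$. The two relevant corners are $(1/4,0)$ and $(0,1/4)$, which give regret $(1-\pi)/(4\pi)$ and $\pi/(4(1-\pi))$. Their maximum is at least $1/4$, with equality iff $\pi=1/2$. At $\pi=1/2$ the remaining corners give $0$ and $0$. Boundary actions have infinite regret. Hence the no-pilot value is $1/4$, and the unique minimizer is balance.

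\textbf{Part (ii).} I would construct a rule $\delta_\varepsilon$ with $\sup_F R(\delta_\varepsilon,F)<1/4$. The pilot events are
\[
A_1=\{\hat\sigma_0^2\le c,\ \hat\sigma_1^2\ge c'\},\qquad A_0=\{\hat\sigma_1^2\le c,\ \hat\sigma_0^2\ge c'\},
\]
with small constants $0<c<c'$. The rule is $\delta_\varepsilon=1/2+\varepsilon(\mathbf 1_{A_1}-\mathbf 1_{A_0})$.

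Since $\partial_\pi V(1/2)=4(\sigma_0^2-\sigma_1^2)$ and the second derivative is bounded near $1/2$, a Taylor expansion gives
\[
R(\delta_\varepsilon,F)\le(\sigma_1-\sigma_0)^2-4\varepsilon(\sigma_1^2-\sigma_0^2)\bigl(P_F(A_1)-P_F(A_0)\bigr)+C\varepsilon^2 .
\]
The key step, which I expect to be the main obstacle, is a uniform lower bound on the gain term near the balance worst case. Take any $\eta>0$ and any $F$ with $(\sigma_1-\sigma_0)^2\ge 1/4-\eta$ and $\sigma_1>\sigma_0$; the symmetric case is analogous. Then $\sigma_1^2$ is near $1/4$ and $\sigma_0^2$ is near $0$. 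I need to show the following.
\begin{itemize}
\item $P_F(\hat\sigma_0^2\le c)$ is near one. This follows by Markov's inequality, since $\mathbb E\hat\sigma_0^2=\sigma_0^2$.
\item $P_F(\hat\sigma_1^2\ge c')\ge\kappa>0$ uniformly. A variance near $1/4$ on $[0,1]$ forces $Y(1)$ to put mass near $1/2$ close to each endpoint. Even with $M_1=2$, two draws then land near opposite endpoints with probability about $1/2$, and $\hat\sigma_1^2\ge c'$ follows.
\item $P_F(A_0)$ is small, because $A_0$ needs $\hat\sigma_0^2\ge c'$.
\end{itemize}
Together these give gain at least $\kappa'\varepsilon$ on the near-worst region. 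Off that region, $R\le 1/4-\eta+C\varepsilon$. Choosing $\eta$ first and then $\varepsilon$ small gives $\sup_F R(\delta_\varepsilon,F)<1/4$.

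For the consequence, suppose an exact minimax-regret rule equaled $1/2$ almost surely under every $F$. Its risk would then coincide with that of balance, whose supremum is $1/4$ (attained at the corner distributions). This contradicts $R^{\mathrm{mmr}}<1/4$.

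\textbf{Part (iii).} Use the smoothed rule $p=(\hat\sigma_1+h)/(\hat\sigma_1+\hat\sigma_0+2h)$. Its regret is
\[
\frac{\bigl[(\hat\sigma_0-\sigma_0+h)\sigma_1-(\hat\sigma_1-\sigma_1+h)\sigma_0\bigr]^2}{(\hat\sigma_1+h)(\hat\sigma_0+h)} .
\]
Consider the good event $G=\{|\hat\sigma_d-\sigma_d|\le h \text{ for both } d\}$. On $G$ we have $\sigma_1\le\hat\sigma_1+h$, and likewise for arm $0$. The numerator is then at most $4h^2\bigl[(\hat\sigma_1+h)+(\hat\sigma_0+h)\bigr]^2$ up to constants, and the regret is $O(h)$. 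Off $G$, the regret is at most $O(1/h^2)$ because $p(1-p)\gtrsim h^2$.

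It remains to control the probability of the bad event. Note that $|\hat\sigma_d-\sigma_d|\le|\hat\sigma_d^2-\sigma_d^2|^{1/2}$. McDiarmid's inequality for the bounded sample variance then gives $P_F(G^c)\le 4\exp(-cM_{\min}h^4)$ uniformly in $F$. Take $h=M_{\min}^{-1/8}$, where $M_{\min}=\min(M_1,M_0)$. Then the worst-case risk is $O(h)+O\bigl(h^{-2}e^{-cM_{\min}^{1/2}}\bigr)\to0$, and so $R^{\mathrm{mmr}}_{M_1,M_0}\to0$.
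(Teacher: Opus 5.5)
Your proof is correct. Part (i) is the paper's argument; in (ii) and (iii) you use different witness rules. In (ii) the paper perturbs balance linearly, $p_\eta=\tfrac12+\eta(\hat\sigma_1^2-\hat\sigma_0^2)$. Because the sample variance is unbiased, the expected first-order term is $4\eta(\sigma_0^2-\sigma_1^2)\,\mathbb E_{P_F}[\hat\sigma_1^2-\hat\sigma_0^2]=-4\eta(\sigma_1^2-\sigma_0^2)^2$. That term is never positive and is of order $\eta$ near the two adversarial corners, so only the $O(\eta^2)$ remainder has to be controlled and no pilot-event probabilities enter.

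Your threshold rule replaces this with the coefficient $P_F(A_1)-P_F(A_0)$, which has no definite sign. That is why you need the localization, the three probability bounds, and the crude $\varepsilon$ bound off the near-worst region. It all goes through if you fix $0<c<c'<1/4$ first, then take $\eta$ small relative to them, then $\varepsilon$. On the near-worst region $\sigma_0\le 2\eta$, so Markov gives $P_F(\hat\sigma_0^2>c)\le 4\eta^2/c$ and $P_F(A_0)\le 4\eta^2/c'$. Your ``main obstacle'' needs no endpoint heuristic. Since $0\le\hat\sigma_1^2\le M_1/(4(M_1-1))\le\tfrac12$ and $\mathbb E_{P_F}\hat\sigma_1^2=\sigma_1^2$, a reverse Markov bound gives $P_F(\hat\sigma_1^2\ge c')\ge 2(\sigma_1^2-c')$ for every $M_1\ge 2$. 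Your rule leaves balance only on decisive evidence, which is closer in spirit to CMR; the paper's rule buys a one-line sign argument.

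In (iii) the paper uses a thresholded, clipped plug-in with $\kappa_M=\delta_M=\rho_M^{1/4}$. It controls that rule with the uniform mean-square bound $\mathbb E_{P_F}[(\hat\sigma_d-\sigma_d)^2]\le C_1M_d^{-1/2}$ and Markov's inequality. Your smoothed rule has the ratio form $S_1/(S_1+S_0)$ of the CMR closed form. Your regret identity $(S_0\sigma_1-S_1\sigma_0)^2/(S_1S_0)$ is the one used in the proof of Theorem~\ref{thm:cmr-competitive-risk}. You thus trade the case split and clipping for McDiarmid's exponential bound, and both routes give worst-case regret of order $M_{\min}^{-1/8}$.

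Two small points in (iii). The $O(h)$ bound on $G$ uses $h\le S_d\le 2^{-1/2}+h$ to get $(S_1+S_0)^2/(S_1S_0)\lesssim 1/h$, and you should state this. Off $G$ you actually have $p(1-p)=S_1S_0/(S_1+S_0)^2\gtrsim h$, so the regret there is $O(1/h)$. Your $O(1/h^2)$ is loose but harmless.
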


Part~(i) combines a universal upper bound with a no-pilot characterization. Balanced
assignment is feasible at every pilot size and has worst-case regret exactly \(1/4\),
attained when one arm has variance \(1/4\) and the other zero. With no pilot, the feasible rules are the constant
assignments, and any \(\pi\neq 1/2\) undersamples one arm, which Nature punishes by
placing all variance there. Only \(\pi=1/2\) equalizes the two opposing worst cases,
all variance in treatment versus all in control.

Part~(ii) shows that the slightest pilot overturns the no-pilot optimality of balance. With two observations per arm,
the smallest size at which within-arm variation can appear, the minimax-regret value
already drops below \(1/4\), so balance is no longer optimal. The reason is that Nature
now faces two competing forces. Driving the arm variances far apart raises the regret
balance suffers, because balance is optimal only when the two are equal. But that same
asymmetry is what the pilot detects, so it also tells the experimenter
which arm to favor. A slight tilt toward the arm with the larger pilot variance exploits this signal. Where the variances are far apart, the pilot usually identifies the noisier arm and the tilt reduces regret. Where they are close and balance is nearly optimal, the tilt costs almost nothing. Such a rule therefore has worst-case regret strictly below \(1/4\), so every exact minimax-regret rule must leave \(1/2\) with positive probability under some distribution.

Part~(iii) describes the large-pilot limit. As both pilot arms grow, the variance
estimates become accurate enough that a stabilized plug-in rule, constructed in the
proof, approaches the Neyman allocation and its worst-case expected regret vanishes.
Since the minimax-regret value is no larger than the worst-case regret of this rule,
\(R^{\mathrm{mmr}}_{M_1,M_0}\to0\).

Despite the desirable properties recorded in Proposition~\ref{prop:minimax_regret},
Remark~\ref{rem:exact_mmr_computation} shows that the exact minimax-regret problem is
not directly operational, because it does not reduce to a finite-dimensional optimization over the variance pair.

\begin{remark}[Why exact minimax regret is not directly operational]
\label{rem:exact_mmr_computation}
Nature's choice of $F$ enters the problem twice, through the loss via the
variance pair $\theta(F)$ and through the data via $P_{F,M_1,M_0}$, and the two
channels are not linked by the variance pair, as discussed in
Subsection~\ref{sub:states_actions_pilot_based_rules}. The inner supremum
therefore cannot be taken over $\Theta$, and the problem does not collapse to
the static problem $\inf_p\sup_{\theta\in\Theta}r(p,\theta)$ that settles the
no-pilot case.

The usual shortcuts do not restore finite-dimensional structure. The
guess-and-verify approach of \citet{stoye2009minimax}, which identifies a
minimax-regret rule as Bayes against a least favorable prior, does not help here,
because the least favorable prior would, for the same reason, have to range over
$\mathcal F$ rather than over variance pairs, leaving no finite-dimensional
family to guess within. The Bernoulli reduction for bounded mean-payoff problems
\citep{schlag2006eleven,stoye2009minimax}, which replaces $Y\in[0,1]$ by
$B\mid Y\sim\mathrm{Bernoulli}(Y)$, preserves the mean but strictly increases the
variance unless $Y$ is already binary, so it maps the state $\theta(F)$ to a
different variance pair and defines a different variance-based
game.\footnote{The marginal variance is
$\operatorname{Var}(B)=\mathbb E[Y]\,(1-\mathbb E[Y])
=\operatorname{Var}(Y)+\mathbb E[Y(1-Y)]\ge\operatorname{Var}(Y)$.}

Exact minimax regret is best read as a theoretical standard of comparison rather than an
operational rule. It becomes finite-dimensional on the experimenter's side only after one
restricts the outcome distribution, discretizes the pilot sample space, or restricts the
class of rules, and any such restriction defines a different game from the exact
nonparametric problem. The simplest such restriction imposes binary outcomes. With
\(Y(d)\sim\mathrm{Bernoulli}(q_d)\), the state \((q_1,q_0)\in[0,1]^2\) pins down both the
loss and the pilot distribution, and sufficiency and convexity imply that a minimax rule
can be chosen to use the pilot only through the arm success counts studied in Online
Appendix~\ref{sub:binary_exact_rectangle}.\footnote{Averaging a rule over the equally likely pilot orderings with the same counts, and over the success--failure relabeling, weakly lowers worst-case expected regret because regret is convex in the assignment \citep{berger1985statistical}.} Such a rule is a complete contingency plan
fixed before the pilot, a finite vector holding one assignment for every count pair the
pilot could produce. The difficulty is that Nature answers the whole plan with the least
favorable point in a continuum of states. Ranking even one candidate plan therefore means
maximizing its expected regret over the square, a nonconcave problem whose global solution
no local search can guarantee, and the search for the best plan must optimize thousands of
entries jointly around that inner maximization.\footnote{Confining Nature to a finite grid of states turns the search for a minimax-regret rule into a finite convex program, but the problem stays hard. The whole rule must be optimized jointly against every grid state, and every pilot allocation defines a separate game. Even then, a rule computed this way is a candidate rather than an exact solution, since its worst state may lie off the grid.}
\end{remark}

Even so, the value \(R^{\mathrm{mmr}}_{M_1,M_0}\) remains informative, since it is the smallest worst-case expected regret attainable by any pilot-based assignment rule. The next proposition shows that even this ideal value cannot vanish faster than the inverse-square-root rate in the pilot arm sizes.

\begin{proposition}[{\normalfont\hyperlink{proof:mmr-lower-bound}{Minimax regret lower bound}}]
\label{prop:mmr-lower-bound}
Fix pilot arm sizes \(M_1,M_0\ge 2\). There exists a universal constant \(c>0\) such that $R^{\mathrm{mmr}}_{M_1,M_0} \ge c\left(M_1^{-1/2}+M_0^{-1/2}\right)$.
\end{proposition}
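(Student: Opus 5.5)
The plan is a Le Cam two-point argument run separately for each pilot arm. For the control arm, the two hypotheses will differ in a control variance of order \(M_0^{-1}\). Because regret depends on the standard deviations through the affine form in \eqref{eq:regret_allocation_mistake_baseline}, this separation in variances becomes a separation in standard deviations of order \(\delta\asymp M_0^{-1/2}\). The square-root structure is what produces the \(M^{-1/2}\) rate. A perturbation at an interior variance pair would only give the \(M^{-1}\) rate, because regret is quadratic in \(\pi-\pi^*\) there. The construction therefore works near the boundary \(\sigma_0=0\).

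\textbf{Hypotheses.} Let \(F_0\) and \(F_1\) make \(Y(1)\) and \(Y(0)\) independent, with \(Y(1)\sim\mathrm{Bernoulli}(1/2)\) under both, so \(\sigma_1=1/2\). Under \(F_0\), set \(Y(0)\equiv 0\), so \(\sigma_0=0\). Under \(F_1\), set \(Y(0)\sim\mathrm{Bernoulli}(q)\) with \(q=1/M_0\), so \(\sigma_0=\delta:=\sqrt{q(1-q)}\ge (2M_0)^{-1/2}\) for \(M_0\ge 2\).

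\textbf{Step 1 (pilot indistinguishability).} The pilot laws \(P_{F_0}\) and \(P_{F_1}\) coincide on the treated arm. On the control arm they differ only through whether a nonzero outcome appears. Hence \(\int\min(dP_{F_0},dP_{F_1})\ge P_{F_1}(\text{all control outcomes }0)=(1-1/M_0)^{M_0}\ge 1/4\) for \(M_0\ge2\).

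\textbf{Step 2 (pointwise separation).} I will show that for every \(\pi\in[0,1]\),
\[
r(\pi,\theta(F_0))+r(\pi,\theta(F_1))\ge c_0\,\delta
\]
for a universal \(c_0>0\). Boundary actions give infinite regret under at least one hypothesis, so they are harmless. For interior \(\pi\), write \(\varepsilon=1-\pi\). The second form in \eqref{eq:regret_allocation_mistake_baseline} gives
\[
r(\pi,\theta(F_0))=\frac{\varepsilon}{4\pi},\qquad r(\pi,\theta(F_1))=\frac{(\varepsilon/2-\pi\delta)^2}{\pi\varepsilon}.
\]
There are two cases.
\begin{itemize}
\item If \(\pi\le 1/2\), the first term alone is at least \(1/4\ge\delta/2\), using \(\delta\le 1/2\).
\item If \(\pi>1/2\), use \(\pi\le1\) in the denominators and \(\pi\delta\le\delta\) in the numerator, treating separately the subcase \(\varepsilon/2<\delta/2\). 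The sum is then bounded below by roughly \(\varepsilon/4+(\varepsilon/2-\delta)^2/\varepsilon\), which is at least \(\varepsilon/2+\delta^2/\varepsilon-\delta\ge(\sqrt2-1)\delta\) by AM--GM. This subcase bookkeeping only affects constants.
\end{itemize}

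\textbf{Step 3 (Le Cam).} For any \(\delta'\in\mathcal D_0\),
\[
\sup_F R(\delta',F)\ge\tfrac12\bigl(R(\delta',F_0)+R(\delta',F_1)\bigr)\ge\tfrac12\int\bigl(r_0+r_1\bigr)\min(dP_{F_0},dP_{F_1})\ge\tfrac{c_0}{8}\,\delta\ge c_1M_0^{-1/2}.
\]
Swapping the roles of the arms gives the same bound with \(M_1\). Taking the maximum of the two bounds, which is at least half their sum, yields \(R^{\mathrm{mmr}}_{M_1,M_0}\ge c\,(M_1^{-1/2}+M_0^{-1/2})\).

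\textbf{Main obstacle.} The only real difficulty is the inequality in Step 2. It must hold uniformly in \(\pi\), including near \(\pi=1\) where the weight \(1/[\pi(1-\pi)]\) blows up, and it must yield a bound linear in \(\delta\) rather than quadratic. The affine-in-\(\sigma\) representation handles this, and I expect the constants to come out of the AM--GM step.
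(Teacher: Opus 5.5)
Your proposal is correct and is essentially the paper's proof: you use the same Le Cam two-point pair (treated arm Bernoulli$(1/2)$, control degenerate at zero versus Bernoulli$(1/M_0)$), the same all-zero control-pilot overlap of mass $(1-1/M_0)^{M_0}\ge 1/4$, and the same final combination across arms, and the only cosmetic difference is that you integrate $r(\pi,\theta(F_0))+r(\pi,\theta(F_1))$ against $\min(dP_{F_0},dP_{F_1})$, whereas the paper argues that on the common event one of the two regrets is at least $\delta/8$. In Step 2, the comparison $(\varepsilon/2-\pi\delta)^2\ge(\varepsilon/2-\delta)^2$ holds only for $\varepsilon\ge 2\delta$, so it is cleaner to split at $\varepsilon=\delta/2$ as the paper does: if $\varepsilon\ge\delta/2$ then $r(\pi,\theta(F_0))\ge\varepsilon/4\ge\delta/8$, and if $\varepsilon<\delta/2$ then $\pi\delta>\delta/2$ forces $|\varepsilon/2-\pi\delta|>\delta/4$, hence $r(\pi,\theta(F_1))>\delta/8$.
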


The bound reflects a limit on what a finite pilot can reveal about the variance
pair. Even the best possible rule faces variance configurations that generate
nearly indistinguishable pilot data but call for different Neyman allocations.
Because the pilot cannot reliably tell such configurations apart, any rule must
make similar recommendations in states where different assignments would be
optimal, and must pay regret in at least one of them. Since the minimax-regret
value already optimizes over all measurable pilot-to-assignment rules, this cost
applies to every pilot-based design. No rule can therefore improve the worst-case
expected regret beyond order \(M_1^{-1/2}+M_0^{-1/2}\). Any tractable rule that
attains this order matches the minimax-regret rate. In sum, exact
minimax regret may be out of reach computationally, but matching its
rate is enough for worst-case optimality up to constants.

\section{Conditional Minimax Regret Rule \label{sec:conditional_minimax_regret_rule}}

Section~\ref{sec:benchmark_assignment_rules} leaves unanswered how far the main-wave assignment should move from balance toward the feasible Neyman allocation given the evidence in the realized pilot. A rule that answers this question must be computable from the realized pilot and must move from balance only as far as the evidence justifies.\footnote{A Bayes rule reduces to the Neyman allocation evaluated at posterior mean variances. With a small pilot these posterior means sit close to their subjective prior values, so the prior tilts the assignment exactly when the data say least.}

The regret of an assignment depends on the population distribution only through the marginal potential-outcome variances, so once the pilot is observed, the uncertainty that matters for the main-wave loss is which variance pair is true. The CMR procedure summarizes the pilot's design-relevant information by a finite-sample confidence set, the variance pairs the evidence has not ruled out, and selects the assignment with the smallest worst-case regret over that realized set. The worst case thus runs over the configurations consistent with the realized pilot, not over the full model class and not averaged over pilot realizations. In this sense, CMR is the minimax-regret principle applied to the problem the experimenter faces after the pilot, with the surviving variance pairs in the role of the state space.

Alongside the assignment, CMR reports a certificate in the spirit of \citet{andrews2025certified}, the largest regret the chosen assignment can incur over the configurations the set still allows. Because the set contains the true variance pair with probability at least \(1-\alpha\), the realized regret exceeds the certificate with probability at most \(\alpha\). The certificate is large when the pilot leaves the variances uncertain and small when the pilot has nearly pinned them down.

\subsection{The Conditional Minimax Regret Procedure \label{sub:conditional_minimax_regret_procedure}}

The CMR rule depends on the pilot only through a confidence set for the variance
pair. Because each pilot unit is observed in only one arm, treatment observations
inform $\sigma_1^2$ and control observations inform $\sigma_0^2$. The set therefore combines two separate inferences, an
interval for $\sigma_1^2$ and an interval for $\sigma_0^2$, pairing every treatment
variance in the first with every control variance in the second. This product
structure makes the confidence set a rectangle.

For each arm \(d\in\{0,1\}\) and one-sided error level \(b\in(0,1)\), let
\(\underline{\sigma}_d^2(b;\omega)\) and \(\overline{\sigma}_d^2(b;\omega)\)
denote a lower and an upper confidence bound for \(\sigma_d^2(F)\), computed from
the pilot and taking values in \([0,1/4]\). They must satisfy $\Pr_{P_F}\!\left( \underline{\sigma}_d^2(b;\omega)\le \sigma_d^2(F) \right) \ge 1-b$ and 
$\Pr_{P_F}\!\left( \sigma_d^2(F)\le \overline{\sigma}_d^2(b;\omega) \right) \ge 1-b$
for every \(F\in\mathcal F\). Each bound is read as one reads a standard confidence bound, with the pilot as the sample and \(\sigma_d^2(F)\) the unknown parameter. Subsection~\ref{sub:constructing_confidence_rectangle}
constructs such bounds from finite-sample concentration inequalities. The general CMR procedure described below needs only the coverage property.

The rectangle contains the true variance pair \(\theta(F)\) exactly when all four
one-sided bounds hold at once. To keep the overall miss probability at
most \(\alpha\), the construction divides that allowance equally among the four sides
and forms each at one-sided error level \(\alpha/4\). The realized
confidence rectangle is
\begin{equation}
\label{eq:rectangular_confidence_set}
\widehat\Theta_\alpha(\omega)
=
\bigl[
\underline{\sigma}_1^2(\alpha/4;\omega),\,
\overline{\sigma}_1^2(\alpha/4;\omega)
\bigr]
\times
\bigl[
\underline{\sigma}_0^2(\alpha/4;\omega),\,
\overline{\sigma}_0^2(\alpha/4;\omega)
\bigr].
\end{equation}
By Bonferroni's inequality \citep{bonferroni1936teoria}, the probability of a
miss is at most the sum of the four failure probabilities.\footnote{Because the pilot arm samples are independent, a multiplicative, Šidák-type split of the error budget across arms is also valid \citep{vsidak1967rectangular}. I use the additive split for simplicity.} Each is at most
\(\alpha/4\), so the miss probability is at most \(\alpha\), and $\Pr_{P_F}\!\left( \theta(F)\in\widehat\Theta_\alpha(\omega) \right) \ge 1-\alpha$ for every $F\in\mathcal F$.
The guarantee is frequentist in the usual sense, with the random rectangle covering
the fixed pair \(\theta(F)\) in at least a fraction \(1-\alpha\) of repeated pilot samples.

Given the realized rectangle, CMR chooses the main-wave treatment assignment
probability that minimizes worst-case regret over the variance pairs the pilot has
not ruled out:
\begin{equation}
\label{eq:cmr_assignment}
p_{\mathrm{CMR}}(\omega)
\in
\arg\min_{p\in(0,1)}
\sup_{\theta\in\widehat\Theta_\alpha(\omega)}
r(p,\theta).
\end{equation}
The inner supremum is the largest regret \(p\) could incur over the surviving variance pairs, and the outer minimization selects the treatment assignment probability that makes this worst case smallest. The key simplification is that, after the pilot is translated into the rectangle, the decision problem no longer ranges over joint distributions of the potential outcomes.

The certificate reported with the assignment is the worst-case regret at the chosen
CMR assignment,
\begin{equation}
\label{eq:cmr_certificate}
U_{\mathrm{CMR}}(\omega)
=
\sup_{\theta\in\widehat\Theta_\alpha(\omega)}
r\!\left(p_{\mathrm{CMR}}(\omega),\theta\right).
\end{equation}
Whenever the rectangle contains the true variance pair, the realized regret
\(r(p_{\mathrm{CMR}}(\omega),\theta(F))\) is therefore at most \(U_{\mathrm{CMR}}(\omega)\).

\subsection{Geometry and Closed-Form Solution}
\label{sub:geometry_closed_form_solution}

The CMR assignment and certificate can be computed in closed form once the
realized rectangle is in hand. The dependence on the pilot realization \(\omega\)
is left implicit in what follows. By the second form of the regret identity
\eqref{eq:regret_allocation_mistake_baseline}, regret is the square of
\((1-\pi)\sigma_1-\pi\sigma_0\), which is affine in the two standard deviations,
so for fixed \(\pi\) the worst case over the rectangle is attained at a corner.
The relevant corners are the two off-diagonal ones, the corner where treatment is
as variable as the rectangle allows and control as stable as it allows, and the
corner where the roles are reversed. The proposition below uses this corner
structure to solve the assignment problem in closed form.

\begin{proposition}[{\normalfont\hyperlink{proof:cmr_assignment_rectangle}{Closed-form CMR rule}}]
\label{prop:cmr_assignment_rectangle}
Let \(\underline\sigma_d=\sqrt{\underline\sigma_d^2}\) and \(\overline\sigma_d=\sqrt{\overline\sigma_d^2}\), and suppose \(\overline{\sigma}_1>0\) and \(\overline{\sigma}_0>0\). The CMR assignment problem \eqref{eq:cmr_assignment} has a unique solution, given by
\begin{equation}
\label{eq:cmr_closed_form_assignment}
p_{\mathrm{CMR}}
=
\frac{\overline{\sigma}_1+\underline{\sigma}_1}
{\overline{\sigma}_1+\underline{\sigma}_1+\overline{\sigma}_0+\underline{\sigma}_0}
\in(0,1).
\end{equation}
The corresponding certificate is
\begin{equation}
\label{eq:cmr_closed_form_certificate}
U_{\mathrm{CMR}}
=
\frac{\left(\overline{\sigma}_1\overline{\sigma}_0-\underline{\sigma}_1\underline{\sigma}_0\right)^2}
{\left(\overline{\sigma}_1+\underline{\sigma}_1\right)\left(\overline{\sigma}_0+\underline{\sigma}_0\right)}.
\end{equation}
\end{proposition}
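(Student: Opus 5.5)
The plan is to reduce the inner supremum in \eqref{eq:cmr_assignment} to a maximum over two off-diagonal corners, and then minimize the resulting maximum of two one-dimensional functions of \(\pi\).

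\textbf{Step 1: reduce to two corners.} Fix \(\pi\in(0,1)\). By the second form of \eqref{eq:regret_allocation_mistake_baseline},
\[
r(\pi,\theta)=\frac{\bigl((1-\pi)\sigma_1-\pi\sigma_0\bigr)^2}{\pi(1-\pi)}.
\]
The map \((\sigma_1,\sigma_0)\mapsto\sigma_1^2\) is a bijection between \([\underline\sigma_1,\overline\sigma_1]\) and \([\underline\sigma_1^2,\overline\sigma_1^2]\), and similarly for arm \(0\). So the supremum over the variance rectangle equals the supremum over the standard-deviation rectangle \([\underline\sigma_1,\overline\sigma_1]\times[\underline\sigma_0,\overline\sigma_0]\).

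On that rectangle, \(L(\sigma_1,\sigma_0)=(1-\pi)\sigma_1-\pi\sigma_0\) is affine. It is increasing in \(\sigma_1\) and decreasing in \(\sigma_0\), so its range is the interval \([A_-(\pi),A_+(\pi)]\), where
\[
A_+(\pi)=(1-\pi)\overline\sigma_1-\pi\underline\sigma_0,\qquad A_-(\pi)=(1-\pi)\underline\sigma_1-\pi\overline\sigma_0 .
\]
Since \(x\mapsto x^2\) is convex, the maximum of \(L^2\) over the rectangle is \(\max\{A_+^2,A_-^2\}\). Hence the worst-case regret is
\[
W(\pi)=\frac{\max\{A_+(\pi)^2,\,A_-(\pi)^2\}}{\pi(1-\pi)} .
\]

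\textbf{Step 2: the two branches.} Define \(f_+(\pi)=A_+^2/(\pi(1-\pi))\). Write the numerator as \(\bigl((1-\pi)\overline\sigma_1-\pi\underline\sigma_0\bigr)^2\). Expanding gives
\[
f_+(\pi)=\frac{1-\pi}{\pi}\,\overline\sigma_1^2+\frac{\pi}{1-\pi}\,\underline\sigma_0^2-2\overline\sigma_1\underline\sigma_0 .
\]
With \(t=\pi/(1-\pi)\) this becomes \(\overline\sigma_1^2/t+\underline\sigma_0^2 t-2\overline\sigma_1\underline\sigma_0\). The symmetric branch is \(f_-(\pi)=\underline\sigma_1^2/t+\overline\sigma_0^2 t-2\underline\sigma_1\overline\sigma_0\).

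Compare the two branches:
\[
f_+-f_-=\frac{\overline\sigma_1^2-\underline\sigma_1^2}{t}-(\overline\sigma_0^2-\underline\sigma_0^2)\,t-2(\overline\sigma_1\underline\sigma_0-\underline\sigma_1\overline\sigma_0).
\]
This difference is strictly decreasing in \(t\) whenever at least one interval is nondegenerate, and \(t\) is increasing in \(\pi\). Moreover \(f_+\to\infty\) as \(\pi\to0\), because \(\overline\sigma_1>0\), and \(f_-\to\infty\) as \(\pi\to1\), because \(\overline\sigma_0>0\).

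\textbf{Step 3: solve the minimax problem.} The main obstacle is showing that the minimax solution is the crossing point, not an interior minimizer of one branch. The cleanest route is to guess the crossing point and verify it.

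The crossing of \(A_+^2=A_-^2\) with \(A_+\ge0\ge A_-\) is \(A_+=-A_-\). This gives
\[
(1-\pi)(\overline\sigma_1+\underline\sigma_1)=\pi(\overline\sigma_0+\underline\sigma_0),
\]
which is exactly \eqref{eq:cmr_closed_form_assignment}.

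To verify optimality, note that at this \(\pi\) we have \(A_+>0>A_-\). The first inequality holds because the interval for \(L\) is centered at zero and has positive length \(A_+-A_-=(1-\pi)(\overline\sigma_1-\underline\sigma_1)+\pi(\overline\sigma_0-\underline\sigma_0)\), provided some interval is nondegenerate.

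Each \(A_\pm\) is strictly decreasing in \(\pi\), since \(dA_+/d\pi=-\overline\sigma_1-\underline\sigma_0<0\). The quantity \(A_+/\sqrt{\pi(1-\pi)}\) equals \(\overline\sigma_1\sqrt{(1-\pi)/\pi}-\underline\sigma_0\sqrt{\pi/(1-\pi)}\), which is strictly decreasing in \(\pi\). Likewise \(-A_-/\sqrt{\pi(1-\pi)}\) is strictly increasing, using \(\overline\sigma_0>0\).

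So \(\sqrt{W(\pi)}=\max\{g_+(\pi),g_-(\pi)\}\), where \(g_+=A_+/\sqrt{\pi(1-\pi)}\) is strictly decreasing and \(g_-=-A_-/\sqrt{\pi(1-\pi)}\) is strictly increasing, in the region where each is nonnegative. This is the maximum of a strictly decreasing and a strictly increasing function. It is uniquely minimized at their crossing, which yields both existence and uniqueness.

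In the degenerate case where both intervals are points, the rectangle is a single pair. The regret is then zero at the Neyman allocation, which coincides with the formula, and uniqueness holds because \(\overline\sigma_d>0\).

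\textbf{Step 4: the certificate.} Substitute \(p_{\mathrm{CMR}}\) into \(g_+^2\). Write \(S_1=\overline\sigma_1+\underline\sigma_1\) and \(S_0=\overline\sigma_0+\underline\sigma_0\), so that \(\pi=S_1/(S_1+S_0)\) and \(1-\pi=S_0/(S_1+S_0)\). Then
\[
A_+=\frac{S_0\overline\sigma_1-S_1\underline\sigma_0}{S_1+S_0}=\frac{\overline\sigma_1\overline\sigma_0-\underline\sigma_1\underline\sigma_0}{S_1+S_0},
\qquad
\pi(1-\pi)=\frac{S_1S_0}{(S_1+S_0)^2}.
\]
Dividing \(A_+^2\) by \(\pi(1-\pi)\) gives \eqref{eq:cmr_closed_form_certificate}.
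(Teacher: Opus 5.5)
Your proof is correct. Step~1 (the two-corner reduction) and Step~4 (the certificate computation) coincide with the paper's. The minimization step takes a different route.

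The paper works with the squared corner regrets \(r_\pm\). It shows that \(r_+-r_-\) is strictly decreasing and vanishes at \(p_{\mathrm{CMR}}\). It then needs a separate inequality placing \(p_{\mathrm{CMR}}\) between the corner-specific Neyman allocations \(\underline\sigma_1/(\underline\sigma_1+\overline\sigma_0)\) and \(\overline\sigma_1/(\overline\sigma_1+\underline\sigma_0)\), so that the binding branch is monotone on each side. It also handles the collapsed rectangle by its own derivative argument.

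Your signed square roots \(g_+=A_+/\sqrt{\pi(1-\pi)}\) and \(g_-=-A_-/\sqrt{\pi(1-\pi)}\) remove the kink of the absolute value. They are strictly monotone on all of \((0,1)\), which is exactly where \(\overline\sigma_1,\overline\sigma_0>0\) enter. Existence, uniqueness, and the location of the minimizer therefore follow in one step, and you never need to locate either branch's own minimizer. The same argument covers the collapsed case, where \(g_-=-g_+\). So your separate treatment of that case, and all of Step~2, are unnecessary.

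One line should be stated more carefully. You assert \(\sqrt{W(\pi)}=\max\{g_+(\pi),g_-(\pi)\}\) only ``where each is nonnegative,'' but your uniqueness comparison uses points outside that region. Either observation closes this:
\begin{itemize}
\item The identity holds on all of \((0,1)\). Since \(A_+-A_-=(1-\pi)(\overline\sigma_1-\underline\sigma_1)+\pi(\overline\sigma_0-\underline\sigma_0)\ge0\), you have \(\max\{|A_+|,|A_-|\}=\max\{A_+,-A_-\}\).
\item It suffices that \(\sqrt{W}\ge\max\{g_+,g_-\}\) everywhere, which is trivial, with equality at the crossing.
\end{itemize}
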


To interpret the assignment formula \eqref{eq:cmr_closed_form_assignment}, multiply its
numerator and denominator by one half, which shows that CMR applies the Neyman formula
to the midpoint of each standard-deviation confidence interval.\footnote{If neither endpoint of the symmetric Maurer--Pontil interval is projected onto \([0,1/2]\), the midpoint equals the pilot standard deviation and CMR coincides with feasible Neyman. At \(\alpha=0.05\) this requires pilot arms of at least \(142\) observations.} Treatment receives more
than half the main wave exactly when its confidence-interval midpoint exceeds
control's. For example, if \(\sigma_1\in[0.20,0.40]\) and \(\sigma_0\in[0.10,0.30]\),
the midpoints are \(0.30\) and \(0.20\), and the treatment assignment probability
prescribed by CMR is \(0.30/(0.30+0.20)=0.60\).

A further feature of the closed form is that its assignment is always interior, whereas
feasible Neyman can collapse to the boundary. A zero pilot sample variance is not proof
that the population variance is zero, and the finite-sample constructions of the next subsection keep the
upper endpoint \(\overline{\sigma}_d\) strictly positive in that case. That alone keeps
both arms randomized, without the ad hoc trim feasible Neyman requires.

The certificate has the same corner structure. The two off-diagonal corners pull
the assignment in opposite directions, CMR equalizes their regrets, and
\(U_{\mathrm{CMR}}\) is their common value. The certificate is therefore large
when the rectangle still contains variance pairs that point to substantially
different Neyman allocations, and it vanishes exactly when the two corners imply
the same allocation, which occurs when
\(\overline{\sigma}_1\overline{\sigma}_0=\underline{\sigma}_1\underline{\sigma}_0\).
It measures the spread in implied Neyman allocations, not the raw width of the
rectangle.

The rule and its certificate reduce to familiar values at the two
extremes. When the rectangle contains no information beyond the maintained bounds
and each standard-deviation interval is $[0,1/2]$, the two midpoints coincide at
$1/4$, the rule returns balance, and the certificate equals $1/4$, the worst-case
regret of the no-pilot problem. When the rectangle collapses to a single variance
pair with both standard deviations positive, each confidence-interval midpoint
equals the corresponding true standard deviation, the rule returns the Neyman
allocation $\sigma_1/(\sigma_1+\sigma_0)$, and the certificate is zero.

\subsection{Constructing the Confidence Rectangle}
\label{sub:constructing_confidence_rectangle}

The geometry of the problem also implies that the CMR assignment and certificate depend on the pilot only through the four
rectangle endpoints. The baseline construction of \(\widehat\Theta_\alpha(\omega)\) is
distribution-free and finite-sample, using empirical Bernstein bounds for the
arm-specific standard deviations and converting them into one-sided variance bounds
valid uniformly over the bounded-outcome model. When outcomes are binary, the pilot
distribution is discrete and exactly tractable, so the rectangle can be tightened by
inverting the folded-binomial distribution of the pilot statistic, as developed in Online Appendix~\ref{sub:binary_exact_rectangle}. Either way, the assignment and certificate are computed from the resulting rectangle exactly as in Proposition~\ref{prop:cmr_assignment_rectangle}, and any endpoints satisfying the one-sided coverage requirements of Subsection~\ref{sub:conditional_minimax_regret_procedure} deliver the same guarantees.

The relevant concentration inequality is the empirical Bernstein bound of
\citet{maurer2009empirical} for the sample standard deviation of bounded
random variables.\footnote{Sharper, first-order optimal empirical Bernstein intervals for the variance are available \citep{martinez2025sharp}. The theoretical guarantees extend, at the cost of more complicated formulas.} For each arm \(d\) and one-sided error level \(b\in(0,1)\), write
\(\eta_d(b)=\sqrt{2\log(1/b)/(M_d-1)}\) for the finite-sample concentration radius on
the standard-deviation scale. Then
\[
\Pr_{P_F}\!\left(\sigma_d(F)\le\hat\sigma_d(\omega)+\eta_d(b)\right)\ge 1-b
\quad\text{and}\quad
\Pr_{P_F}\!\left(\sigma_d(F)\ge\hat\sigma_d(\omega)-\eta_d(b)\right)\ge 1-b
\]
uniformly over \(F\in\mathcal F\). The radius \(\eta_d(b)\) depends only on the pilot
size and the error level, shrinking at rate \(1/\sqrt{M_d}\) and growing only as
\(\sqrt{\log(1/b)}\) as the error level falls.

The variance bounds follow by squaring these standard-deviation statements
and projecting onto the maintained variance interval \([0, 1/4]\). For each
arm \(d \in \{0, 1\}\) and error level \(b \in (0, 1)\), the lower and upper
variance bounds are
\begin{equation}
\label{eq:eb_bounds}
\underline{\sigma}_d^2(b;\omega)=\min\!\Big\{\tfrac14,\,\big(\hat\sigma_d(\omega)-\eta_d(b)\big)_{\!+}^{\!2}\Big\},
\qquad
\overline{\sigma}_d^2(b;\omega)=\min\!\Big\{\tfrac14,\,\big(\hat\sigma_d(\omega)+\eta_d(b)\big)^{\!2}\Big\}.
\end{equation}
Both endpoints are projected onto \([0,1/4]\). Because the true variance lies in that interval, the projection preserves the one-sided coverage inequalities.\footnote{At a degenerate pilot \(\hat\sigma_d^2(\omega)=0\), \(\overline{\sigma}_d^2(b;\omega)=\min\{1/4,\eta_d(b)^2\}>0\), so the rectangle never treats it as proof of zero variance.}

The four endpoints, each at error level \(\alpha/4\), assemble into the rectangle of
\eqref{eq:rectangular_confidence_set}, and the union bound delivers its \(1-\alpha\)
coverage, as the following lemma shows.

\begin{lemma}[{\normalfont\hyperlink{proof:mp_rectangle_coverage}{Coverage of the Maurer--Pontil rectangle}}]
\label{lem:mp_rectangle_coverage}
Fix \(\alpha \in (0, 1)\). For each arm \(d \in \{0, 1\}\) and each one-sided
error level \(b \in (0, 1)\), $\Pr_{P_F}\!\left( \underline{\sigma}_d^2(b; \omega) \le \sigma_d^2(F) \right) \ge 1-b$ and $\Pr_{P_F}\!\left( \sigma_d^2(F) \le \overline{\sigma}_d^2(b; \omega) \right) \ge 1-b$
for every \(F \in \mathcal F\). Consequently, the rectangle $\widehat\Theta_\alpha(\omega)$ satisfies $\Pr_{P_F}\!\left( \theta(F) \in \widehat\Theta_\alpha(\omega) \right) \ge 1-\alpha$ for every \(F \in \mathcal F\).
\end{lemma}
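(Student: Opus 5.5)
The proof reduces to three steps: import the Maurer--Pontil standard-deviation bounds, transfer them to the projected variance bounds in \eqref{eq:eb_bounds}, and then take a union bound over the four sides of the rectangle.

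\emph{Step 1: inputs.} Fix \(F\in\mathcal F\), an arm \(d\), and a level \(b\). Under the fixed-arm pilot CRD, the \(M_d\) outcomes observed in arm \(d\) are i.i.d.\ draws from the marginal of \(Y(d)\), which is supported in \([0,1]\). The empirical Bernstein inequality of \citet{maurer2009empirical} for the sample standard deviation therefore applies. It gives, for \(M_d\ge2\),
\[
\Pr\bigl(\sigma_d\le\hat\sigma_d+\eta_d(b)\bigr)\ge1-b
\quad\text{and}\quad
\Pr\bigl(\sigma_d\ge\hat\sigma_d-\eta_d(b)\bigr)\ge1-b .
\]
These hold uniformly in \(F\) because the radius \(\eta_d(b)\) does not depend on \(F\). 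The one point to check is that the stated radius \(\sqrt{2\log(1/b)/(M_d-1)}\) matches the constant in Maurer and Pontil's Theorem~10 for the unbiased sample variance of \([0,1]\)-valued variables. I expect this constant-matching to be the only real obstacle. Their result is stated with this constant, so I will cite it and note that it uses the \(1/(M_d-1)\) normalization adopted in the paper.

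\emph{Step 2: transfer to the variance bounds.} This is a deterministic event inclusion.

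For the upper bound, suppose \(\sigma_d\le\hat\sigma_d+\eta_d(b)\). Both sides are nonnegative, so squaring gives \(\sigma_d^2\le(\hat\sigma_d+\eta_d(b))^2\). Since \(\sigma_d^2\le1/4\) always holds for \([0,1]\)-valued outcomes, also \(\sigma_d^2\le\min\{1/4,(\hat\sigma_d+\eta_d(b))^2\}=\overline\sigma_d^2(b;\omega)\).

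For the lower bound, suppose \(\sigma_d\ge\hat\sigma_d-\eta_d(b)\). Since \(\sigma_d\ge0\), this gives \(\sigma_d\ge(\hat\sigma_d-\eta_d(b))_+\), and squaring (monotone on \(\mathbb R_+\)) gives \(\sigma_d^2\ge(\hat\sigma_d-\eta_d(b))_+^2\). Taking the minimum with \(1/4\) can only lower the right-hand side, so \(\sigma_d^2\ge\underline\sigma_d^2(b;\omega)\).

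Hence each variance event contains the corresponding standard-deviation event, and its probability is at least \(1-b\). This proves the first claim.

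\emph{Step 3: rectangle coverage.} Set \(b=\alpha/4\). The true pair \(\theta(F)\) lies in \(\widehat\Theta_\alpha(\omega)\) exactly when all four one-sided inequalities hold. The complement is therefore the union of four failure events, each of probability at most \(\alpha/4\) by Step~2. Bonferroni's inequality bounds the miss probability by \(\alpha\), so \(\Pr_{P_F}(\theta(F)\in\widehat\Theta_\alpha(\omega))\ge1-\alpha\) for every \(F\in\mathcal F\). No independence across arms is needed.
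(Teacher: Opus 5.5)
Your proposal is correct and follows the paper's proof essentially step for step. Both arguments invoke Theorem~10 of \citet{maurer2009empirical}; the paper additionally verifies that Maurer and Pontil's pairwise statistic $\frac{1}{M_d(M_d-1)}\sum_{i<j}(Y_{di}-Y_{dj})^2$ equals the unbiased sample variance with mean $\sigma_d^2(F)$, which is exactly the normalization check you flag. Both then transfer the bounds to the projected endpoints using $\sigma_d^2(F)\le 1/4$ and $\sigma_d(F)\ge 0$, and finish with a union bound over the four sides. The only difference is cosmetic: you show that the success events are included in the coverage events, while the paper argues contrapositively on the failure events.
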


\subsection{Statistical Properties}

This subsection establishes three guarantees for the CMR assignment. First, in
finite samples, the reported certificate bounds realized regret with probability
at least \(1-\alpha\). Second, the caution built into the rule disappears as the
pilot grows. At interior variance pairs, CMR converges to the Neyman allocation, with assignment error shrinking at the inverse-square-root rate and realized regret and the certificate at the inverse-pilot rate. Third, I derive an upper bound on CMR's worst-case expected regret that matches the minimax-regret lower bound up to constants.

\paragraph{Finite-sample validity.}
\label{subsub:conditional_optimality}

The link from \(U_{\mathrm{CMR}}\) to a finite-sample guarantee is coverage, the
event that the realized rectangle contains the true variance pair. On that
event, the true pair is among the configurations over which the certificate takes
the worst case, so the realized regret of the CMR assignment cannot exceed
\(U_{\mathrm{CMR}}\). Lemma~\ref{lem:mp_rectangle_coverage} shows that coverage
holds with probability at least \(1-\alpha\). The next theorem combines these two
observations and adds that the certificate never exceeds \(1/4\).

\begin{theorem}[{\normalfont\hyperlink{proof:cmr-certified-optimality}{Finite-sample regret certificate}}]
\label{thm:cmr-certified-optimality}
Fix \(\alpha\in(0,1)\). Then the following statements hold.

\begin{enumerate}
\item[(i)] For every \(F\in\mathcal F\), $\Pr_{P_F}\! \Big( r\!\left(p_{\mathrm{CMR}}(\omega),\theta(F)\right) \le U_{\mathrm{CMR}}(\omega) \Big) \ge 1-\alpha$. 

\item[(ii)] For every pilot realization \(\omega\), $U_{\mathrm{CMR}}(\omega)\le \frac14$. Moreover, equality holds if and only if
$(1/4,0)\in\widehat\Theta_\alpha(\omega)$ and $(0,1/4)\in\widehat\Theta_\alpha(\omega)$.

\end{enumerate}
\end{theorem}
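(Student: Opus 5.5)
Part (i) follows from Lemma~\ref{lem:mp_rectangle_coverage}. On the coverage event \(\{\theta(F)\in\widehat\Theta_\alpha(\omega)\}\), the true pair is one of the points over which the supremum in \eqref{eq:cmr_certificate} is taken. Hence \(r(p_{\mathrm{CMR}}(\omega),\theta(F))\le U_{\mathrm{CMR}}(\omega)\) pointwise on that event, and the event has probability at least \(1-\alpha\) for every \(F\in\mathcal F\). The only care needed is measurability of the event \(\{r\le U\}\). This is immediate, because by Proposition~\ref{prop:cmr_assignment_rectangle} both \(p_{\mathrm{CMR}}\) and \(U_{\mathrm{CMR}}\) are continuous functions of the four endpoints, and those endpoints are measurable functions of \(\omega\). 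In the degenerate case \(\overline\sigma_d=0\), which the Maurer--Pontil construction rules out, I would simply note that the argument uses only the definition \eqref{eq:cmr_certificate}.

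For the bound in part (ii), I would compare \(U_{\mathrm{CMR}}\) with the value of the balanced assignment. Since \(p_{\mathrm{CMR}}\) minimizes worst-case regret over the rectangle, we have \(U_{\mathrm{CMR}}\le\sup_{\theta\in\widehat\Theta_\alpha}r(1/2,\theta)\). Using the second form in \eqref{eq:regret_allocation_mistake_baseline}, \(r(1/2,\theta)=(\sigma_1-\sigma_0)^2\). Because \(\widehat\Theta_\alpha\subseteq[0,1/4]^2\), each standard deviation lies in \([0,1/2]\), so \(r(1/2,\theta)\le 1/4\).

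For the equality characterization, I would work with the closed form \eqref{eq:cmr_closed_form_certificate}. Write \(a_d=\overline\sigma_d\), \(b_d=\underline\sigma_d\), with \(0\le b_d\le a_d\le 1/2\). The ``if'' direction is immediate: if both off-diagonal extreme corners belong to the rectangle, then \(a_1=a_0=1/2\) and \(b_1=b_0=0\), and the formula gives \((1/4)^2/(1/4)=1/4\). For ``only if'', the main step is to show that
\[
\frac{(a_1a_0-b_1b_0)^2}{(a_1+b_1)(a_0+b_0)}\le \frac{a_1a_0}{1}\cdot\frac{a_1a_0}{a_1a_0}\le a_1a_0\le \frac14,
\]
with equality forcing \(b_1=b_0=0\) and \(a_1=a_0=1/2\). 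Concretely, \(a_1a_0-b_1b_0\le a_1a_0\), and the denominator is at least \(a_1a_0\). So the ratio is at most \(a_1a_0\), which in turn is at most \(1/4\).

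Equality throughout forces three things. First, \(a_1a_0=1/4\), hence \(a_1=a_0=1/2\). Second, \((a_1+b_1)(a_0+b_0)=a_1a_0\), hence \(b_1=b_0=0\). The remaining condition, \((a_1a_0-b_1b_0)^2=(a_1a_0)^2\), is then automatic. Translating back to variances gives \(\underline\sigma_d^2=0\) and \(\overline\sigma_d^2=1/4\) for both arms, which is exactly the statement that \((1/4,0)\) and \((0,1/4)\) are in the rectangle.

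The one step I expect to be delicate is making sure this chain of inequalities is strict whenever any endpoint departs from the extremes. In particular, if some \(b_d>0\), the denominator strictly exceeds \(a_1a_0\) because \(a_{1-d}>0\). The hypothesis \(\overline\sigma_1,\overline\sigma_0>0\) of Proposition~\ref{prop:cmr_assignment_rectangle} is guaranteed by the construction \eqref{eq:eb_bounds}, so the closed form always applies.
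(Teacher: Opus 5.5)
Your proposal is correct and follows the paper's proof essentially step for step: part (i) via the coverage lemma and the definition of the certificate, the bound in (ii) by comparison with balanced assignment, and the ``only if'' direction via the same chain $U_{\mathrm{CMR}}\le\overline{\sigma}_1\overline{\sigma}_0\le 1/4$ from the closed form, with equality forcing $\underline{\sigma}_1=\underline{\sigma}_0=0$ and $\overline{\sigma}_1=\overline{\sigma}_0=1/2$. The only difference is minor and lies in the ``if'' direction: you substitute the full-square endpoints into the closed form, whereas the paper shows that at the two adversarial corners every $p$ has worst-case regret at least $\max\{(1-p)/(4p),\,p/(4(1-p))\}\ge 1/4$, which does not need the closed form.
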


Part~(i) shows that, for every distribution, the certificate
\(U_{\mathrm{CMR}}(\omega)\) bounds the regret of \(p_{\mathrm{CMR}}(\omega)\) with
probability at least \(1-\alpha\). The guarantee is finite-sample, holding
at the realized pilot sizes rather than only in a large-pilot approximation.\footnote{Part~(i) also implies quantile domination: for every \(F\) and every \(t\le 1-\alpha\), the \(t\)-quantile of realized regret is bounded by the \((t+\alpha)\)-quantile of the certificate.} The
bound is not only valid but optimal in the sense that
\(U_{\mathrm{CMR}}(\omega)\) is the smallest worst-case
regret bound that any assignment can attain over the variance pairs still
plausible after the pilot. In particular, for every alternative assignment
\(\widetilde p(\omega)\in(0,1)\), $U_{\mathrm{CMR}}(\omega) \le \sup_{\theta\in\widehat\Theta_\alpha(\omega)} r\!\left(\widetilde p(\omega),\theta\right)$.
This makes CMR a certified decision in the framework of
\citet{andrews2025certified}. In their terms, \(U_{\mathrm{CMR}}(\omega)\) is a
\(P\)-certificate for the loss \(r(p_{\mathrm{CMR}}(\omega),\theta(F))\).

Part~(ii) bounds the certificate itself. Balance is always among the candidates and its regret is at most \(1/4\) at every configuration, so the certificate never exceeds \(1/4\), the guarantee available with no pilot.\footnote{CMR can also be certified to be no worse than balance with probability at least \(1-\alpha\). It suffices to restrict the rule to assignments whose variance is below balance's at every pair in the rectangle.} Equality holds exactly when the rectangle still contains the two adversarial corners \((1/4,0)\) and \((0,1/4)\), one placing all variability in treatment and the other all in control. Because these are opposite vertices of \([0,1/4]^2\), only the full variance space contains both, so any pilot that rules out anything at all earns a certificate strictly below \(1/4\).

\paragraph{CMR convergence rates.}
\label{subsub:cmr_convergence_rates}

Now we ask how that guarantee evolves as the pilot
becomes informative. In the interior of the variance space, where both arms have positive variance, the
confidence rectangle collapses around the truth. The next theorem shows
that, as both pilot arms grow large, the CMR rule converges
to the Neyman allocation, and both realized regret and the certificate shrink to zero. 

\begin{theorem}[{\normalfont\hyperlink{proof:cmr-neyman-recovery}{Neyman convergence and interior rates}}]
\label{thm:cmr-neyman-recovery}
Fix \(F\in\mathcal F\), and suppose \(\sigma_1(F)>0\) and \(\sigma_0(F)>0\). Suppose also that \(\alpha\in(0,1)\) is fixed and that \(M_0,M_1\to\infty\), with \(M_0/M_1\) bounded away from zero and infinity. Then the CMR rule satisfies the following statements. 
\begin{enumerate} 
	\item[(i)] $p_{\mathrm{CMR}} \overset{p}{\longrightarrow} \pi^*(\theta(F))$, $|p_{\mathrm{CMR}}-\pi^*(\theta(F))| = O_p\!\left(M_1^{-1/2}+M_0^{-1/2}\right)$. 
	\item[(ii)] $r\!\left(p_{\mathrm{CMR}},\theta(F)\right) = O_p\!\left(M_1^{-1}+M_0^{-1}\right)$. 
	\item[(iii)] $U_{\mathrm{CMR}} = O_p\!\left(M_1^{-1}+M_0^{-1}\right)$.
\end{enumerate}
\end{theorem}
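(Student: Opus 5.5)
The plan is to work entirely through the closed form of Proposition~\ref{prop:cmr_assignment_rectangle} and the Maurer--Pontil endpoints in \eqref{eq:eb_bounds}, which are explicit functions of the pilot standard deviations \(\hat\sigma_d\) and the deterministic radii \(\eta_d=\eta_d(\alpha/4)=\sqrt{2\log(4/\alpha)/(M_d-1)}\).

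\textbf{Step 1: localize the endpoints.} Since \(\alpha\) is fixed and \(M_d\to\infty\), we have \(\eta_d=O(M_d^{-1/2})\). Outcomes are bounded, so the CLT (or Chebyshev, since fourth moments are finite) gives \(\hat\sigma_d^2-\sigma_d^2=O_p(M_d^{-1/2})\). Because \(\sigma_d>0\), the delta method for the square root then gives \(\hat\sigma_d-\sigma_d=O_p(M_d^{-1/2})\). Alternatively, one can invoke the same Maurer--Pontil concentration directly on the standard-deviation scale, which avoids the delta method.

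\textbf{Step 2: show the truncations eventually do not bind.} On the event \(\{\sigma_d/2<\hat\sigma_d-\eta_d\}\), whose probability tends to one, the positive part is inactive. For the cap at \(1/2\), I will argue that \(\min\{1/2,\cdot\}\) is \(1\)-Lipschitz and that \(\sigma_d\le 1/2\). This yields
\[
|\overline\sigma_d-\sigma_d|\le|\hat\sigma_d-\sigma_d|+\eta_d,
\]
and similarly for \(\underline\sigma_d\). Hence all four endpoints lie within \(O_p(M_1^{-1/2}+M_0^{-1/2})\) of the corresponding true standard deviations, and both \(\overline\sigma_d\) and \(\underline\sigma_d\) are bounded below by \(\sigma_d/2\) with probability tending to one.

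\textbf{Step 3: prove part (i).} Write \(m_d=(\overline\sigma_d+\underline\sigma_d)/2\), so that \(p_{\mathrm{CMR}}=m_1/(m_1+m_0)\). The map \((s_1,s_0)\mapsto s_1/(s_1+s_0)\) is Lipschitz on \(\{s_1,s_0\ge c\}\) for any \(c>0\). Combined with \(|m_d-\sigma_d|=O_p(M_1^{-1/2}+M_0^{-1/2})\) from Step~2, this gives (i), and consistency follows immediately.

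\textbf{Step 4: prove part (ii).} Use the first form of \eqref{eq:regret_allocation_mistake_baseline},
\[
r=(\sigma_1+\sigma_0)^2\frac{(p-\pi^*)^2}{p(1-p)}.
\]
Because \(\pi^*\in(0,1)\) is interior and \(p_{\mathrm{CMR}}\to\pi^*\), the denominator is bounded away from zero with probability tending to one. Squaring the rate from (i) then gives
\[
r=O_p\bigl((M_1^{-1/2}+M_0^{-1/2})^2\bigr)=O_p(M_1^{-1}+M_0^{-1}).
\]

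\textbf{Step 5: prove part (iii), the delicate step.} I will use the closed form \eqref{eq:cmr_closed_form_certificate}. The denominator is bounded below by \(\sigma_1\sigma_0>0\) with probability tending to one. For the numerator, write
\[
\overline\sigma_1\overline\sigma_0-\underline\sigma_1\underline\sigma_0=\overline\sigma_1(\overline\sigma_0-\underline\sigma_0)+\underline\sigma_0(\overline\sigma_1-\underline\sigma_1).
\]
On the untruncated event, \(\overline\sigma_d-\underline\sigma_d\le 2\eta_d\) deterministically, and the truncation only shrinks this width. So the numerator is \(O_p(\eta_1+\eta_0)\), and its square is \(O_p(M_1^{-1}+M_0^{-1})\).

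The main obstacle is verifying carefully that the projections onto \([0,1/4]\) never widen the interval or destroy the Lipschitz bounds, particularly when some \(\sigma_d\) equals \(1/2\) so that the upper cap binds. In that case the width bound still holds, since capping the upper endpoint only decreases \(\overline\sigma_d-\underline\sigma_d\). The bounded-ratio assumption \(M_0/M_1\) is used only to state the rates symmetrically; it is not needed elsewhere.
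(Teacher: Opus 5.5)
Your proposal is correct and follows essentially the same route as the paper: endpoint contraction through nonexpansive truncations, the closed-form assignment with a Lipschitz ratio map, the squared-error regret identity with an interior denominator, and the closed-form certificate with a product-gap numerator bound. Your observation that $\overline\sigma_d-\underline\sigma_d\le 2\eta_d$ holds deterministically, because the truncation is monotone and $1$-Lipschitz, slightly sharpens the paper's $O_p$ bound for that step.
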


All three rates come from a single source, the contraction of the confidence
rectangle around the truth. The Maurer--Pontil endpoints learn each arm's
standard deviation at the inverse-square-root rate in the pilot arm sizes, shrinking the rectangle to the
variance pair at rate \(M_1^{-1/2}+M_0^{-1/2}\). At an interior truth the rule
varies smoothly with the rectangle and inherits this rate, which is
statement~(i). The other two rates are faster, and the reason is that, by the regret identity~\eqref{eq:regret_allocation_mistake_baseline}, regret is quadratic in the gap between the assignment and the Neyman allocation,
with a coefficient that is finite whenever both arms have positive variance. An inverse-square-root error in the assignment then enters regret squared, at rate
\(M_1^{-1}+M_0^{-1}\), which is statement~(ii). The certificate obeys the same bound through its closed form, which squares a
product gap of inverse-square-root order over a denominator bounded away from zero in the
interior, giving statement~(iii).

The key implication of the theorem is that, at an interior truth, the
finite-sample safety of the CMR rule does not slow the rate at which it
converges to the Neyman allocation. In the corresponding large-main-wave limit,
this convergence implies that the resulting assignment approaches the optimized
Hahn-bound allocation, which \citet{armstrong2022asymptotic} identifies as the
first-order efficiency frontier across all designs that may adapt to covariates
and past outcomes. The CMR rule approaches this frontier while certifying its own
regret at every pilot size with probability at least \(1-\alpha\).\footnote{Feasible Neyman approaches the frontier at the same inverse-square-root rate, because the pilot estimates the standard deviations at that rate and the Neyman formula responds smoothly to small errors at an interior truth.} First-order asymptotic efficiency and finite-sample safety
hold together, with no trade-off between them.

The interior rates assume both arms have positive variance. The following remark shows that when one arm is degenerate, realized regret and the certificate converge at the slower rate \(M_d^{-1/2}\) rather than \(M_d^{-1}\).

\begin{remark}[Boundary rates]
\label{rem:cmr-boundary-rates}
The interior rates rely on the Neyman allocation lying strictly between zero and
one. There the denominator \(\pi(1-\pi)\) in the regret identity
\eqref{eq:regret_allocation_mistake_baseline} is bounded away from zero, so regret scales with the
squared assignment error, and an error of order \(M_d^{-1/2}\)
produces regret of order \(M_d^{-1}\). A degenerate arm moves the Neyman allocation to a corner and breaks this scaling, slowing realized regret and the certificate to the assignment-error rate itself.

With \(\sigma_1(F)>0\) and \(\sigma_0(F)=0\), the Neyman allocation is \(\pi^*=1\). The upper
confidence endpoint for \(\sigma_0\) remains strictly positive at every finite
pilot size, shrinking at rate \(M_0^{-1/2}\) under the Maurer--Pontil
construction, so CMR keeps a small control share \(1-p_{\mathrm{CMR}}\) that
vanishes at the same rate. At this truth the regret identity reduces to
$r(\pi,\sigma_1^2(F),0)=\tfrac{\sigma_1^2(F)\,(1-\pi)}{\pi}$,
linear in the leftover control share rather than quadratic. The hedge that keeps the rule off the corner therefore costs
order \(M_0^{-1/2}\) in realized regret, and the certificate is
of the same order. The case \(\sigma_1(F)=0\) and \(\sigma_0(F)>0\) is symmetric.
\end{remark}

\paragraph{Matching the Minimax-Regret Rate.}
\label{subsub:matching_minimax_regret_rate}

The convergence rates just established are pointwise in $F$,
describing how the rule and its realized regret behave at a fixed population as the
pilot grows. The minimax-regret criterion of
Subsection~\ref{sub:minimax_regret_finite_pilot_adaptation} instead judges a
rule by its worst-case expected regret across all populations. As the next theorem shows, the CMR rule's worst-case expected regret shrinks at
the inverse-square-root rate in the pilot arm sizes, and at the inverse-pilot
rate once both arms are bounded away from degeneracy.

\begin{theorem}[{\normalfont\hyperlink{proof:cmr-competitive-risk}{Uniform expected regret}}]
\label{thm:cmr-competitive-risk}
Fix \(\alpha\in(0,1)\). There exists a constant \(C_\alpha<\infty\), depending only on \(\alpha\), such that, for all pilot arm sizes \(M_1,M_0\ge2\),
\[
\sup_{F\in\mathcal F}
\mathbb E_{P_F}\!\left[
r\!\left(p_{\mathrm{CMR}}(\omega),\theta(F)\right)
\right]
\le
C_\alpha
\left(M_1^{-1/2}+M_0^{-1/2}\right).
\]
Moreover, for every \(\kappa>0\), there exists a constant
\(C_{\alpha,\kappa}<\infty\), depending only on \(\alpha\) and \(\kappa\), such that
\[
\sup_{\substack{F\in\mathcal F:\\ \sigma_1(F)\wedge\sigma_0(F)\ge \kappa}} \mathbb E_{P_F}\!\left[ r\!\left(p_{\mathrm{CMR}}(\omega),\theta(F)\right) \right] \le C_{\alpha,\kappa} \left(M_1^{-1}+M_0^{-1}\right).
\]
\end{theorem}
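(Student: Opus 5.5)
The plan is to work directly with the closed form of Proposition~\ref{prop:cmr_assignment_rectangle} rather than through the certificate. The coverage event of Lemma~\ref{lem:mp_rectangle_coverage} fails with fixed probability up to \(\alpha\). On that event the regret of \(p_{\mathrm{CMR}}\) can be as large as order \(\min_d\eta_d(\alpha/4)^{-1}\asymp \sqrt{M_d}\), so bounding regret on the bad event crudely would give a bound that grows with the pilot.

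\textbf{Step 1: an exact regret formula.} Write \(a_d=\overline\sigma_d+\underline\sigma_d\), with all endpoints taken at level \(\alpha/4\). Plugging \(p_{\mathrm{CMR}}=a_1/(a_1+a_0)\) into the second form of \eqref{eq:regret_allocation_mistake_baseline} gives
\[
r(p_{\mathrm{CMR}},\theta)=\frac{(a_0\sigma_1-a_1\sigma_0)^2}{a_1a_0}.
\]
Now set \(e_d=a_d-2\sigma_d\), the error of the interval midpoint. Then \(a_0\sigma_1-a_1\sigma_0=e_0\sigma_1-e_1\sigma_0\), and hence
\[
r\le \frac{2e_0^2}{a_0}\cdot\frac{\sigma_1^2}{a_1}+\frac{2e_1^2}{a_1}\cdot\frac{\sigma_0^2}{a_0}.
\]
Each factor depends on one pilot arm only. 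Because the two arms are independent, the expectation of each product factorizes, so it suffices to bound \(\mathbb E[\sigma_d^2/a_d]\) by a constant and \(\mathbb E[e_d^2/a_d]\) by \(C_\alpha M_d^{-1/2}\), uniformly over \(F\).

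\textbf{Step 2: single-arm bounds.} I would use three deterministic facts.
\begin{enumerate}
\item[(a)] \(a_d\ge\overline\sigma_d\ge \eta_d\wedge\tfrac12\), where \(\eta_d=\eta_d(\alpha/4)\asymp\sqrt{\log(4/\alpha)/M_d}\).
\item[(b)] Since truncation at \(0\) and \(1/2\) is \(1\)-Lipschitz and \(\sigma_d\in[0,1/2]\), \(|e_d|\le 2|\hat\sigma_d-\sigma_d|+2\eta_d\).
\item[(c)] \(|e_d|\le a_d+2\sigma_d\).
\end{enumerate}
To control the tails of \(|\hat\sigma_d-\sigma_d|\), I would apply the Maurer--Pontil bound at every level \(b\), which gives \(\Pr(|\hat\sigma_d-\sigma_d|>t)\le 2e^{-(M_d-1)t^2/2}\), and integrate.

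By (c), \(e_d^2/a_d\le |e_d|\,(1+2\sigma_d/a_d)\). The term \(\mathbb E|e_d|\) is \(O(\eta_d)\). The remaining term needs care because \(\sigma_d/a_d\) is large only when \(\hat\sigma_d\) badly underestimates \(\sigma_d\). I would split on the event \(\{\hat\sigma_d\ge\sigma_d/2\}\):
\begin{enumerate}
\item[(i)] On that event, \(a_d\gtrsim\sigma_d\), so the ratio is bounded.
\item[(ii)] Off that event, the ratio is at most \(\sigma_d/\eta_d\) by (a), \(|e_d|\le 2\), and the event has probability at most \(2e^{-(M_d-1)\sigma_d^2/8}\).
\end{enumerate}
In case (ii) the product is at most \(C\eta_d^{-1}\sigma_d e^{-cM_d\sigma_d^2}\). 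Maximizing over \(\sigma_d\) gives order \(\eta_d^{-1}M_d^{-1/2}\), which is \(O(1)\) but not \(O(M_d^{-1/2})\). So this crude split is where I expect the main obstacle. I would refine it by also using \(|e_d|\lesssim |\hat\sigma_d-\sigma_d|+\eta_d\) on the bad event and integrating \(t\cdot(\sigma_d/\eta_d)\) against the tail. If that remains lossy, I would instead exploit that the worst case needs \(\sigma_d\asymp M_d^{-1/2}\), where \(\sigma_d/a_d\lesssim \sigma_d/\eta_d=O(1)\) already, so no splitting is needed there.

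The same splitting gives \(\mathbb E[\sigma_d^2/a_d]\le \sigma_d\,\mathbb E[\sigma_d/a_d]=O(1)\). Combining Steps 1 and 2 then yields the first display with \(C_\alpha\) depending only on \(\alpha\) through \(\log(4/\alpha)\).

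\textbf{Step 3: the \(\kappa\)-interior bound.} Now suppose \(\sigma_d\ge\kappa\). Let \(G\) be the event \(\{|\hat\sigma_d-\sigma_d|\le\kappa/2\text{ for both }d\}\), and take \(M_d\) large enough that \(\eta_d\le\kappa/4\); small pilots are absorbed into the constant via the first bound.
\begin{enumerate}
\item[(i)] On \(G\), \(a_d\ge\kappa/2\), so \(r\le C_\kappa(e_0^2+e_1^2)\). Since \(\mathbb E[e_d^2]\le C(\mathbb E|\hat\sigma_d-\sigma_d|^2+\eta_d^2)=O(M_d^{-1})\), this part is of the required order.
\item[(ii)] Off \(G\), I would use the crude bound \(r\le\tfrac14(a_0/a_1+a_1/a_0)\le C\max_d\eta_d^{-1}\), which is \(O(\sqrt{M_1}+\sqrt{M_0})\). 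The event has probability at most \(4e^{-c\kappa^2\min_d M_d}\), so this contribution is \(o(M_1^{-1}+M_0^{-1})\) uniformly.
\end{enumerate}
Here the bounded ratio \(M_0/M_1\) is not assumed, so I would bound the exponential term against each \(M_d\) separately, and the off-\(G\) estimate needs care with the \(\min_d M_d\) in the exponent. Together (i) and (ii) give the constant \(C_{\alpha,\kappa}\).
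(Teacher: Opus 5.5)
Step 1 is exactly the paper's decomposition and factorization across independent arms, but Step 3(ii) has a genuine gap. On $G^c$ you bound the regret by $C\max_d\eta_d^{-1}\asymp_\alpha\sqrt{M_1}+\sqrt{M_0}$ and multiply by $\Pr(G^c)\le 4e^{-c\kappa^2\min_d M_d}$. That product is not $O(M_1^{-1}+M_0^{-1})$ uniformly over arm sizes, and uniformity is required because the theorem puts no bound on $M_0/M_1$. To see the failure, fix $M_1$ at the smallest value with $\eta_1\le\kappa/4$ and let $M_0\to\infty$. Your bound then grows like $\sqrt{M_0}$, while the target $C_{\alpha,\kappa}(M_1^{-1}+M_0^{-1})$ stays bounded. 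Splitting the probability as $2e^{-c\kappa^2M_1}+2e^{-c\kappa^2M_0}$ does not rescue this, because the cross term $\sqrt{M_0}\,e^{-c\kappa^2M_1}$ survives. The defect is in the regret bound, not the exponent: it charges the blowup of either arm against the failure probability of the other. You flag that this ``needs care,'' but the asserted $o(M_1^{-1}+M_0^{-1})$ does not follow from what you wrote.

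The missing idea is to make the blowup arm-specific. The paper splits the exceptional event according to which arm is bad.
\begin{itemize}
\item If only $\hat\sigma_1<\kappa/2$, then $a_0\ge\kappa/2$ keeps $1-p_{\mathrm{CMR}}$ bounded below. Hence $r\le\sigma_1^2/p_{\mathrm{CMR}}+\sigma_0^2/(1-p_{\mathrm{CMR}})\le C_{\alpha,\kappa}\sqrt{M_1}$, which pairs with probability $e^{-c\kappa^2M_1}$. The case where only arm $0$ is bad is symmetric.
\item If both arms are bad, independence gives probability $e^{-c\kappa^2M_1}e^{-c\kappa^2M_0}$, which absorbs $\sqrt{M_1}+\sqrt{M_0}$.
\end{itemize}

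A shorter repair uses your own Step 1. Since $\mathbb E[\sigma_{d'}^2/a_{d'}]\le C_\alpha$ for every $F$, the factorized bound reduces the interior claim to the single-arm estimate $\mathbb E[e_d^2/a_d]\le C_{\alpha,\kappa}M_d^{-1}$ when $\sigma_d\ge\kappa$. This follows by splitting on $\{\hat\sigma_d\ge\kappa/2\}$ alone. On that event $a_d\ge\kappa/2$ and $\mathbb E[e_d^2]=O(M_d^{-1})$. Off it, $e_d^2/a_d\le(\eta_d\wedge\tfrac12)^{-1}$, set against probability $e^{-(M_d-1)\kappa^2/8}$. The arms are never mixed.

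Separately, the ``main obstacle'' in Step 2 comes from routing through (c). Your tail bound yields $\mathbb E[(\hat\sigma_d-\sigma_d)^2]\le 4/(M_d-1)$. Combined with (a) and (b), this gives $\mathbb E[e_d^2/a_d]\le\mathbb E[e_d^2]/(\eta_d\wedge\tfrac12)\le C_\alpha M_d^{-1/2}$ in one line, which is essentially the paper's argument. Your first refinement there does work: on the bad event $|e_d|\lesssim\sigma_d+\eta_d$. Your second fallback alone would not. With only $|e_d|\le 1$, the off-event term is of order $ue^{-u^2/8}$ with $u=\sigma_d\sqrt{M_d}$, which is not $O(M_d^{-1/2})$ for $1\lesssim u\lesssim\sqrt{\log M_d}$.
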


Theorem~\ref{thm:cmr-competitive-risk} evaluates CMR by the ex ante criterion of Section~\ref{sec:benchmark_assignment_rules}, worst-case expected regret. Even though the assignment adapts to the realized pilot, its worst-case expected regret vanishes at the inverse-square-root rate in the pilot arm sizes. On any class of populations with both standard deviations bounded away from zero, the rate improves to the inverse-pilot rate. The slower uniform rate has the same source as the boundary rate of Remark~\ref{rem:cmr-boundary-rates}.

The key reason why the upper bound vanishes is that the assignment and its regret do not blow up when the rectangle \(\widehat\Theta_\alpha(\omega)\) misses the truth \(\theta(F)\). Both depend continuously on how far the pilot's variance estimates fall from the truth, not on whether the rectangle contains it, so a small estimation error keeps regret small, covered or not. The bound is therefore governed by the size of this estimation error, which shrinks at the inverse-square-root rate.

Combining Theorem~\ref{thm:cmr-competitive-risk} with
Proposition~\ref{prop:mmr-lower-bound} yields $\sup_{F\in\mathcal F} \mathbb E_{P_F}\!\left[ r\!\left(p_{\mathrm{CMR}}(\omega),\theta(F)\right) \right] \le \frac{C_\alpha}{c} R^{\mathrm{mmr}}_{M_1,M_0}$.
The theorem gives the CMR upper bound, of order \(M_1^{-1/2}+M_0^{-1/2}\), and the
proposition the matching lower bound for the exact minimax-regret value. CMR's
worst-case expected regret is therefore at most a constant, depending only on
\(\alpha\), times \(R^{\mathrm{mmr}}_{M_1,M_0}\), the smallest any pilot-based rule
can attain. Strikingly, CMR achieves this although it optimizes only against the realized rectangle and never solves the ex ante minimax problem. Unlike an exact minimax-regret rule, CMR is closed form and reports a finite-sample
certificate for the assignment it chooses.

\section{Extensions: Multiple Treatments and Stratification \label{sec:extensions}}

\subsection{Multiple Treatments with a Shared Control}
\label{sub:multiple_treatments_shared_control}

\paragraph{Setup.}
Many experiments compare several treatments against a common control, testing
versions of a program, prices, information treatments, or implementation modes. The
design question is how to divide the main wave across all the treatments and the
shared control at once. The control arm is indexed by \(k=0\) and the treatments by \(k=1,\ldots,K\).
A unit's potential outcomes \((Y(0),\ldots,Y(K))\), with \(Y(k)\in[0,1]\), are
drawn from a population distribution \(F\). The mean and variance of \(Y(k)\)
are \(\mu_k\) and \(\sigma_k^2\), and \(\theta(F)=(\sigma_0^2,
\sigma_1^2,\ldots,\sigma_K^2)\in[0,1/4]^{K+1}\) collects the arm variances. The
main wave is set by the allocation vector \(\pi=(\pi_0,\pi_1,\ldots,\pi_K)\),
where \(\pi_k>0\) is the fraction allocated to arm \(k\) and
\(\sum_{k=0}^K\pi_k=1\). The estimands of interest are the \(K\) treatment-control contrasts
\(\text{ATE}_k = \mu_k-\mu_0\).

For a main wave of size \(N\) with \(N_k\) units on arm \(k\), the
difference-in-means estimator of contrast \(k\) has variance
\(\sigma_k^2/N_k+\sigma_0^2/N_0\). With \(N_k=N\pi_k\), the sum of the marginal variances of the \(K\) contrasts equals \(V(\pi,\theta)/N\), where
$V(\pi,\theta) = \frac{K\sigma_0^2}{\pi_0} + \sum_{k=1}^K \frac{\sigma_k^2}{\pi_k}$.
Every term is an arm variance divided by the share on that arm, so the allocation problem retains the inverse-share structure of the binary case, with the control variance now counted \(K\) times. This is an A-optimality criterion for the vector of treatment-control contrasts, since it minimizes the trace of their covariance matrix \citep{pukelsheim2006optimal}. It therefore targets average marginal variance rather than the power of a particular joint test. A pilot of fixed size must now be spread over \(K+1\) arms rather than two, so each arm variance is estimated with fewer observations.

\paragraph{Neyman Allocation.}

For known positive variances, minimizing \(V(\pi,\theta)\) over the simplex \(\Delta_K=\{\pi\in\mathbb R_+^{K+1}:\sum_{k=0}^K\pi_k=1\}\) yields the multi-arm Neyman allocation:
\begin{equation}
\label{eq:multi_arm_neyman}
    \pi_0^*(\theta)
    =
    \frac{\sqrt K\,\sigma_0}
    {\sqrt K\,\sigma_0+\sum_{j=1}^K\sigma_j},
    \qquad
    \pi_k^*(\theta)
    =
    \frac{\sigma_k}
    {\sqrt K\,\sigma_0+\sum_{j=1}^K\sigma_j},
    \quad k=1,\ldots,K .
\end{equation}
The benchmark value is \(V^*(\theta)=\inf_{\pi}V(\pi,\theta)=\left(\sqrt K\,\sigma_0+\sum_{k=1}^K\sigma_k\right)^2\), attained at the Neyman allocation \eqref{eq:multi_arm_neyman} when all arm variances are strictly positive, and regret at a feasible \(\pi\) is \(r(\pi,\theta)=V(\pi,\theta)-V^*(\theta)\).

\paragraph{CMR.}

The CMR construction extends by replacing the scalar assignment probability with the allocation vector. Each arm's variance receives a one-sided empirical Bernstein interval as in Section~\ref{sub:constructing_confidence_rectangle}, with the error budget now split equally across the \(2(K+1)\) one-sided endpoints, and stacking the intervals gives the hyperrectangle \(\widehat\Theta_\alpha(\omega)=\prod_{k=0}^{K}\bigl[\underline{\sigma}_k^2(\omega),\overline{\sigma}_k^2(\omega)\bigr]\subseteq[0,1/4]^{K+1}\), which covers the true variance vector with probability at least \(1-\alpha\) for every \(F\). The CMR allocation \(p_{\mathrm{CMR}}(\omega)\in\Delta_K\) minimizes worst-case regret over this hyperrectangle, and the reported certificate \(U_{\mathrm{CMR}}(\omega)\) is again the worst-case regret at the chosen allocation, carrying the same \(1-\alpha\) finite-sample guarantee as in the binary case.\footnote{Online Appendix~\ref{sub:extension_auxiliary_lemmas} proves, for both extensions of Section~\ref{sec:extensions}, vertex attainment (Lemmas~\ref{lem:extension_regret_convexity} and~\ref{lem:extension_extreme_points}), that CMR is a finite convex program (Lemma~\ref{lem:extension_epigraph}), monotonicity of the certificate under set inclusion (Lemma~\ref{lem:extension_set_monotonicity}), and that the certificate bounds realized regret with probability at least \(1-\alpha\) (Lemma~\ref{lem:extension_certificate_validity}).} The next proposition traces the
CMR allocation from the safe default, when the hyperrectangle is the full
\([0,1/4]^{K+1}\), to the Neyman allocation, when it collapses to a point as the
pilot grows.

\begin{proposition}[{\normalfont\hyperlink{proof:multi_arm_cmr}{Multi-arm CMR between shared-control balance and Neyman}}]
\label{prop:multi_arm_cmr}
Consider the multi-arm shared-control design.
\begin{enumerate}
    \item[(i)] If the pilot is uninformative, with
    \(\widehat\Theta_\alpha(\omega)=[0,1/4]^{K+1}\), the CMR allocation is
    \[
        p_{0,\mathrm{CMR}}
        =
        \frac{1}{1+\sqrt K},
        \qquad
       p_{k,\mathrm{CMR}}
        =
        \frac{1}{\sqrt K\,(1+\sqrt K)},
        \quad k=1,\ldots,K .
    \]

    \item[(ii)] Fix \(\sigma_k(F)>0\) for all \(k\). With \(\alpha\) fixed and
    \(M_{\min}=\min_{0\le k\le K} M_k\to\infty\), every measurable selection
     $p_{\mathrm{CMR}}(\omega)$ from the CMR solution set satisfies
    \( p_{\mathrm{CMR}} \overset{p}{\to}\pi^*(\theta(F))\), the multi-arm Neyman allocation
    \eqref{eq:multi_arm_neyman}.
\end{enumerate}
\end{proposition}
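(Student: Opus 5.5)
Both parts rest on one structural fact, proved first. For fixed \(\pi\) in the interior of the simplex, \(r(\pi,\theta)=V(\pi,\theta)-V^*(\theta)\) is convex in the vector of standard deviations \(s=(\sigma_0,\ldots,\sigma_K)\). The term \(V(\pi,\theta)=K\sigma_0^2/\pi_0+\sum_k\sigma_k^2/\pi_k\) is a positive-definite quadratic form in \(s\). The term \(V^*\) is the square of the linear form \(\ell(s)=\sqrt K\sigma_0+\sum_k\sigma_k\). By Cauchy--Schwarz,
\[
\ell(s)^2\le \Bigl(\tfrac{K\sigma_0^2}{\pi_0}+\sum_k \tfrac{\sigma_k^2}{\pi_k}\Bigr)\Bigl(\pi_0+\sum_k\pi_k\Bigr)=V(\pi,\theta),
\]
so the quadratic form \(V-\ell^2\) is positive semidefinite. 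Hence \(r\) is convex in \(s\), and its supremum over the hyperrectangle in \(s\)-space is attained at a vertex. This is the multi-arm analogue of the corner argument behind Proposition~\ref{prop:cmr_assignment_rectangle}.

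\textbf{Part (i).} Take \(s\in[0,1/2]^{K+1}\). The vertices are the indicator vectors \(s=\tfrac12\mathbf 1_A\) for subsets \(A\subseteq\{0,\ldots,K\}\). I will compute worst-case regret at the candidate \(p=\bigl(1/(1+\sqrt K),\,1/(\sqrt K(1+\sqrt K)),\ldots\bigr)\) and then show that no other allocation does better.

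Write \(u_0=\sqrt K\sigma_0\) and \(u_k=\sigma_k\), with weights \(w_0=\pi_0/\sqrt K\cdot\sqrt K\)-scaled so that \(V=\sum u_j^2/q_j\). Concretely, \(q_0=\pi_0\) and \(q_k=\pi_k\), with \(V=u_0^2/\pi_0+\sum u_k^2/\pi_k\). Then \(r=\sum u_j^2/q_j-(\sum u_j)^2\) is a weighted variance expression.

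At the candidate, \(q_0=1/(1+\sqrt K)\) and \(q_k=1/(\sqrt K(1+\sqrt K))\). I will enumerate vertex regrets by the pair (whether \(0\in A\), \(m=|A\setminus\{0\}|\)). The maximizers should be the two singletons: control alone (\(A=\{0\}\)) and a single treatment arm (\(A=\{k\}\)). Their regrets are \(\tfrac K4(1/q_0-1)\) and \(\tfrac14(1/q_k-1)\).

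Setting these equal gives \(K(1-q_0)/q_0=(1-q_k)/q_k\). Together with \(q_0+Kq_k=1\) and symmetry across treatment arms, this yields the stated allocation. Note that full balance between the control and a treatment block, \(q_0=\sqrt K q_k\), is exactly what this equation produces.

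For optimality I will use an equalizer argument. Place mixing weights on the singleton vertices, say weight \(\lambda\) on \(\{0\}\) and \((1-\lambda)/K\) on each \(\{k\}\). Averaged regret is then linear in \(1/q_j\) minus constants. Choosing \(\lambda\) so that the candidate minimizes this average over the simplex (a Lagrangian first-order condition) shows that no \(\pi\) achieves a smaller maximum. Uniqueness follows from strict convexity in \(1/q\).

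The main obstacle in (i) is verifying that non-singleton vertices never exceed the singleton value at the candidate. For \(A\) with \(0\in A\) and \(m\) treatments, regret is \(\tfrac14\bigl[K/q_0+m/q_k-(\sqrt K+m)^2\bigr]\). Substituting the candidate values turns this into a quadratic in \(m\), which is concave. I will check it on \(m\in\{0,\ldots,K\}\) by comparing the endpoints with the value at \(m=0\). The case \(0\notin A\) is handled the same way.

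\textbf{Part (ii).} The argument is an argmin-continuity (Berge) argument.

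1. By Lemma~\ref{lem:mp_rectangle_coverage}-type concentration, applied per arm with budget \(\alpha/(2(K+1))\) and radius \(\eta_k\to0\), each endpoint \(\underline\sigma_k,\overline\sigma_k\) converges in probability to \(\sigma_k\). The hyperrectangle therefore converges in Hausdorff distance to the point \(\theta(F)\).

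2. Define \(G(\pi,\Theta)=\sup_{\theta\in\Theta}r(\pi,\theta)\). For \(\Theta\) Hausdorff-near \(\theta(F)\) and \(\pi\) in a compact set \(\{\pi_j\ge\epsilon\}\), \(G\) is uniformly close to \(r(\pi,\theta(F))\), because \(r\) is uniformly continuous there.

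3. Rule out selections escaping to the boundary of the simplex. On the high-probability event that every \(\underline\sigma_j>\sigma_j/2\), any \(\pi\) with some \(\pi_j<\epsilon\) has \(G\ge\sigma_j^2/(4\epsilon)-1\). This exceeds \(G(\pi^*,\widehat\Theta)\), which is bounded.

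4. The function \(r(\cdot,\theta(F))\) is strictly convex with unique minimizer \(\pi^*\) on the interior. The standard argmin theorem then gives \(p_{\mathrm{CMR}}\to\pi^*\) in probability. This holds for every measurable selection, since all near-minimizers must lie close to \(\pi^*\).
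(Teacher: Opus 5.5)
Your part (ii) follows the paper's route essentially step for step: endpoint contraction, boundary coercivity confining all minimizers to a compact subset of the open simplex, uniform continuity, and the unique minimizer of $r(\cdot,\theta(F))$. The constant in your step 3 should be $\bigl((\sqrt K+K)/2\bigr)^2$ rather than $1$, which is harmless. Your convexity of $r(\pi,\cdot)$ in the standard deviations is also correct, and it gives the same vertex reduction as the paper's convexity in the variances.

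Part (i), however, has a genuine gap. For $K\ge2$ the singleton vertices are neither equal nor binding at the stated allocation. At $q_0=1/(1+\sqrt K)$ and $q_k=1/(\sqrt K(1+\sqrt K))$, your two singleton regrets are $K\sqrt K/4$ and $(K+\sqrt K-1)/4$. They differ by $(\sqrt K-1)^2(\sqrt K+1)/4>0$. Solving your equalization equation instead gives, for $K=4$, $3q_0^2-11q_0+4=0$, so $q_0=(11-\sqrt{73})/6\approx0.41\neq1/3$.

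More generally, write $x_A=\tfrac12\bigl(\sqrt K\,\mathbf 1\{0\in A\}+|A\cap\{1,\ldots,K\}|\bigr)=\sqrt{V^*(\theta_A)}$ and $T=\tfrac12(\sqrt K+K)$. A direct computation gives $r(p_{\mathrm{CMR}},\theta_A)=x_A(T-x_A)$. This is largest for the subsets that split $T$ most evenly, not for singletons. At $K=4$ the vertex $\{0,k\}$ has regret $9/4$, above both singletons ($2$ and $5/4$).

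Your planned check of the concave quadratic $\tfrac14(\sqrt K+m)(K-m)$ by comparing endpoints cannot work. A concave function can peak in the interior, and this one does, at $m=1$. The equalizer step fails for the same reason. The singleton prior that makes the candidate Bayes is uniform, and its Bayes value at $K=4$ is $7/5<9/4$, so it certifies nothing.

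The missing idea is that the adversarial vertices come in complementary pairs $(\theta_A,\theta_{A^c})$. Take any allocation $p$ and nonempty proper $A$, and set $a=\sum_{j\in A}p_j$. Cauchy--Schwarz gives $r(p,\theta_A)\ge V^*(\theta_A)(1-a)/a$ and $r(p,\theta_{A^c})\ge V^*(\theta_{A^c})\,a/(1-a)$. Hence every allocation has worst-case regret at least $\max_A\sqrt{V^*(\theta_A)V^*(\theta_{A^c})}=\max_A x_A(T-x_A)$. The identity above shows the candidate attains exactly this bound, and equality in the two Cauchy--Schwarz steps forces uniqueness. This is the paper's argument.

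In your equalizer language, the least favorable prior puts mass $1/2$ on each of $\theta_A$ and $\theta_{A^c}$ for a maximizing $A$. Its Bayes act is $p_0\propto\sqrt K$, $p_k\propto1$, which is the candidate, and its Bayes value is exactly $x_A(T-x_A)$.
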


Part (i) is the safe default. Unable to treat any arm as noisier than another, CMR plays the equal-variance Neyman allocation, placing \(1/(1+\sqrt K)\) of the main wave on the control and \(1/(\sqrt K(1+\sqrt K))\) on each treatment. For example, with \(K=4\), the control receives a third of the main-wave sample and each treatment a sixth. Part (ii) is the multi-arm analogue of the interior convergence in Theorem~\ref{thm:cmr-neyman-recovery}, driven by the same contraction of the confidence set around the truth. The worst case also changes shape relative to the binary case. The binding configurations are vertices of the hyperrectangle, each making some subset of arms as noisy as the pilot allows and the rest as stable as it allows, and CMR hedges against the worst of these patterns rather than against one arm's variance at a time. Away from the uninformative rectangle, neither this allocation nor its stratified counterpart has a closed form. Online Appendix~\ref{sub:extension_computation} shows how both are computed as finite convex programs over the vertices of the realized hyperrectangle.

\subsection{Stratified Experiments}
\label{sub:stratified_experiments}

\paragraph{Setup.}

Many target populations divide into strata, such as schools, regions, or gender, known before the main wave is designed. A stratified design involves two decisions, how many observations each stratum receives and how each stratum's observations are split between treatment and control. The pilot informs both decisions.

Strata are indexed by \(x=1,\ldots,S\), a unit's pre-specified stratum is \(X\in\{1,\ldots,S\}\), and the target-population shares \(s_x=\Pr_F(X=x)>0\) sum to one. These shares are known and fixed before the main-wave design is
chosen. Within stratum \(x\), the potential outcome
under arm \(d\in\{0,1\}\) has conditional mean
\(\mu_{dx}=\mathbb E_F[Y(d)\mid X=x]\) and variance
\(\sigma_{dx}^2=\operatorname{Var}_F(Y(d)\mid X=x)\). As before, the estimand is the
population average treatment effect, \(\operatorname{ATE}(F)=\sum_{x=1}^S
s_x(\mu_{1x}-\mu_{0x})\), the average of the within-stratum effects
\(\mu_{1x}-\mu_{0x}\) weighted by the population shares.

The two decisions combine into a single design object, the treatment-by-stratum
cell shares \(\pi=\{\pi_{dx}:d\in\{0,1\},\ x=1,\ldots,S\}\), where \(\pi_{dx}\) is
the fraction of the main wave drawn from stratum \(x\) and assigned to arm \(d\).
These shares are nonnegative fractions of one fixed main wave, so they sum to one
and \(\pi\) ranges over the simplex
\(\Delta_{2S-1}=\{\pi:\pi_{dx}\ge 0,\ \sum_{x=1}^S(\pi_{1x}+\pi_{0x})=1\}\). Two summaries of $\pi$ recover the two decisions. The total share collected
from stratum $x$, $\pi_{\cdot x} = \pi_{1x} + \pi_{0x}$, is the sampling
margin, and the within-stratum treatment probability,
$\pi_{1\mid x} = \pi_{1x}/(\pi_{1x} + \pi_{0x})$, is the assignment
margin.\footnote{The assignment margin is defined where $\pi_{\cdot x} > 0$. As in the
baseline model, boundary allocations are allowed as formal actions, but any
allocation that leaves a treatment-by-stratum mean entering the estimand
unobserved carries infinite loss by convention.}

The stratified difference-in-means estimator
\(\widehat{\operatorname{ATE}}=\sum_{x=1}^S s_x(\bar Y_{1x}-\bar Y_{0x})\) estimates
each within-stratum effect by the cell contrast \(\bar Y_{1x}-\bar Y_{0x}\)
and aggregates these contrasts using the target-population shares \(s_x\), where
\(\bar Y_{dx}\) is the sample mean in cell \((d,x)\) \citep{imbens2015causal}. For positive
cell shares, independent sampling across cells gives an estimator variance of \(V(\pi,\theta)/N\), where
\begin{equation}
\label{eq:stratified_variance_criterion}
    V(\pi,\theta)
    =
    \sum_{x=1}^S s_x^2
    \left(
        \frac{\sigma_{1x}^2}{\pi_{1x}}
        +
        \frac{\sigma_{0x}^2}{\pi_{0x}}
    \right),
    \qquad
    \theta=\{\sigma_{dx}^2:d\in\{0,1\},\ x=1,\ldots,S\}.
\end{equation}
The square \(s_x^2\) appears because the estimator multiplies each stratum contrast by \(s_x\).

\paragraph{Neyman Allocation.}

If the cell variances were known, the optimal stratified design would minimize
\(V(\pi,\theta)\) over the cell-share simplex $\Delta_{2S-1}$. For positive variance
configurations, the solution is
\begin{equation}
\label{eq:stratified_neyman}
    \pi_{dx}^*(\theta)
    =
    \frac{s_x\sigma_{dx}}
    {\sum_{z=1}^S s_z(\sigma_{1z}+\sigma_{0z})},
    \qquad
    d\in\{0,1\},\quad x=1,\ldots,S .
\end{equation}
The optimal design assigns observations to cells in proportion to \(s_x\sigma_{dx}\), how much the cell matters for the target ATE times how noisy its mean is to estimate. A large stratum with a nearly deterministic outcome needs few observations, and a noisy cell receives little weight if it represents a small part of the population.

Summing \eqref{eq:stratified_neyman} over arms gives the Neyman sampling margin,
\(\pi_{\cdot x}^*(\theta)=\tfrac{s_x(\sigma_{1x}+\sigma_{0x})}{\sum_{z=1}^S s_z(\sigma_{1z}+\sigma_{0z})}\),
which samples a stratum more than proportionally
to its population share exactly when its total standard deviation
\(\sigma_{1x}+\sigma_{0x}\) exceeds the population-weighted average of total
standard deviations across strata. The Neyman assignment margin is the standard
two-arm Neyman allocation, \(\pi^*_{1\mid x}=\sigma_{1x}/(\sigma_{1x}+\sigma_{0x})\). Substituting \eqref{eq:stratified_neyman} into
\eqref{eq:stratified_variance_criterion} gives $V^*(\theta) =\left[ \sum_{x=1}^S s_x(\sigma_{1x}+\sigma_{0x})\right]^2$, the smallest variance any feasible design attains. Regret takes the same form as in the baseline model, $r(\pi,\theta) = V(\pi,\theta)-V^*(\theta)$.

\paragraph{CMR.}

From the \(M_{dx}\ge 2\) pilot observations in cell \((d,x)\), a
one-sided empirical Bernstein interval of \citet{maurer2009empirical} is built for
each cell variance \(\sigma_{dx}^2\), with the error split equally
across the \(4S\) one-sided endpoints at level \(\alpha/(4S)\). Stacking the
\(2S\) cell intervals gives the hyperrectangle $ \widehat\Theta_\alpha(\omega) = \prod_{d\in\{0,1\}} \prod_{x=1}^S [ \underline{\sigma}_{dx}^2(\omega), \overline{\sigma}_{dx}^2(\omega) ] \subseteq [0,1/4]^{2S}$,
which covers the true variance vector with probability at least \(1-\alpha\) by the
union bound. The CMR rule and its certificate take the same form as before, now over the cell-share simplex \(\Delta_{2S-1}\), with \(p_{\mathrm{CMR}}(\omega)\)
collecting the cell shares \(p_{dx,\mathrm{CMR}}(\omega)\).

\begin{proposition}[{\normalfont\hyperlink{proof:stratified_cmr}{Stratified CMR between representative balance and Neyman}}]
\label{prop:stratified_cmr}
Consider the stratified design.
\begin{enumerate}
    \item[(i)] If the pilot is uninformative, with \(\widehat\Theta_\alpha(\omega)=[0,1/4]^{2S}\), the CMR allocation is $ p_{1x,\mathrm{CMR}}(\omega) = p_{0x,\mathrm{CMR}}(\omega) = \frac{s_x}{2}$, $x=1,\ldots,S$. Equivalently, \(p_{\cdot x,\mathrm{CMR}}(\omega)=s_x\) and \(p_{1|x,\mathrm{CMR}}(\omega)=1/2\) for every stratum \(x\).

    \item[(ii)] Suppose \(\sigma_{dx}(F)>0\) for all \(d\in\{0,1\}\) and \(x=1,\ldots,S\). With \(\alpha\) fixed and \(M_{\min}=\min_{d,x}M_{dx}\to\infty\), every measurable selection \(p_{\mathrm{CMR}}(\omega)\) from the CMR solution set satisfies $p_{\mathrm{CMR}}(\omega) \overset{p}{\longrightarrow} \pi^*(\theta(F))$, the stratified Neyman allocation in \eqref{eq:stratified_neyman}.
\end{enumerate}
\end{proposition}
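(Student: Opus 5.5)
For part (i), I will write the rectangle as the full cube $[0,1/4]^{2S}$ and use the vertex-attainment fact from the online appendix lemmas: for fixed $\pi$, $r(\pi,\theta)$ is convex in $\theta$, so its supremum over the cube is attained at a vertex. Each vertex makes a subset $A$ of cells maximally noisy, with $\sigma_{dx}=1/2$ on $A$ and $0$ off $A$. At such a vertex,
\[
r(\pi,\theta_A)=\tfrac14\Bigl[\sum_{(d,x)\in A}s_x^2/\pi_{dx}-\bigl(\sum_{(d,x)\in A}s_x\bigr)^2\Bigr].
\]

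The upper bound comes from evaluating the candidate $\pi_{dx}=s_x/2$. There the bracket becomes $2\sum_{A}s_x-(\sum_A s_x)^2=t(2-t)$, where $t=\sum_A s_x$. Because each stratum contributes to $t$ at most twice, $t$ ranges over $[0,2]$. The maximum $t(2-t)=1$ is reached at $t=1$, for example when $A$ is all treated cells or all control cells, so the worst-case regret of the candidate is $1/4$.

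The matching lower bound will average regret over the two complementary vertices $A_1=\{(1,x)\}_x$ and $A_0=\{(0,x)\}_x$. For any $\pi$ this gives
\[
\max\{r(\pi,\theta_{A_1}),\,r(\pi,\theta_{A_0})\}\ge\tfrac18\Bigl[\sum_{d,x}s_x^2/\pi_{dx}-2\Bigr].
\]
By Cauchy--Schwarz, $\sum_{d,x}s_x^2/\pi_{dx}\ge(\sum_{d,x}s_x)^2/\sum\pi_{dx}=4$, so the maximum is at least $1/4$. This bound forces equality in Cauchy--Schwarz, that is, $\pi_{dx}\propto s_x$, which means $\pi_{dx}=s_x/2$. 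Uniqueness of the minimizer follows, and the sampling and assignment margins are then immediate.

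For part (ii), I will use the Maurer--Pontil hyperrectangle. Its endpoints satisfy $\hat\sigma_{dx}\pm\eta_{dx}$ with $\eta_{dx}=O(M_{dx}^{-1/2})$, and $\hat\sigma_{dx}\to\sigma_{dx}$ in probability, so $\widehat\Theta_\alpha$ shrinks to $\{\theta(F)\}$ in Hausdorff distance. The key deterministic step is a uniform bound. Let $\Delta^\epsilon=\{\pi:\pi_{dx}\ge\epsilon\}$. On $\Delta^\epsilon$, $r$ is jointly continuous and hence uniformly continuous on the compact set $\Delta^\epsilon\times[0,1/4]^{2S}$. It follows that $\sup_{\theta\in\widehat\Theta}r(\pi,\theta)\to r(\pi,\theta(F))$ uniformly on $\Delta^\epsilon$ whenever $\widehat\Theta\to\{\theta(F)\}$.

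I expect the main obstacle to be showing that CMR selections stay in some $\Delta^\epsilon$ with probability tending to one. Consider any $\pi$ with $\pi_{dx}<\epsilon$. Its regret at the point $\underline\theta$, which lies in the rectangle, is at least $s_x^2\underline\sigma_{dx}^2/\epsilon-V^*(\underline\theta)$. Because $\underline\sigma_{dx}\ge\sigma_{dx}/2>0$ with high probability, this quantity exceeds $1/4$ for small $\epsilon$. Meanwhile, the representative-balance allocation has worst-case regret at most $1/4$ by the part (i) computation. So minimizers lie in $\Delta^\epsilon$. On $\Delta^\epsilon$, $r(\cdot,\theta(F))$ is strictly convex, since $V$ is strictly convex in $\pi$ when all $\sigma_{dx}>0$, and it has the unique minimizer $\pi^*(\theta(F))$. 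A standard argmin-consistency argument then applies: uniform convergence of the objective plus a well-separated minimum gives $p_{\mathrm{CMR}}\overset{p}{\to}\pi^*(\theta(F))$ for every measurable selection.
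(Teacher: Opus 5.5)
Your proposal is correct and follows essentially the paper's route. For (i), both use vertex reduction, evaluate representative balance at every vertex, apply Cauchy--Schwarz at the all-treatment and all-control vertices, and use its equality case for uniqueness. For (ii), both combine shrinkage of the rectangle, boundary coercivity, uniform continuity on a compact inner simplex, and a unique well-separated minimizer. The differences are minor: you average the two complementary vertex regrets and apply one Cauchy--Schwarz over all $2S$ cells rather than bounding each vertex through the treatment mass $a=\sum_x p_{1x}$, and you get coercivity by comparing with balance's global $1/4$ bound rather than with the worst-case regret of $\pi^*(\theta^0)$ near $\theta^0$.
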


With an uninformative pilot the rule has no reason to treat any stratum or arm as noisier than another, so the allocation is representative sampling with an even split within each stratum. As the pilot shrinks the confidence set, a stratum revealed to be noisier receives more than its population share, the split within each stratum moves toward its noisier arm, and in the limit the allocation becomes the stratified Neyman allocation \eqref{eq:stratified_neyman}.

\section{Calibrated Simulations \label{sec:sims}}

This section stress-tests the allocation rules at practice-relevant pilot sizes,
	between 30 and 500 total observations. I calibrate the data-generating
processes to public microdata from four published field experiments: the deworming program of \citet{miguel2004worms} (MK), the resume audit of \citet{bertrand2004emily} (BM), the experiment on incentives to learn HIV results of
\citet{thornton2008demand}, and the reference-letter experiment of
\citet{abel2020value}.\footnote{I select published experiments in leading economics journals with public microdata, randomized arms that map directly into the designs studied in this paper, and headline outcomes. The outcomes span binary, count, and bounded continuous measurements.} The two-arm designs are built from MK, BM, and Thornton, the shared-control multi-arm design from Thornton's randomized incentive amounts, and the stratified design from the gender stratification in Abel et al.

\subsection{Simulation Design \label{sub:sims_design}}

Each data-generating process fixes the outcome distributions that the original
experiment induced in its randomized arms. For binary outcomes, treatment and
control outcomes are Bernoulli draws with the arm means estimated from the
microdata. For continuous and count outcomes, outcomes are drawn from the
empirical distribution of the corresponding arm, rescaled to the unit interval
using the observed range. The multi-arm design keeps Thornton's incentive
levels and their shared control as five separate arms, and the stratified
design applies the same construction within gender strata, holding the stratum
shares fixed at their sample values.

Table~\ref{tab:section7-dgp-calibration} reports the calibrated arm means,
variances, and implied Neyman allocations for each design. In the two-arm designs, the infeasible Neyman allocation assigns between \(46\) and \(55\) percent of the sample to treatment, so balance is never far from optimal. Balance's excess variance over the infeasible allocation is at most \(0.84\) percent, which also caps the improvement any pilot-based rule can deliver, the ceiling of Remark~\ref{rem:adaptation_ceiling} materializing in calibrated data.

Each simulation replication draws a pilot of total size \(M\) from the calibrated distribution. The pilot itself is assigned without any information, evenly across arms in the two-arm and multi-arm designs, and representatively across strata with an even treatment-control split within each stratum in the stratified design. To make the designs comparable, the performance metric is the percentage efficiency loss, \(100\times[V(p,\theta)-V^*(\theta)]/V^*(\theta)\), where \(p\) is the main-wave assignment the rule chooses from the realized pilot. An entry of \(1\) means the assignment raises the estimator's variance by one percent, or equivalently that the experimenter would need one percent more main-wave observations to reach the same precision. Each entry averages \(500\)
independent pilot draws, and the tables report \(M\in\{30,100,250,500\}\).\footnote{Every calibrated outcome is drawn from a discrete distribution, since binary outcomes are Bernoulli and the continuous and count outcomes are resampled from finite empirical distributions. A pilot arm therefore takes a single value with positive probability at every pilot size, in which case FNA assigns the entire main wave to one arm and its loss is infinite under the boundary convention. The unconditional mean FNA loss is thus infinite in every design at every pilot size. In the tables, \(\infty\) marks cells in which such a boundary draw occurred among the \(500\) replications, and finite FNA entries average the draws in which none occurred.}

\subsection{Two-Arm Results \label{sub:sims_two_arm}}

Table~\ref{tab:section7-main-efficiency-loss} compares the mean efficiency
losses of balance, FNA, and CMR across the seven two-arm designs. Balance has a
single entry per design because it does not use the pilot.

\begin{table}[tbp]
\centering
\caption{Mean efficiency loss in the calibrated two-arm simulations}
\label{tab:section7-main-efficiency-loss}
\begingroup
\small
\setlength{\tabcolsep}{5.0pt}
\renewcommand{\arraystretch}{0.90}
\begin{tabular}{@{}llrrrr@{}}
\toprule
Study/outcome & Rule & $M=30$ & $M=100$ & $M=250$ & $M=500$ \\
\midrule
\multirow{3}{*}{MK: attendance} & Balance & \multicolumn{4}{c}{0.02} \\
 & CMR & 0.02 & 0.02 & 0.08 & 0.07 \\
 & FNA & 3.33 & 0.71 & 0.27 & 0.12 \\
\addlinespace[0.38em]
\multirow{3}{*}{MK: test score} & Balance & \multicolumn{4}{c}{0.07} \\
 & CMR & 0.07 & 0.07 & 0.05 & 0.05 \\
 & FNA & 2.19 & 0.58 & 0.25 & 0.12 \\
\addlinespace[0.38em]
\multirow{3}{*}{MK: mod.-heavy infection} & Balance & \multicolumn{4}{c}{0.45} \\
 & CMR & 0.45 & 0.07 & 0.08 & 0.08 \\
 & FNA & $\infty$ & 0.20 & 0.06 & 0.03 \\
\midrule
\multirow{3}{*}{Thornton: HIV result} & Balance & \multicolumn{4}{c}{0.59} \\
 & CMR & 0.59 & 0.18 & 0.13 & 0.12 \\
 & FNA & $\infty$ & 0.36 & 0.14 & 0.06 \\
\addlinespace[0.38em]
\multirow{3}{*}{Thornton: condom purchase} & Balance & \multicolumn{4}{c}{0.28} \\
 & CMR & 0.28 & 0.18 & 0.10 & 0.07 \\
 & FNA & $\infty$ & 0.49 & 0.18 & 0.08 \\
\addlinespace[0.38em]
\multirow{3}{*}{Thornton: condom count} & Balance & \multicolumn{4}{c}{0.18} \\
 & CMR & 0.18 & 0.22 & 0.27 & 0.22 \\
 & FNA & $\infty$ & 5.93 & 2.14 & 1.09 \\
\midrule
\multirow{3}{*}{BM: callback} & Balance & \multicolumn{4}{c}{0.84} \\
 & CMR & 0.84 & 0.84 & 0.47 & 0.44 \\
 & FNA & $\infty$ & $\infty$ & 1.14 & 0.48 \\
\bottomrule
\end{tabular}
\vspace{0.25em}
\begin{minipage}{0.88\textwidth}
\footnotesize
\emph{Notes:} Entries are percent mean efficiency losses relative to the infeasible Neyman allocation: 
$100\,\mathbb E_\omega[V(\hat p_a(\omega),F)-V(\pi^*(F),F)]/V(\pi^*(F),F)$. 
Columns $M=30,100,250,500$ index the pilot size; the Balance row reports a single value because it does not use the pilot. 
Each entry uses 500 pilot replications per DGP and pilot size. 
Because every calibrated outcome distribution is discrete, FNA discards an arm with positive probability at every pilot size, so its unconditional mean efficiency loss is infinite in every cell of its rows.
$\infty$ marks cells in which such a boundary draw occurred among the 500 replications, and finite FNA entries average the draws in which none occurred.
\end{minipage}
\endgroup
\end{table}

The FNA rows quantify the downside of trusting a small pilot. At \(M=30\), the loss is infinite in five of the seven designs and equals \(3.33\) and \(2.19\) percent in the other two, several times the \(0.84\) percent ceiling on what adaptation can gain. The boundary failure of Proposition~\ref{prop:fna_boundary} persists through \(M=100\) in the callback design.\footnote{Common repairs of FNA do not resolve this.
Table~\ref{tab:section7-appendix-rule-comparison} evaluates FNA with its
assignment trimmed away from the boundary, together with the pre-test and
regularized rules studied by \citet{cai2024performance}. Trimming removes the
infinite entries, but all four alternatives still lose substantially at
\(M=30\).} CMR instead returns balance in every design at \(M=30\) because the confidence rectangles remain uninformative. Its mean loss still equals balance's at \(M=100\) in the callback design, where the rectangle excludes at least one corner of the variance space in only \(3.2\) percent of draws (Table~\ref{tab:section7-cmr-diagnostics}). Once the rectangle becomes informative, the gains arrive where the variances are unequal, with the loss falling to \(0.07\) percent in the MK infection design by \(M=100\) and to \(0.47\) percent in the callback design by \(M=250\). Where the variances are nearly equal there is nothing to find, and CMR's premium over balance is at most \(0.09\) percentage points, in the condom-count design in which FNA loses \(5.93\) percent at \(M=100\). By \(M=500\), FNA is modestly ahead of CMR in two of the seven designs.\footnote{Table~\ref{tab:section7-regret-distribution} reports the full loss distribution and FNA boundary frequencies. The worst CMR loss across every design and pilot size is \(5.95\) percent, against FNA boundary failures and finite losses as large as \(57.56\) percent. Some draws still move the CMR assignment in the wrong direction, though the rule's conservatism keeps the resulting losses small.}

Table~\ref{tab:section7-cmr-diagnostics} shows that, in the reported Monte Carlo draws, the confidence rectangle covers the true variance pair for every design and pilot size and the certificate is at least as large as the realized regret in every draw. Coverage exceeds its nominal \(1-\alpha\) level by a wide margin,
reflecting the conservativeness of the distribution-free Maurer--Pontil bounds. The median certificate falls from its uninformative value of one quarter at \(M=30\) to between \(0.05\) and \(0.14\) at \(M=500\). Relative to the no-pilot guarantee of one quarter, a pilot of \(500\) observations thus cuts the certified worst case by between roughly one half and four fifths.

Table~\ref{tab:section7-applied-implications} translates the losses into the
design quantities applied experimenters plan with, additional subjects and
statistical power. In the worst cases across the seven designs at \(M=30\),
FNA's median cost is \(64\) additional main-wave subjects per \(1{,}000\) to match the
precision of the infeasible allocation, and its average power against an effect that
the infeasible allocation detects with \(80\) percent power falls to \(50\)
percent. Under CMR, the median extra subjects never exceed \(8.4\) per \(1{,}000\),
and average power stays within \(0.3\) percentage points of the target, in every design
and at every pilot size. The cost of overreacting to a small pilot is measured
in dozens of subjects or many percentage points of power. The cost of CMR's caution is
measured in a handful of subjects.

\subsection{Multi-Arm and Stratified Results \label{sub:sims_extensions}}

The extension designs raise both the value and the risk of adaptation, since the allocation is now a vector rather than a single treatment share.
Table~\ref{tab:section7-extension-efficiency-loss} reports the results for the
two extensions of Section~\ref{sec:extensions}. In the table, the balance row
is each design's no-information allocation, the shared-control allocation of
Proposition~\ref{prop:multi_arm_cmr} in Panel~A and the representative balance
of Proposition~\ref{prop:stratified_cmr} in Panel~B.

\begin{table}[tbp]
\centering
\caption{Mean efficiency loss in the calibrated extension simulations}
\label{tab:section7-extension-efficiency-loss}
\begingroup
\small
\setlength{\tabcolsep}{5.0pt}
\renewcommand{\arraystretch}{0.92}
\begin{tabular}{@{}p{0.28\textwidth}@{}*{4}{>{\centering\arraybackslash}p{0.15\textwidth}@{}}}
\toprule
Rule & $M=30$ & $M=100$ & $M=250$ & $M=500$ \\
\midrule
\multicolumn{5}{@{}p{0.88\textwidth}@{}}{\textit{Panel A: Thornton (2008): learned HIV result, multi-arm shared-control}} \\
\addlinespace[0.10em]
Balance & \multicolumn{4}{c}{1.76} \\
CMR & 1.76 & 1.76 & 1.48 & 0.34 \\
CMR (Bernoulli) & 1.82 & 1.21 & 0.75 & 0.36 \\
FNA & $\infty$ & $\infty$ & $\infty$ & 0.41 \\
\midrule
\multicolumn{5}{@{}p{0.88\textwidth}@{}}{\textit{Panel B: Abel et al. (2020): job applications, gender-stratified}} \\
\addlinespace[0.10em]
Balance & \multicolumn{4}{c}{5.59} \\
CMR & 5.59 & 5.59 & 5.38 & 2.94 \\
FNA & 72.13 & 26.01 & 9.27 & 3.86 \\
\bottomrule
\end{tabular}
\vspace{0.25em}
\begin{minipage}{0.88\textwidth}
\footnotesize
\emph{Notes:} Entries are percent mean efficiency losses relative to the infeasible Neyman allocation: 
$100\,\mathbb E_\omega[V(\hat p_a(\omega),F)-V(\pi^*(F),F)]/V(\pi^*(F),F)$. 
Columns $M=30,100,250,500$ index total pilot size; each entry uses 500 pilot replications. 
Balance is the appropriate non-adaptive benchmark for each design: shared-control balance-equivalent in Panel A and half treatment within each stratum in Panel B. 
CMR is the Maurer--Pontil bounded-outcome rule; CMR (Bernoulli) uses Bernoulli variance confidence sets for the binary multi-arm outcome. 
$\infty$ indicates infinite unconditional mean efficiency loss, which occurs when FNA assigns zero probability to a positive-variance arm or treatment-by-stratum cell in at least one pilot draw.
\end{minipage}
\endgroup
\end{table}

The extensions differ from the two-arm designs in two ways. First, the
infeasible benchmark adapts along more margins, so balance loses \(1.76\)
percent in Panel~A and \(5.59\) percent in Panel~B, well above the \(0.84\)
percent maximum gain in the two-arm designs. Second, the same pilot must now
feed more variance estimates, one for every arm or cell. FNA's mean loss is
accordingly infinite in Panel~A through \(M=250\), since a single
zero-variance binary arm pushes it to the boundary, while in Panel~B its
losses are finite but severe, \(72.13\) percent at \(M=30\) and \(26.01\)
percent at \(M=100\), because interior allocations built on noisy cell
variances are also far from optimal.

The CMR loss equals the balance loss at \(M=30\) and \(M=100\) in both
panels, since the error budget now splits across more arm- or cell-level
bounds, each resting on fewer observations. Once the rectangle becomes
informative, the rule adapts, to \(1.48\) percent at \(M=250\) and \(0.34\)
at \(M=500\) in Panel~A, and to \(5.38\) and then \(2.94\) in Panel~B. Panel~A includes a second CMR row because the outcome is binary, and it shows
that the rule's caution resides in the confidence set rather than in the
minimax-regret logic. With the exact-inversion Bernoulli sets of Online
Appendix~\ref{sub:binary_exact_rectangle}, CMR pays a small early premium,
\(1.82\) percent against \(1.76\) at \(M=30\), but adapts earlier, reaching
\(1.21\) and \(0.75\) percent at \(M=100\) and \(M=250\), against \(1.76\)
and \(1.48\) for the Maurer--Pontil version.

The extensions sharpen the paper's main message.\footnote{Table~\ref{tab:section7-extension-applied-implications} translates the extension losses into subjects and power. At \(M=30\), FNA's joint-test power is \(20.7\) percent against \(79.1\) for balance and CMR, and FNA's median cost is \(590\) additional subjects per \(1{,}000\) to match the infeasible benchmark, against \(56\) for balance and CMR. By \(M=500\), CMR cuts the extra subjects to \(27\).}  Exactly where
adaptation has the most to buy and plug-in rules are most dangerous, CMR
captures a growing share of the attainable gain as the pilot informs, never
pays more than a tenth of a percentage point for its caution, and is the only
adaptive rule in these tables that arrives with a finite-sample guarantee for the
assignment it selects.

\section{Conclusions \label{sec:conclusion}}

The central question in pilot-based design is how far the realized evidence should move the main-wave assignment. Balanced assignment does not respond to the pilot, while feasible Neyman follows its point estimates wherever they lead. Rather than acting on point estimates or forgoing adaptation altogether, the Conditional Minimax Regret rule acts on what the pilot has ruled out, staying at balance when the evidence is weak and approaching the Neyman allocation as the evidence accumulates. Before committing the main wave, the experimenter holds a certificate that, with high probability, bounds the precision lost to the realized design and never exceeds balance's no-pilot guarantee.

The same CMR logic applies to any design choice whose loss depends on parameters a small pilot estimates imprecisely. Two directions for future research seem most valuable. The first is to develop such rules for richer designs, including cluster-randomized, sequential, and imperfect-compliance experiments. The second is to treat the confidence set itself as part of the decision. CMR splits its error budget symmetrically and takes the coverage level as given, and neither choice is necessarily optimal. Sharper sets would let the rule earn the right to adapt sooner.

Much of the theory of adaptive experimentation justifies pilot-based designs by letting the pilot grow large. Real pilots are small, and at the sizes experimenters actually run, plug-in adaptation has modest upside and unbounded downside. The practical lesson of this paper is that the choice between ignoring the pilot and trusting it is a false one. A design can move exactly as far as the pilot's evidence warrants and no further, with a guarantee in hand before the main wave begins.

\clearpage
{\small
\setlength{\bibsep}{3pt}
\begin{singlespace}
\putbib
\end{singlespace}
}

\clearpage
\appendix
\section{Proofs of Main Results}
\label{sec:proofs_main_results}
\hypertarget{proof:cmr-certified-optimality}{} 
\begin{proof}[Proof of Theorem~\ref{thm:cmr-certified-optimality}]
We prove the two claims in turn. By Lemma~\ref{lem:mp_rectangle_coverage}, the realized rectangle has uniform coverage: \(\Pr_{P_F}\!\left\{\theta(F)\in\widehat\Theta_\alpha(\omega)\right\} \ge 1-\alpha\) for every \(F\in\mathcal F\). Now fix \(F\in\mathcal F\). On the event \(\theta(F)\in\widehat\Theta_\alpha(\omega)\), the realized regret of the CMR assignment is bounded by the worst-case regret over the realized rectangle: \(r\!\left(p_{\mathrm{CMR}}(\omega),\theta(F)\right) \le \sup_{\theta\in\widehat\Theta_\alpha(\omega)} r\!\left(p_{\mathrm{CMR}}(\omega),\theta\right)\). By definition of the CMR certificate, the right-hand side is \(U_{\mathrm{CMR}}(\omega)\). Therefore \(\left\{\theta(F)\in\widehat\Theta_\alpha(\omega)\right\} \subseteq \left\{ r\!\left(p_{\mathrm{CMR}}(\omega),\theta(F)\right) \le U_{\mathrm{CMR}}(\omega) \right\}\). Taking probabilities under \(P_F\) and using the coverage lemma proves part (i). 

For part (ii), fix a pilot realization \(\omega\), and write the realized standard-deviation rectangle as $ [\underline{\sigma}_1,\overline{\sigma}_1] \times [\underline{\sigma}_0,\overline{\sigma}_0] \subseteq [0,1/2]^2$. The CMR value is no larger than the worst-case regret from balanced assignment. At \(p=1/2\), the square representation of regret gives $r(1/2,\theta) = (\sigma_1-\sigma_0)^2$. Because \(\sigma_1,\sigma_0\in[0,1/2]\), this quantity is at most \(1/4\). Therefore \[ U_{\mathrm{CMR}}(\omega) = \inf_{p\in(0,1)} \sup_{\theta\in\widehat\Theta_\alpha(\omega)} r(p,\theta) \le \sup_{\theta\in\widehat\Theta_\alpha(\omega)} r(1/2,\theta) \le \frac14 . \] We now characterize equality. By Proposition~\ref{prop:cmr_assignment_rectangle}, the CMR certificate on the realized rectangle is $U_{\mathrm{CMR}}(\omega) = \frac{ \left( \overline{\sigma}_1\overline{\sigma}_0 - \underline{\sigma}_1\underline{\sigma}_0 \right)^2 }{ \left(\overline{\sigma}_1+\underline{\sigma}_1\right) \left(\overline{\sigma}_0+\underline{\sigma}_0\right) }$. Since \(0\le \underline{\sigma}_d\le \overline{\sigma}_d\le 1/2\) for \(d\in\{0,1\}\), we have $U_{\mathrm{CMR}}(\omega) \le \frac{\overline{\sigma}_1^2\overline{\sigma}_0^2} {\overline{\sigma}_1\overline{\sigma}_0} = \overline{\sigma}_1\overline{\sigma}_0 \le \frac14$. The first inequality uses \(\overline{\sigma}_1\overline{\sigma}_0 -\underline{\sigma}_1\underline{\sigma}_0 \le \overline{\sigma}_1\overline{\sigma}_0\) and $\left(\overline{\sigma}_1+\underline{\sigma}_1\right) \left(\overline{\sigma}_0+\underline{\sigma}_0\right) \ge \overline{\sigma}_1\overline{\sigma}_0$. Thus equality can hold only if both inequalities are equalities. This requires \(\underline{\sigma}_1=\underline{\sigma}_0=0\) and \(\overline{\sigma}_1=\overline{\sigma}_0=1/2\). In variance coordinates, this is exactly the condition that \((1/4,0)\in\widehat\Theta_\alpha(\omega)\) and \((0,1/4)\in\widehat\Theta_\alpha(\omega)\). Conversely, suppose the realized rectangle contains both adversarial corners \((1/4,0)\) and \((0,1/4)\). Then, for any \(p\in(0,1)\), \(\sup_{\theta\in\widehat\Theta_\alpha(\omega)} r(p,\theta) \ge \max\left\{ r(p,(1/4,0)),\, r(p,(0,1/4)) \right\}\). The two terms on the right are $r(p,(1/4,0))=\frac{1-p}{4p}$ and $r(p,(0,1/4))=\frac{p}{4(1-p)}$. Their maximum is minimized when they are equal, which occurs at \(p=1/2\), and the minimized value is \(1/4\). Therefore every assignment has worst-case regret at least \(1/4\) on such a rectangle. Since we already proved \(U_{\mathrm{CMR}}(\omega)\le 1/4\), it follows that \(U_{\mathrm{CMR}}(\omega)=1/4\). This proves the equality statement and completes the proof.
\end{proof}
\hypertarget{proof:cmr-neyman-recovery}{} 
\begin{proof}[Proof of Theorem~\ref{thm:cmr-neyman-recovery}]
Fix \(F\in\mathcal F\). We work under \(P_F\), suppress dependence on
\(\omega\), and write \(\sigma_d=\sigma_d(F)\). By assumption,
\(\sigma_1,\sigma_0>0\).

We begin with the contraction of the rectangle endpoints. For each arm
\(d\in\{0,1\}\), the raw pilot variance satisfies
\(\hat\sigma_d^2-\sigma_d^2=O_p(M_d^{-1/2})\). Indeed, the usual within-arm
sample variance is a smooth function of the sample mean of \(Y(d)\) and the
sample mean of \(Y(d)^2\). Since \(Y(d)\in[0,1]\), both sample means converge
to their expectations at rate \(M_d^{-1/2}\).

Since \(\sigma_d>0\), the square-root map is Lipschitz in a neighborhood of
\(\sigma_d^2\). Hence $\hat\sigma_d-\sigma_d=O_p(M_d^{-1/2})$.
The Maurer--Pontil standard-deviation radius used at one-sided error level
\(\alpha/4\) is \(\sqrt{\frac{2\log(4/\alpha)}{M_d-1}}\), which is \(O(M_d^{-1/2})\) because \(\alpha\) is fixed. The lower endpoint is
obtained by subtracting this radius from \(\hat\sigma_d\), taking a positive
part, and capping the endpoint at \(1/2\), while the upper endpoint is
obtained by adding the same radius and capping the endpoint at \(1/2\). These
operations are nonexpansive on the standard-deviation scale. Therefore, for each
\(d\in\{0,1\}\), \(\left|\underline{\sigma}_d-\sigma_d\right| + \left|\overline{\sigma}_d-\sigma_d\right| = O_p(M_d^{-1/2})\).

We now prove part (i). By
Proposition~\ref{prop:cmr_assignment_rectangle}, $p_{\mathrm{CMR}} = \frac{\overline{\sigma}_1+\underline{\sigma}_1} {\overline{\sigma}_1+\underline{\sigma}_1 +\overline{\sigma}_0+\underline{\sigma}_0}$. The endpoint contraction just shown implies that
\(\overline{\sigma}_d+\underline{\sigma}_d=2\sigma_d+O_p(M_d^{-1/2})\) for
each arm \(d\). Since \(\sigma_1+\sigma_0>0\), the denominator in the CMR
formula converges in probability to \(2(\sigma_1+\sigma_0)\), which is
strictly positive. Hence it is bounded away from zero with probability
approaching one.

The map \((x_1,x_0)\mapsto x_1/(x_1+x_0)\) is Lipschitz on any compact set
where \(x_1+x_0\) is bounded away from zero. Applying this fact to $x_1=\frac{\overline{\sigma}_1+\underline{\sigma}_1}{2}$, $x_0=\frac{\overline{\sigma}_0+\underline{\sigma}_0}{2}$,
and comparing with \((\sigma_1,\sigma_0)\), gives \(\left|p_{\mathrm{CMR}}-\pi^*(\theta(F))\right| = O_p\!\left(M_1^{-1/2}+M_0^{-1/2}\right)\). In particular, \(p_{\mathrm{CMR}}\overset{p}{\to}\pi^*(\theta(F))\). This proves
part (i).

We next prove part (ii). The square representation of regret gives, at the
true variance pair, $r\!\left(p_{\mathrm{CMR}},\theta(F)\right) = \frac{ (\sigma_1+\sigma_0)^2 \left(p_{\mathrm{CMR}}-\pi^*(\theta(F))\right)^2 }{ p_{\mathrm{CMR}}(1-p_{\mathrm{CMR}}) }$.
Because \(\sigma_1>0\) and \(\sigma_0>0\), the infeasible Neyman allocation
\(\pi^*(\theta(F))\) lies in \((0,1)\). Since part (i) shows that
\(p_{\mathrm{CMR}}\overset{p}{\to}\pi^*(\theta(F))\), the denominator
\(p_{\mathrm{CMR}}(1-p_{\mathrm{CMR}})\) is bounded away from zero with
probability approaching one. The factor \((\sigma_1+\sigma_0)^2\) is fixed.
Therefore the regret is of the same stochastic order as
\(\left(p_{\mathrm{CMR}}-\pi^*(\theta(F))\right)^2\). Using part (i), \(r\!\left(p_{\mathrm{CMR}},\theta(F)\right) = O_p\!\left( \left(M_1^{-1/2}+M_0^{-1/2}\right)^2 \right)\). Finally, \(\left(M_1^{-1/2}+M_0^{-1/2}\right)^2 = M_1^{-1}+2(M_1M_0)^{-1/2}+M_0^{-1} \le 2(M_1^{-1}+M_0^{-1})\), so $r\!\left(p_{\mathrm{CMR}},\theta(F)\right) = O_p(M_1^{-1}+M_0^{-1})$. This proves part (ii).

It remains to prove part (iii). By
Proposition~\ref{prop:cmr_assignment_rectangle}, the CMR certificate is $U_{\mathrm{CMR}} = \frac{ \left( \overline{\sigma}_1\overline{\sigma}_0 - \underline{\sigma}_1\underline{\sigma}_0 \right)^2 }{ \left(\overline{\sigma}_1+\underline{\sigma}_1\right) \left(\overline{\sigma}_0+\underline{\sigma}_0\right) }$.
The endpoint contraction implies that
\(\overline{\sigma}_d-\underline{\sigma}_d=O_p(M_d^{-1/2})\) for each arm.
Since all standard-deviation endpoints lie in \([0,1/2]\), $\left| \overline{\sigma}_1\overline{\sigma}_0 - \underline{\sigma}_1\underline{\sigma}_0 \right| \le \frac12\left|\overline{\sigma}_1-\underline{\sigma}_1\right| + \frac12\left|\overline{\sigma}_0-\underline{\sigma}_0\right|$.
Hence the numerator of the certificate is \(O_p\!\left( \left(M_1^{-1/2}+M_0^{-1/2}\right)^2 \right)\). The denominator converges in probability to
\((2\sigma_1)(2\sigma_0)\), which is strictly positive because both true
standard deviations are positive. Therefore the denominator is bounded
away from zero with probability approaching one. Combining the numerator
and denominator bounds gives \(U_{\mathrm{CMR}} = O_p\!\left( \left(M_1^{-1/2}+M_0^{-1/2}\right)^2 \right)\). As in part (ii), this is also \(O_p(M_1^{-1}+M_0^{-1})\). This proves part
(iii) and completes the proof.
\end{proof}
\hypertarget{proof:cmr-competitive-risk}{} 
\begin{proof}[Proof of Theorem~\ref{thm:cmr-competitive-risk}]
We prove the global square-root bound first and then the faster interior
bound. Fix \(F\in\mathcal F\). Throughout the proof, all expectations and
probabilities are under \(P_F\), and we write \(\sigma_d\) for
\(\sigma_d(F)\). Let \(\eta_d=\sqrt{2\log(4/\alpha)/(M_d-1)}\) be the
Maurer--Pontil radius on the standard-deviation scale. Write
\(\hat\sigma_d=\sqrt{\hat\sigma_d^2}\) for the raw empirical standard
deviation in arm \(d\), and write
\(S_d=\overline{\sigma}_d+\underline{\sigma}_d\) for the sum of the
standard-deviation endpoints in arm \(d\).

We first record the one-arm estimates used throughout the proof. Since
\(Y(d)\in[0,1]\), the fourth central moment of \(Y(d)\) is bounded by its
variance, \(\mathbb E_F[(Y(d)-\mathbb E_F[Y(d)])^4]\le\sigma_d^2\). The
usual formula for the variance of the sample variance therefore implies that
the raw empirical variance has mean squared error bounded by
\(C\sigma_d^2/M_d\), uniformly over \(F\in\mathcal F\). Since
\((\sqrt{x}-\sqrt{y})^2\le (x-y)^2/y\) for \(x\ge0\) and \(y>0\), and the case
\(y=0\) is trivial because the empirical variance is then zero almost surely,
it follows that \(\mathbb E_{P_F}\!\left[(\hat\sigma_d-\sigma_d)^2\right]\le \frac{C}{M_d}\) for each \(d\in\{0,1\}\). This uniform bound is the step that prevents the square-root map from creating
a singularity near zero variance.

We now use the endpoint construction. On the standard-deviation scale, $\underline{\sigma}_d = \min\!\left\{\frac12,(\hat\sigma_d-\eta_d)_+\right\}$, $\overline{\sigma}_d = \min\!\left\{\frac12,\hat\sigma_d+\eta_d\right\}$.
The empirical standard deviation is raw and may exceed \(1/2\), but it is
uniformly bounded. Since outcomes lie in \([0,1]\), $0\le\hat\sigma_d^2\le \frac{M_d}{4(M_d-1)}$ and $0\le\hat\sigma_d\le \sqrt{\frac{M_d}{4(M_d-1)}}\le \frac{1}{\sqrt2}$
for \(M_d\ge2\). Since also
\(\eta_d\le\sqrt{2\log(4/\alpha)}\) for all \(M_d\ge2\), there exists a
constant \(c_\alpha>0\), depending only on \(\alpha\), such that \(S_d\ge c_\alpha(\hat\sigma_d+\eta_d)\) for all \(d\in\{0,1\}\). Indeed, \(S_d\ge\overline{\sigma}_d=\min\{1/2,\hat\sigma_d+\eta_d\}\), and
\(\hat\sigma_d+\eta_d\) is uniformly bounded above.

The endpoint sum is also close to \(2\sigma_d\). The map \(t\mapsto \min\!\left\{\frac12,(t-\eta_d)_+\right\} + \min\!\left\{\frac12,t+\eta_d\right\}\) is \(2\)-Lipschitz. Evaluated at \(t=\sigma_d\in[0,1/2]\), its value differs
from \(2\sigma_d\) by at most \(\eta_d\). Hence
$\left|S_d-2\sigma_d\right| \le 2\left|\hat\sigma_d-\sigma_d\right|+\eta_d$. 

These deterministic inequalities imply the two one-arm bounds needed below.
First,
\[
\begin{aligned}
\mathbb E_{P_F}\!\left[
\frac{(S_d-2\sigma_d)^2}{S_d}
\right]
&\le
C_\alpha
\mathbb E_{P_F}\!\left[
\frac{(\hat\sigma_d-\sigma_d)^2+\eta_d^2}{\hat\sigma_d+\eta_d}
\right] 
&\le
C_\alpha
\frac{
\mathbb E_{P_F}[(\hat\sigma_d-\sigma_d)^2]+\eta_d^2
}{\eta_d}
\le
C_\alpha M_d^{-1/2}.
\end{aligned}
\]
Second, using \(\sigma_d\le\hat\sigma_d+|\hat\sigma_d-\sigma_d|\),
\[
\begin{aligned}
\mathbb E_{P_F}\!\left[
\frac{\sigma_d^2}{S_d}
\right]
&\le
C_\alpha
\mathbb E_{P_F}\!\left[
\frac{\sigma_d^2}{\hat\sigma_d+\eta_d}
\right] \le
C_\alpha
\mathbb E_{P_F}\!\left[
\frac{\hat\sigma_d^2}{\hat\sigma_d+\eta_d}
+
\frac{(\hat\sigma_d-\sigma_d)^2}{\hat\sigma_d+\eta_d}
\right] \\
&\le
C_\alpha
\left\{
1+
\frac{\mathbb E_{P_F}[(\hat\sigma_d-\sigma_d)^2]}{\eta_d}
\right\}
\le
C_\alpha .
\end{aligned}
\]
The last inequalities use the uniform bound on \(\hat\sigma_d\), the preceding
mean-square bound, and \(\eta_d\asymp_\alpha M_d^{-1/2}\).

We now turn to the CMR regret. By
Proposition~\ref{prop:cmr_assignment_rectangle},
\(p_{\mathrm{CMR}}=S_1/(S_1+S_0)\). Substituting this expression into the
square representation of regret gives \(r\!\left(p_{\mathrm{CMR}},\theta(F)\right) = \frac{(S_0\sigma_1-S_1\sigma_0)^2}{S_1S_0}\). The numerator can be rewritten as
\(S_0\sigma_1-S_1\sigma_0
=\sigma_1(S_0-2\sigma_0)-\sigma_0(S_1-2\sigma_1)\). Therefore,
$r\!\left(p_{\mathrm{CMR}},\theta(F)\right) \le 2\frac{\sigma_1^2}{S_1} \frac{(S_0-2\sigma_0)^2}{S_0} + 2\frac{\sigma_0^2}{S_0} \frac{(S_1-2\sigma_1)^2}{S_1}$.
The two pilot arms are independent. Taking expectations and applying the two
one-arm bounds gives $\mathbb E_{P_F}\!\left[ r\!\left(p_{\mathrm{CMR}},\theta(F)\right) \right] \le C_\alpha\left(M_1^{-1/2}+M_0^{-1/2}\right)$,
uniformly over \(F\in\mathcal F\). Taking the supremum over \(F\) proves the
global bound.

We now prove the interior bound. Fix \(\kappa>0\) and restrict attention to
\(F\in\mathcal F\) satisfying \(\sigma_1\wedge\sigma_0\ge\kappa\). On the
event \(\{\hat\sigma_1\ge\kappa/2,\ \hat\sigma_0\ge\kappa/2\}\), the lower
bound on \(S_d\) implies that both \(S_1\) and \(S_0\) are bounded below by a
positive constant depending only on \(\alpha\) and \(\kappa\). Hence, on this
event, $r\!\left(p_{\mathrm{CMR}},\theta(F)\right) \le C_{\alpha,\kappa} \left\{ (S_1-2\sigma_1)^2+(S_0-2\sigma_0)^2 \right\}$.
Using \(|S_d-2\sigma_d|\le2|\hat\sigma_d-\sigma_d|+\eta_d\), the
mean-square bound above, and \(\eta_d^2\asymp_\alpha M_d^{-1}\), the expected
regret on this event is bounded by
\(C_{\alpha,\kappa}(M_1^{-1}+M_0^{-1})\).

It remains to control the exceptional event. Let
\(B_d=\{\hat\sigma_d<\kappa/2\}\). Since \(\sigma_d\ge\kappa\),
bounded-difference concentration for the empirical variance gives constants
\(C_\kappa,c_\kappa>0\), depending only on \(\kappa\), such that
\(\Pr_{P_F}(B_d)\le C_\kappa e^{-c_\kappa M_d}\). To see this, write the
unbiased sample variance as \(\hat\sigma_d^2 = \frac{1}{M_d(M_d-1)} \sum_{1\le i<j\le M_d}(Y_{di}-Y_{dj})^2\). Changing one observation changes this statistic by at most \(1/M_d\). Since
\(\mathbb E_{P_F}[\hat\sigma_d^2]=\sigma_d^2\ge\kappa^2\), the event
\(B_d\) implies \(\hat\sigma_d^2-\sigma_d^2\le -3\kappa^2/4\), and McDiarmid's
bounded-differences inequality \citep{mcdiarmid1989method} gives the exponential
bound. We cannot simply multiply the
probability of \(B_1\cup B_0\) by the largest two-arm bound, because that would
pair a two-arm blowup with a one-arm failure probability. Instead, we split the
exceptional event by which arm is bad.

Consider \(B_1\cap B_0^c\). On this event, \(S_0\ge c_{\alpha,\kappa}\), so
\(1-p_{\mathrm{CMR}}=S_0/(S_1+S_0)\ge c_{\alpha,\kappa}\). Also, the lower
bound on \(S_d\) gives \(S_1\ge c_\alpha\eta_1\), while
\(S_1+S_0\le2\). Hence \(p_{\mathrm{CMR}}\ge c_\alpha\eta_1\), after reducing
the constant if needed. Therefore, on \(B_1\cap B_0^c\), $r\!\left(p_{\mathrm{CMR}},\theta(F)\right) \le \frac{\sigma_1^2}{p_{\mathrm{CMR}}} + \frac{\sigma_0^2}{1-p_{\mathrm{CMR}}} \le C_{\alpha,\kappa}M_1^{1/2}$.
Multiplying by \(\Pr_{P_F}(B_1)\le C_\kappa e^{-c_\kappa M_1}\) gives a
contribution bounded by \(C_{\alpha,\kappa}M_1^{-1}\), after increasing the
constant. The case \(B_0\cap B_1^c\) is symmetric and contributes at most
\(C_{\alpha,\kappa}M_0^{-1}\).

On \(B_1\cap B_0\), the lower bounds \(S_d\ge c_\alpha\eta_d\) for both arms
imply \(p_{\mathrm{CMR}}\ge c_\alpha\eta_1\) and
\(1-p_{\mathrm{CMR}}\ge c_\alpha\eta_0\). Thus \(r\!\left(p_{\mathrm{CMR}},\theta(F)\right) \le C_\alpha\left(M_1^{1/2}+M_0^{1/2}\right)\). By independence of the two pilot arms, the probability of \(B_1\cap B_0\) is
at most \(C_\kappa e^{-c_\kappa M_1}e^{-c_\kappa M_0}\). Hence the
contribution of this event is bounded by $C_{\alpha,\kappa} \left(M_1^{1/2}+M_0^{1/2}\right) e^{-c_\kappa M_1} e^{-c_\kappa M_0} \le C_{\alpha,\kappa} \left(M_1^{-1}+M_0^{-1}\right)$.
Combining the good-event and exceptional-event bounds gives $\mathbb E_{P_F}\!\left[ r\!\left(p_{\mathrm{CMR}},\theta(F)\right) \right] \le C_{\alpha,\kappa} \left(M_1^{-1}+M_0^{-1}\right)$,
uniformly over all \(F\in\mathcal F\) satisfying
\(\sigma_1(F)\wedge\sigma_0(F)\ge\kappa\). Taking the supremum over this class
proves the interior bound and completes the proof.
\end{proof}


\end{bibunit}

\clearpage
\thispagestyle{empty}
\vspace*{\fill}
\begin{center}
{\Large \bf Online Appendix}
\end{center}
\vspace*{\fill}
\clearpage
\begin{bibunit}

\section{Proofs of Additional Main-Text Results}
\label{sec:online_appendix_main_text_proofs}

 \hypertarget{proof:minimax_risk}{}
\begin{proof}[Proof of Proposition~\ref{prop:minimax_risk}]
We establish an upper bound of \(1\) attained by the balanced rule and a matching lower
bound for every decision rule.

For the constant rule \(p_{\mathrm{mm}}(\omega) \equiv 1/2\) and any \(F \in \mathcal F\),
\(\mathcal R(p_{\mathrm{mm}},F) = 2\sigma_1^2(F) + 2\sigma_0^2(F)\). Since
\(\sigma_d^2(F) \le 1/4\) for \(d \in \{0,1\}\), it follows that
\(\mathcal R(p_{\mathrm{mm}},F) \le 1\) for all \(F \in \mathcal F\). Hence
\(\sup_{F \in \mathcal F} \mathcal R(p_{\mathrm{mm}},F) \le 1\).

Now fix any \(p \in \mathcal D_0\). Choose \(F^\star \in \mathcal F\) such that
\(Y(d) \sim \mathrm{Bernoulli}(1/2)\) for \(d \in \{0,1\}\). Then
\(\sigma_1^2(F^\star) = \sigma_0^2(F^\star) = 1/4\). If
\(p(\omega)\in\{0,1\}\) on a set with positive probability under \(P_{F^\star}\), then by
the boundary-loss convention, $\mathcal R(p,F^\star)=+\infty \ge 1$. Otherwise, \(p(\omega)\in(0,1)\) \(P_{F^\star}\)-almost surely, and $\mathcal R(p,F^\star) = \frac14\,\mathbb E_{P_{F^\star}}\!\left[\phi\bigl(p(\omega)\bigr)\right]$, $\phi(u) := \frac{1}{u} + \frac{1}{1-u}, \quad u \in (0,1)$.
Since \(\phi(u) = 1/[u(1-u)]\) and \(u(1-u) \le 1/4\) for all \(u \in (0,1)\), with equality
if and only if \(u = 1/2\), we have \(\phi(u) \ge 4\) for all \(u \in (0,1)\). Hence $\mathcal R(p,F^\star) \ge \frac14 \cdot 4 = 1$.
Because \(F^\star \in \mathcal F\), this implies
\(\sup_{F \in \mathcal F} \mathcal R(p,F) \ge 1\) for every \(p \in \mathcal D_0\).

Combining the two bounds yields $1 \le \inf_{p \in \mathcal D_0} \sup_{F \in \mathcal F} \mathcal R(p,F) \le \sup_{F \in \mathcal F} \mathcal R(p_{\mathrm{mm}},F) \le 1$.
Therefore \(\inf_{p \in \mathcal D_0} \sup_{F \in \mathcal F} \mathcal R(p,F) = 1\), and the
constant balanced rule \(p_{\mathrm{mm}}\) is minimax.
\end{proof}
 \hypertarget{proof:fna_boundary}{}
\begin{proof}[Proof of Proposition~\ref{prop:fna_boundary}]
Fix any finite pilot sizes \(M_0,M_1\geq 2\). It is enough to exhibit one distribution \(F\in\mathcal F\) for which \(\mathcal R(\hat p,F)=+\infty\).

Let \(F\) be such that \(Y(1)\) and \(Y(0)\) are independent Bernoulli random variables with success probability \(1/2\). Then both marginal variances are strictly positive, with \(\sigma_1^2=\sigma_0^2=1/4\). Consider the event \(A\) that all treatment-arm pilot outcomes are equal and the control-arm pilot outcomes are not all equal. On this event, the treatment-arm sample variance is zero while the control-arm sample variance is strictly positive, so \(\hat\sigma_1(\omega)=0\) and \(\hat\sigma_0(\omega)>0\). Hence the untrimmed feasible Neyman rule sets \(\hat p(\omega)=0\).

The event \(A\) has strictly positive probability. Conditional on the fixed pilot assignment, the treatment-arm outcomes are i.i.d. Bernoulli\((1/2)\), so the probability that they are all equal is \(2(1/2)^{M_1}=2^{1-M_1}\). Similarly, the probability that the control-arm outcomes are not all equal is \(1-2^{1-M_0}\). Since these events depend on disjoint sets of independent units, $\Pr_{P_F}(A) = 2^{1-M_1}\bigl(1-2^{1-M_0}\bigr) > 0$.

Under the boundary convention for \(V\), assigning zero probability to treatment yields infinite main-wave variance whenever \(\sigma_1^2>0\). Therefore, on \(A\), $V\!\left(\hat p(\omega),\sigma_1^2,\sigma_0^2\right) = V(0,1/4,1/4) = +\infty$.
Thus the nonnegative extended random variable \(V(\hat p(\omega),\sigma_1^2,\sigma_0^2)\) is infinite on an event with positive probability. Its expectation is therefore infinite, so \(\mathcal R(\hat p,F)=+\infty\). Since this \(F\) belongs to \(\mathcal F\), it follows that $\sup_{F\in\mathcal F}\mathcal R(\hat p,F)=+\infty$.
\end{proof}
  \hypertarget{proof:minimax_regret}{} 
\begin{proof}[Proof of Proposition~\ref{prop:minimax_regret}]
We first record a useful identity. For any constant action \(p\in(0,1)\) and any
\(F\in\mathcal F\),
\begin{align}
R(p,F)
&=
\frac{\sigma_1^2}{p}
+
\frac{\sigma_0^2}{1-p}
-
(\sigma_1+\sigma_0)^2 \notag\\
&=
\frac{\big((1-p)\sigma_1-p\sigma_0\big)^2}{p(1-p)}.
\label{eq:constant_regret_identity_appendix}
\end{align}
Boundary actions \(p\in\{0,1\}\) have infinite loss by convention.

\paragraph{Proof of (i).}
Consider the constant balanced rule \(p(\omega)\equiv 1/2\). By
\eqref{eq:constant_regret_identity_appendix}, \(R(1/2,F) = 2\sigma_1^2+2\sigma_0^2-(\sigma_1+\sigma_0)^2 = (\sigma_1-\sigma_0)^2\). Since outcomes lie in \([0,1]\), each marginal variance is at most \(1/4\), so
\(0\le \sigma_1,\sigma_0\le 1/2\). Therefore $R(1/2,F)\le \frac14$ for every $F\in\mathcal F$. Since the balanced rule is feasible for every pilot size, $R^{\mathrm{mmr}} \le \sup_{F\in\mathcal F}R(1/2,F) \le \frac14$.

Now consider the no-pilot problem. The sample space is a singleton, so every rule is a
constant action \(p\in[0,1]\). Boundary actions have infinite loss, so it is enough to
consider \(p\in(0,1)\). For fixed \(p\), define $r_p(\sigma_1,\sigma_0) := \frac{\sigma_1^2}{p} + \frac{\sigma_0^2}{1-p} - (\sigma_1+\sigma_0)^2 = \frac{\big((1-p)\sigma_1-p\sigma_0\big)^2}{p(1-p)}$.
The function \(r_p\) is convex in \((\sigma_1,\sigma_0)\), so its maximum over
\([0,1/2]^2\) is attained at a corner. These corners are attainable by distributions in
\(\mathcal F\). Evaluating,
\[
r_p(0,0)=0,\quad
r_p(1/2,0)=\frac{1-p}{4p},\quad
r_p(0,1/2)=\frac{p}{4(1-p)}, \quad
r_p(1/2,1/2)=\frac{(1-2p)^2}{4p(1-p)}.
\]
If \(p\le 1/2\), then \(r_p(1/2,1/2)\le r_p(1/2,0)\). If \(p\ge 1/2\), then
\(r_p(1/2,1/2)\le r_p(0,1/2)\). Hence \(\sup_{F\in\mathcal F}R(p,F) = \max\left\{ \frac{1-p}{4p}, \frac{p}{4(1-p)} \right\}\). The first term is strictly decreasing in \(p\), and the second is strictly increasing in
\(p\). Their maximum is uniquely minimized at \(p=1/2\), where its value is \(1/4\).
Therefore, in the no-pilot problem, \(R^{\mathrm{mmr}}=1/4\), and the unique minimax-regret
action is \(p=1/2\).

\paragraph{Proof of (ii).} It is enough to construct one feasible rule whose worst-case regret is
strictly below the worst-case regret of balanced assignment. Let $B=\max_{d\in\{0,1\}}\frac{M_d}{4(M_d-1)}$. Fix \(\eta\in(0,(4B)^{-1})\), to be chosen below, and define $p_\eta(\omega) = \frac12 + \eta\bigl(\hat\sigma_1^2(\omega)-\hat\sigma_0^2(\omega)\bigr)$. Since the usual unbiased within-arm sample variance satisfies $0\le \hat\sigma_d^2(\omega)\le \frac{M_d}{4(M_d-1)}$, this rule takes values in $[1/2-\eta B,1/2+\eta B]\subset(0,1)$. Hence \(p_\eta\in\mathcal D\).

Fix \(F\in\mathcal F\). For any realized pilot sample,
\(p_\eta(\omega)=\frac12+\eta\bigl(\hat\sigma_1^2(\omega)-\hat\sigma_0^2(\omega)\bigr)\).
Using \(r(p_\eta(\omega),\theta)=\sigma_1^2/p_\eta(\omega)+\sigma_0^2/(1-p_\eta(\omega))-(\sigma_1+\sigma_0)^2\), we have
\[
\begin{aligned}
r(p_\eta(\omega),\theta(F))-r(1/2,\theta(F))
&=
4\eta\bigl(\hat\sigma_1^2-\hat\sigma_0^2\bigr)
     \bigl(\sigma_0^2-\sigma_1^2\bigr) \\
&\quad
+
8\eta^2
\bigl(\hat\sigma_1^2-\hat\sigma_0^2\bigr)^2
\left\{
\frac{\sigma_1^2}
     {1+2\eta(\hat\sigma_1^2-\hat\sigma_0^2)}
+
\frac{\sigma_0^2}
     {1-2\eta(\hat\sigma_1^2-\hat\sigma_0^2)}
\right\}.
\end{aligned}
\]
Because \(\eta<(4B)^{-1}\) and
\(|\hat\sigma_1^2-\hat\sigma_0^2|\le B\), both denominators in the second
line are at least \(1/2\). Since \(\sigma_1^2,\sigma_0^2\le 1/4\), the second line is bounded
above by \(8B^2\eta^2\). Therefore
\begin{equation}
\label{eq:appendix_eta_regret_bound}
R(p_\eta,F)
\le
(\sigma_1-\sigma_0)^2
+
4\eta(\sigma_0^2-\sigma_1^2)
\mathbb E_{P_F}\!\left[\hat\sigma_1^2-\hat\sigma_0^2\right]
+
8B^2\eta^2 .
\end{equation}

We now show that \(\eta\) can be chosen so that the right-hand side is uniformly below
\(1/4\). Because \(\hat\sigma_d^2\) is the usual unbiased within-arm sample variance,
\(\mathbb E_{P_F}[\hat\sigma_d^2]=\sigma_d^2\). Choose constants \(c>0\) and \(c'>0\),
depending only on \((M_1,M_0)\), such that \(c'<c/2\), \(1/4-2c'>0\), and $\sigma_d^2\ge \frac14-c'$ $\Longrightarrow$ $ \mathbb E_{P_F}[\hat\sigma_d^2]\ge c$.

Choose \(\gamma>0\) small enough that $(\sigma_1-\sigma_0)^2>\frac14-\gamma$
implies either \(\sigma_1^2\ge 1/4-c'\) and \(\sigma_0^2\le c'\), or
\(\sigma_0^2\ge 1/4-c'\) and \(\sigma_1^2\le c'\). This is possible because
\(\sigma_1,\sigma_0\in[0,1/2]\), and the equality
\((\sigma_1-\sigma_0)^2=1/4\) can occur only at the two corners
\((\sigma_1,\sigma_0)=(1/2,0)\) and \((0,1/2)\).

First suppose \((\sigma_1-\sigma_0)^2\le 1/4-\gamma\). Since
\(|\sigma_0^2-\sigma_1^2|\le 1/4\) and
\(|\mathbb E_{P_F}[\hat\sigma_1^2-\hat\sigma_0^2]|\le 1/4\), equation \eqref{eq:appendix_eta_regret_bound} gives $R(p_\eta,F) \le \frac14-\gamma+\frac{\eta}{4}+8B^2\eta^2$.
For all sufficiently small \(\eta\), this is strictly below \(1/4\).

Now suppose \((\sigma_1-\sigma_0)^2>1/4-\gamma\). If
\(\sigma_1^2\ge 1/4-c'\) and \(\sigma_0^2\le c'\), then
\[
\mathbb E_{P_F}[\hat\sigma_1^2-\hat\sigma_0^2]\ge c-c'\ge \frac{c}{2},
\qquad
\sigma_1^2-\sigma_0^2\ge \frac14-2c' .
\]
Thus \eqref{eq:appendix_eta_regret_bound} implies $R(p_\eta,F) \le \frac14 - 2\eta c\Bigl(\frac14-2c'\Bigr) + 8B^2\eta^2$. 
For all sufficiently small \(\eta>0\), this is strictly below \(1/4\). The case
\(\sigma_0^2\ge 1/4-c'\) and \(\sigma_1^2\le c'\) is identical, with the roles of
the two arms reversed.

Combining the two cases, there exists \(\eta>0\) such that $\sup_{F\in\mathcal F} R(p_\eta,F)<\frac14$.
Since \(p_\eta\in\mathcal D\subseteq\mathcal D_0\) and \(R^{\mathrm{mmr}}\) is the
infimum over \(\mathcal D_0\), $R^{\mathrm{mmr}} \le \sup_{F\in\mathcal F} R(p_\eta,F) < \frac14$.

It remains to prove the final claim. Let \(\delta\) be any exact minimax-regret rule. If,
for every \(F\in\mathcal F\), $\Pr_{P_{F,M_1,M_0}}(\delta(\omega)\ne 1/2)=0$,
then \(\delta\) has the same expected regret as balanced assignment under every \(F\). Hence $\sup_{F\in\mathcal F} R(\delta,F) = \sup_{F\in\mathcal F} R(1/2,F) = \frac14$,
contradicting \(R^{\mathrm{mmr}}<1/4\). Therefore every exact minimax-regret rule must
satisfy $ \Pr_{P_{F,M_1,M_0}}\!\left(\delta(\omega)\ne \frac12\right)>0$
for some \(F\in\mathcal F\).

\paragraph{Proof of (iii).}
We construct a feasible rule whose worst-case regret converges to zero. For each arm
\(d\in\{0,1\}\), write \(\hat v_d=\hat\sigma_d^2\) and \(v_d=\sigma_d^2\). Since
\(\hat v_d\) is the usual unbiased within-arm sample variance and \(Y(d)\in[0,1]\),
the standard mean-square formula for the sample variance, together with bounded
fourth moments, gives a universal constant \(C_0<\infty\) such that $\sup_{F\in\mathcal F} \mathbb E_{P_F}\!\left[(\hat v_d-v_d)^2\right]\le C_0M_d^{-1}$.
Using \((\sqrt{x}-\sqrt{y})^2\le |x-y|\) for \(x,y\ge0\) and Cauchy--Schwarz, there is
a universal constant \(C_1<\infty\) such that $\sup_{F\in\mathcal F} \mathbb E_{P_F}\!\left[(\hat\sigma_d-\sigma_d)^2\right]\le C_1M_d^{-1/2}$.
Let \(M=(M_1,M_0)\) and \(\rho_M:=C_1\max\{M_0^{-1/2},M_1^{-1/2}\}\), so
\(\rho_M\to0\) whenever \(M_0,M_1\to\infty\).

Set \(\kappa_M=\rho_M^{1/4}\) and \(\delta_M=\rho_M^{1/4}\). Define the stabilized
plug-in rule
\[
p_M(\omega)
:=
\begin{cases}
1/2, & \hat\sigma_0+\hat\sigma_1\le \kappa_M,\\[0.4em]
\Pi_{[\delta_M,\,1-\delta_M]}
\!\left(
\frac{\hat\sigma_1}{\hat\sigma_0+\hat\sigma_1}
\right),
& \hat\sigma_0+\hat\sigma_1>\kappa_M.
\end{cases}
\]
For all sufficiently large pilot sizes, \(p_M(\omega)\in[\delta_M,1-\delta_M]\), and hence
\(p_M(\omega)(1-p_M(\omega))\ge \delta_M/2\).

Fix \(F\in\mathcal F\), let \(s=\sigma_0+\sigma_1\), and note that if \(s=0\), regret is
zero. Suppose \(s>0\), so that \(\pi^*(\theta(F))=\sigma_1/s\). Then
\[
R(p_M,F)
=
s^2\mathbb E_{P_F}
\left[
\frac{(p_M-\pi^*(\theta(F)))^2}{p_M(1-p_M)}
\right]
\le
\frac{2s^2}{\delta_M}\mathbb E_{P_F}[(p_M-\pi^*(\theta(F)))^2].
\]
If \(s\le 2\kappa_M\), then \((p_M-\pi^*(\theta(F)))^2\le1\), so $R(p_M,F)\le \frac{8\kappa_M^2}{\delta_M}$.

Now suppose \(s>2\kappa_M\). Let \(\Delta_d=\hat\sigma_d-\sigma_d\),
\(\hat s=\hat\sigma_0+\hat\sigma_1\), and \(E_M=\{\hat s\le\kappa_M\}\). On \(E_M\),
\(|\Delta_0|+|\Delta_1|\ge s-\hat s>\kappa_M\). Hence, by Markov's inequality, $\Pr_{P_F}(E_M) \le \frac{\mathbb E_{P_F}[(|\Delta_0|+|\Delta_1|)^2]}{\kappa_M^2} \le \frac{4\rho_M}{\kappa_M^2}$.
Since the regret integrand is bounded above by \(2/\delta_M\), the contribution from
\(E_M\) is at most \(8\rho_M/(\delta_M\kappa_M^2)\).

On \(E_M^c\), define \(\tilde p_M=\hat\sigma_1/\hat s\). A direct calculation gives $|\tilde p_M-\pi^*(\theta(F))| = \left| \frac{\hat\sigma_1}{\hat s} - \frac{\sigma_1}{s} \right| \le \frac{|\Delta_0|+|\Delta_1|}{\kappa_M}$.
Since \(p_M\) is obtained by clipping \(\tilde p_M\) to \([\delta_M,1-\delta_M]\), $|p_M-\pi^*(\theta(F))| \le \frac{|\Delta_0|+|\Delta_1|}{\kappa_M}+\delta_M$.
Therefore, using \((x+y)^2\le2x^2+2y^2\), the contribution from \(E_M^c\) to
\(R(p_M,F)\) is at most $\frac{16\rho_M}{\delta_M\kappa_M^2}+4\delta_M$.
Combining the two cases, $R(p_M,F) \le \max\left\{ \frac{8\kappa_M^2}{\delta_M}, \frac{24\rho_M}{\delta_M\kappa_M^2}+4\delta_M \right\}$.
With \(\kappa_M=\delta_M=\rho_M^{1/4}\), this implies $\sup_{F\in\mathcal F}R(p_M,F)\le C\rho_M^{1/4}$ 
for a universal constant \(C<\infty\). Since \(\rho_M\to0\), we have
\(\sup_{F\in\mathcal F}R(p_M,F)\to0\). Finally, because \(p_M\in\mathcal D\subseteq\mathcal D_0\)
and \(R^{\mathrm{mmr}}\) is the infimum over \(\mathcal D_0\), \(0\le R^{\mathrm{mmr}}\le\sup_{F\in\mathcal F}R(p_M,F)\to0\).
Therefore \(R^{\mathrm{mmr}}\to0\).
\end{proof}
\hypertarget{proof:mmr-lower-bound}{} 
\begin{proof}[Proof of Proposition~\ref{prop:mmr-lower-bound}]
The proof works by embedding two simple two-state problems inside the full
bounded-outcome model, following Le Cam's two-point method
\citep[Section~2.2]{tsybakov2009introduction}. Since the adversary in the definition of
\(R^{\mathrm{mmr}}_{M_1,M_0}\) can choose any distribution in
\(\mathcal F\), the minimax value over \(\mathcal F\) is at least as large
as the minimax value over any submodel.

Fix an arbitrary pilot-to-assignment rule \(\delta\). We first construct a
hard subproblem in which the control variance is difficult to distinguish
from zero. In both states, let \(Y(1)\) be Bernoulli with success probability
\(1/2\). The two states differ only in the control arm. Under the first
state, let \(Y(0)=0\) almost surely. Under the second state, let \(Y(0)\) be
Bernoulli with success probability \(1/M_0\). Both states belong to
\(\mathcal F\).

In the first state, the variance pair is \((1/4,0)\). Thus the infeasible Neyman
assignment is the boundary allocation that puts all main-wave mass on
treatment. If a realized rule chooses treatment assignment probability \(p\in(0,1)\), its
regret in this state is $r(p,(1/4,0)) = \frac{1-p}{4p} \ge \frac{1-p}{4}$.
In the second state, the treatment standard deviation is \(1/2\), while the
control standard deviation is
\(\sqrt{M_0^{-1}(1-M_0^{-1})}\). For any realized treatment assignment probability \(p\),
the regret in this second state is
\[
r\!\left(p,
\left(1/4,\,M_0^{-1}(1-M_0^{-1})\right)
\right)
=
\frac{1}{4p}
+
\frac{M_0^{-1}(1-M_0^{-1})}{1-p}
-
\left(
\frac12+\sqrt{M_0^{-1}(1-M_0^{-1})}
\right)^2 .
\]

Now consider the event that all \(M_0\) control-pilot observations are equal
to zero. Under the first state this event has probability one. Under the
second state it has probability \((1-M_0^{-1})^{M_0}\), which is at least
\(1/4\) for every \(M_0\ge 2\). Conditional on this event, the observed
control pilot is the same in the two states, and the treatment pilot has the
same distribution in the two states. Hence, on this event, the rule faces
the same pilot evidence in both states.

We next record the deterministic trade-off faced by any assignment on this
common pilot event. Let \(p\in(0,1)\) be any treatment assignment probability, and write the
control share as \(1-p\). If
\(1-p\ge \sqrt{M_0^{-1}(1-M_0^{-1})}/2\), then the regret in the first state
is at least \(\sqrt{M_0^{-1}(1-M_0^{-1})}/8\). If instead
\(1-p<\sqrt{M_0^{-1}(1-M_0^{-1})}/2\), then the regret in the second state
is at least
\[
\frac14
+
2\sqrt{M_0^{-1}(1-M_0^{-1})}
-
\left(
\frac12+\sqrt{M_0^{-1}(1-M_0^{-1})}
\right)^2
=
\sqrt{M_0^{-1}(1-M_0^{-1})}
-
M_0^{-1}(1-M_0^{-1}).
\]
The last expression is at least
\(\sqrt{M_0^{-1}(1-M_0^{-1})}/2\), because a Bernoulli standard deviation
is at most \(1/2\). Therefore, on the common all-zero control-pilot event,
every realized assignment has regret at least
\(\sqrt{M_0^{-1}(1-M_0^{-1})}/8\) in one of the two states.

Averaging over the common treatment-pilot distribution gives the same
trade-off in expectation on the all-zero control-pilot event. Therefore, for
the rule \(\delta\), either its expected regret in the first state is at
least \(\sqrt{M_0^{-1}(1-M_0^{-1})}/16\), or its conditional expected regret
in the second state, given the all-zero control-pilot event, is at least
\(\sqrt{M_0^{-1}(1-M_0^{-1})}/16\). In the latter case, since the all-zero
control-pilot event has probability at least \(1/4\) under the second state,
the unconditional expected regret in the second state is at least
\(\sqrt{M_0^{-1}(1-M_0^{-1})}/64\). Thus, for every rule \(\delta\), $\sup_{F\in\mathcal F} \mathbb E_{P_F}\!\left[ r\!\left(\delta(\omega),\theta(F)\right) \right] \ge \frac{1}{64} \sqrt{M_0^{-1}(1-M_0^{-1})}$.
Since \(M_0\ge2\), the right-hand side is at least
\((64\sqrt{2})^{-1}M_0^{-1/2}\).

The same argument with the treatment and control labels reversed gives,
for every rule \(\delta\), $\sup_{F\in\mathcal F} \mathbb E_{P_F}\!\left[ r\!\left(\delta(\omega),\theta(F)\right) \right] \ge \frac{1}{64\sqrt{2}}M_1^{-1/2}$.
Combining the two arm-specific lower bounds,
\[
\sup_{F\in\mathcal F}
\mathbb E_{P_F}\!\left[
r\!\left(\delta(\omega),\theta(F)\right)
\right]
\ge
\frac{1}{64\sqrt{2}}
\max\{M_1^{-1/2},M_0^{-1/2}\}.
\]
Since $\max\{M_1^{-1/2},M_0^{-1/2}\} \ge \frac12\left(M_1^{-1/2}+M_0^{-1/2}\right)$,
we obtain
\[
\sup_{F\in\mathcal F}
\mathbb E_{P_F}\!\left[
r\!\left(\delta(\omega),\theta(F)\right)
\right]
\ge
\frac{1}{128\sqrt{2}}
\left(M_1^{-1/2}+M_0^{-1/2}\right).
\]
The rule \(\delta\) was arbitrary. Taking the infimum over all measurable
pilot-to-assignment rules proves the result, with
\(c=1/(128\sqrt{2})\) under the \([0,1]\) normalization.
\end{proof}
\hypertarget{proof:cmr_assignment_rectangle}{} 
\begin{proof}[Proof of Proposition~\ref{prop:cmr_assignment_rectangle}]
For \(\theta=(\sigma_1^2,\sigma_0^2)\), with \(\sigma_1,\sigma_0\geq 0\), pointwise regret satisfies \[ r(\pi,\theta) = \frac{\sigma_1^2}{\pi} + \frac{\sigma_0^2}{1-\pi} - (\sigma_1+\sigma_0)^2 = \frac{\big((1-\pi)\sigma_1-\pi\sigma_0\big)^2}{\pi(1-\pi)}. \] Fix \(\pi\in(0,1)\). The denominator is positive and does not depend on \(\theta\), so maximizing regret over the rectangle is equivalent to maximizing \(\big|(1-\pi)\sigma_1-\pi\sigma_0\big|\) over the standard-deviation rectangle \([\underline{\sigma}_1,\overline{\sigma}_1]\times [\underline{\sigma}_0,\overline{\sigma}_0]\). The affine function \((1-\pi)\sigma_1-\pi\sigma_0\) is increasing in \(\sigma_1\) and decreasing in \(\sigma_0\). Its maximum over the rectangle is therefore attained at \((\overline{\sigma}_1,\underline{\sigma}_0)\), and its minimum is attained at \((\underline{\sigma}_1,\overline{\sigma}_0)\). Since the maximum of the square of a function over a set is attained either where the function is largest or where it is smallest, the supremum of regret over the rectangle is the maximum of the regrets at these two off-diagonal corners. This proves the two-corner reduction. 

It remains to minimize this maximum over \(\pi\in(0,1)\). Let
\(p_{\mathrm{CMR}}\) denote the value in the displayed formula of the
proposition. Since \(\overline{\sigma}_1^2>0\) and
\(\overline{\sigma}_0^2>0\), we have \(p_{\mathrm{CMR}}\in(0,1)\). If the
rectangle collapses to a single variance pair, then
\(\overline{\sigma}_d=\underline{\sigma}_d\) for \(d\in\{0,1\}\). The objective
is the regret from that single pair. Its derivative is
\(-\overline{\sigma}_1^2/\pi^2+\overline{\sigma}_0^2/(1-\pi)^2\), which
vanishes only at
\(\overline{\sigma}_1/(\overline{\sigma}_1+\overline{\sigma}_0)\). This is the
unique minimizer, and it equals \(p_{\mathrm{CMR}}\) because the lower and upper
endpoints coincide. The certificate is then zero, which agrees with the formula
in the proposition.

Now suppose the rectangle does not collapse to a point. Let $r_+(\pi)=r(\pi,\overline{\sigma}_1^2,\underline{\sigma}_0^2)$ and $r_-(\pi)=r(\pi,\underline{\sigma}_1^2,\overline{\sigma}_0^2)$.
At \(\pi=p_{\mathrm{CMR}}\), the two affine terms inside the squared regret are
equal in magnitude and opposite in sign, since $(1-p_{\mathrm{CMR}})\overline{\sigma}_1 -p_{\mathrm{CMR}}\underline{\sigma}_0 = -\big\{(1-p_{\mathrm{CMR}})\underline{\sigma}_1 -p_{\mathrm{CMR}}\overline{\sigma}_0\big\}$.
Hence \(r_+(p_{\mathrm{CMR}})=r_-(p_{\mathrm{CMR}})\). The difference between
the two corner regrets is strictly decreasing. Indeed, $\frac{d}{d\pi}\{r_+(\pi)-r_-(\pi)\} = -\frac{\overline{\sigma}_1^2-\underline{\sigma}_1^2}{\pi^2} - \frac{\overline{\sigma}_0^2-\underline{\sigma}_0^2}{(1-\pi)^2} <0$,
where strict negativity follows because at least one side of the rectangle has
positive length. Since the two corner regrets are equal at \(p_{\mathrm{CMR}}\),
this implies \(r_+(\pi)>r_-(\pi)\) for \(\pi<p_{\mathrm{CMR}}\), and
\(r_-(\pi)>r_+(\pi)\) for \(\pi>p_{\mathrm{CMR}}\). Therefore the realized
worst-case regret is \(r_+(\pi)\) to the left of \(p_{\mathrm{CMR}}\), and
\(r_-(\pi)\) to the right of \(p_{\mathrm{CMR}}\).

Finally, \(p_{\mathrm{CMR}}\) lies between the two corner-specific Neyman
allocations. The inequalities $\frac{\underline{\sigma}_1}{\underline{\sigma}_1+\overline{\sigma}_0} \leq p_{\mathrm{CMR}} \leq \frac{\overline{\sigma}_1}{\overline{\sigma}_1+\underline{\sigma}_0}$
both reduce to
\(\underline{\sigma}_1\underline{\sigma}_0
\leq\overline{\sigma}_1\overline{\sigma}_0\). The function \(r_+(\pi)\) is
strictly decreasing to the left of
\(\overline{\sigma}_1/(\overline{\sigma}_1+\underline{\sigma}_0)\), and
\(r_-(\pi)\) is strictly increasing to the right of
\(\underline{\sigma}_1/(\underline{\sigma}_1+\overline{\sigma}_0)\). Thus
moving \(\pi\) upward toward \(p_{\mathrm{CMR}}\) strictly lowers the realized
worst-case regret whenever \(\pi<p_{\mathrm{CMR}}\), while moving \(\pi\)
downward toward \(p_{\mathrm{CMR}}\) strictly lowers the realized worst-case
regret whenever \(\pi>p_{\mathrm{CMR}}\). Hence \(p_{\mathrm{CMR}}\) is the
unique minimizer of the realized worst-case regret over \(\pi\in(0,1)\).

It remains only to evaluate the minimized worst-case regret. By the two-corner
reduction and the equality
\(r_+(p_{\mathrm{CMR}})=r_-(p_{\mathrm{CMR}})\), $U_{\mathrm{CMR}}(\omega) = r_+(p_{\mathrm{CMR}}) = r_-(p_{\mathrm{CMR}})$.
Using \(p_{\mathrm{CMR}} = \frac{\overline{\sigma}_1+\underline{\sigma}_1}{\overline{\sigma}_1+\underline{\sigma}_1+\overline{\sigma}_0+\underline{\sigma}_0}\) and \(1-p_{\mathrm{CMR}} = \frac{\overline{\sigma}_0+\underline{\sigma}_0}{\overline{\sigma}_1+\underline{\sigma}_1+\overline{\sigma}_0+\underline{\sigma}_0}\), we have \((1-p_{\mathrm{CMR}})\overline{\sigma}_1-p_{\mathrm{CMR}}\underline{\sigma}_0 = \frac{\overline{\sigma}_1\overline{\sigma}_0-\underline{\sigma}_1\underline{\sigma}_0}{\overline{\sigma}_1+\underline{\sigma}_1+\overline{\sigma}_0+\underline{\sigma}_0}\), and \(p_{\mathrm{CMR}}(1-p_{\mathrm{CMR}}) = \frac{(\overline{\sigma}_1+\underline{\sigma}_1)(\overline{\sigma}_0+\underline{\sigma}_0)}{\left(\overline{\sigma}_1+\underline{\sigma}_1+\overline{\sigma}_0+\underline{\sigma}_0\right)^2}\).
Substituting these two expressions into the square representation of regret
gives
\[
U_{\mathrm{CMR}}(\omega)
=
\frac{
\left(
\overline{\sigma}_1\overline{\sigma}_0
-\underline{\sigma}_1\underline{\sigma}_0
\right)^2
}{
\left(\overline{\sigma}_1+\underline{\sigma}_1\right)
\left(\overline{\sigma}_0+\underline{\sigma}_0\right)
}.
\]
This also proves that both off-diagonal corners bind at the optimum.
\end{proof}
\hypertarget{proof:mp_rectangle_coverage}{} 
\begin{proof}[Proof of Lemma~\ref{lem:mp_rectangle_coverage}]
Fix \(F\in\mathcal F\) and an arm \(d\in\{0,1\}\). We first prove the two
one-sided statements for \(b\in(0,1)\). Conditional on the pilot assignment
vector, the \(M_d\) outcomes observed in arm \(d\) are i.i.d. draws from the
marginal distribution of \(Y(d)\) under \(F\). Denote these observations by
\(Y_{d1},\ldots,Y_{dM_d}\).

Maurer and Pontil's empirical Bernstein inequality is stated in terms of the
pairwise variance statistic \(V_{M_d} = \frac{1}{M_d(M_d-1)} \sum_{1\leq i<j\leq M_d} (Y_{di}-Y_{dj})^2\). This statistic is exactly the usual unbiased sample variance. Indeed, for any
real numbers \(x_1,\ldots,x_M\) with \(\overline x=M^{-1}\sum_{i=1}^M x_i\), \(\sum_{1\leq i<j\leq M}(x_i-x_j)^2 = M\sum_{i=1}^M(x_i-\overline x)^2\). Applying this identity with \(M=M_d\) gives \(V_{M_d} = \frac{1}{M_d-1} \sum_{i=1}^{M_d}(Y_{di}-\overline Y_d)^2 = \hat\sigma_d^2(\omega)\). Moreover, $\mathbb E_{P_F}[V_{M_d}]=\sigma_d^2(F)$, because the unbiased sample variance has expectation equal to the population
variance.

By Theorem 10 of \citet{maurer2009empirical}, for every \(b\in(0,1)\),
\[
\Pr_{P_F}\left(
\sqrt{\mathbb E_{P_F}[V_{M_d}]}
>
\sqrt{V_{M_d}}
+
\sqrt{\frac{2\log(1/b)}{M_d-1}}
\right)
\leq b,
\]
and
\[
\Pr_{P_F}\left(
\sqrt{V_{M_d}}
>
\sqrt{\mathbb E_{P_F}[V_{M_d}]}
+
\sqrt{\frac{2\log(1/b)}{M_d-1}}
\right)
\leq b.
\]
Since \(\sqrt{\mathbb E_{P_F}[V_{M_d}]}=\sigma_d(F)\) and
\(\sqrt{V_{M_d}}=\sqrt{\hat\sigma_d^2(\omega)}\), the first inequality is
\[
\Pr_{P_F}\left( \sigma_d(F) > \sqrt{\hat\sigma_d^2(\omega)} + \sqrt{\frac{2\log(1/b)}{M_d-1}} \right) \leq b.
\]
The only additional step is the projection of the reported variance endpoints
onto the maintained variance space \([0,1/4]\). Because
\(\sigma_d^2(F)\leq1/4\), the event
$\overline{\sigma}_d^2(b;\omega)<\sigma_d^2(F)$ can occur only if the
unprojected upper endpoint also falls below \(\sigma_d^2(F)\), that is, only
if \(\left( \sqrt{\hat\sigma_d^2(\omega)} + \sqrt{\frac{2\log(1/b)}{M_d-1}} \right)^2 < \sigma_d^2(F)\). All quantities are nonnegative, so this last event is equivalent to \(\sqrt{\hat\sigma_d^2(\omega)} + \sqrt{\frac{2\log(1/b)}{M_d-1}} < \sigma_d(F)\). Therefore, $\Pr_{P_F}\left( \overline{\sigma}_d^2(b;\omega)<\sigma_d^2(F) \right) \leq b$. Next, the second Maurer--Pontil inequality gives $\Pr_{P_F}\left( \sqrt{\hat\sigma_d^2(\omega)} > \sigma_d(F) + \sqrt{\frac{2\log(1/b)}{M_d-1}} \right) \leq b$.
If $\underline{\sigma}_d^2(b;\omega)>\sigma_d^2(F)$, then projection of the
reported lower endpoint cannot be responsible for the failure. Since
\(\underline{\sigma}_d^2(b;\omega)\leq
\left(
\sqrt{\hat\sigma_d^2(\omega)}
-
\sqrt{2\log(1/b)/(M_d-1)}
\right)_+^2\), we must have \(\left( \sqrt{\hat\sigma_d^2(\omega)} - \sqrt{\frac{2\log(1/b)}{M_d-1}} \right)_+^2 > \sigma_d^2(F)\). Again all quantities are nonnegative, so this implies \(\left( \sqrt{\hat\sigma_d^2(\omega)} - \sqrt{\frac{2\log(1/b)}{M_d-1}} \right)_+ > \sigma_d(F)\). Since \(\sigma_d(F)\geq 0\), it follows that \(\sqrt{\hat\sigma_d^2(\omega)} > \sigma_d(F) + \sqrt{\frac{2\log(1/b)}{M_d-1}}\). Thus, $\Pr_{P_F}\left( \underline{\sigma}_d^2(b;\omega)>\sigma_d^2(F) \right) \leq b$.

For \(b=0\), the conventions $\underline{\sigma}_d^2(0;\omega)=0$ and $\overline{\sigma}_d^2(0;\omega)=\frac14$
make both one-sided failures impossible, because
\(\sigma_d^2(F)\in[0,1/4]\). Hence the two inequalities also hold for
\(b=0\).

It remains to prove the joint coverage statement for
\(\widehat\Theta_\alpha(\omega)\). Since \(\alpha\in(0,1)\), the one-sided
inequalities above apply with \(b=\alpha/4\). The event that the true variance
pair is not contained in \(\widehat\Theta_\alpha(\omega)\) is contained in the
union of the four one-sided failure events:
\[
\begin{aligned}
\left\{
\theta(F)\notin\widehat\Theta_\alpha(\omega)
\right\}
\subseteq\;&
\left\{
\underline{\sigma}_1^2(\alpha/4;\omega)>\sigma_1^2(F)
\right\}
\cup
\left\{
\overline{\sigma}_1^2(\alpha/4;\omega)<\sigma_1^2(F)
\right\}
\\
&\cup
\left\{
\underline{\sigma}_0^2(\alpha/4;\omega)>\sigma_0^2(F)
\right\}
\cup
\left\{
\overline{\sigma}_0^2(\alpha/4;\omega)<\sigma_0^2(F)
\right\}.
\end{aligned}
\]
By the union bound, \(\Pr_{P_F}\left( \theta(F)\notin\widehat\Theta_\alpha(\omega) \right) \leq \frac{\alpha}{4}+\frac{\alpha}{4}+\frac{\alpha}{4}+\frac{\alpha}{4} = \alpha\). Therefore, $\Pr_{P_F}\left( \theta(F)\in\widehat\Theta_\alpha(\omega) \right) \geq 1-\alpha$.
\end{proof}
 \hypertarget{proof:multi_arm_cmr}{}
\begin{proof}[Proof of Proposition~\ref{prop:multi_arm_cmr}]
\emph{Part (i).} When \(\widehat\Theta_\alpha(\omega)=[0,1/4]^{K+1}\),
Lemma~\ref{lem:extension_extreme_points} reduces the inner supremum to the vertices of
the box, where each \(\sigma_k^2\) is \(0\) or \(1/4\). For a nonempty
\(A\subseteq\{0,1,\ldots,K\}\), let \(\theta_A\) denote the vertex with
\(\sigma_j^2=1/4\) for \(j\in A\) and \(\sigma_j^2=0\) otherwise. The full-box CMR
problem is to minimize \(\max_{A\ne\emptyset} r(p,\theta_A)\) over the positive simplex.

At \(\theta_A\), with \(\sigma_j=1/2\) for \(j\in A\),
\[
    V(p,\theta_A)=\frac{K\sigma_0^2}{p_0}\,\mathbf 1\{0\in A\}
    +\sum_{k\in A,\,k\ge1}\frac{\sigma_k^2}{p_k},
    \qquad
    \sqrt{V^*(\theta_A)}=\sqrt K\,\sigma_0\,\mathbf 1\{0\in A\}
    +\sum_{k\in A,\,k\ge1}\sigma_k .
\]
Fix a nonempty proper subset \(A\). Under any allocation in the positive
simplex, both \(A\) and \(A^c\) receive positive total allocation.
Cauchy--Schwarz applied to the terms of \(\sqrt{V^*(\theta_A)}\) against
\((\sqrt{p_j})_{j\in A}\) gives \(V^*(\theta_A)\le V(p,\theta_A)\sum_{j\in A}p_j\). Writing \(a=\sum_{j\in A}p_j\in(0,1)\), \(r(p,\theta_A)=V(p,\theta_A)-V^*(\theta_A)\ge V^*(\theta_A)\,\frac{1-a}{a}\). The complement carries mass \(1-a\), so likewise
\(r(p,\theta_{A^c})\ge V^*(\theta_{A^c})\,\frac{a}{1-a}\). The two right-hand sides
multiply to \(V^*(\theta_A)V^*(\theta_{A^c})\), so their maximum is at least
\(\sqrt{V^*(\theta_A)V^*(\theta_{A^c})}\). Maximizing over nonempty proper subsets,
\begin{equation}
\label{eq:multi_arm_lb_proof}
    \max_{B\ne\emptyset} r(p,\theta_B)
    \ge
    \max_{\emptyset\ne A\subsetneq\{0,\ldots,K\}}\sqrt{V^*(\theta_A)V^*(\theta_{A^c})}
    \qquad\text{for every allocation }p.
\end{equation}

The allocation \(p_{\mathrm{CMR}}\) attains \eqref{eq:multi_arm_lb_proof}. Its components
are \(p_{0,\mathrm{CMR}}=\sqrt K/(\sqrt K+K)\) and
\(p_{k,\mathrm{CMR}}=1/(\sqrt K+K)\), so
\[
    V(p_{\mathrm{CMR}},\theta_A)
    =\frac{\sqrt K+K}{4}\Big[\sqrt K\,\mathbf 1\{0\in A\}+|A\cap\{1,\ldots,K\}|\Big]
    =\frac{\sqrt K+K}{2}\,\sqrt{V^*(\theta_A)} .
\]
Since \(\sqrt{V^*(\theta_A)}+\sqrt{V^*(\theta_{A^c})}=\tfrac12(\sqrt K+K)\), counting
\(\tfrac{\sqrt K}{2}\) for the control and \(\tfrac12\) for each treatment, the last
display equals \(\big(\sqrt{V^*(\theta_A)}+\sqrt{V^*(\theta_{A^c})}\big)\sqrt{V^*(\theta_A)}\),
hence $r(p_{\mathrm{CMR}},\theta_A) =V(p_{\mathrm{CMR}},\theta_A)-V^*(\theta_A) =\sqrt{V^*(\theta_A)V^*(\theta_{A^c})}$ for every nonempty proper \(A\).
At the full-set vertex all arm variances are equal and \(p_{\mathrm{CMR}}\) is
the corresponding Neyman allocation, so
\(r(p_{\mathrm{CMR}},\theta_{\{0,\ldots,K\}})=0\).
Taking the maximum over nonempty proper subsets therefore matches \eqref{eq:multi_arm_lb_proof}, so
\(p_{\mathrm{CMR}}\) solves the full-box problem. With
\(\sqrt K/(\sqrt K+K)=1/(1+\sqrt K)\) and \(1/(\sqrt K+K)=1/(\sqrt K(1+\sqrt K))\), this is
the allocation in part (i).

The minimizer is unique. Let \(A\) attain the maximum in \eqref{eq:multi_arm_lb_proof} and
let \(p\) be any minimizer. Equality in \eqref{eq:multi_arm_lb_proof} forces
\(\max\{V^*(\theta_A)\tfrac{1-a}{a},\,V^*(\theta_{A^c})\tfrac{a}{1-a}\}
=\sqrt{V^*(\theta_A)V^*(\theta_{A^c})}\), which holds only at
\(a=\sqrt{V^*(\theta_A)}/(\sqrt{V^*(\theta_A)}+\sqrt{V^*(\theta_{A^c})})\), and it forces
equality in both Cauchy--Schwarz steps. Equality holds only when \(p_j\) is proportional to
the standard-deviation score, \(\sqrt K\) for the control and \(1\) for a treatment, on \(A\)
and on \(A^c\); with the value of \(a\), the constant is \(1/(\sqrt K+K)\) on each block, so
\(p=p_{\mathrm{CMR}}\).

\emph{Part (ii).} We first record that measurable selections from the CMR solution
set exist. For each \(\omega\), the objective
\(p\mapsto\sup_{\theta\in\widehat\Theta_\alpha(\omega)}r(p,\theta)\) is continuous on
the open simplex and diverges as any share vanishes, because every upper variance
endpoint of \(\widehat\Theta_\alpha(\omega)\) is strictly positive, so its
minimization can be restricted to a compact set. The objective is measurable in
\(\omega\) because the rectangle enters through finitely many endpoint variables,
and the measurable maximum theorem
\citep[Theorem~18.19]{aliprantis2006infinite} yields a measurable minimizer.

Write \(\theta^0=\theta(F)\), with all components strictly positive, and
let \(\pi^*(\theta^0)\) be the Neyman allocation \eqref{eq:multi_arm_neyman} at
\(\theta^0\). Since every \(\sigma_k(F)>0\), \(\pi^*(\theta^0)\) has strictly positive
components and is the unique minimizer of \(V(p,\theta^0)\), equivalently of
\(r(p,\theta^0)\), over the positive simplex, with minimum value zero.

Since outcomes are bounded and \(M_{\min}\to\infty\), the arm-level variance intervals
forming \(\widehat\Theta_\alpha(\omega)\) shrink around the true variances, so
\begin{equation}
\label{eq:multi_arm_shrink_proof}
    \sup_{\theta\in\widehat\Theta_\alpha(\omega)}\|\theta-\theta^0\|
    \overset{p}{\longrightarrow}0 .
\end{equation}
We first prove a deterministic statement: along any rectangles \(P\) shrinking to
\(\theta^0\), every CMR minimizer converges to \(\pi^*(\theta^0)\).

Fix \(\varepsilon>0\) and let \(\sigma_{\min}^2=\min_k\sigma_k^2>0\). Choose \(\delta>0\)
so small that \(\|\theta-\theta^0\|\le\delta\) implies every component of \(\theta\) is at
least \(\sigma_{\min}^2/2\). For such \(\theta\) and any allocation \(p\), letting some
\(p_j\downarrow0\) sends \(K\theta_0/p_0\) or \(\theta_j/p_j\) to infinity, uniformly over
\(\|\theta-\theta^0\|\le\delta\), since the relevant variance stays above
\(\sigma_{\min}^2/2\). Because \(\pi^*(\theta^0)\) has strictly positive components and
\(\sup_{\|\theta-\theta^0\|\le\delta}r(\pi^*(\theta^0),\theta)\) is finite, there is
\(\eta\in(0,\min_k\pi^*_k(\theta^0)]\) such that, for every rectangle
\(P\subseteq\{\|\theta-\theta^0\|\le\delta\}\), every minimizer of
\(p\mapsto\sup_{\theta\in P}r(p,\theta)\) has all components at least \(\eta\). Both these
minimizers and \(\pi^*(\theta^0)\) then lie in the compact set
\(\Delta_{K,\eta}=\{p:p_k\ge\eta,\ \sum_k p_k=1\}\).

On \(\Delta_{K,\eta}\times\{\|\theta-\theta^0\|\le\delta\}\) the map \(r\) is uniformly
continuous, so as \(P\) shrinks to \(\theta^0\), \(\sup_{p\in\Delta_{K,\eta}} \Big|\sup_{\theta\in P}r(p,\theta)-r(p,\theta^0)\Big|\longrightarrow0\). Since \(\pi^*(\theta^0)\) is the unique minimizer of \(r(\cdot,\theta^0)\) on
\(\Delta_{K,\eta}\), with value zero, compactness gives $  \gamma =\inf_{\substack{p\in\Delta_{K,\eta}\\ \|p-\pi^*(\theta^0)\|\ge\varepsilon}} r(p,\theta^0)>0$. 
For all sufficiently small \(P\), the uniform convergence yields
\(\sup_{\theta\in P}r(\pi^*(\theta^0),\theta)<\gamma/3\), and
\(\sup_{\theta\in P}r(p,\theta)>2\gamma/3\) for every \(p\in\Delta_{K,\eta}\) with
\(\|p-\pi^*(\theta^0)\|\ge\varepsilon\). No CMR minimizer over such a \(P\) can then lie at
distance \(\ge\varepsilon\) from \(\pi^*(\theta^0)\).

Finally, apply this to the random rectangles. By \eqref{eq:multi_arm_shrink_proof}, for
every \(\delta>0\) the event
\(\{\widehat\Theta_\alpha(\omega)\subseteq\{\|\theta-\theta^0\|\le\delta\}\}\) has
probability tending to one, and on it every CMR selection lies within \(\varepsilon\) of
\(\pi^*(\theta^0)\) for all sufficiently large \(M_{\min}\). Since \(\varepsilon>0\) was
arbitrary, \(p_{\mathrm{CMR}}\overset{p}{\to}\pi^*(\theta^0)=\pi^*(\theta(F))\).
\end{proof}

 \hypertarget{proof:stratified_cmr}{}
\begin{proof}[Proof of Proposition~\ref{prop:stratified_cmr}]
\emph{Part (i).} With \(\widehat\Theta_\alpha(\omega)=[0,1/4]^{2S}\),
Lemma~\ref{lem:extension_extreme_points} places the inner supremum at a vertex of
the box, where each \(\sigma_{dx}^2\in\{0,1/4\}\). Index the cells by
\(\mathcal C=\{0,1\}\times\{1,\ldots,S\}\); for \(A\subseteq\mathcal C\), let
\(\theta_A\) be the vertex with \(\sigma_{dx}^2=1/4\) on \(A\) and \(0\) elsewhere,
and set \(w(A)=\sum_{(d,x)\in A}s_x\in[0,2]\).

At the balanced allocation \(\bar p_{1x}=\bar p_{0x}=s_x/2\), active cells have
standard deviation \(1/2\), so \(V(\bar p,\theta_A)=w(A)/2\) and
\(\sqrt{V^*(\theta_A)}=w(A)/2\), giving \(r(\bar p,\theta_A) =\frac{w(A)}{2}-\frac{w(A)^2}{4} =\frac{w(A)\,(2-w(A))}{4}\le\frac14\), with equality at \(w(A)=1\), for instance the all-treatment vertex
\(A=\{(1,x):x\}\). Hence \(\sup_\theta r(\bar p,\theta)=1/4\).

No allocation does better. Boundary allocations carry infinite worst-case regret
under the boundary-loss convention, so take a positive allocation \(p\) and write
\(a=\sum_x p_{1x}\) and \(1-a=\sum_x p_{0x}\). At the all-treatment vertex
\(A_1=\{(1,x):x\}\), where \(V^*(\theta_{A_1})=1/4\), Cauchy--Schwarz gives
\(1=(\sum_x s_x)^2\le a\sum_x s_x^2/p_{1x}\), so
\(V(p,\theta_{A_1})=\tfrac14\sum_x s_x^2/p_{1x}\ge 1/(4a)\) and
\(r(p,\theta_{A_1})\ge(1-a)/(4a)\). The same step at the all-control vertex
\(A_0\) gives \(r(p,\theta_{A_0})\ge a/(4(1-a))\), so \(\sup_\theta r(p,\theta) \ge\max\Bigl\{\tfrac{1-a}{4a},\,\tfrac{a}{4(1-a)}\Bigr\}\ge\frac14\). Equality forces \(a=1/2\) and, by the Cauchy--Schwarz equality condition,
\(p_{1x}\propto s_x\), hence \(p_{1x}=s_x/2\); the control vertex gives
\(p_{0x}=s_x/2\). The balanced allocation is therefore the unique CMR solution,
with certificate \(U_{\mathrm{CMR}}(\omega)=1/4\).

\emph{Part (ii).} Measurable selections from the CMR solution set exist by the same
argument as in the proof of Proposition~\ref{prop:multi_arm_cmr}.
Write \(\theta^0=\theta(F)\) and \(\pi^0=\pi^*(\theta^0)\) for the
stratified Neyman allocation \eqref{eq:stratified_neyman}, with strictly positive
components since \(s_x,\sigma_{dx}(F)>0\). By the same Cauchy--Schwarz argument
used to derive the stratified Neyman value, \(\pi^0\) is the unique minimizer of
\(r(\cdot,\theta^0)\) over the positive simplex, attained only at cell shares
proportional to \(s_x\sigma_{dx}\). Since outcomes are bounded and
\(M_{\min}\to\infty\), the cell intervals shrink,
\(\sup_{\theta\in\widehat\Theta_\alpha(\omega)}\|\theta-\theta^0\|\overset{p}{\to}0\).

The remainder is the compactness and uniform-continuity argument of
Proposition~\ref{prop:multi_arm_cmr}(ii), relabeled to cells \((d,x)\). Boundary
coercivity comes from the terms \(s_x^2\sigma_{dx}^2/p_{dx}\), which diverge
uniformly near \(\theta^0\) as any \(p_{dx}\downarrow0\), so for rectangles close
to \(\theta^0\) every CMR minimizer lies in a fixed compact subset of the positive
simplex. There \(r\) is uniformly continuous and \(r(\cdot,\theta^0)\) has unique
minimizer \(\pi^0\), so every CMR selection over rectangles shrinking to
\(\theta^0\) converges to \(\pi^0\). Applied to the random rectangles
\(\widehat\Theta_\alpha(\omega)\),
\(p_{\mathrm{CMR}}(\omega)\overset{p}{\to}\pi^0=\pi^*(\theta(F))\).
\end{proof}

 \clearpage

\counterwithin{table}{section}
\counterwithin{figure}{section}

\begin{sidewaystable}[!htbp]
\section{Additional Simulation Results \label{sec:sim_results}}
\centering
\caption{Empirical DGP calibration for the two-arm simulations}
\label{tab:section7-dgp-calibration}
\begingroup
\scriptsize
\setlength{\tabcolsep}{4pt}
\renewcommand{\arraystretch}{1.12}
\resizebox{\textwidth}{!}{%
\begin{tabular}{@{}p{3.1cm}p{3.2cm}p{3.1cm}p{1.8cm}rrrrr@{}}
\toprule
Paper & Treatment & Outcome & Type & $\mathbb{E}[Y(1)]$ & $\mathbb{E}[Y(0)]$ & $\operatorname{Var}(Y(1))$ & $\operatorname{Var}(Y(0))$ & $\pi^*$
\\
\midrule
\multirow{3}{3.1cm}{Miguel \& Kremer (2004)} & \multirow{2}{3.2cm}{Deworming treatment} & School attendance & Continuous & 0.766 & 0.730 & 0.1019 & 0.0965 & 0.507 \\
 &  & Academic test score & Continuous & 0.542 & 0.552 & 0.0075 & 0.0083 & 0.486 \\
 & First-phase deworming & Any moderate-heavy infection & Binary & 0.257 & 0.524 & 0.1909 & 0.2494 & 0.467 \\ \addlinespace[0.45em]
\multirow{3}{3.1cm}{Thornton (2008)} & \multirow{3}{3.2cm}{Positive incentive} & Learned HIV result & Binary & 0.791 & 0.341 & 0.1654 & 0.2249 & 0.462 \\
 &  & Condom purchase & Binary & 0.274 & 0.202 & 0.1990 & 0.1611 & 0.526 \\
 &  & Number of condoms & Count & 0.055 & 0.043 & 0.0114 & 0.0136 & 0.479 \\ \addlinespace[0.45em]
Bertrand \& Mullainathan (2004) & White-sounding name & Employer callback & Binary & 0.097 & 0.064 & 0.0872 & 0.0603 & 0.546 \\
\bottomrule
\end{tabular}%
}
\vspace{0.35em}
\begin{minipage}{\textwidth}
\footnotesize
\emph{Notes:} The table reports the calibrated treatment-control DGPs used in the two-arm simulation exercises. 
$Y(1)$ denotes the treatment condition listed in the Treatment column and $Y(0)$ denotes the corresponding comparison condition. 
For binary outcomes, means are event probabilities and variances are Bernoulli variances implied by the calibrated arm probabilities. 
For non-binary outcomes, cleaned outcomes are normalized to $[0,1]$ and moments are computed from the empirical calibration distribution, using sample weights when present. 
$\pi^*=\sigma_1/(\sigma_1+\sigma_0)$ is the infeasible Neyman treatment share. 
\emph{Treatment definitions:} 
Deworming treatment compares observations in treated school-years to not-yet-treated or comparison school-years. First-phase deworming compares first-phase deworming schools to second-phase schools. Positive incentive compares any positive incentive to collect HIV test results to zero incentive. White-sounding name compares resumes assigned white-sounding names to otherwise similar resumes assigned Black-sounding names.
 
\emph{Outcome definitions:} 
School attendance is school participation/attendance. Academic test score is the cleaned test-score outcome. Any moderate-heavy infection is an indicator for any moderate-heavy worm infection. Learned HIV result is an indicator for collecting or learning HIV test results. Condom purchase is an indicator for purchasing any condoms at follow-up. Number of condoms is the number of condoms purchased at follow-up. Employer callback is an indicator for receiving an employer callback.
\end{minipage}
\endgroup
\end{sidewaystable}

\begin{table}[!htbp]
\centering
\caption{CMR uncertainty-set diagnostics in the calibrated two-arm simulations}
\label{tab:section7-cmr-diagnostics}
\begingroup
\scriptsize
\setlength{\tabcolsep}{5pt}
\renewcommand{\arraystretch}{0.96}
\resizebox{0.92\textwidth}{!}{%
\begin{tabular}{@{}p{2.8cm}rrrrr@{}}
\toprule
Study/outcome & $M$ & Coverage & \shortstack{Median\\area} & \shortstack{Median\\$U_{\mathrm{CMR}}$} & \shortstack{Excludes\\corner} \\
\midrule
\multirow{4}{2.7cm}{MK: attendance} & 30 & 100.0\% & 1.000 & 0.250 & 0.0\% \\
 & 100 & 100.0\% & 1.000 & 0.250 & 0.0\% \\
 & 250 & 100.0\% & 0.979 & 0.205 & 100.0\% \\
 & 500 & 100.0\% & 0.840 & 0.136 & 100.0\% \\ \addlinespace[0.35em]
\multirow{4}{2.7cm}{MK: test score} & 30 & 100.0\% & 1.000 & 0.250 & 0.0\% \\
 & 100 & 100.0\% & 1.000 & 0.250 & 21.0\% \\
 & 250 & 100.0\% & 0.254 & 0.126 & 100.0\% \\
 & 500 & 100.0\% & 0.093 & 0.076 & 100.0\% \\ \addlinespace[0.35em]
\multirow{4}{2.7cm}{MK: mod.-heavy infection} & 30 & 100.0\% & 1.000 & 0.250 & 0.0\% \\
 & 100 & 100.0\% & 0.976 & 0.212 & 100.0\% \\
 & 250 & 100.0\% & 0.690 & 0.089 & 100.0\% \\
 & 500 & 100.0\% & 0.459 & 0.049 & 100.0\% \\ \addlinespace[0.35em]
\multirow{4}{2.7cm}{Thornton: HIV result} & 30 & 100.0\% & 1.000 & 0.250 & 0.0\% \\
 & 100 & 100.0\% & 0.987 & 0.222 & 97.4\% \\
 & 250 & 100.0\% & 0.756 & 0.106 & 100.0\% \\
 & 500 & 100.0\% & 0.540 & 0.062 & 100.0\% \\ \addlinespace[0.35em]
\multirow{4}{2.7cm}{Thornton: condom purchase} & 30 & 100.0\% & 1.000 & 0.250 & 0.0\% \\
 & 100 & 100.0\% & 0.996 & 0.231 & 81.0\% \\
 & 250 & 100.0\% & 0.805 & 0.118 & 100.0\% \\
 & 500 & 100.0\% & 0.597 & 0.070 & 100.0\% \\ \addlinespace[0.35em]
\multirow{4}{2.7cm}{Thornton: condom count} & 30 & 100.0\% & 1.000 & 0.250 & 0.0\% \\
 & 100 & 100.0\% & 1.000 & 0.250 & 36.0\% \\
 & 250 & 100.0\% & 0.313 & 0.140 & 100.0\% \\
 & 500 & 100.0\% & 0.125 & 0.088 & 100.0\% \\ \addlinespace[0.35em]
\multirow{4}{2.7cm}{BM: callback} & 30 & 100.0\% & 1.000 & 0.250 & 0.0\% \\
 & 100 & 100.0\% & 1.000 & 0.250 & 3.2\% \\
 & 250 & 100.0\% & 0.985 & 0.228 & 95.2\% \\
 & 500 & 100.0\% & 0.642 & 0.141 & 100.0\% \\
\bottomrule
\end{tabular}%
}
\vspace{0.2em}
\begin{minipage}{0.94\textwidth}
\scriptsize
\emph{Notes:} The table reports CMR-specific diagnostics for the calibrated two-arm DGPs. 
$M$ is the total pilot sample size, split equally across treatment and control. 
Each entry is computed using 500 pilot replications for each DGP and pilot size. 
Coverage is joint coverage of the true variance vector by the CMR confidence rectangle. 
Median area is the confidence-rectangle area normalized by $(1/4)^2$. 
$U_{\mathrm{CMR}}$ is the unnormalized CMR regret certificate. 
Excludes corner is the share of pilot draws in which the confidence rectangle excludes at least one corner of $[0,1/4]^2$. 
In all reported pilot draws, the CMR certificate is at least as large as the realized CMR regret.
\end{minipage}
\endgroup
\end{table}

\begin{sidewaystable}[!htbp]
\centering
\caption{Efficiency-loss distribution in the calibrated two-arm simulations}
\label{tab:section7-regret-distribution}
\begingroup
\scriptsize
\setlength{\tabcolsep}{1.55pt}
\renewcommand{\arraystretch}{0.86}
\resizebox{0.78\textwidth}{!}{%
\begin{tabular}{@{}p{2.45cm}rcccccccccc@{}}
\toprule
 & & \multicolumn{3}{c}{Balance} & \multicolumn{3}{c}{CMR} & \multicolumn{4}{c}{FNA} \\
\cmidrule(lr){3-5}\cmidrule(lr){6-8}\cmidrule(l){9-12}
Study/outcome & $M$ & med. & sd & max & med. & sd & max & med. & sd & max & bound. \\
\midrule
\multirow{4}{2.45cm}{MK: attendance} & 30 & 0.02 & 0.00 & 0.02 & 0.02 & 0.00 & 0.02 & 0.91 & 6.24 & 57.56 & 0.0\% \\
 & 100 & 0.02 & 0.00 & 0.02 & 0.02 & 0.00 & 0.02 & 0.27 & 1.22 & 12.44 & 0.0\% \\
 & 250 & 0.02 & 0.00 & 0.02 & 0.04 & 0.11 & 0.67 & 0.13 & 0.39 & 2.91 & 0.0\% \\
 & 500 & 0.02 & 0.00 & 0.02 & 0.03 & 0.12 & 1.05 & 0.06 & 0.17 & 1.39 & 0.0\% \\ \addlinespace[0.35em]
\multirow{4}{2.45cm}{MK: test score} & 30 & 0.07 & 0.00 & 0.07 & 0.07 & 0.00 & 0.07 & 0.93 & 3.47 & 28.42 & 0.0\% \\
 & 100 & 0.07 & 0.00 & 0.07 & 0.07 & 0.01 & 0.16 & 0.22 & 0.83 & 6.61 & 0.0\% \\
 & 250 & 0.07 & 0.00 & 0.07 & 0.04 & 0.05 & 0.27 & 0.13 & 0.35 & 2.29 & 0.0\% \\
 & 500 & 0.07 & 0.00 & 0.07 & 0.03 & 0.05 & 0.25 & 0.06 & 0.17 & 1.20 & 0.0\% \\ \addlinespace[0.35em]
\multirow{4}{2.45cm}{MK: mod.-heavy infection} & 30 & 0.45 & 0.00 & 0.45 & 0.45 & 0.00 & 0.45 & 0.30 & $\infty$ & $\infty$ & 2.0\% \\
 & 100 & 0.45 & 0.00 & 0.45 & 0.02 & 0.11 & 0.66 & 0.08 & 0.41 & 5.54 & 0.0\% \\
 & 250 & 0.45 & 0.00 & 0.45 & 0.05 & 0.07 & 0.38 & 0.02 & 0.09 & 0.58 & 0.0\% \\
 & 500 & 0.45 & 0.00 & 0.45 & 0.07 & 0.05 & 0.27 & 0.01 & 0.04 & 0.53 & 0.0\% \\ \addlinespace[0.35em]
\multirow{4}{2.45cm}{Thornton: HIV result} & 30 & 0.59 & 0.00 & 0.59 & 0.59 & 0.00 & 0.59 & 0.59 & $\infty$ & $\infty$ & 3.0\% \\
 & 100 & 0.59 & 0.00 & 0.59 & 0.09 & 0.22 & 1.32 & 0.15 & 0.67 & 7.72 & 0.0\% \\
 & 250 & 0.59 & 0.00 & 0.59 & 0.09 & 0.14 & 0.90 & 0.06 & 0.24 & 2.01 & 0.0\% \\
 & 500 & 0.59 & 0.00 & 0.59 & 0.10 & 0.09 & 0.59 & 0.02 & 0.09 & 0.71 & 0.0\% \\ \addlinespace[0.35em]
\multirow{4}{2.45cm}{Thornton: condom purchase} & 30 & 0.28 & 0.00 & 0.28 & 0.28 & 0.00 & 0.28 & 0.62 & $\infty$ & $\infty$ & 3.2\% \\
 & 100 & 0.28 & 0.00 & 0.28 & 0.13 & 0.20 & 1.44 & 0.19 & 0.91 & 7.78 & 0.0\% \\
 & 250 & 0.28 & 0.00 & 0.28 & 0.05 & 0.13 & 1.16 & 0.08 & 0.27 & 2.21 & 0.0\% \\
 & 500 & 0.28 & 0.00 & 0.28 & 0.05 & 0.08 & 0.50 & 0.03 & 0.12 & 1.09 & 0.0\% \\ \addlinespace[0.35em]
\multirow{4}{2.45cm}{Thornton: condom count} & 30 & 0.18 & 0.00 & 0.18 & 0.18 & 0.00 & 0.18 & 6.37 & $\infty$ & $\infty$ & 5.8\% \\
 & 100 & 0.18 & 0.00 & 0.18 & 0.18 & 0.10 & 0.79 & 3.18 & 8.10 & 55.60 & 0.0\% \\
 & 250 & 0.18 & 0.00 & 0.18 & 0.14 & 0.33 & 1.82 & 1.20 & 2.66 & 16.75 & 0.0\% \\
 & 500 & 0.18 & 0.00 & 0.18 & 0.09 & 0.31 & 2.17 & 0.50 & 1.56 & 11.72 & 0.0\% \\ \addlinespace[0.35em]
\multirow{4}{2.45cm}{BM: callback} & 30 & 0.84 & 0.00 & 0.84 & 0.84 & 0.00 & 0.84 & 3.79 & $\infty$ & $\infty$ & 39.4\% \\
 & 100 & 0.84 & 0.00 & 0.84 & 0.84 & 0.22 & 3.10 & 1.45 & $\infty$ & $\infty$ & 3.2\% \\
 & 250 & 0.84 & 0.00 & 0.84 & 0.35 & 0.47 & 4.61 & 0.46 & 1.86 & 21.83 & 0.0\% \\
 & 500 & 0.84 & 0.00 & 0.84 & 0.17 & 0.64 & 5.95 & 0.19 & 0.72 & 5.95 & 0.0\% \\
\bottomrule
\end{tabular}%
}
\vspace{0.2em}
\begin{minipage}{0.88\textwidth}
\scriptsize
\emph{Notes:} The table reports the empirical median, standard deviation (sd), and maximum of efficiency loss under Balance, CMR, and FNA. 
Efficiency loss is measured as $100\{V(\hat p_a,F)-V(\pi^*(F),F)\}/V(\pi^*(F),F)$, where $\pi^*(F)$ is the infeasible Neyman allocation. 
An entry of 1.00 means that the rule raises the asymptotic variance by 1 percent relative to the infeasible Neyman allocation. 
The DGPs are the calibrated two-arm DGPs reported in Table~\ref{tab:section7-dgp-calibration}. 
$M$ is the total pilot sample size, split equally across treatment and control. 
Each entry is computed using 500 pilot replications for each DGP and pilot size. 
CMR denotes the Maurer--Pontil bounded-outcome CMR rule, FNA denotes feasible Neyman allocation, and Balance assigns one half of the main-wave sample to treatment. 
For FNA, bound. is the share of pilot draws in which FNA assigns zero probability to a positive-variance arm. 
When this event occurs, the corresponding efficiency loss is infinite; the standard deviation and maximum are reported as $\infty$ whenever the boundary share is positive.
\end{minipage}
\endgroup
\end{sidewaystable}

\begin{table}[!htbp]
\centering
\caption{Applied implications in the calibrated two-arm simulations}
\label{tab:section7-applied-implications}
\begingroup
\scriptsize
\setlength{\tabcolsep}{3.8pt}
\renewcommand{\arraystretch}{1.08}
\resizebox{\textwidth}{!}{%
\begin{tabular}{@{}p{2.7cm}rrrrrrrrrr@{}}
\toprule
 & & \multicolumn{3}{c}{Extra subjects} & \multicolumn{3}{c}{MDE inflation} & \multicolumn{3}{c}{Power} \\
\cmidrule(lr){3-5}\cmidrule(lr){6-8}\cmidrule(l){9-11}
Study/outcome & $M$ & Balance & CMR & FNA & Balance & CMR & FNA & Balance & CMR & FNA \\
\midrule
\multirow{4}{2.7cm}{MK: attendance} & 30 & 0.2 & 0.2 & 9.1 & 0.01\% & 0.01\% & 0.45\% & 80.0\% & 80.0\% & 78.7\% \\
 & 100 & 0.2 & 0.2 & 2.7 & 0.01\% & 0.01\% & 0.13\% & 80.0\% & 80.0\% & 79.7\% \\
 & 250 & 0.2 & 0.4 & 1.3 & 0.01\% & 0.02\% & 0.06\% & 80.0\% & 80.0\% & 79.9\% \\
 & 500 & 0.2 & 0.3 & 0.6 & 0.01\% & 0.01\% & 0.03\% & 80.0\% & 80.0\% & 80.0\% \\ \addlinespace[0.35em]
\multirow{4}{2.7cm}{MK: test score} & 30 & 0.7 & 0.7 & 9.3 & 0.04\% & 0.04\% & 0.46\% & 80.0\% & 80.0\% & 79.2\% \\
 & 100 & 0.7 & 0.7 & 2.2 & 0.04\% & 0.04\% & 0.11\% & 80.0\% & 80.0\% & 79.8\% \\
 & 250 & 0.7 & 0.4 & 1.3 & 0.04\% & 0.02\% & 0.06\% & 80.0\% & 80.0\% & 79.9\% \\
 & 500 & 0.7 & 0.3 & 0.6 & 0.04\% & 0.02\% & 0.03\% & 80.0\% & 80.0\% & 80.0\% \\ \addlinespace[0.35em]
\multirow{4}{2.7cm}{MK: mod.-heavy infection} & 30 & 4.5 & 4.5 & 3.0 & 0.22\% & 0.22\% & 0.15\% & 79.8\% & 79.8\% & 78.2\% \\
 & 100 & 4.5 & 0.2 & 0.8 & 0.22\% & 0.01\% & 0.04\% & 79.8\% & 80.0\% & 79.9\% \\
 & 250 & 4.5 & 0.5 & 0.2 & 0.22\% & 0.03\% & 0.01\% & 79.8\% & 80.0\% & 80.0\% \\
 & 500 & 4.5 & 0.7 & 0.1 & 0.22\% & 0.04\% & 0.01\% & 79.8\% & 80.0\% & 80.0\% \\ \addlinespace[0.35em]
\multirow{4}{2.7cm}{Thornton: HIV result} & 30 & 5.9 & 5.9 & 5.9 & 0.29\% & 0.29\% & 0.29\% & 79.8\% & 79.8\% & 77.2\% \\
 & 100 & 5.9 & 0.9 & 1.5 & 0.29\% & 0.05\% & 0.07\% & 79.8\% & 79.9\% & 79.9\% \\
 & 250 & 5.9 & 0.9 & 0.6 & 0.29\% & 0.05\% & 0.03\% & 79.8\% & 79.9\% & 79.9\% \\
 & 500 & 5.9 & 1.0 & 0.2 & 0.29\% & 0.05\% & 0.01\% & 79.8\% & 80.0\% & 80.0\% \\ \addlinespace[0.35em]
\multirow{4}{2.7cm}{Thornton: condom purchase} & 30 & 2.8 & 2.8 & 6.2 & 0.14\% & 0.14\% & 0.31\% & 79.9\% & 79.9\% & 77.0\% \\
 & 100 & 2.8 & 1.3 & 1.9 & 0.14\% & 0.07\% & 0.10\% & 79.9\% & 79.9\% & 79.8\% \\
 & 250 & 2.8 & 0.5 & 0.8 & 0.14\% & 0.03\% & 0.04\% & 79.9\% & 80.0\% & 79.9\% \\
 & 500 & 2.8 & 0.5 & 0.3 & 0.14\% & 0.02\% & 0.02\% & 79.9\% & 80.0\% & 80.0\% \\ \addlinespace[0.35em]
\multirow{4}{2.7cm}{Thornton: condom count} & 30 & 1.8 & 1.8 & 63.7 & 0.09\% & 0.09\% & 3.14\% & 79.9\% & 79.9\% & 71.3\% \\
 & 100 & 1.8 & 1.8 & 31.8 & 0.09\% & 0.09\% & 1.58\% & 79.9\% & 79.9\% & 77.8\% \\
 & 250 & 1.8 & 1.4 & 12.0 & 0.09\% & 0.07\% & 0.60\% & 79.9\% & 79.9\% & 79.2\% \\
 & 500 & 1.8 & 0.9 & 5.0 & 0.09\% & 0.05\% & 0.25\% & 79.9\% & 79.9\% & 79.6\% \\ \addlinespace[0.35em]
\multirow{4}{2.7cm}{BM: callback} & 30 & 8.4 & 8.4 & 37.9 & 0.42\% & 0.42\% & 1.87\% & 79.7\% & 79.7\% & 50.0\% \\
 & 100 & 8.4 & 8.4 & 14.5 & 0.42\% & 0.42\% & 0.72\% & 79.7\% & 79.7\% & 76.5\% \\
 & 250 & 8.4 & 3.5 & 4.6 & 0.42\% & 0.18\% & 0.23\% & 79.7\% & 79.8\% & 79.6\% \\
 & 500 & 8.4 & 1.7 & 1.9 & 0.42\% & 0.09\% & 0.10\% & 79.7\% & 79.8\% & 79.8\% \\
\bottomrule
\end{tabular}%
}
\vspace{0.35em}
\begin{minipage}{\textwidth}
\footnotesize
\emph{Notes:} The table converts allocation performance into design quantities measured relative to the infeasible Neyman allocation for the calibrated two-arm DGPs. 
$M$ is the total pilot sample size, split equally across treatment and control. 
Each entry is computed using 500 pilot replications for each DGP and pilot size. 
Extra subjects is the median of $1000\{V(\hat p_a,F)/V(\pi^*,F)-1\}$ across pilot draws; it is the number of additional main-wave observations required by rule $a$ to match the precision of 1,000 observations assigned by the infeasible Neyman allocation. 
MDE inflation is the median percentage increase in the minimum detectable effect relative to the infeasible Neyman allocation, $100\{[V(\hat p_a,F)/V(\pi^*,F)]^{1/2}-1\}$. 
Power is the average power against the effect size that the infeasible Neyman allocation detects with 80 percent power in a two-sided 5 percent Wald test, using the true calibrated DGP. 
Boundary draws enter power at the no-information limit and therefore contribute size-level power. 
The infeasible Neyman allocation has zero extra subjects, zero MDE inflation, and 80 percent power by construction.
\end{minipage}
\endgroup
\end{table}

\begin{table}[!htbp]
\centering
\caption{Applied implications in the calibrated extension simulations}
\label{tab:section7-extension-applied-implications}
\begingroup
\scriptsize
\setlength{\tabcolsep}{4.0pt}
\renewcommand{\arraystretch}{0.98}
\begin{tabular}{@{}llrrr@{}}
\toprule
\multicolumn{5}{@{}l}{\textit{Panel A: Thornton (2008): learned HIV result, multi-arm shared-control}} \\
\addlinespace[0.10em]
$M$ & Rule & \shortstack{Extra\\subjects} & \shortstack{MDE\\inflation} & \shortstack{Joint-test\\power} \\
\midrule
30 & Balance & 17.6 & 0.88\% & 79.1\% \\
 & CMR & 17.6 & 0.88\% & 79.1\% \\
 & CMR (Bernoulli) & 16.1 & 0.80\% & 79.2\% \\
 & FNA & $\infty$ & $\infty$ & 20.7\% \\ \addlinespace[0.20em]
100 & Balance & 17.6 & 0.88\% & 79.1\% \\
 & CMR & 17.6 & 0.88\% & 79.1\% \\
 & CMR (Bernoulli) & 10.0 & 0.50\% & 80.0\% \\
 & FNA & 16.8 & 0.83\% & 72.7\% \\ \addlinespace[0.20em]
250 & Balance & 17.6 & 0.88\% & 79.1\% \\
 & CMR & 14.8 & 0.74\% & 79.3\% \\
 & CMR (Bernoulli) & 5.5 & 0.27\% & 79.9\% \\
 & FNA & 6.4 & 0.32\% & 79.7\% \\ \addlinespace[0.20em]
500 & Balance & 17.6 & 0.88\% & 79.1\% \\
 & CMR & 3.1 & 0.15\% & 79.9\% \\
 & CMR (Bernoulli) & 2.5 & 0.13\% & 79.9\% \\
 & FNA & 2.8 & 0.14\% & 79.9\% \\
\midrule
\multicolumn{5}{@{}l}{\textit{Panel B: Abel et al. (2020): job applications, gender-stratified}} \\
\addlinespace[0.10em]
$M$ & Rule & \shortstack{Extra\\subjects} & \shortstack{MDE\\inflation} & \shortstack{Power} \\
\midrule
30 & Balance & 55.9 & 2.76\% & 77.8\% \\
 & CMR & 55.9 & 2.76\% & 77.8\% \\
 & FNA & 590.1 & 26.10\% & 59.6\% \\ \addlinespace[0.20em]
100 & Balance & 55.9 & 2.76\% & 77.8\% \\
 & CMR & 55.9 & 2.76\% & 77.8\% \\
 & FNA & 203.9 & 9.72\% & 70.9\% \\ \addlinespace[0.20em]
250 & Balance & 55.9 & 2.76\% & 77.8\% \\
 & CMR & 54.7 & 2.70\% & 77.9\% \\
 & FNA & 60.3 & 2.97\% & 76.5\% \\ \addlinespace[0.20em]
500 & Balance & 55.9 & 2.76\% & 77.8\% \\
 & CMR & 27.2 & 1.35\% & 78.9\% \\
 & FNA & 26.1 & 1.30\% & 78.5\% \\
\bottomrule
\end{tabular}
\vspace{0.25em}
\begin{minipage}{0.90\textwidth}
\footnotesize
\emph{Notes:} Extra subjects and MDE inflation are measured relative to the infeasible Neyman allocation, as in Table~\ref{tab:section7-applied-implications}. 
In Panel A, these two columns use the shared-control aggregate precision objective. 
Panel A reports power for the joint Wald test of all treatment-control effects equal to zero. 
The alternative direction is the calibrated empirical treatment-effect vector, scaled so the infeasible multi-arm Neyman allocation has 80 percent joint-test power.
In Panel A, joint-test power is an illustrative translation along this calibrated alternative.
The multi-arm rules optimize the sum of marginal variances, not this power criterion.
Panel B reports power for the usual two-sided Wald test for the stratified ATE, again scaled to 80 percent power under the infeasible Neyman allocation. 
Entries use 500 pilot replications per extension design and pilot size. 
$\infty$ indicates an infinite median extra-subject or MDE-inflation value, which occurs when the rule assigns zero probability to a positive-variance arm or treatment-by-stratum cell in at least half of the pilot draws. 
For boundary draws, power is set to the nominal size, a conservative convention.
\end{minipage}
\endgroup
\end{table}

\begin{sidewaystable}[!htbp]
\centering
\caption{Mean efficiency loss for additional two-arm allocation rules}
\label{tab:section7-appendix-rule-comparison}
\begingroup
\footnotesize
\setlength{\tabcolsep}{3.6pt}
\renewcommand{\arraystretch}{0.92}
\begin{tabular}{@{}lrrrrrrr@{}}
\toprule
Rule & MK attend. & MK test & MK infection & Thor. HIV & Thor. condom & Thor. cond. count & BM callback \\
\midrule
\multicolumn{8}{@{}l}{\textit{Panel A: $M=30$}} \\
\addlinespace[0.10em]
Balance & 0.02 & 0.07 & 0.45 & 0.59 & 0.28 & 0.18 & 0.84 \\
CMR (MP) & 0.02 & 0.07 & 0.45 & 0.59 & 0.28 & 0.18 & 0.84 \\
FNA & 3.33 & 2.19 & $\infty$ & $\infty$ & $\infty$ & $\infty$ & $\infty$ \\
\addlinespace[0.25em]
Trimmed FNA & 3.33 & 2.19 & 3.87 & 5.76 & 7.11 & 23.25 & 65.44 \\
CMR (MTR) & 0.02 & 0.07 & 0.45 & 0.59 & 0.28 & 0.18 & 0.84 \\
CMR (Bernoulli) & -- & -- & 0.83 & 1.12 & 1.22 & -- & 1.52 \\
\addlinespace[0.25em]
Cai-Rafi test & 2.10 & 1.18 & 0.97 & 1.38 & 1.07 & 5.62 & 0.96 \\
Cai-Rafi add. & 1.39 & 0.92 & 1.70 & 2.52 & 3.19 & 9.69 & 29.21 \\
Cai-Rafi exp. & 2.68 & 1.76 & 0.72 & 1.14 & 1.29 & 10.28 & 1.40 \\
\midrule
\multicolumn{8}{@{}l}{\textit{Panel B: $M=100$}} \\
\addlinespace[0.10em]
Balance & 0.02 & 0.07 & 0.45 & 0.59 & 0.28 & 0.18 & 0.84 \\
CMR (MP) & 0.02 & 0.07 & 0.07 & 0.18 & 0.18 & 0.22 & 0.84 \\
FNA & 0.71 & 0.58 & 0.20 & 0.36 & 0.49 & 5.93 & $\infty$ \\
\addlinespace[0.25em]
Trimmed FNA & 0.71 & 0.58 & 0.20 & 0.36 & 0.49 & 5.93 & 7.89 \\
CMR (MTR) & 0.02 & 0.07 & 0.44 & 0.58 & 0.28 & 0.18 & 0.74 \\
CMR (Bernoulli) & -- & -- & 0.20 & 0.31 & 0.42 & -- & 2.10 \\
\addlinespace[0.25em]
Cai-Rafi test & 0.38 & 0.35 & 0.33 & 0.53 & 0.51 & 2.44 & 2.15 \\
Cai-Rafi add. & 0.31 & 0.25 & 0.13 & 0.20 & 0.24 & 2.55 & 3.45 \\
Cai-Rafi exp. & 0.58 & 0.46 & 0.16 & 0.29 & 0.40 & 4.80 & 2.39 \\
\midrule
\multicolumn{8}{@{}l}{\textit{Panel C: $M=250$}} \\
\addlinespace[0.10em]
Balance & 0.02 & 0.07 & 0.45 & 0.59 & 0.28 & 0.18 & 0.84 \\
CMR (MP) & 0.08 & 0.05 & 0.08 & 0.13 & 0.10 & 0.27 & 0.47 \\
FNA & 0.27 & 0.25 & 0.06 & 0.14 & 0.18 & 2.14 & 1.14 \\
\addlinespace[0.25em]
Trimmed FNA & 0.27 & 0.25 & 0.06 & 0.14 & 0.18 & 2.14 & 1.14 \\
CMR (MTR) & 0.40 & 0.06 & 0.08 & 0.32 & 0.44 & 0.16 & 0.48 \\
CMR (Bernoulli) & -- & -- & 0.06 & 0.14 & 0.17 & -- & 0.95 \\
\addlinespace[0.25em]
Cai-Rafi test & 0.13 & 0.18 & 0.07 & 0.25 & 0.27 & 1.03 & 1.31 \\
Cai-Rafi add. & 0.12 & 0.11 & 0.08 & 0.13 & 0.11 & 0.97 & 0.57 \\
Cai-Rafi exp. & 0.22 & 0.20 & 0.05 & 0.12 & 0.14 & 1.74 & 0.92 \\
\midrule
\multicolumn{8}{@{}l}{\textit{Panel D: $M=500$}} \\
\addlinespace[0.10em]
Balance & 0.02 & 0.07 & 0.45 & 0.59 & 0.28 & 0.18 & 0.84 \\
CMR (MP) & 0.07 & 0.05 & 0.08 & 0.12 & 0.07 & 0.22 & 0.44 \\
FNA & 0.12 & 0.12 & 0.03 & 0.06 & 0.08 & 1.09 & 0.48 \\
\addlinespace[0.25em]
Trimmed FNA & 0.12 & 0.12 & 0.03 & 0.06 & 0.08 & 1.09 & 0.48 \\
CMR (MTR) & 0.19 & 0.05 & 0.05 & 0.08 & 0.09 & 0.16 & 1.42 \\
CMR (Bernoulli) & -- & -- & 0.03 & 0.06 & 0.08 & -- & 0.44 \\
\addlinespace[0.25em]
Cai-Rafi test & 0.07 & 0.12 & 0.03 & 0.07 & 0.14 & 0.63 & 0.78 \\
Cai-Rafi add. & 0.06 & 0.06 & 0.06 & 0.09 & 0.06 & 0.50 & 0.29 \\
Cai-Rafi exp. & 0.10 & 0.10 & 0.03 & 0.05 & 0.07 & 0.89 & 0.39 \\
\bottomrule
\end{tabular}
\vspace{0.25em}
\begin{minipage}{\textwidth}
\scriptsize
\emph{Notes:} Entries are percent mean efficiency losses relative to the infeasible Neyman allocation, 
$100\,\mathbb E_\omega[V(\hat p_a(\omega),F)-V(\pi^*(F),F)]/V(\pi^*(F),F)$. 
The study/outcome columns are the calibrated two-arm DGPs reported in Table~\ref{tab:section7-dgp-calibration}. 
Column abbreviations are MK for Miguel--Kremer, Thor. for Thornton, and BM for Bertrand--Mullainathan. 
Each entry uses 500 pilot replications per DGP and pilot size. 
CMR (MP) is the Maurer--Pontil CMR rule; CMR (MTR) is the Martinez--Taboada--Ramdas CMR rule. 
CMR (Bernoulli) is reported only for binary outcomes. 
Trimmed FNA uses a 10 percent lower and upper bound on treatment shares. 
The Cai--Rafi rules use one fixed tuning value from the grids studied in Cai and Rafi (2024): 
$\alpha=0.10$ for the homoskedasticity test, $\nu=0.20$ for additive regularization, and $\tau=0.90$ for exponential regularization. 
The tuning values are fixed across DGPs and pilot sizes. 
-- denotes that the rule is not applicable or was not computed for that DGP. Because every calibrated outcome distribution is discrete, any rule that can assign zero probability to an arm does so with positive probability at every pilot size, making its unconditional mean loss infinite.
$\infty$ marks cells in which such a boundary draw occurred among the 500 replications.
\end{minipage}
\endgroup
\end{sidewaystable}

\clearpage
\section{Additional Extensions}
\label{sec:appendix_extensions}

\subsection{Binary Outcomes}
\label{sub:binary_exact_rectangle}

Binary outcomes change only how the confidence rectangle is constructed, not
the CMR assignment problem. In the binary-outcome model, each potential
outcome satisfies \(Y(d)\sim\mathrm{Bernoulli}(q_d)\) for \(d\in\{0,1\}\),
with variance \(\sigma_d^2=q_d(1-q_d)\), and
\(\mathcal F^{\mathrm{bin}}=\{F\in\mathcal F:F(\{0,1\}^2)=1\}\) denotes this
model. As \((q_1,q_0)\) ranges over \([0,1]^2\), the variance pair ranges
over the full parameter space \([0,1/4]^2\), so the specialization preserves
the state space of the bounded-outcome problem while making the pilot
distribution finite.\footnote{Throughout, ``exact'' refers to finite-sample
inversion of the exact pilot distribution under the maintained i.i.d. superpopulation
model, in contrast to the concentration-inequality rectangle of
Subsection~\ref{sub:constructing_confidence_rectangle}. In a
finite-population framework the relevant count law would instead be
hypergeometric.} Pilot arm sizes are fixed at \(M_d\ge2\), with
\(M_1=M_0=M/2\) in the balanced design. Let \(X_d=\sum_{i:D_i=d}Y_i\) count
the successes in arm \(d\). Conditional on the fixed-arm pilot assignment,
and hence also unconditionally, \(X_d\sim\mathrm{Binomial}(M_d,q_d)\),
independently across arms.\footnote{Under Bernoulli pilot assignment, the
same construction applies conditional on the realized arm sizes, provided
both are at least two, in line with the corresponding footnote in
Subsection~\ref{sub:experimental_setup}.}

The unbiased pilot sample variance in arm \(d\) is
\(\hat\sigma_d^2=X_d(M_d-X_d)/[M_d(M_d-1)]\), so it depends on the success
count only through how close the count is to an even split. This motivates the
folded count $J_d=\min\{X_d,\,M_d-X_d\}$,
the size of the smaller outcome category. Relabeling the two outcome
categories maps \(X_d\) into \(M_d-X_d\) and \(q_d\) into \(1-q_d\), leaving
\(\hat\sigma_d^2\), \(\sigma_d^2\), and \(J_d\) unchanged, so the folding
removes exactly this success--failure orientation. Because \(q_d\) and
\(1-q_d\) induce the same law for \(J_d\), the folded family is indexed
directly by the variance \(\sigma_d^2=q_d(1-q_d)\), the object the design loss
depends on.\footnote{If \(X\sim\mathrm{Binomial}(M,\pi)\), then
\(\min\{X,M-X\}\) has a folded binomial distribution
\citep{gart1970,mantel1970,porzio2009}.} The folded count takes only the
\(\lfloor M_d/2\rfloor+1\) values in
\(\mathcal J_{M_d}=\{0,1,\ldots,\lfloor M_d/2\rfloor\}\), and
\(\hat\sigma_d^2=J_d(M_d-J_d)/[M_d(M_d-1)]\) is increasing in \(J_d\) on this
support. Although stated for potential outcomes, the construction below uses
only that arm \(d\) contributes \(M_d\) i.i.d. binary observations, so it
applies verbatim to any binary pilot variable.

Each candidate variance \(\sigma^2\in[0,1/4]\) corresponds to the two success
probabilities \(\xi(\sigma^2)=\bigl(1-\sqrt{1-4\sigma^2}\bigr)/2\in[0,1/2]\)
and \(1-\xi(\sigma^2)\). For \(j<M_d/2\), the event \(\{J_d=j\}\) is the union
of the disjoint raw events \(\{X_d=j\}\) and \(\{X_d=M_d-j\}\), so the folded
probability mass function under candidate variance \(\sigma^2\) is
\[
h_{M_d}(j;\sigma^2)
=
\binom{M_d}{j}
\left[
\xi(\sigma^2)^j\{1-\xi(\sigma^2)\}^{M_d-j}
+
\xi(\sigma^2)^{M_d-j}\{1-\xi(\sigma^2)\}^j
\right],
\]
while for even \(M_d\) and \(j=M_d/2\) the two raw events coincide and
\(h_{M_d}(M_d/2;\sigma^2)=\binom{M_d}{M_d/2}
\bigl[\xi(\sigma^2)\{1-\xi(\sigma^2)\}\bigr]^{M_d/2}\). The inversion uses the
two tails of this law,
\[
G_{\le}(j;\sigma^2)
=
\Pr_{\sigma^2}(J_d\le j)
=
\sum_{k=0}^{j}h_{M_d}(k;\sigma^2),
\qquad
G_{\ge}(j;\sigma^2)
=
\Pr_{\sigma^2}(J_d\ge j)
=
\sum_{k=j}^{\lfloor M_d/2\rfloor}h_{M_d}(k;\sigma^2),
\]
where \(\Pr_{\sigma^2}\) denotes the folded-binomial law indexed by
\(\sigma^2\). The next lemma records the monotonicity that turns test
inversion into interval endpoints: larger variances shift the folded count
upward in the sense of first-order stochastic dominance.

\begin{lemma}[{\normalfont\hyperlink{proof:binary_folded_ordering}{Stochastic ordering of the folded count}}]
\label{lem:binary_folded_ordering}
Fix an arm size \(M_d\ge2\) and \(j\in\mathcal J_{M_d}\). Then
\(\sigma^2\mapsto G_{\ge}(j;\sigma^2)\) is continuous and nondecreasing on
\([0,1/4]\), and \(\sigma^2\mapsto G_{\le}(j;\sigma^2)\) is continuous and
nonincreasing. Equivalently, for \(0\le\sigma_a^2\le\sigma_b^2\le1/4\), the
law of \(J_d\) under \(\sigma_b^2\) first-order stochastically dominates its
law under \(\sigma_a^2\).
\end{lemma}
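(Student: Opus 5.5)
The plan is to reparametrize by the success probability \(q=\xi(\sigma^2)\in[0,1/2]\), prove the monotonicity in \(q\), and then transfer it back to \(\sigma^2\). The map \(\sigma^2\mapsto\xi(\sigma^2)=\bigl(1-\sqrt{1-4\sigma^2}\bigr)/2\) is continuous and strictly increasing from \([0,1/4]\) onto \([0,1/2]\). Under candidate variance \(\sigma^2\) the raw count is \(X_d\sim\mathrm{Binomial}(M_d,q)\), and \(J_d=\min\{X_d,M_d-X_d\}\). It is therefore enough to show the following for each fixed \(j\): the map \(q\mapsto\Pr_q(J_d\ge j)\) is continuous and nondecreasing on \([0,1/2]\). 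Continuity is immediate, since each tail probability is a finite sum of polynomials in \(q\) and \(\xi\) is continuous. The statement about \(G_{\le}\) follows from \(G_{\le}(j;\sigma^2)=1-G_{\ge}(j+1;\sigma^2)\), with the convention \(G_{\ge}(\lfloor M_d/2\rfloor+1;\cdot)=0\). Monotonicity of \(G_{\ge}(j;\cdot)\) for every \(j\) is exactly first-order stochastic dominance, which gives the equivalence stated in the lemma.

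For the monotonicity in \(q\), I would use a coupling argument rather than differentiating the folded mass function. Let \(U_1,\dots,U_{M_d}\) be i.i.d. \(\mathrm{Uniform}(0,1)\) and set \(X(q)=\#\{i:U_i\le q\}\). I will show that \(q\mapsto\min\{X(q),M_d-X(q)\}\) is pathwise nondecreasing on \([0,1/2]\) in the stochastic sense. This requires a symmetrization step, because pathwise \(X(q)\) can pass \(M_d/2\) even when \(q<1/2\), and then \(\min\) decreases.

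My first attempt uses the fact that \(J\) is a function of \(|X-M_d/2|\): specifically \(J=M_d/2-|X-M_d/2|\). So \(J\ge j\) if and only if \(|X-M_d/2|\le M_d/2-j\). The claim therefore becomes the following: for \(0\le q_a\le q_b\le 1/2\), the symmetric interval probability \(\Pr_q\bigl(|X-M_d/2|\le t\bigr)\) is nondecreasing in \(q\) for every \(t\).

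To prove this, I would differentiate in \(q\), using the standard identity
\[
\frac{d}{dq}\Pr_q(X\ge k)=M_d\,b_{M_d-1}(k-1;q),
\]
where \(b_{n}(\cdot;q)\) is the \(\mathrm{Binomial}(n,q)\) mass function. Write the interval \(\{a\le X\le M_d-a\}\), with \(a=\lceil M_d/2-t\rceil\), as \(\{X\ge a\}\setminus\{X\ge M_d-a+1\}\). The derivative is then
\[
M_d\bigl[b_{M_d-1}(a-1;q)-b_{M_d-1}(M_d-a;q)\bigr].
\]
The two indices \(a-1\) and \(M_d-a\) are symmetric about \((M_d-1)/2\), and \(a-1\le M_d-a\). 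The ratio of the two masses is \(\bigl(q/(1-q)\bigr)^{M_d-2a+1}\cdot\bigl((1-q)/q\bigr)^{0}\), which after simplification equals \(\bigl((1-q)/q\bigr)^{M_d+1-2a}\ge1\) when \(q\le1/2\). So the derivative is nonnegative. The case \(a=0\) is trivial, since then the probability equals 1.

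The main obstacle I expect is this bookkeeping: getting the endpoints \(a\) right for both odd and even \(M_d\), and handling the boundary \(q=0\), where \(b\) is degenerate and continuity covers it. Once the monotonicity in \(q\) is established, composing with the increasing map \(\xi\) yields the lemma.
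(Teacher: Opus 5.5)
Your proposal is correct and is essentially the paper's proof: you reparametrize by \(q=\xi(\sigma^2)\), write \(\{J_d\ge j\}=\{j\le X_d\le M_d-j\}\), differentiate the binomial survival function to get a derivative proportional to \([q(1-q)]^{j-1}\bigl[(1-q)^{M_d-2j+1}-q^{M_d-2j+1}\bigr]\ge0\) on \([0,1/2]\), and pass to \(G_{\le}\) via \(G_{\le}(j;\cdot)=1-G_{\ge}(j+1;\cdot)\). Two cosmetic fixes: drop the abandoned coupling preamble, since the pathwise claim is false as you note yourself, and correct your first ratio expression \((q/(1-q))^{M_d-2a+1}\), which is the reciprocal of the correct ratio \(((1-q)/q)^{M_d+1-2a}\); your conclusion already uses the correct one.
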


The two tails then separate the two ways a candidate variance can be
inconsistent with the pilot. A small folded count is evidence against large
variances, because a high-variance arm rarely produces a homogeneous pilot,
so the lower tail delivers the upper endpoint. A large folded count is
evidence against small variances, because a near-degenerate arm rarely
produces a balanced split, so the upper tail delivers the lower endpoint.
This is the classical confidence construction of \citet{neyman1937outline}
by test inversion, of which the \citet{clopper1934} interval for a binomial
proportion is the canonical finite-sample example. Here the inverted family
is the folded law, indexed by the variance itself.\footnote{Other exact
constructions exist, for example transforming a Clopper--Pearson interval
for \(q_d\) through \(q\mapsto q(1-q)\). I invert the folded family because
\(J_d\) is the maximal invariant under relabeling and its law is indexed by
\(\sigma_d^2\), so no orientation nuisance remains.}

Fix a one-sided error level \(b\in(0,1/2)\); the rectangle below uses
\(b=\alpha/4\). For a realized folded count \(j\), the endpoints are
\begin{equation*}
\begin{aligned}
\overline\sigma_{M_d}^2(j;b)
&=
\sup\{\sigma^2\in[0,1/4]:G_{\le}(j;\sigma^2)>b\},
\\
\underline\sigma_{M_d}^2(j;b)
&=
\begin{cases}
\inf\{\sigma^2\in[0,1/4]:G_{\ge}(j;\sigma^2)>b\}, & \text{if the set is
nonempty},\\[2pt]
1/4, & \text{otherwise}.
\end{cases}
\end{aligned}
\end{equation*}
By Lemma~\ref{lem:binary_folded_ordering}, each acceptance set is an
interval. The upper one is never empty and contains a neighborhood of zero,
since \(J_d=0\) almost surely at \(\sigma^2=0\) and hence
\(G_{\le}(j;0)=1>b\), so \(\overline\sigma_{M_d}^2(j;b)>0\) for every \(j\).
The lower one can be empty only when even the maximal variance \(1/4\)
assigns upper-tail probability at most \(b\) to the observed count; reporting
the boundary value \(1/4\) then preserves the lower-bound coverage
established below. Because
\(G_{\le}(j;\sigma^2)+G_{\ge}(j;\sigma^2)=1+\Pr_{\sigma^2}(J_d=j)\ge1>2b\),
no candidate variance is rejected by both tails, and the reported endpoints
satisfy \(\underline\sigma_{M_d}^2(j;b)\le\overline\sigma_{M_d}^2(j;b)\) in
every case.\footnote{In computation, each endpoint is a one-dimensional
root of a monotone continuous tail function. Since \(M_d\), \(b\), and the
support are known before the pilot, the endpoint functions form a lookup
table indexed by \(j\) and need not be recomputed after the pilot.}

Applying the armwise bounds to both arms at level \(b=\alpha/4\) gives the
binary rectangle
\begin{equation*}
\widehat\Theta_\alpha^{\mathrm{bin}}(J_1,J_0)
=
\bigl[
\underline\sigma_{M_1}^2(J_1;\alpha/4),
\overline\sigma_{M_1}^2(J_1;\alpha/4)
\bigr]
\times
\bigl[
\underline\sigma_{M_0}^2(J_0;\alpha/4),
\overline\sigma_{M_0}^2(J_0;\alpha/4)
\bigr].
\end{equation*}

\begin{proposition}[{\normalfont\hyperlink{proof:binary_exact_rectangle}{Exact-inversion rectangle for binary outcomes}}]
\label{prop:binary_exact_rectangle}
Fix pilot arm sizes \(M_d\ge2\) for \(d\in\{0,1\}\).
\begin{enumerate}
\item[\rm (i)] For every \(b\in(0,1/2)\), each arm \(d\), and every
\(F\in\mathcal F^{\mathrm{bin}}\),
\[
\Pr_{P_F}\!\left\{\underline\sigma_{M_d}^2(J_d;b)\le\sigma_d^2(F)\right\}\ge1-b
\qquad\text{and}\qquad
\Pr_{P_F}\!\left\{\sigma_d^2(F)\le\overline\sigma_{M_d}^2(J_d;b)\right\}\ge1-b,
\]
so the endpoints satisfy the one-sided coverage requirement of
Subsection~\ref{sub:conditional_minimax_regret_procedure} over
\(\mathcal F^{\mathrm{bin}}\).
\item[\rm (ii)] For every \(\alpha\in(0,1)\) and every
\(F\in\mathcal F^{\mathrm{bin}}\), the rectangle
\(\widehat\Theta_\alpha^{\mathrm{bin}}(J_1,J_0)\) contains \(\theta(F)\)
with probability at least \(1-\alpha\), both upper endpoints are strictly
positive, and Proposition~\ref{prop:cmr_assignment_rectangle} applied to
\(\widehat\Theta_\alpha^{\mathrm{bin}}\) yields a unique interior
assignment \(p_{\mathrm{CMR}}^{\mathrm{bin}}\) and a certificate
\(U_{\mathrm{CMR}}^{\mathrm{bin}}\le1/4\) with
\(\Pr_{P_F}\!\left\{r\bigl(p_{\mathrm{CMR}}^{\mathrm{bin}},
\theta(F)\bigr)\le U_{\mathrm{CMR}}^{\mathrm{bin}}\right\}\ge1-\alpha\).
\end{enumerate}
\end{proposition}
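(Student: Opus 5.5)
The plan is a standard Neyman test-inversion argument driven by the monotonicity of Lemma~\ref{lem:binary_folded_ordering}, followed by a Bonferroni step and an appeal to Proposition~\ref{prop:cmr_assignment_rectangle} and Theorem~\ref{thm:cmr-certified-optimality}. Fix \(F\in\mathcal F^{\mathrm{bin}}\) and arm \(d\), and write \(\sigma^2=\sigma_d^2(F)\). Since \(J_d\) has the folded-binomial law indexed by \(\sigma^2\), all probabilities reduce to \(\Pr_{\sigma^2}\).

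For the upper endpoint, I will show that the miss event \(\{\overline\sigma_{M_d}^2(J_d;b)<\sigma^2\}\) is contained in \(\{G_{\le}(J_d;\sigma^2)\le b\}\). Suppose \(G_{\le}(j;\sigma^2)>b\). Then \(\sigma^2\) lies in the acceptance set, so \(\overline\sigma^2\ge\sigma^2\). Hence a miss forces \(G_{\le}(J_d;\sigma^2)\le b\).

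Next, I will bound \(\Pr_{\sigma^2}(G_{\le}(J_d;\sigma^2)\le b)\le b\) using the usual CDF argument on the discrete support \(\mathcal J_{M_d}\). Let \(j^\ast\) be the largest \(j\) with \(G_{\le}(j;\sigma^2)\le b\); if no such \(j\) exists, the probability is zero. Because \(j\mapsto G_{\le}(j;\sigma^2)\) is nondecreasing, the event equals \(\{J_d\le j^\ast\}\). Its probability is therefore \(G_{\le}(j^\ast;\sigma^2)\le b\).

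The lower endpoint is symmetric. If the acceptance set \(\{G_{\ge}(J_d;\cdot)>b\}\) is empty, the reported value is \(1/4\), and I must show this case has probability at most \(b\) whenever \(\sigma^2<1/4\). This is where Lemma~\ref{lem:binary_folded_ordering} enters. An empty set means \(G_{\ge}(j;\sigma^2)\le b\), in particular at the true \(\sigma^2\). If the set is nonempty and \(\underline\sigma^2>\sigma^2\), then \(\sigma^2\) is not accepted, which again gives \(G_{\ge}(J_d;\sigma^2)\le b\). Note that the infimum of an interval exceeding \(\sigma^2\) excludes \(\sigma^2\), using continuity and monotonicity from the lemma. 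In both cases the miss event lies in \(\{G_{\ge}(J_d;\sigma^2)\le b\}\), which has probability at most \(b\) by the mirror-image discrete survival-function argument.

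I expect the only delicate point to be the boundary bookkeeping. This covers the empty-set convention, the sup/inf at points where \(G\) equals \(b\) exactly, and the degenerate cases \(\sigma^2\in\{0,1/4\}\). All of these are handled by the monotonicity and continuity in the lemma.

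For part (ii), coverage of \(\widehat\Theta_\alpha^{\mathrm{bin}}\) follows from (i) at \(b=\alpha/4<1/2\) and a union bound over the four one-sided failures, exactly as in Lemma~\ref{lem:mp_rectangle_coverage}. Strict positivity of each upper endpoint was observed before the statement: \(G_{\le}(j;0)=1>b\), and continuity in \(\sigma^2\) keeps \(G_{\le}>b\) on a neighborhood of zero. The ordering \(\underline\sigma^2\le\overline\sigma^2\) was also shown there, so the rectangle is a genuine subset of \([0,1/4]^2\).

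Proposition~\ref{prop:cmr_assignment_rectangle} then applies verbatim and gives a unique interior assignment and the closed-form certificate. The bound \(U^{\mathrm{bin}}_{\mathrm{CMR}}\le1/4\) and the probability-\((1-\alpha)\) regret guarantee follow from the proof of Theorem~\ref{thm:cmr-certified-optimality}. That proof used only rectangle coverage and the containment of the rectangle in \([0,1/4]^2\), not the Maurer--Pontil construction.
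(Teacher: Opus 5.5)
Your proposal is correct and matches the paper's proof essentially step for step. The paper also shows each miss event is contained in \(\{G_{\le}(J_d;\sigma_d^2(F))\le b\}\) or \(\{G_{\ge}(J_d;\sigma_d^2(F))\le b\}\), treats the empty-acceptance-set case the same way, and bounds these events by \(b\) through the extreme rejected index \(j^\ast\) before applying the union bound, positivity of the upper endpoints, and the certificate step. Two minor remarks: the containment steps follow directly from the definitions of \(\sup\) and \(\inf\) and need only monotonicity in \(j\), so Lemma~\ref{lem:binary_folded_ordering} matters only for the interval structure, the ordering \(\underline\sigma^2\le\overline\sigma^2\), and positivity; and the paper cites Lemma~\ref{lem:extension_certificate_validity} where you cite the proof of Theorem~\ref{thm:cmr-certified-optimality}, which is the same coverage argument.
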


The rectangle's coverage is at least, not exactly, \(1-\alpha\). The union
bound across the four one-sided bounds and the discreteness of the folded law
both make the construction conservative. What is exact is the inversion
itself, which uses the finite-sample law of the pilot statistic rather than a
concentration inequality or an asymptotic approximation. The gain is
concrete at small pilots, as the next example shows.

\begin{example}[Two observations per arm]
\label{ex:binary_two_per_arm}
Let \(M_d=2\), so \(J_d\in\{0,1\}\) and
\(\Pr_{\sigma^2}(J_d=1)=2\xi(\sigma^2)\{1-\xi(\sigma^2)\}=2\sigma^2\). Then
\(G_{\ge}(1;\sigma^2)=2\sigma^2\) exceeds \(b\) exactly when
\(\sigma^2>b/2\), while \(G_{\le}(0;\sigma^2)=1-2\sigma^2\) and
\(G_{\le}(1;\sigma^2)=1\) exceed \(b\) on all of \([0,1/4]\), so the
inversion returns
\[
J_d=0:\ [0,\,1/4],
\qquad
J_d=1:\ [b/2,\,1/4].
\]
Two identical observations leave the variance undetermined, treating a zero
pilot variance as uncertainty rather than as proof of degeneracy, while one
success and one failure already certify \(\sigma_d^2\ge b/2\). At
\(b=\alpha/4\), if one arm splits and the other is homogeneous, the
standard-deviation intervals are \([\sqrt{\alpha/8},1/2]\) and \([0,1/2]\),
their midpoints differ, and CMR moves toward the split arm. At
\(\alpha=0.05\), Proposition~\ref{prop:cmr_assignment_rectangle} gives
\(p_{\mathrm{CMR}}^{\mathrm{bin}}\approx0.54\) with certificate
\(\approx0.22<1/4\). If both arms split, the two intervals coincide, the
assignment stays balanced, and only the certificate tightens. Since
\(M_d=2\) is the smallest arm size at which the sample variance exists, the
first balanced pilot at which CMR can react to binary outcomes is \(m=4\),
two observations per arm, for every \(\alpha\in(0,1)\). This is the binary
activation claim used in Appendix~\ref{sec:pilot_planning}, and it contrasts
with the distribution-free threshold \(m^{\mathrm{act}}(0.05)=72\) of
\eqref{eq:m_act}.
\end{example}

The construction is used in the CMR (Bernoulli) rules of the calibrated
simulations in Section~\ref{sec:sims} and in the binary activation
threshold of Appendix~\ref{sec:pilot_planning}. In each case the rectangle is formed
as above and the assignment and certificate are computed exactly as in
Proposition~\ref{prop:cmr_assignment_rectangle} and
\eqref{eq:cmr_certificate}.

\subsection{Unbounded Outcomes}
\label{sub:unbounded_continuous_outcomes}

Many outcomes in economics, such as earnings or expenditures, have no
known bound.
The bounded-outcome assumption does two jobs in the main analysis. First,
it makes the variance space compact, since outcomes in \([0,1]\) confine
the variance pair to \(\Theta=[0,1/4]^2\), and the no-pilot minimax values of
Section~\ref{sec:benchmark_assignment_rules} and the universal
certificate cap of Theorem~\ref{thm:cmr-certified-optimality}(ii) are
worst cases computed over this compact set. Second, it delivers the
finite-sample confidence rectangle of
Subsection~\ref{sub:constructing_confidence_rectangle}, whose coverage
holds without distributional assumptions. Neither job is intrinsic to the
CMR construction itself. The variance criterion, the Neyman allocation,
and the regret identity \eqref{eq:regret_allocation_mistake_baseline} require only finite
arm variances, and the closed form of
Proposition~\ref{prop:cmr_assignment_rectangle} applies verbatim to any
standard-deviation rectangle with strictly positive upper endpoints. Only
one object must be rebuilt, the confidence rectangle
\(\widehat\Theta_\alpha(\omega)\).

Finite variances alone, however, are too weak a restriction to support
any rectangle, by a Bahadur--Savage-type nonexistence argument
\citep{bahadur1956nonexistence,lehmann2005testing}. The obstacle is that
a distribution can hide an arbitrarily large variance in a value it
rarely produces. Consider a
distribution that places mass \(1-\varepsilon\) at a single point and
mass \(\varepsilon\) at a point \(a\) units away. Its variance is
\(\varepsilon(1-\varepsilon)a^2\), which grows without limit in \(a\),
yet with probability \((1-\varepsilon)^{M_d}\) every pilot observation
lands on the common point and the sample is indistinguishable from one
drawn from a constant outcome. An upper confidence bound that is finite
after such a quiet pilot therefore fails to cover, and the failure
survives every pilot size, since \(\varepsilon\) can shrink as \(M_d\)
grows. Certified adaptation
from an unbounded outcome accordingly requires a restriction on the
tails, which I impose through a known bound on the kurtosis.

\begin{assumption}[Known kurtosis bound]
\label{ass:kurtosis}
For each arm \(d\in\{0,1\}\), the marginal distribution of the potential
outcome \(Y(d)\) under \(F\) has mean \(\mu_d(F)\), variance
\(\sigma_d^2(F)\in(0,\infty)\), and kurtosis
\(\mathbb E_F\!\left[\left(Y(d)-\mu_d(F)\right)^4\right]/\sigma_d^4(F)\)
at most \(\psi_d\), for a known constant \(\psi_d\in[1,\infty)\).
\end{assumption}

The lower limit \(\psi_d\ge1\) is without loss, since kurtosis is at
least one for every nondegenerate distribution by Jensen's inequality.
The bound plays the role that bounded support plays in the main text, a
restriction on the model class maintained at the design stage, and it
has two convenient features. It is scale-free, since kurtosis is
invariant to the location and scale of the outcome, and it is purely
moment-based, so it covers continuous and discrete outcomes alike. The
assumption also rules out degenerate arms, since kurtosis is undefined
when the variance is zero. This exclusion has content only for discrete
outcomes, since a continuously distributed outcome is never degenerate. In this
respect the unbounded model differs from the bounded one, where the
degenerate configurations are the central adversarial states. The assumption converts the impossibility into an explicit pilot-size
requirement.

\begin{proposition}[{\normalfont\hyperlink{proof:unbounded_impossibility}{Necessary pilot size}}]
\label{prop:unbounded_impossibility}
Fix an arm \(d\) and an error level \(\alpha\in(0,1)\), and let
\(\overline{\sigma}_d^2\) be a measurable function of the arm-\(d\)
pilot outcomes satisfying
\(\Pr_{P_F}\!\left(\sigma_d^2(F)\le\overline{\sigma}_d^2\right)\ge1-\alpha\)
for every \(F\) obeying Assumption~\ref{ass:kurtosis}. If $M_d<\frac{(\psi_d+3)\log(1/\alpha)}{4}$,
then \(\overline{\sigma}_d^2=\infty\) at every realization in which all
arm-\(d\) pilot outcomes are equal.
\end{proposition}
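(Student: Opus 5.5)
The plan is a Bahadur--Savage-type two-point construction. It uses the fact that kurtosis is scale-free: mass can be hidden in a single far point without violating Assumption~\ref{ass:kurtosis}. Fix any realization in which all \(M_d\) arm-\(d\) pilot outcomes equal a common value \(c\in\mathbb R\). Suppose, toward a contradiction, that \(\overline{\sigma}_d^2(c,\ldots,c)=B<\infty\). I will build a family of distributions \(F_a\), indexed by \(a>0\), with three properties:
\begin{itemize}
\item each \(F_a\) satisfies the kurtosis bound;
\item each \(F_a\) produces the all-\(c\) pilot with probability exceeding \(\alpha\), uniformly in \(a\);
\item the variance of \(F_a\) grows without bound in \(a\).
\end{itemize}
Taking \(a\) large enough that \(\sigma_d^2(F_a)>B\) then makes the upper bound miss with probability greater than \(\alpha\), which contradicts the coverage requirement.

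\textbf{Step 1: the two-point law and its kurtosis.} Let \(Y(d)\) under \(F_a\) equal \(c\) with probability \(1-\varepsilon\) and \(c+a\) with probability \(\varepsilon\), where \(\varepsilon\in(0,1/2]\). Then \(Y(d)=c+aB\) with \(B\sim\mathrm{Bernoulli}(\varepsilon)\). A routine computation gives
\[
\sigma_d^2(F_a)=\varepsilon(1-\varepsilon)a^2,
\qquad
\text{kurtosis}=\frac{1}{\varepsilon(1-\varepsilon)}-3 .
\]
The kurtosis does not depend on \(a\) or \(c\). I choose \(\varepsilon\in(0,1/2]\) to solve \(\varepsilon(1-\varepsilon)=1/(\psi_d+3)\). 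This is possible because \(\psi_d\ge1\) gives \(1/(\psi_d+3)\le1/4\). With this choice the kurtosis equals \(\psi_d\) exactly. The variance is positive and finite, so every \(F_a\) obeys Assumption~\ref{ass:kurtosis}. The other arm's marginal can be fixed arbitrarily within the assumption; it is irrelevant because \(\overline{\sigma}_d^2\) depends only on arm-\(d\) outcomes.

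\textbf{Step 2: the quiet-pilot probability exceeds \(\alpha\).} The arm-\(d\) outcomes are i.i.d., so the all-\(c\) realization has probability \((1-\varepsilon)^{M_d}\) under every \(F_a\). I need \(M_d\log\{1/(1-\varepsilon)\}<\log(1/\alpha)\). The key elementary inequality is
\[
\log\frac{1}{1-\varepsilon}\le\frac{\varepsilon}{1-\varepsilon}\le 4\varepsilon(1-\varepsilon)\qquad\text{for }\varepsilon\le 1/2 .
\]
The second inequality is equivalent to \(4(1-\varepsilon)^2\ge1\). Combining it with \(4\varepsilon(1-\varepsilon)=4/(\psi_d+3)\) and the hypothesis \(M_d<(\psi_d+3)\log(1/\alpha)/4\) yields
\[
M_d\log\frac{1}{1-\varepsilon}<\log\frac1\alpha,
\quad\text{that is,}\quad
(1-\varepsilon)^{M_d}>\alpha .
\]
This step, matching the constant \((\psi_d+3)/4\) to the quiet probability, is the only nontrivial estimate. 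The binding point is that \(\varepsilon\) can be kept at most \(1/2\) for every \(\psi_d\ge1\).

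\textbf{Step 3: conclude.} Choose \(a\) with \(\varepsilon(1-\varepsilon)a^2>B\). On the all-\(c\) event the bound equals \(B<\sigma_d^2(F_a)\), so
\[
\Pr_{P_{F_a}}\!\left(\sigma_d^2(F_a)>\overline{\sigma}_d^2\right)\ge(1-\varepsilon)^{M_d}>\alpha .
\]
This contradicts the coverage requirement. Since \(c\) was arbitrary, \(\overline{\sigma}_d^2=\infty\) at every realization in which all arm-\(d\) pilot outcomes are equal.
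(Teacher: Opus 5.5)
Your proposal is correct and follows the paper's proof essentially line for line: the same two-point law with \(\varepsilon(1-\varepsilon)=1/(\psi_d+3)\), the same scale-free kurtosis computation, and the same contradiction once the variance exceeds the finite bound. The only difference is cosmetic. You bound \(\log\{1/(1-\varepsilon)\}\le\varepsilon/(1-\varepsilon)\le 4\varepsilon(1-\varepsilon)\), whereas the paper uses \(\log\{1/(1-\varepsilon)\}\le 2\varepsilon\) together with \(\varepsilon\le 2/(\psi_d+3)\), and both reach the same factor \(4/(\psi_d+3)\).
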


The threshold quantifies how much certification a quiet pilot can
support. Under the kurtosis bound the adversary can still hide the
variance in a tail of probability of order \(1/\psi_d\), and a pilot arm
of size \(M_d\) misses that tail entirely with probability of order
\(e^{-M_d/\psi_d}\). Until \(M_d\) reaches order \(\psi_d\log(1/\alpha)\)
this miss probability exceeds the allowed error, so a procedure that
reports a finite bound after an all-equal pilot violates coverage. A
weaker tail restriction, meaning a larger \(\psi_d\), therefore has a
price denominated in pilot observations.

Without a range bound
the sample variance is fragile, since a single draw from the rare tail
can move it arbitrarily far, and Chebyshev control under the
fourth-moment bound has polynomial dependence on the inverse failure
probability. I therefore estimate each arm variance
with a median-of-means estimator \(\widehat v_d(\omega)\), a classical
construction that replaces this with logarithmic dependence \citep{lugosi2019mean}.
The estimator splits the arm-\(d\) pilot outcomes into
\(k_d=\lceil 8\log(2/\alpha)\rceil\) blocks, computes within each block
an unbiased variance estimate from \(b_d\) paired differences, and takes
the median of the \(k_d\) block estimates.\footnote{Concretely, form the
halved squared differences \(\tfrac12(Y_{d,2i-1}-Y_{d,2i})^2\) of
consecutive arm-\(d\) pilot outcomes, each of which has mean exactly
\(\sigma_d^2(F)\), group them into \(k_d\) blocks of
\(b_d=\lfloor\lfloor M_d/2\rfloor/k_d\rfloor\) pairs each, discarding
any remainder, and let \(\widehat v_d(\omega)\) be the median of the
\(k_d\) block averages.} The median across blocks moves only if half the
blocks move, so no small number of extreme draws can distort the
estimate, and its finite-sample accuracy depends on the outcome
distribution only through the kurtosis. The construction requires at
least one pair per block, so it needs \(M_d\ge 2k_d\), a pilot arm size
of order \(\log(1/\alpha)\).

\begin{lemma}[{\normalfont\hyperlink{proof:unbounded_mom}{Median-of-means variance estimator}}]
\label{lem:unbounded_mom}
Let Assumption~\ref{ass:kurtosis} hold, fix \(\alpha\in(0,1)\), and
suppose \(M_d\ge2k_d\), so that every block contains at least one pair.
Then \(\widehat v_d(\omega)\ge0\) at every realization and $\Pr_{P_F}\!\left( \left|\widehat v_d(\omega)-\sigma_d^2(F)\right| \le\rho_d\,\sigma_d^2(F) \right)\ge 1-\frac{\alpha}{2}$, where $\rho_d=\sqrt{\frac{2(\psi_d+1)}{b_d}}$.
\end{lemma}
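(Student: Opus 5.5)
The proof is a standard median-of-means argument: a Chebyshev bound for each block average, followed by a binomial tail bound across blocks. Nonnegativity is immediate, since each halved squared difference \(Z_i=\tfrac12(Y_{d,2i-1}-Y_{d,2i})^2\) is nonnegative, so every block average and their median are nonnegative.

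\emph{Moments of one pair.} Because the two observations in a pair are i.i.d., \(\mathbb E_F[Z_i]=\sigma_d^2(F)\). For the variance, write \(Y_{d,2i-1}-Y_{d,2i}=U-U'\) with \(U,U'\) i.i.d.\ centered copies. Expanding gives
\[
\mathbb E[(U-U')^4]=2\mu_4+6\sigma_d^4,
\]
where \(\mu_4\) is the fourth central moment. Hence
\[
\mathbb E[Z_i^2]=\tfrac14(2\mu_4+6\sigma_d^4)\le\tfrac12(\psi_d+3)\sigma_d^4,
\]
and therefore \(\Var(Z_i)\le\tfrac12(\psi_d+1)\sigma_d^4\). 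Distinct pairs use disjoint observations, so the \(Z_i\) are independent.

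\emph{One block.} A block average \(\bar Z_j\) of \(b_d\) independent pairs has variance at most \((\psi_d+1)\sigma_d^4/(2b_d)\). Chebyshev's inequality then gives
\[
\Pr\bigl(|\bar Z_j-\sigma_d^2|>\rho_d\sigma_d^2\bigr)\le\frac{(\psi_d+1)}{2b_d\rho_d^2}=\frac14,
\]
using \(\rho_d^2=2(\psi_d+1)/b_d\). The constant in \(\rho_d\) was chosen to make this failure probability exactly \(1/4\).

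\emph{Median across blocks.} The blocks are disjoint, so the events \(\{\bar Z_j \text{ fails}\}\) are independent with probability at most \(1/4\) each. The median can leave the interval \([\sigma_d^2(1-\rho_d),\sigma_d^2(1+\rho_d)]\) only if at least \(k_d/2\) blocks fail. By Hoeffding's inequality for a binomial with mean at most \(k_d/4\), this has probability at most \(\exp(-2k_d(1/4)^2)=\exp(-k_d/8)\). With \(k_d=\lceil 8\log(2/\alpha)\rceil\) this is at most \(\alpha/2\), which is the claimed guarantee.

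\emph{Main obstacle.} The delicate part is the fourth-moment computation for \(Z_i\), where the constant \(\psi_d+1\) has to come out exactly. One must also confirm that the discarded remainder and the pairing scheme preserve independence both within and across blocks. Both hold because the outcomes in arm \(d\) are i.i.d.\ conditional on the fixed pilot assignment, and each outcome is used in at most one pair. The assumption \(M_d\ge2k_d\) guarantees \(b_d\ge1\), so every block average is well defined.
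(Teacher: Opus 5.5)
Your proposal is correct and follows the paper's proof essentially step by step. Nonnegativity comes from the halved squared differences. The moment identity $\mathbb E[(U-U')^4]=2\mu_4+6\sigma_d^4$ gives $\Var(Z_i)\le(\psi_d+1)\sigma_d^4/2$, so Chebyshev yields a per-block failure probability of $1/4$ under the stated $\rho_d$. Hoeffding across the $k_d$ independent blocks then gives $e^{-k_d/8}\le\alpha/2$.
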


The quantity \(\rho_d\) is the relative error of the estimate, the
fraction of the true variance by which the pilot estimate can miss. It
is known before the pilot is run, since it depends only on the pilot
size and the kurtosis bound, it shrinks with the pilot at rate
\(M_d^{-1/2}\), and it grows with \(\psi_d\), so weaker tail
restrictions buy less precision from the same pilot. The guarantee
becomes informative once \(\rho_d<1\), which happens at pilot arm sizes
of order \(\psi_d\log(1/\alpha)\), matching the necessary order of
Proposition~\ref{prop:unbounded_impossibility}.

Given \(\rho_d<1\), rearranging the inequality of
Lemma~\ref{lem:unbounded_mom} shows that with probability at least
\(1-\alpha/2\) the unknown variance lies between the endpoints
\begin{equation}
\label{eq:unbounded_endpoints}
\underline{\sigma}_d^2(\omega)=\frac{\widehat v_d(\omega)}{1+\rho_d},
\qquad
\overline{\sigma}_d^2(\omega)=\frac{\widehat v_d(\omega)}{1-\rho_d},
\end{equation}
and the confidence rectangle $\widehat\Theta_\alpha(\omega) = \bigl[\underline{\sigma}_1^2(\omega),\,\overline{\sigma}_1^2(\omega)\bigr] \times \bigl[\underline{\sigma}_0^2(\omega),\,\overline{\sigma}_0^2(\omega)\bigr]$
contains \(\theta(F)\) with probability at least \(1-\alpha\) by the
union bound over the two arms. Both endpoints are proportional to the
estimate, with the interval wider above the estimate than below it.

The CMR rule for unbounded outcomes is then defined as follows. If
\(\rho_d\ge1\) for some arm, the planned pilot is too small for the
lemma to carry information at any realization, since even an estimate
close to zero is compatible with an arbitrarily large variance, and the
rule keeps balance and reports no finite certificate. This is a planning
condition, known before any data are collected. If instead \(\rho_d<1\)
for both arms but the realized pilot returns \(\widehat v_d(\omega)=0\)
for some arm, the rule again keeps balance and reports no certificate. A
zero estimate is not evidence of zero variance, and acting on it would
repeat the overreaction to quiet pilots that Remark~\ref{rem:bernoulli}
documents for feasible Neyman. Under Assumption~\ref{ass:kurtosis} a
zero estimate lies inside the deviation-failure event of the lemma, so
this occurs with probability at most \(\alpha\). In every other case the
rule activates and computes the assignment and certificate of
Proposition~\ref{prop:cmr_assignment_rectangle} on
\(\widehat\Theta_\alpha(\omega)\), exactly as in the main text.

\begin{remark}[Relation to feasible Neyman]
\label{rem:unbounded_plugin}
By \eqref{eq:unbounded_endpoints}, both standard-deviation endpoints for
arm \(d\) are proportional to \(\sqrt{\widehat v_d}\). When
\(\rho_1=\rho_0\), as under a balanced pilot with a common kurtosis
bound, the proportionality factors cancel in
\eqref{eq:cmr_closed_form_assignment} and the assignment reduces to $p_{\mathrm{CMR}} =\frac{\sqrt{\widehat v_1}}{\sqrt{\widehat v_1}+\sqrt{\widehat v_0}}$,
the feasible Neyman allocation computed from the median-of-means
estimates. The radii need not be equal when the arms differ in pilot
size or in their kurtosis bounds. A larger radius raises the midpoint of
an arm's standard-deviation interval relative to its estimate, so
unequal radii tilt the assignment toward the arm with the larger radius.
Relative to feasible Neyman, this moves the assignment farther from
balance when the larger radius sits on the arm with the larger variance
estimate. When it sits on the arm with the smaller estimate, the
assignment moves toward balance and, if the radii are far enough apart,
past it. When the radii are common, the inflation factor is the same for
both arms and cancels, so whenever the rule activates the assignment is
exactly the feasible Neyman allocation, however wide the common
intervals. Within the activated regime, common radii thus change only
the safeguards around the feasible Neyman assignment of
Subsection~\ref{sub:asymptotic_efficiency_feasible_neyman_allocation},
while unequal radii can change the assignment itself.
The variance estimator comes with a finite-sample bound on its relative
error, the rule declines to act when the pilot shows no dispersion, and
every activated assignment is accompanied by a certificate.
\end{remark}

The CMR guarantees of the main text carry over to this rule as follows.

\begin{proposition}[{\normalfont\hyperlink{proof:unbounded_cmr}{Unbounded-outcome CMR}}]
\label{prop:unbounded_cmr}
Let Assumption~\ref{ass:kurtosis} hold and let
\(\left(p_{\mathrm{CMR}},U_{\mathrm{CMR}}\right)\) be the rule just
defined. Parts {\rm(i)} and {\rm(ii)} additionally
assume \(\rho_d<1\) for each arm \(d\in\{0,1\}\).
\begin{enumerate}
\item[\rm (i)] For every such \(F\), $\Pr_{P_F}\!\left( \widehat v_1(\omega)>0,\; \widehat v_0(\omega)>0,\; r\!\left(p_{\mathrm{CMR}}(\omega),\theta(F)\right) \le U_{\mathrm{CMR}}(\omega)<\infty \right)\ge 1-\alpha$.
\item[\rm (ii)] There exist constants \(c_\theta,C_\theta\in(0,\infty)\),
depending only on \(\theta(F)\), such that whenever
\(\rho_1+\rho_0\le c_\theta\), with probability at least \(1-\alpha\),
\[
\left|p_{\mathrm{CMR}}(\omega)-\pi^*(\theta(F))\right|\le\rho_1+\rho_0,
\quad
r\!\left(p_{\mathrm{CMR}}(\omega),\theta(F)\right)\le C_\theta(\rho_1+\rho_0)^2,
\quad
U_{\mathrm{CMR}}(\omega)\le C_\theta(\rho_1+\rho_0)^2.
\]
\item[\rm (iii)] Fix \(\alpha\in(0,1)\) and \(F\), and let
\(M_1,M_0\to\infty\) with \(M_0/M_1\) bounded away from zero and
infinity. Then the rule activates with probability tending to one, and $\left|p_{\mathrm{CMR}}-\pi^*(\theta(F))\right| =O_p\!\left(M_1^{-1/2}+M_0^{-1/2}\right)$, $r\!\left(p_{\mathrm{CMR}},\theta(F)\right) =O_p\!\left(M_1^{-1}+M_0^{-1}\right)$, $U_{\mathrm{CMR}} =O_p\!\left(M_1^{-1}+M_0^{-1}\right)$.
\end{enumerate}
\end{proposition}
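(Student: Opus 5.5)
The plan is to run everything on a single "good event" and then translate interval widths into statements about the closed form of Proposition~\ref{prop:cmr_assignment_rectangle}. Let $E_d$ be the event of Lemma~\ref{lem:unbounded_mom} for arm $d$, namely $|\widehat v_d-\sigma_d^2|\le\rho_d\sigma_d^2$. Each $E_d$ has probability at least $1-\alpha/2$, so the union bound gives $\Pr(E_1\cap E_0)\ge 1-\alpha$.

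\emph{Part (i).} On $E_1\cap E_0$, with $\rho_d<1$ and $\sigma_d^2>0$, we get $\widehat v_d\ge(1-\rho_d)\sigma_d^2>0$, so the rule activates. Rearranging the inequality gives $\widehat v_d/(1+\rho_d)\le\sigma_d^2\le\widehat v_d/(1-\rho_d)$, which is exactly $\theta(F)\in\widehat\Theta_\alpha(\omega)$. Both upper endpoints are positive and finite. Proposition~\ref{prop:cmr_assignment_rectangle} therefore applies and delivers an interior assignment and a finite certificate. The same inclusion argument as in Theorem~\ref{thm:cmr-certified-optimality}(i) then yields $r\le U_{\mathrm{CMR}}$. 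All four conditions hold on an event of probability at least $1-\alpha$.

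\emph{Part (ii).} Again work on $E_1\cap E_0$. Write $\underline\sigma_d=\sqrt{\widehat v_d/(1+\rho_d)}$ and $\overline\sigma_d=\sqrt{\widehat v_d/(1-\rho_d)}$. Using $\widehat v_d\in[(1-\rho_d)\sigma_d^2,(1+\rho_d)\sigma_d^2]$, both endpoints lie in $\sigma_d[\sqrt{(1-\rho_d)/(1+\rho_d)},\sqrt{(1+\rho_d)/(1-\rho_d)}]$. Once $\rho_d\le 1/2$, this interval is $\sigma_d[1-C\rho_d,1+C\rho_d]$ for an absolute constant $C$. Hence the midpoints $m_d=(\overline\sigma_d+\underline\sigma_d)/2$ satisfy $|m_d-\sigma_d|\le C\rho_d\sigma_d$, and the widths satisfy $\overline\sigma_d-\underline\sigma_d\le C\rho_d\sigma_d$. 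The assignment is $m_1/(m_1+m_0)$. For it I would use the exact bound
\[
\Bigl|\tfrac{m_1}{m_1+m_0}-\tfrac{\sigma_1}{\sigma_1+\sigma_0}\Bigr|
=\tfrac{|m_1\sigma_0-m_0\sigma_1|}{(m_1+m_0)(\sigma_1+\sigma_0)}.
\]
Writing $m_d=\sigma_d(1+e_d)$, the numerator is $\sigma_1\sigma_0|e_1-e_0|$ and the denominator is at least $(1-\max|e_d|)(\sigma_1+\sigma_0)^2\ge 4(1-\max|e_d|)\sigma_1\sigma_0$. The bound is therefore $|e_1-e_0|/[4(1-\max|e_d|)]$.

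The main obstacle is the constant: the statement claims the clean bound $\rho_1+\rho_0$ with constant one. I would check this directly. The midpoint factor is $\tfrac12[\sqrt{(1+x)/(1+\rho)}+\sqrt{(1+x)/(1-\rho)}]$ with $|x|\le\rho$. Its deviation from $1$ is at most about $\rho$ for small $\rho$, so $|e_1-e_0|\lesssim\rho_1+\rho_0$. Dividing by $4(1-\max|e_d|)\ge 2$ leaves ample slack. If needed, I will absorb the slack into the smallness condition $\rho_1+\rho_0\le c_\theta$, with $c_\theta\le1/2$.

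For the regret, use the first form of \eqref{eq:regret_allocation_mistake_baseline}. Choose $c_\theta$ also small enough that $\rho_1+\rho_0\le c_\theta$ keeps $p_{\mathrm{CMR}}$ within $\tfrac12\min\{\pi^*,1-\pi^*\}$ of $\pi^*$. Then the denominator $p(1-p)$ is bounded below by a constant depending on $\theta$, which gives $r\le C_\theta(\rho_1+\rho_0)^2$. For the certificate, use the closed form: its numerator satisfies $|\overline\sigma_1\overline\sigma_0-\underline\sigma_1\underline\sigma_0|\le \overline\sigma_1(\overline\sigma_0-\underline\sigma_0)+\underline\sigma_0(\overline\sigma_1-\underline\sigma_1)\le C\sigma_1\sigma_0(\rho_1+\rho_0)$. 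Its denominator is at least $c\,\sigma_1\sigma_0$. So $U_{\mathrm{CMR}}\le C_\theta(\rho_1+\rho_0)^2$, with $C_\theta$ of order $\sigma_1\sigma_0$.

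\emph{Part (iii).} Take $\alpha$ fixed, so $k_d$ is fixed, and $b_d\asymp M_d/k_d\to\infty$. Hence $\rho_d\asymp M_d^{-1/2}\to0$, and eventually $\rho_d<1$ and $\rho_1+\rho_0\le c_\theta$. Part (ii) does not suffice as stated, because its event has probability only $1-\alpha$ for fixed $\alpha$. I would instead reapply Lemma~\ref{lem:unbounded_mom}'s underlying Chebyshev-plus-median argument at a vanishing level $\alpha'$, or more simply argue directly. Each block average is consistent, and the median of $k_d$ fixed blocks satisfies $\widehat v_d-\sigma_d^2=O_p(b_d^{-1/2})$, with block variances bounded via the kurtosis. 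Activation then has probability tending to one since $\sigma_d^2>0$. Feeding $|\widehat v_d-\sigma_d^2|=O_p(M_d^{-1/2})$ into the same deterministic perturbation bounds, with the realized relative error in place of $\rho_d$, gives the three rates; the regret and certificate are quadratic in that error.
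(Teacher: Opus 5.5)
Your proposal is correct and follows essentially the same route as the paper. It uses the good event $E_1\cap E_0$ from Lemma~\ref{lem:unbounded_mom} for activation and coverage in (i), the midpoint perturbation $s_d=\sigma_d(1+e_d)$ together with the regret identity and the closed-form certificate in (ii), and in (iii) the fixed-$k_d$ argument that the median lies among block means that are each $O_p(b_d^{-1/2})$, fed into the same endpoint-contraction bounds. The paper obtains the constant one in (ii) more crudely, from $\sigma_d[1-2\rho_d,1+2\rho_d]$ with $c_\theta\le 1/4$; your sharper centered-midpoint bound also works, since $|e_d|\le\rho_d$ whenever $\rho_d\le 1/\sqrt2$.
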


Part~(i) extends the certificate guarantee of
Theorem~\ref{thm:cmr-certified-optimality}(i), whose argument requires
only that the rectangle cover the true variance pair. It states that
with probability at least \(1-\alpha\) the rule activates and its
realized regret is bounded by the finite reported certificate, so the
guarantee is not met by keeping balance alone. Part~(ii) states that on
an event of probability at least \(1-\alpha\), uniformly in the pilot
sizes, the assignment misses the Neyman allocation by at most
\(\rho_1+\rho_0\), of order \(\sqrt{\psi_d\log(1/\alpha)/M_d}\), while
realized regret and the certificate are bounded by its square.
Part~(iii) turns these bounds into convergence, since for fixed
\(\alpha\) the estimator is consistent, so the convergence rates of
Theorem~\ref{thm:cmr-neyman-recovery} are recovered and the probability
that the rule withholds activation vanishes.

Beyond the tail restriction itself, what does not carry over is
everything that rests on a compact variance space. Without a scale bound, every fixed assignment
\(p\in(0,1)\) has infinite worst-case regret over the positive
finite-variance space, since a common rescaling of both standard
deviations scales regret without moving the Neyman allocation. The
no-pilot absolute-regret problem therefore has infinite value. The
no-pilot minimax values of Section~\ref{sec:benchmark_assignment_rules}, the
universal cap \(U_{\mathrm{CMR}}\le1/4\) of
Theorem~\ref{thm:cmr-certified-optimality}(ii), and the uniform
expected-regret rate of Theorem~\ref{thm:cmr-competitive-risk} are all
statements about worst cases over \([0,1/4]^2\), and no analog of them
exists here. In the unbounded model, absolute guarantees must be earned
by the pilot. When the pilot certifies nothing, the rule holds the
default of balance and reports no certificate.

\subsection{Multiple Outcomes}
\label{sub:multiple_outcomes}

Experimenters often care about the impact of the treatment on several
outcomes. To state the
problem, stack the potential outcomes into the vector
\(Y(d)=(Y_1(d),\ldots,Y_K(d))'\), where each component takes values in
\([0,1]\) as in the main text and \(\Sigma_d\) denotes the covariance
matrix of \(Y(d)\). Under a
main-wave treatment share \(\pi\), the vector of difference-in-means
estimators has covariance matrix proportional to
\(\Sigma_1/\pi+\Sigma_0/(1-\pi)\). Every notion of precision for the
experiment is a summary of this matrix, and different summaries favor
different assignments, so the CMR construction applies only once the
researcher commits to one summary as the design criterion. In practice
the pre-analysis plan makes this commitment when it designates either a
summary index or a family of co-primary outcomes. The index designation
leads to a quadratic form in \(\Sigma_1/\pi+\Sigma_0/(1-\pi)\), the
co-primary designation to a weighted sum of its diagonal entries. We
take the two in turn.

When the plan designates a summary index, the estimand is the treatment effect on
\(G(d)=w'Y(d)\), where \(w=(w_1,\ldots,w_K)'\) collects pre-specified
nonnegative weights summing to one, as in the summary indices of
\citet{kling2007experimental}. The
variance of the index is the
quadratic form \(w'\Sigma_dw\), which involves every cross-outcome
covariance. Since \(G(d)\) is a convex combination of outcomes in
\([0,1]\), it is itself an outcome satisfying the assumptions of the
main text, and each pilot unit contributes an observation of it,
computed from its \(K\) recorded outcomes. CMR therefore applies
verbatim at the variance pair \((w'\Sigma_1w,\,w'\Sigma_0w)\), including
the closed-form rule, the certificate and its \(1/4\) cap, and the
convergence guarantees. 

When the plan designates \(K\) co-primary outcomes whose effects are
reported separately, the weights \(w_k\) express how the researcher
values precision across the \(K\) effects rather than defining an
estimand. The design criterion is the weighted sum of the
outcome-specific variance criteria, the \(w\)-weighted sum of the
diagonal entries of \(\Sigma_1/\pi+\Sigma_0/(1-\pi)\).\footnote{The
weighted criterion involves only the marginal variances. Minimizing the
largest worst-case regret across the \(K\) outcomes, or adding caps on
individual outcome variances, remains a computable one-dimensional
problem but loses the closed form. Criteria based on joint inference
across the outcomes, such as the volume of a confidence ellipsoid for
the \(K\) effects, depend on the covariances and fall outside the
framework of this paper. I do not pursue these variants.} Because
\(V\) is linear in the two variances,
\(\sum_{k=1}^{K}w_k\,V(\pi,\sigma_{1k}^2,\sigma_{0k}^2)
=V(\pi,\sigma_{1,w}^2,\sigma_{0,w}^2)\) with
\(\sigma_{d,w}^2=\sum_{k=1}^{K}w_k\,\sigma_{dk}^2\) and
\(\sigma_{dk}^2=\Var(Y_k(d))\). The weighted problem is therefore the
main-text problem at the effective variance pair
\(\theta_w=(\sigma_{1,w}^2,\sigma_{0,w}^2)\). The effective variance
\(\sigma_{d,w}^2\) is arm \(d\)'s average noisiness across the outcome
family, so the Neyman allocation for the weighted criterion tilts
toward the arm that is noisier on average.\footnote{When all
control-arm variances are positive, this allocation lies between the
smallest and largest of the outcome-specific Neyman allocations
\(\pi^*(\sigma_{1k}^2,\sigma_{0k}^2)\), because the effective variance
ratio \(\sigma_{1,w}^2/\sigma_{0,w}^2\) is a weighted average of the
outcome-specific ratios with weights proportional to
\(w_k\sigma_{0k}^2\) and the Neyman allocation is increasing in this
ratio. The choice of weights therefore matters little when the
outcome-specific allocations are close to one another and matters most
when they are far apart.} Regret is measured against
the best single assignment for the weighted criterion. Moreover,
because each \(\sigma_{dk}^2\) lies in \([0,1/4]\) and the weights sum
to one, the effective pair lies in \([0,1/4]^2\), the state space of
the main text.

The confidence rectangle is built from per-outcome bounds. For each arm
and outcome, the Maurer--Pontil construction of
Subsection~\ref{sub:constructing_confidence_rectangle} at one-sided
error level \(\alpha/(4K)\) gives endpoints
\(\underline{\sigma}_{dk}^2\) and
\(\overline{\sigma}_{dk}^2\), and the
weighted sums
\(\underline{\sigma}_{d,w}^2=\sum_{k}w_k\underline{\sigma}_{dk}^2\) and
\(\overline{\sigma}_{d,w}^2=\sum_{k}w_k\overline{\sigma}_{dk}^2\)
bracket the effective variances whenever every per-outcome bound holds.
The resulting rectangle enters
Proposition~\ref{prop:cmr_assignment_rectangle} unchanged.

\begin{proposition}[{\normalfont\hyperlink{proof:multiple_outcome_cmr}{Multiple-outcome CMR}}]
\label{prop:multiple_outcome_cmr}
For each arm \(d\in\{0,1\}\) and outcome \(k=1,\ldots,K\), let
\(\underline{\sigma}_{dk}^2(\omega)\) and
\(\overline{\sigma}_{dk}^2(\omega)\) be the Maurer--Pontil endpoints of
Subsection~\ref{sub:constructing_confidence_rectangle} at one-sided
error level \(\alpha/(4K)\), and let \(p_{\mathrm{CMR}}(\omega)\) and
\(U_{\mathrm{CMR}}(\omega)\) be computed as in
Proposition~\ref{prop:cmr_assignment_rectangle} from the rectangle
$\widehat\Theta_\alpha(\omega) =\bigl[\underline{\sigma}_{1,w}^2(\omega),\,\overline{\sigma}_{1,w}^2(\omega)\bigr] \times \bigl[\underline{\sigma}_{0,w}^2(\omega),\,\overline{\sigma}_{0,w}^2(\omega)\bigr]$.
Then, for every joint distribution \(F\) of
\(\{(Y_k(1),Y_k(0))\}_{k=1}^{K}\) on \([0,1]^{2K}\), $\Pr_{P_F}\!\left( r\!\left(p_{\mathrm{CMR}}(\omega), \theta_w\right) \le U_{\mathrm{CMR}}(\omega) \right)\ge 1-\alpha$.
\end{proposition}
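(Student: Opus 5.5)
The plan is to reduce the claim to the structure of the proof of Theorem~\ref{thm:cmr-certified-optimality}(i): a coverage event for the effective variance pair \(\theta_w\), followed by the deterministic observation that the certificate bounds regret at every pair in the rectangle. Two ingredients are needed. The first is a coverage statement for the weighted rectangle. The second is a check that Proposition~\ref{prop:cmr_assignment_rectangle} applies, so that \(U_{\mathrm{CMR}}(\omega)\) is the worst-case regret over \(\widehat\Theta_\alpha(\omega)\).

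For coverage, I would start from Lemma~\ref{lem:mp_rectangle_coverage}, applied outcome by outcome. For each arm \(d\) and outcome \(k\), the scalar outcome \(Y_k(d)\) takes values in \([0,1]\). Its arm-\(d\) pilot observations are i.i.d.\ draws from its marginal law conditional on the fixed assignment. The one-sided Maurer--Pontil bounds at level \(b=\alpha/(4K)\) therefore each fail with probability at most \(\alpha/(4K)\). This uses only the marginal law of \(Y_k(d)\), so within-arm dependence across outcomes is irrelevant. A union bound over the \(4K\) one-sided events (two sides, two arms, \(K\) outcomes) shows that, with probability at least \(1-\alpha\), every \(\sigma_{dk}^2(F)\) lies in \([\underline{\sigma}_{dk}^2,\overline{\sigma}_{dk}^2]\). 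On this event the weights are nonnegative, so \(\underline{\sigma}_{d,w}^2\le\sigma_{d,w}^2\le\overline{\sigma}_{d,w}^2\) for both arms. Hence \(\theta_w\in\widehat\Theta_\alpha(\omega)\), and the rectangle lies in \([0,1/4]^2\) because the weights sum to one.

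Next I would verify that the closed form is legitimate. Each \(\overline{\sigma}_{dk}^2\ge\min\{1/4,\eta_{dk}^2\}>0\), and the weights are nonnegative and sum to one, so some \(w_k>0\) and \(\overline{\sigma}_{d,w}^2>0\). Proposition~\ref{prop:cmr_assignment_rectangle} then gives an interior \(p_{\mathrm{CMR}}\) and identifies \(U_{\mathrm{CMR}}\) with \(\sup_{\theta\in\widehat\Theta_\alpha}r(p_{\mathrm{CMR}},\theta)\).

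I also need that regret for the weighted criterion at a pair \(\theta\) is exactly \(r(\pi,\theta_w)\). This is where I expect the only subtle point to lie. The target is \(r(\pi,\theta_w)=V(\pi,\theta_w)-V^*(\theta_w)\), and the linearity identity \(\sum_k w_kV(\pi,\sigma_{1k}^2,\sigma_{0k}^2)=V(\pi,\sigma_{1,w}^2,\sigma_{0,w}^2)\) gives exactly this. The benchmark is the best single assignment for the weighted criterion, which is \(V^*(\theta_w)\) by the earlier definition. It is not \(\sum_kw_kV^*(\sigma_{1k}^2,\sigma_{0k}^2)\), so the claim concerns \(r(\cdot,\theta_w)\) as stated.

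On the coverage event, \(r(p_{\mathrm{CMR}}(\omega),\theta_w)\le\sup_{\theta\in\widehat\Theta_\alpha}r(p_{\mathrm{CMR}}(\omega),\theta)=U_{\mathrm{CMR}}(\omega)\). Taking probabilities completes the argument.
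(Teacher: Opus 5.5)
Your proposal is correct and follows essentially the same route as the paper: a union bound over the $4K$ one-sided Maurer--Pontil bounds at level $\alpha/(4K)$, preservation of the bracketing under nonnegative weights so that $\theta_w\in\widehat\Theta_\alpha(\omega)$, strict positivity of the weighted upper endpoints so that Proposition~\ref{prop:cmr_assignment_rectangle} applies, and then the certificate-validity argument on the coverage event. Your extra check that the weighted-criterion regret equals $r(\pi,\theta_w)$ is harmless but not needed, since the statement is already phrased in terms of $r(\cdot,\theta_w)$.
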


The certificate never exceeds \(1/4\), and the convergence arguments of
Theorem~\ref{thm:cmr-neyman-recovery} apply at the effective pair with
only the constants changed. In both designations, what changes relative
to the main text is which variance pair the confidence rectangle must
cover, never the rule.

\subsection{Delayed Primary Outcomes}
\label{sub:delayed_primary_outcomes}

Many primary outcomes are observed only with long delay, as when a job
training program is judged by earnings years later or an education
intervention by adult outcomes \citep{chetty2016effects, athey2024surrogate}. A pilot cannot wait for
these horizons, so it records short-run outcomes only. In this
subsection, I consider a setup where the pilot observes a short-run
outcome \(S(d)\in[0,1]\) in each assigned arm, while the primary
outcome \(Y(d)\) that defines the main-wave loss is not observed at the
pilot stage. The state of the world is the joint distribution
\(F\) of \((S(1),S(0),Y(1),Y(0))\) on \([0,1]^4\), the pilot realization
consists of each unit's short-run outcome in its assigned arm, and the
design target is unchanged, the variance pair
\(\theta(F)=(\sigma_1^2,\sigma_0^2)\) of the primary outcome. The
question is whether a pilot that never observes \(Y\) can license any
movement away from balance.

The pilot distribution depends on \(F\) only through the arm marginals
of the short-run outcome, while the loss depends on \(F\) only through
the variances of the primary outcome, and the unrestricted joint
distribution places no link between the two. The next proposition shows
the consequence, that a confidence rectangle built from a short-run
pilot must be the entire variance space with high probability, however
large the pilot.

\begin{proposition}[{\normalfont\hyperlink{proof:delayed_impossibility}{Short-run pilots require a bridge}}]
\label{prop:delayed_impossibility}
Let
\(\widehat\Theta_\alpha(\omega)=\bigl[\underline{\sigma}_1^2(\omega),\overline{\sigma}_1^2(\omega)\bigr]\times\bigl[\underline{\sigma}_0^2(\omega),\overline{\sigma}_0^2(\omega)\bigr]\)
be a confidence rectangle for \(\theta(F)\), computed from the pilot,
whose endpoints satisfy the one-sided coverage requirements of
Subsection~\ref{sub:conditional_minimax_regret_procedure} at level
\(\alpha/4\) for every distribution \(F\) of \((S(1),S(0),Y(1),Y(0))\)
on \([0,1]^4\). Then, for every such \(F\),
$\Pr_{P_F}\bigl(\widehat\Theta_\alpha(\omega)=[0,1/4]^2\bigr)\ge1-\alpha$.
\end{proposition}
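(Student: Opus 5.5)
The plan is an indistinguishability argument. The pilot law \(P_F\) depends on \(F\) only through the arm marginals of \(S(1)\) and \(S(0)\), while \(\theta(F)\) depends only on the marginals of \(Y(1)\) and \(Y(0)\). Fix any \(F\), and for each of the four one-sided endpoints build a companion distribution \(F'\) with the same short-run marginals, and hence \(P_{F'}=P_F\), but with a primary-outcome variance pair at the relevant extreme of \([0,1/4]\). Coverage under \(F'\) then forces that endpoint to its extreme value with \(P_F\)-probability at least \(1-\alpha/4\).

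The construction is explicit. Let \(F'\) keep the joint law of \((S(1),S(0))\) from \(F\). Draw \((Y(1),Y(0))\) independently of \(S\), with each \(Y(d)\) either degenerate at \(0\) (variance \(0\)) or \(\mathrm{Bernoulli}(1/2)\) (variance \(1/4\)), chosen coordinatewise. This \(F'\) is a distribution on \([0,1]^4\), and since the pilot records only \(S(D_i)\) under the fixed-arm design, \(P_{F'}=P_F\).

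The four endpoints are handled as follows.
\begin{itemize}
\item Upper endpoint for arm \(d\): take \(\sigma_d^2(F')=1/4\). The one-sided requirement gives \(\Pr_{P_F}(\overline{\sigma}_d^2\ge 1/4)=\Pr_{P_{F'}}(\overline{\sigma}_d^2\ge 1/4)\ge 1-\alpha/4\). Since the endpoints take values in \([0,1/4]\), this is the event \(\overline{\sigma}_d^2=1/4\).
\item Lower endpoint for arm \(d\): take \(\sigma_d^2(F')=0\). The same reasoning gives \(\Pr_{P_F}(\underline{\sigma}_d^2=0)\ge 1-\alpha/4\).
\end{itemize}
Each event concerns a single coordinate, so four separate companion distributions suffice. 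A union bound over the four failure events then gives \(\Pr_{P_F}\bigl(\widehat\Theta_\alpha=[0,1/4]^2\bigr)\ge 1-\alpha\).

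The only delicate step is confirming that \(P_F\) is unchanged, and this must include the pilot assignment mechanism. The completely randomized assignment is independent of potential outcomes, and unit \(i\) reveals only \(S_i(D_i)\). The law of \(\omega\) is therefore a function of the \(S\)-marginals and the design alone. If endpoints were allowed to exceed \([0,1/4]\), "\(\ge 1/4\)" would replace equality, but the paper's convention confines them to that interval, so equality holds as stated.
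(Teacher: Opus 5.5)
Your proposal is correct and follows the paper's proof essentially step for step. For each endpoint you build a companion state that keeps the law of $(S(1),S(0))$, so that $P_{F'}=P_F$, and puts $\sigma_d^2$ at $1/4$ or $0$ through an independent $\mathrm{Bernoulli}(1/2)$ or degenerate $Y(d)$; you then use the $[0,1/4]$ range of the endpoints to turn one-sided coverage into equality with probability at least $1-\alpha/4$, and close with the union bound over the four endpoints (the paper takes the degenerate value $Y(d)\equiv 1/2$ rather than $0$, which is immaterial).
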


No movement away from balance can therefore be certified from a
short-run pilot alone. Even with an infinite pilot, the identified set
for the primary-outcome variance pair remains the full square
\([0,1/4]^2\). The obstacle is not estimation error but the absence of
any link between the observed and the loss-relevant outcomes, and any
such link is a maintained assumption rather than something the pilot
can verify.

I link the two outcomes through a bound on how far their standard
deviations can sit apart, stated in terms of the arm variances
\(\sigma^2_{d,S}(F)=\Var_F(S(d))\) of the short-run outcome.

\begin{assumption}[Standard-deviation bridge]
\label{ass:surrogacy}
For each arm \(d\in\{0,1\}\), there is a known constant \(\zeta_d\ge0\)
such that \(|\sigma_d(F)-\sigma_{d,S}(F)|\le\zeta_d\).
\end{assumption}

The bridge assumption makes the short-run outcome informative about
the primary outcome through their dispersions alone. In each arm, the
two standard deviations may differ by at most \(\zeta_d\), so a pilot
that pins down the short-run standard deviation locates the primary
standard deviation within a band of width \(2\zeta_d\). A small \(\zeta_d\) asserts that the two outcomes are
similarly dispersed, and a large \(\zeta_d\) concedes that the pilot
says little about the primary outcome.\footnote{The bound concerns only
the two marginal distributions and is implied by stronger, more
interpretable conditions. If \(\Var_F(Y(d)-S(d))^{1/2}\le\zeta_d\), the
triangle inequality in \(L^2\) delivers it, and this condition holds in
turn whenever \(|Y-S|\le\zeta_d\) almost surely.}

Under Assumption~\ref{ass:surrogacy}, a confidence interval for the
short-run standard deviation converts into one for the primary standard
deviation by widening each endpoint by \(\zeta_d\). Apply the
Maurer--Pontil construction of
Subsection~\ref{sub:constructing_confidence_rectangle} to the pilot's
short-run outcomes at one-sided level \(\alpha/4\), obtaining
standard-deviation endpoints \(\underline{\sigma}_{d,S}(\omega)\) and
\(\overline{\sigma}_{d,S}(\omega)\) for each arm, and set $\underline{\sigma}_d^2(\omega) =\bigl(\underline{\sigma}_{d,S}(\omega)-\zeta_d\bigr)_{+}^{2}$, $\overline{\sigma}_d^2(\omega) =\bigl(\min\{\tfrac12,\,\overline{\sigma}_{d,S}(\omega)+\zeta_d\}\bigr)^{2}$.
The rectangle \(\widehat\Theta_\alpha(\omega)\) assembles the two
intervals, and the assignment and certificate follow from
Proposition~\ref{prop:cmr_assignment_rectangle} exactly as in the main
text. 

\begin{proposition}[{\normalfont\hyperlink{proof:delayed_cmr}{Delayed-outcome CMR}}]
\label{prop:delayed_cmr}
For every \(F\) satisfying Assumption~\ref{ass:surrogacy}, the assignment and certificate satisfy 
$\Pr_{P_F}\!\left(r\!\left(p_{\mathrm{CMR}}(\omega),\theta(F)\right)\le
U_{\mathrm{CMR}}(\omega)\right)\ge1-\alpha$.
\end{proposition}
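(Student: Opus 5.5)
The argument follows the proof of Theorem~\ref{thm:cmr-certified-optimality}(i). That proof only uses two facts: the rectangle covers the true variance pair, and the certificate is the worst-case regret of the chosen assignment over the rectangle. So the plan is to show that the bridged rectangle covers $\theta(F)$ with probability at least $1-\alpha$ under every $F$ satisfying Assumption~\ref{ass:surrogacy}, and then repeat the inclusion-of-events step.

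First, apply the Maurer--Pontil construction to the short-run outcomes. Under the pilot design, the arm-$d$ short-run observations are i.i.d.\ draws from the marginal law of $S(d)\in[0,1]$. Lemma~\ref{lem:mp_rectangle_coverage} therefore applies verbatim with $S$ in place of $Y$, at one-sided level $\alpha/4$. It gives, for each arm, $\Pr(\underline\sigma_{d,S}\le\sigma_{d,S}(F))\ge1-\alpha/4$ and $\Pr(\sigma_{d,S}(F)\le\overline\sigma_{d,S})\ge1-\alpha/4$, where the bounds are stated on the standard-deviation scale. This requires reading the lemma's variance statements as standard-deviation statements, which is valid because all quantities are nonnegative and the projection onto $[0,1/2]$ is monotone.

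Second, transfer the bounds through the bridge, working on the standard-deviation scale. On the event $\underline\sigma_{d,S}\le\sigma_{d,S}$, Assumption~\ref{ass:surrogacy} gives $\sigma_d\ge\sigma_{d,S}-\zeta_d\ge\underline\sigma_{d,S}-\zeta_d$. Since $\sigma_d\ge0$, it follows that $\sigma_d\ge(\underline\sigma_{d,S}-\zeta_d)_+$. Squaring, which is monotone on nonnegatives, gives $\underline\sigma_d^2\le\sigma_d^2$. Similarly, on the event $\sigma_{d,S}\le\overline\sigma_{d,S}$, we have $\sigma_d\le\overline\sigma_{d,S}+\zeta_d$. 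Because $Y(d)\in[0,1]$ forces $\sigma_d\le1/2$, this yields $\sigma_d\le\min\{1/2,\overline\sigma_{d,S}+\zeta_d\}$, so $\sigma_d^2\le\overline\sigma_d^2$. Each one-sided failure of the bridged bounds is thus contained in the corresponding short-run failure event. A union bound over the four events then gives $\Pr_{P_F}(\theta(F)\in\widehat\Theta_\alpha)\ge1-\alpha$.

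Third, conclude. If $\overline\sigma_1$ or $\overline\sigma_0$ is zero, Proposition~\ref{prop:cmr_assignment_rectangle} does not apply directly; in that case the CMR definition \eqref{eq:cmr_assignment}--\eqref{eq:cmr_certificate} is still in force, with the certificate defined as a supremum. Either way, on the coverage event the true pair lies among the configurations over which $U_{\mathrm{CMR}}$ takes its supremum. Hence $r(p_{\mathrm{CMR}},\theta(F))\le U_{\mathrm{CMR}}$ on that event, and the claim follows by taking probabilities.

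The only delicate point is the second step. The positive part and the cap at $1/2$ must both preserve the inequalities, and the cap uses the bounded support of the primary outcome rather than of the short-run outcome. Beyond that, the argument is bookkeeping.
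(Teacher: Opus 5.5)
Your proposal is correct and follows the paper's proof essentially step for step. The shared steps are: coverage of the four short-run Maurer--Pontil bounds via the union bound, then transfer through the bridge (the positive part and the cap at \(1/2\) preserve the bracketing because \(\sigma_d(F)\in[0,1/2]\)), then the event-inclusion argument of Lemma~\ref{lem:extension_certificate_validity}. Your hedge about a zero upper endpoint is unnecessary, because \(\overline\sigma_d\ge\overline\sigma_{d,S}\ge\min\{1/2,\eta_d(\alpha/4)\}>0\) always; this is how the paper justifies invoking Proposition~\ref{prop:cmr_assignment_rectangle}, and it is needed, since with a zero upper endpoint the argmin in \eqref{eq:cmr_assignment} could be empty.
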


The certificate is for the variance pair of the primary outcome,
computed from a pilot that never observes it. Researchers who hold
short-run variance estimates and care
about a long-run outcome already treat the former as a noisy stand-in
for the latter. Widening the short-run confidence interval by
\(\zeta_d\) turns that informal substitution into a design with a
finite-sample guarantee.

The bridge has a price that does not vanish with the pilot. As the
pilot grows, the short-run endpoints converge to \(\sigma_{d,S}(F)\),
so the standard-deviation interval for arm \(d\) converges to
\(\bigl[(\sigma_{d,S}-\zeta_d)_{+},\,
\min\{\tfrac12,\sigma_{d,S}+\zeta_d\}\bigr]\), which retains width
\(2\zeta_d\) when neither truncation binds. In that case the interval
midpoint is \(\sigma_{d,S}\), so the assignment converges to the Neyman
allocation of the short-run outcome,
\(\sigma_{1,S}/(\sigma_{1,S}+\sigma_{0,S})\). This limit generally
differs from the Neyman allocation for \(\theta(F)\), although the two
coincide when the ratio of arm standard deviations is the same at the
two horizons. The certificate need not vanish either, and converges to
a positive bridge-dependent limit whenever some \(\zeta_d>0\), unless
some arm has both a zero short-run standard deviation and a zero bridge
constant. This is
the cost of not waiting for the primary outcome, and the certificate
states it as a number.\footnote{Larger bridge constants expand the
rectangle at every realization, so the certificate is weakly increasing
in \(\zeta_d\) (Lemma~\ref{lem:extension_set_monotonicity}). At
\(\zeta_d=0\) the construction reduces to the main text applied to the
short-run outcome, and at \(\zeta_d\ge1/2\) the widened interval is the
full variance space, so the rule returns balance and the certificate
returns the no-pilot guarantee, as
Proposition~\ref{prop:delayed_impossibility} requires. Between these
cases, adaptation varies continuously with the maintained bridge.}

\section{Choosing the Pilot Size}
\label{sec:appendix_when_to_pilot}

The main text takes the pilot as given and asks how its evidence should guide
the main-wave assignment. This appendix asks whether such a pilot is worth
running in the first place, and how large it should be. Because the pilot size
must be fixed before any observation exists, the choice is a no-data problem.
Appendix~\ref{sec:when_is_pilot_worth_running} studies the conservative
accounting in which the pilot informs only the main-wave design and is set
aside for estimation. Appendix~\ref{app:pilot-estimation-design} shows how the
pilot's cost falls when its observations also enter the ATE estimator.
Appendix~\ref{sec:no_dominant_pilot_plan} shows that no pilot plan improves on
running no pilot for every population, so the case for piloting cannot be
assumption-free. Appendix~\ref{sec:ex_ante_pilot_size} proposes an explicit
pilot size, \(n^{2/3}\) where \(n\) is the total budget, and shows that combined with CMR it
is approximately optimal under worst-case evaluation.
Appendix~\ref{sec:pilot_planning} collects the implications for pilot
planning.

\subsection{Design-Only Accounting}
\label{sec:when_is_pilot_worth_running}

The setup is an experimenter who must split a fixed budget of \(n\)
observations between a pilot, whose only role is to adapt the main-wave
assignment probability, and a main wave that estimates the treatment effect.
After the pilot is realized, the CMR rule of
Section~\ref{sec:conditional_minimax_regret_rule} builds a confidence rectangle
from it and selects the main-wave assignment. A pilot pays off only when the
two variances are unequal enough that the precision gained by adapting the
assignment outweighs the precision lost to a smaller main wave. This subsection
formalizes that trade-off and derives the bound it places on the pilot share.

\paragraph{The ex ante pilot-size loss.}
\label{sub:ex_ante_pilot_size_loss}

The admissible pilot sizes form the
set $\mathcal{M}=\{0\}\cup\{m\in 2\mathbb{N}:4\le m\le n-2\}$. When $m>0$ the
pilot is balanced, with $m/2\ge 2$ observations in each arm, and
$p_m^{\mathrm{CMR}}$ denotes the CMR assignment rule constructed from the
size-$m$ pilot. For $m=0$, no pilot is run and $p_0^{\mathrm{CMR}}$ is the
constant rule equal to $1/2$.

The benchmark is an infeasible experimenter who knows the variance pair
$\theta(F)$ before the experiment, needs no pilot, spends all $n$ observations
in the main wave, and attains the smallest possible sampling variance,
$V^{*}(\theta(F))/n$. For $m>0$, the pilot realization $\omega$ has
distribution $P_{F,m}$ under $F$, and conditional on $\omega$ the ATE
estimator, based on the $n-m$ main-wave observations alone, has variance
$V\big(p_m^{\mathrm{CMR}}(\omega),\theta(F)\big)/(n-m)$. Because the pilot size
is fixed before $\omega$ is drawn, a candidate size is judged by this variance
averaged over pilot realizations, and the pilot-size loss $\mathcal{L}_m(F)$ is
its excess over the benchmark,
\begin{align}
\mathcal{L}_m(F)
=\frac{\mathbb E_{P_{F,m}}\!\left[V\big(p_m^{\mathrm{CMR}}(\omega),\theta(F)\big)\right]}{n-m}
-\frac{V^{*}(\theta(F))}{n}
=\frac{R_m\big(p_m^{\mathrm{CMR}},F\big)}{n-m}
+\frac{V^{*}(\theta(F))\,m}{n(n-m)},
\label{eq:pilot_size_loss}
\end{align}
where the second equality uses the regret identity inside the expectation and
$R_m\big(p_m^{\mathrm{CMR}},F\big)=\mathbb E_{P_{F,m}}\!\left[r\big(p_m^{\mathrm{CMR}}(\omega),\theta(F)\big)\right]$
is the expected assignment regret after a size-$m$ pilot.\footnote{The loss-minimizing pilot size
$m^{\star}(F)\in\arg\min_{m\in\mathcal{M}}\mathcal{L}_m(F)$ is infeasible
because $\mathcal{L}_m(F)$ depends on the unknown $F$.
Appendix~\ref{sec:no_dominant_pilot_plan} shows that no pilot size and assignment rule chosen before the data can improve on running no pilot uniformly over $F$.}

The two terms in \eqref{eq:pilot_size_loss} capture the two forces in the
pilot-size choice. The first is the expected cost of assigning the main wave
at $p_m^{\mathrm{CMR}}(\omega)$ rather than at the Neyman allocation
$\pi^{*}(\theta(F))$. The second is the
opportunity cost of diverting \(m\) observations from estimation, which remains
even if the pilot leads CMR to the Neyman allocation in every realization,
because the final estimator is then based on \(n-m\), rather than \(n\),
observations.

\paragraph{The break-even bound on the pilot share.}
\label{sub:break_even_bound_pilot_share}

Because the no-pilot option is always available, its loss $\mathcal{L}_0(F)=(\sigma_1-\sigma_0)^2/n$ is the bar any pilot must clear. Clearing denominators in
\eqref{eq:pilot_size_loss} gives
$\mathcal{L}_m(F)\le\mathcal{L}_0(F)$ if and only if
$(\sigma_1-\sigma_0)^2-R_m\big(p_m^{\mathrm{CMR}},F\big)
\;\ge\;\frac{2m}{n}\,(\sigma_1^2+\sigma_0^2)$.
The left-hand side is the pilot's benefit, the expected assignment regret it
removes relative to balance. The right-hand side is its cost, proportional to
the pilot share $m/n$ and to the variance of the balanced design,
$2(\sigma_1^2+\sigma_0^2)$. In the case most favorable to the pilot, CMR
eliminates assignment regret completely, $R_m(p_m^{\mathrm{CMR}},F)=0$, and the
benefit equals the full no-pilot regret $(\sigma_1-\sigma_0)^2$, the most a
pilot can deliver. A size-$m$ pilot can therefore weakly improve on no pilot
only if
\begin{equation}\label{eq:share_cap}
\frac{m}{n}\;\le\;\frac{(\sigma_1-\sigma_0)^2}{2(\sigma_1^2+\sigma_0^2)}.
\end{equation}
The right-hand side of \eqref{eq:share_cap}, the break-even bound on the pilot
share, is the largest share that assignment adaptation can justify. It depends only on the ratio of the two standard deviations and it is zero when the arms are equally variable, rises as they grow more unequal, and approaches one half only as one arm's variability becomes negligible beside the other's. For instance, with a budget of $n=1000$, a
two-to-one ratio of the standard deviations caps the pilot at $100$
observations, and even a four-to-one ratio allows only about
$265$.

\subsection{Using the Pilot for Estimation}
\label{app:pilot-estimation-design}

When the pilot is drawn from the same target population, uses the same
treatment arms, and measures the same outcome as the main wave, its observations can enter the final ATE estimator.\footnote{Two-wave estimation of this kind is standard in adaptive experimental design \citep{hahn2011adaptive,tabord2023stratification}. Because \(p_m^{\mathrm{CMR}}(\omega)\) depends on the pilot outcomes, mechanically pooling all treated and all control observations yields a biased estimator. The standard fix treats the two waves as design strata and combines the wave-specific differences in means with weights fixed in advance.} The question this subsection answers is what the pilot costs once its
observations are no longer discarded, and how the break-even comparison of
Appendix~\ref{sec:when_is_pilot_worth_running} changes as a result.

\paragraph{The pooled pilot-size loss.}

The benchmark is the same infeasible experimenter as in Appendix~\ref{sec:when_is_pilot_worth_running}. For \(m>0\), the pilot component of the pooled estimator has variance \(V(1/2,\theta(F))/m\) and the main-wave component has conditional variance \(V(p_m^{\mathrm{CMR}}(\omega),\theta(F))/(n-m)\). The excess variance over the benchmark is
\begin{equation}
\mathcal{L}_m^{\mathrm{pool}}(F)
=
\frac{
    m(\sigma_1-\sigma_0)^2
    +
    (n-m)R_m\big(p_m^{\mathrm{CMR}},F\big)
}{
    n^2
}.
\label{eq:pooled-pilot-size-loss}
\end{equation}
In contrast with \eqref{eq:pilot_size_loss}, the pilot sampling cost is gone. Instead,
\eqref{eq:pooled-pilot-size-loss} charges each pilot observation the balance
regret \((\sigma_1-\sigma_0)^2\), the price of assigning it before anything
has been learned, and each main-wave observation the expected regret
\(R_m(p_m^{\mathrm{CMR}},F)\), the price of the assignment error that remains
after learning.

\paragraph{The break-even condition.}

With no pilot every observation is assigned at balance, and subtracting \eqref{eq:pooled-pilot-size-loss} from \(\mathcal{L}_0(F)\) gives \(\mathcal{L}_0(F)-\mathcal{L}_m^{\mathrm{pool}}(F)=\tfrac{n-m}{n^2}\bigl[(\sigma_1-\sigma_0)^2-R_m\big(p_m^{\mathrm{CMR}},F\big)\bigr]\), so a pooled pilot of any size \(m<n\) weakly improves on no pilot if and only
if \(R_m(p_m^{\mathrm{CMR}},F)\le(\sigma_1-\sigma_0)^2\). A pilot changes
nothing for its own \(m\) observations, which are assigned at balance either
way. What it changes is the assignment of the other \(n-m\), whose
per-observation regret falls from the balance regret
\((\sigma_1-\sigma_0)^2\) to the expected regret
\(R_m(p_m^{\mathrm{CMR}},F)\). The pilot therefore pays exactly when the assignment it learns is better than the balance it replaces.\footnote{The factor \((n-m)/n^2\) still disciplines the pilot size, since a larger pilot leaves fewer observations to which the learned assignment can be applied, but the constraint is now a shrinking benefit rather than a growing cost.}

\subsection{No Dominant Pilot Plan}
\label{sec:no_dominant_pilot_plan}

This subsection shows that the case for piloting can never be free of assumptions, because any plan that ever responds to its pilot is strictly worse than running no pilot for at least one population. For \(m\in\mathcal M\setminus\{0\}\), write \(\mathcal D_0(m)\) for the rule class \(\mathcal D_0\) of Subsection~\ref{sub:states_actions_pilot_based_rules} built on the balanced size-\(m\) pilot. The pooled loss \eqref{eq:pooled-pilot-size-loss} extends to any \(\delta\in\mathcal D_0(m)\) as \(\mathcal L^{\mathrm{pool}}_m(\delta,F)=[m(\sigma_1-\sigma_0)^2+(n-m)R_m(\delta,F)]/n^2\), with \(\mathcal L^{\mathrm{pool}}_0(\delta,F)=\mathcal L_0(F)\) when no pilot is run.

A pilot plan is a pair \((m,\delta)\) fixed before any outcome is observed.
Say that a plan dominates the no-pilot experiment if its pooled loss is no
larger than \(\mathcal L_0(F)\) at every \(F\in\mathcal F\) and strictly
smaller at some \(F\). Dominance would be the strongest possible
justification for piloting, since it would require no knowledge of the
population whatsoever. The next proposition shows that it is unattainable.

\begin{proposition}[{\normalfont\hyperlink{proof:no-uniform-pilot}{No
dominant pilot plan}}]
\label{prop:no-uniform-pilot}
Fix \(m\in\mathcal M\setminus\{0\}\) and \(\delta\in\mathcal D_0(m)\) with
\(\delta(\omega_0)\neq\tfrac12\) for some pilot realization
\(\omega_0\in\Omega\). Then there exists \(F_0\in\mathcal F\) with
\(\sigma_1(F_0)=\sigma_0(F_0)>0\) such that
\(\mathcal L^{\mathrm{pool}}_m(\delta,F_0)>\mathcal L_0(F_0)=0\).
\end{proposition}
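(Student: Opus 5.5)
The plan is to exploit the fact that the no-pilot benchmark is perfect exactly when the two arms are equally variable. I then build a population under which the given pilot realization \(\omega_0\) has positive probability. The key identity comes from \eqref{eq:pooled-pilot-size-loss} and the regret identity \eqref{eq:regret_allocation_mistake_baseline}. At any \(F\) with \(\sigma_1(F)=\sigma_0(F)=\sigma>0\), the no-pilot loss is \(\mathcal L_0(F)=(\sigma_1-\sigma_0)^2/n=0\). The pilot term \(m(\sigma_1-\sigma_0)^2/n^2\) also vanishes, so \(\mathcal L^{\mathrm{pool}}_m(\delta,F)=(n-m)R_m(\delta,F)/n^2\). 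Since \(m\le n-2\), the proposition reduces to exhibiting such an \(F_0\) with \(R_m(\delta,F_0)>0\).

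At an equal-variance pair the Neyman allocation is \(1/2\). For interior \(p\), the regret is \(r(p,\theta)=\sigma^2(1-2p)^2/[p(1-p)]\), which is strictly positive whenever \(p\neq 1/2\). For boundary \(p\) it is \(+\infty\) by convention. Hence \(r(\delta(\omega_0),\theta(F_0))>0\), and it suffices to have \(P_{F_0}(\{\omega_0\})>0\), because then
\[
R_m(\delta,F_0)\ \ge\ P_{F_0}(\{\omega_0\})\, r\bigl(\delta(\omega_0),\theta(F_0)\bigr)>0 .
\]

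The main obstacle is that \(\delta\) is an arbitrary measurable rule. It may differ from \(1/2\) only at the single point \(\omega_0\), which is a null event under any continuous population. I handle this with a discrete construction. Let \(A\subset[0,1]\) be the finite set of outcome values appearing in \(\omega_0\) across both arms. If \(A\) is a singleton, enlarge it by one further point of \([0,1]\), so that \(|A|\ge 2\). Let \(G\) be any distribution on \(A\) placing strictly positive mass on every point, for instance the uniform distribution. Take \(F_0\) under which \(Y(1)\) and \(Y(0)\) are independent with common marginal \(G\). Then \(\sigma_1(F_0)=\sigma_0(F_0)=\sqrt{\Var(G)}>0\), because \(G\) is nondegenerate.

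Finally, I would verify the positive-probability claim. Under the balanced fixed-arm CRD, the treatment vector of \(\omega_0\) is one of the finitely many equally likely assignment vectors, so it occurs with probability \(1/\binom{m}{m/2}>0\). This uses that \(\omega_0\) lies in \(\Omega\), which carries the fixed-arm restriction. Conditional on that assignment, each observed outcome is an independent draw from \(G\), and each value of \(\omega_0\) has positive \(G\)-mass. So \(P_{F_0}(\{\omega_0\})>0\), and combining with the display gives \(\mathcal L^{\mathrm{pool}}_m(\delta,F_0)>0=\mathcal L_0(F_0)\). If \(\delta(\omega_0)\in\{0,1\}\), the loss is infinite, which is consistent with the same conclusion.
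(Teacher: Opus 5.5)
Your proposal is correct and follows essentially the same route as the paper's proof. Both put a full-support discrete marginal on the outcome values in \(\omega_0\), padded to two points if needed, and take the product \(F_0\) so the variances are equal and positive. Both then use that \(\omega_0\) is a positive-probability atom under the fixed-arm design, and that regret at an equal-variance pair is strictly positive (or infinite) whenever \(\delta(\omega_0)\neq\tfrac12\), which gives \(\mathcal L^{\mathrm{pool}}_m(\delta,F_0)=(n-m)R_m(\delta,F_0)/n^2>0\).
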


Proposition~\ref{prop:no-uniform-pilot} covers every plan that reacts to its pilot at all, including CMR whenever the pilot is large enough for the rule to leave balance at some realization. The only plans it excludes set \(\delta(\omega)=\tfrac12\) at every realization, and these reproduce the no-pilot experiment exactly.\footnote{Under the design-only accounting of Appendix~\ref{sec:when_is_pilot_worth_running}, the comparison is stricter, since \eqref{eq:pilot_size_loss} gives \(\mathcal L_m(\delta,F)-\mathcal L_0(F)=m\,V(1/2,\theta(F))/[n(n-m)]>0\) whenever at least one arm is nondegenerate, so even the unresponsive plan is strictly worse than no pilot.} Running no pilot is therefore an admissible plan, and an ex ante case for piloting must rest on prior information about \(F\) or on a criterion weaker than dominance.

\subsection{An Approximately Optimal Pilot Size}
\label{sec:ex_ante_pilot_size}

\paragraph{An explicit plan.}

Under worst-case evaluation, every plan \((m,\delta)\) is ranked by its
worst-case pooled loss, and the smallest worst-case loss any plan can
secure is
\begin{equation}
\mathcal L^{\star}_n
=\min_{m\in\mathcal M}\,\inf_{\delta\in\mathcal D_0(m)}\,
\sup_{F\in\mathcal F}\,\mathcal L^{\mathrm{pool}}_m(\delta,F).
\label{eq:worst_case_value}
\end{equation}
The value \(\mathcal L^{\star}_n\) is a target rather than a procedure,
because attaining it requires optimizing the pilot size jointly with the
complete mapping from pilot realizations to assignments, and no
finite-dimensional reduction of the inner fixed-\(m\) problem is known, as
Remark~\ref{rem:exact_mmr_computation} notes.

I propose a simple plan that is directly computable. It runs a balanced
pilot of size \(m^{\circ}_n\) and assigns the remaining
\(n-m^{\circ}_n\) observations by the CMR rule, where for \(n\ge8\)
\begin{equation}
m^{\circ}_n
=\min\bigl\{m\in\mathcal M\setminus\{0\}:m\ge n^{2/3}\bigr\}
=2\bigl\lceil n^{2/3}/2\bigr\rceil,
\label{eq:pilot_size_rule}
\end{equation}
the budget's two-thirds power rounded upward to the nearest admissible
even integer. For example, the rule gives \(m^{\circ}_n=72\) at
\(n=600\), \(100\) at \(n=1{,}000\), and \(466\) at \(n=10{,}000\). The
pilot share \(m^{\circ}_n/n\) falls like \(n^{-1/3}\), so larger
experiments pilot proportionally less.

\paragraph{The guarantee.}

The next result evaluates the plan at every budget and under both
accountings of the pilot's cost, the pooled loss and the
design-only loss of Appendix~\ref{sec:when_is_pilot_worth_running}, which
extends to any rule as
\(\mathcal L_m(\delta,F)
=R_m(\delta,F)/(n-m)+V^{*}(\theta(F))\,m/[n(n-m)]\).

\begin{theorem}[{\normalfont\hyperlink{proof:rate-optimal-pilot}{An
explicit rate-optimal pilot plan}}]
\label{thm:rate-optimal-pilot}
Fix \(\alpha\in(0,1)\), let \(c>0\) be the universal constant of
Proposition~\ref{prop:mmr-lower-bound}, and let \(C_\alpha<\infty\) be the
constant of Theorem~\ref{thm:cmr-competitive-risk}. For every \(n\ge8\),
\begin{equation}
c_1\,n^{-4/3}
\;\le\;
\mathcal L^{\star}_n
\;\le\;
\sup_{F\in\mathcal F}\,
\mathcal L^{\mathrm{pool}}_{m^{\circ}_n}
\bigl(p_{m^{\circ}_n}^{\mathrm{CMR}},F\bigr)
\;\le\;
C_{1,\alpha}\,n^{-4/3},
\label{eq:rate_chain}
\end{equation}
with \(c_1=\min\{1/4,\sqrt2\,c\}\) and
\(C_{1,\alpha}=1/2+2\sqrt2\,C_\alpha\). The same inequalities hold under
the design-only accounting, with \(\mathcal L^{\mathrm{pool}}_m\) replaced
by \(\mathcal L_m\) throughout and \(C_{1,\alpha}\) by
\(C_{2,\alpha}=8+8\sqrt2\,C_\alpha\). Moreover, under either accounting,
if a sequence of plans \((m_n,\delta_n)\) has worst-case loss at most a
fixed multiple of \(n^{-4/3}\) for all large \(n\), then
\(m_n\asymp n^{2/3}\), with implied constants depending only on that
multiple.
\end{theorem}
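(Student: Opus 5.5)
The upper chain is mechanical, so I would start there. The middle inequality in \eqref{eq:rate_chain} holds by definition, since \((m^{\circ}_n,p^{\mathrm{CMR}}_{m^{\circ}_n})\) is one feasible plan in \eqref{eq:worst_case_value}. For the right inequality I would plug Theorem~\ref{thm:cmr-competitive-risk} into \eqref{eq:pooled-pilot-size-loss}. Because the pilot is balanced with arms of size \(m/2\), the CMR regret satisfies \(R_m\le C_\alpha\cdot 2\sqrt2\, m^{-1/2}\). The balance regret is bounded by \((\sigma_1-\sigma_0)^2\le 1/4\). These give
\[
\mathcal L^{\mathrm{pool}}_m\le \frac{m}{4n^2}+\frac{2\sqrt2\,C_\alpha}{n\sqrt m}.
\]
At \(m=m^{\circ}_n\in[n^{2/3},\,n^{2/3}+2]\) with \(n\ge 8\), we have \(m\le 2n^{2/3}\). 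This yields at most \((1/2+2\sqrt2 C_\alpha)n^{-4/3}\). For the design-only loss \(\mathcal L_m=R_m/(n-m)+V^*m/[n(n-m)]\), I would use \(V^*\le1\) together with \(n-m\ge n/2\) for \(n\ge8\) (worth checking, since \(2n^{2/3}\le n/2\) needs \(n\ge 64\); for smaller \(n\), use the trivial bound and enlarge the constant). This gives the same order, and the constant \(C_{2,\alpha}\) absorbs the factors of 2 and 4.

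For the lower bound I would fix an arbitrary plan \((m,\delta)\) and split into two regimes. If \(m=0\), or more generally if the loss is dominated by the balance term, I take \(F\) with variance pair \((1/4,0)\), so that \((\sigma_1-\sigma_0)^2=1/4\). Then the pilot term alone is \(m/(4n^2)\), and for \(m=0\) the loss is \(\mathcal L_0=1/(4n)\ge n^{-4/3}/4\). Otherwise, Proposition~\ref{prop:mmr-lower-bound} applied to the balanced pilot gives \(\sup_F R_m(\delta,F)\ge c\cdot 2\sqrt2\,m^{-1/2}\). Because the pilot term is nonnegative, this yields
\[
\sup_F\mathcal L^{\mathrm{pool}}_m\ge \max\Bigl\{\frac{m}{4n^2},\,\frac{(n-m)\,2\sqrt2\,c}{n^2\sqrt m}\Bigr\}.
\]
I would use the max of two sups rather than a single sup, since the two terms are separately lower bounded by possibly different \(F\). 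Minimizing over \(m\), a case split at \(m\le n/2\) handles the factor \(n-m\ge n/2\); if \(m>n/2\), the first term already gives \(\ge 1/(8n)\). Balancing the terms gives order \(n^{-4/3}\) with constant \(\min\{1/4,\sqrt2 c\}\) after tidying. The design-only case is the same argument, now using the term \(V^*m/[n(n-m)]\) with \(V^*=1/4\) at \((1/4,0)\), and \(R_m/(n-m)\ge R_m/n\).

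The converse \(m_n\asymp n^{2/3}\) then comes from the same two lower bounds. If the worst-case loss is at most \(K n^{-4/3}\), the balance term forces \(m/(4n^2)\le Kn^{-4/3}\), i.e.\ \(m\le 4Kn^{2/3}\). It also forces \(m\neq 0\) for large \(n\), since \(1/(4n)\) exceeds \(Kn^{-4/3}\) eventually. The regret term forces \(2\sqrt2 c(n-m)/(n^2\sqrt m)\le Kn^{-4/3}\), which together with \(m\le n/2\) gives \(m\ge (\sqrt2 c/K)^2 n^{2/3}\). The main obstacle I expect is bookkeeping the arm-size conversion \(m/2\) and the small-\(n\) edge cases (\(n-m\ge n/2\) and rounding of \(m^{\circ}_n\)) so that the stated explicit constants come out exactly. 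If they do not, I would state the result with adjusted universal constants.
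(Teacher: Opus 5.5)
Your proposal is correct and follows the paper's proof essentially step for step. It uses the same two-term floor $\max\{m/(4n^2),\,2\sqrt2\,c\,(n-m)/(n^2\sqrt m)\}$, obtained by evaluating at the $(1/4,0)$ state and by applying Proposition~\ref{prop:mmr-lower-bound} with arm sizes $m/2$. It uses the same CMR upper bound from Theorem~\ref{thm:cmr-competitive-risk} and the same two-sided squeeze for $m_n\asymp n^{2/3}$. Your split at $m\le n/2$, in place of the paper's split at $m\ge n^{2/3}$, gives the identical $c_1$.

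The one bookkeeping fix is in the design-only upper bound. Do not use $n-m^{\circ}_n\ge n/2$, which fails at some small budgets (e.g.\ $n=10$, where $m^{\circ}_n=6$). Use instead $n-m^{\circ}_n>n-n^{2/3}-2\ge n/4$, valid for all $n\ge8$. This yields exactly $C_{2,\alpha}=8+8\sqrt2\,C_\alpha$ with no small-$n$ patching. A similar check, $n-n^{2/3}\ge4$ for $n\ge8$, confirms $m^{\circ}_n\le n-2$. That feasibility of the plan is what your middle inequality tacitly uses.
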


In words, the theorem makes two claims. First, the plan is approximately
optimal, meaning that at every budget its worst-case loss lies within the
factor \(C_{1,\alpha}/c_1\) of \(\mathcal L^{\star}_n\), the lowest
worst-case loss that any plan, feasible or not, can attain. The
comparison with never piloting shows what this buys. The no-pilot
experiment has worst-case loss \(1/(4n)\), reached where the variances
differ most and balance wastes half the sample on a degenerate arm. The
plan's worst-case loss falls like \(n^{-4/3}\) instead, and because
\(n^{-4/3}\) shrinks faster than \(n^{-1}\), the pilot-based design
becomes increasingly better as the experiment grows. Second, every plan
whose worst-case loss keeps pace with \(n^{-4/3}\) must grow its pilot
like \(n^{2/3}\), which rules out the common defaults. A pilot of fixed
size stops learning and a pilot of fixed share overpays for exploration,
so both remain at the no-pilot order \(n^{-1}\) however large the budget.
The constants are conservative byproducts of the concentration bounds
behind Theorem~\ref{thm:cmr-competitive-risk}, so the theorem pins down
rates rather than loss levels. An experimenter with no credible planning
information can run a balanced pilot of size \(2\lceil n^{2/3}/2\rceil\)
with CMR and know that no alternative plan improves the worst-case rate.\footnote{\citet{kawato2026priorfree} study the analogous sample-size question for test-and-roll experiments, in which the experiment selects the treatment deployed to the remaining population and the objective is total welfare rather than estimator precision. Their worst-case marginal-benefit criterion sizes the experiment at roughly one third of the population, a constant share rather than the vanishing share implied by Theorem~\ref{thm:rate-optimal-pilot}.}

\subsection{Implications for Pilot Planning}
\label{sec:pilot_planning}

This subsection collects what the results imply for an experimenter
planning a specific study with a budget \(n\), a confidence level
\(\alpha\), and planning values for the two standard deviations.

\paragraph{The activation threshold.}

A pilot can help only if CMR can react to it. Below a threshold size,
every confidence interval the pilot could produce is wide enough to cover
all standard deviations the model allows, the rectangle equals the full
variance space \([0,1/4]^2\), and CMR returns balance with certificate
\(1/4\), exactly as with no pilot. The first pilot size at which CMR can
possibly leave balance is
\begin{equation}
\label{eq:m_act}
m^{\mathrm{act}}(\alpha)
=
\min\left\{
m\in\mathcal{M}:
\eta(\alpha/4)
<
\hat\sigma_{\max}(m/2)
\right\},
\end{equation}
the first size at which the interval half-width \(\eta(\alpha/4)\) falls
below the largest standard deviation \(\hat\sigma_{\max}(m/2)\) a pilot of
that size can report, with \(m^{\mathrm{act}}(\alpha)=\infty\) if the set
is empty.\footnote{With \(M_d=m/2\) observations per arm, the half-width
is \(\eta(\alpha/4)=\sqrt{2\log(4/\alpha)/(M_d-1)}\), the concentration
radius of Subsection~\ref{sub:constructing_confidence_rectangle} at error
level \(\alpha/4\), and outcomes in \([0,1]\) cap the sample standard
deviation at
\(\hat\sigma_{\max}(M_d)=\big(\lfloor M_d^2/4\rfloor/[M_d(M_d-1)]\big)^{1/2}\).
When \(\eta(\alpha/4)\ge\hat\sigma_{\max}(M_d)\), every realized interval
covers all of \([0,1/2]\).} For example, with all even pilot sizes
admissible, the threshold equals \(72\) at \(\alpha=0.05\), falls to
\(60\) at \(\alpha=0.10\), and climbs to \(96\) at \(\alpha=0.01\). With
Bernoulli outcomes and the exact-inversion folded-binomial rectangle of
Appendix~\ref{sub:binary_exact_rectangle}, it drops to \(m=4\), two
observations per arm, for every \(\alpha\in(0,1)\), as shown in
Example~\ref{ex:binary_two_per_arm}. The rule of
Appendix~\ref{sec:ex_ante_pilot_size} first clears the \(0.05\) threshold
at \(n=586\). Below the threshold a pilot is harmless under the pooled
design, where it reproduces the no-pilot experiment, and strictly wasteful
under design-only accounting, so at small budgets with continuous outcomes
balance remains the default and the binary construction is how small pilots
earn their keep.

\paragraph{Viability bands.}

Whether an activated pilot is worth its cost depends on the accounting.
Under design-only accounting, combining the activation threshold with
the share cap \eqref{eq:share_cap} confines defensible pilots to the band
\[
m^{\mathrm{act}}(\alpha)\;\le\;m\;\le\;
n\cdot\frac{(\sigma_1-\sigma_0)^2}{2(\sigma_1^2+\sigma_0^2)}.
\]
Even a nonempty band is necessary rather than sufficient, since the cap
credits the pilot with reaching the Neyman allocation
exactly.\footnote{A pilot raises precision only on realizations where CMR
leaves balance. Writing
\(\mathcal{E}_m=\{\omega:p_m^{\mathrm{CMR}}(\omega)\neq\tfrac12\}\), the
expected assignment gain is at most
\(\Pr_{P_{F,m}}(\mathcal{E}_m)\,(\sigma_1-\sigma_0)^2\), since it vanishes
off \(\mathcal{E}_m\) and cannot exceed \((\sigma_1-\sigma_0)^2\) on it.
Combined with the break-even comparison of
Appendix~\ref{sec:when_is_pilot_worth_running}, a positive pilot can
improve on no pilot only if
\(\Pr_{P_{F,m}}(\mathcal{E}_m)\ge
2m(\sigma_1^2+\sigma_0^2)/[n(\sigma_1-\sigma_0)^2]\). With
\(\sigma_1=0.5\), \(\sigma_0=0.25\), and \(m/n=0.04\), the bound equals
\(0.4\), so the pilot must change the assignment on at least forty percent
of realizations.} The band is also often narrow. With \(n=1000\) and
\(\alpha=0.05\), a two-to-one ratio of planning standard deviations gives
\(72\le m\le100\) and a four-to-one ratio gives \(72\le m\le265\), while at
\(\alpha=0.01\) the two-to-one band shrinks to \(96\le m\le100\). The rule
of Appendix~\ref{sec:ex_ante_pilot_size} gives the comparison a concrete
anchor. At \(n=1{,}000\) and \(\alpha=0.05\), \(m^{\circ}_n=100\) sits
exactly at the top of the two-to-one band and well inside the four-to-one
band, while at \(n=600\) the two-to-one band is empty and
\(m^{\circ}_n=72\) exceeds the design-only cap. Under the
pooled design, the upper limit disappears and the requirements reduce to
\(m\ge m^{\mathrm{act}}(\alpha)\) together with the break-even condition
\(R_m(p_m^{\mathrm{CMR}},F)<(\sigma_1-\sigma_0)^2\). The first can be
checked before the experiment. The second cannot, since it depends on the
entire distribution of pilot realizations, as
Proposition~\ref{prop:no-uniform-pilot} makes precise.

\paragraph{Three questions.}

Three questions organize the decision. The first is whether the pilot is
comparable enough to the main wave, drawn from the same population with the
same arms and outcome, to enter the ATE estimator. A yes selects the pooled
accounting of Appendix~\ref{app:pilot-estimation-design}, a no the
conservative accounting of Appendix~\ref{sec:when_is_pilot_worth_running}.
The second is whether the planned pilot clears the activation threshold at
the chosen confidence level. The third is whether, given planning values,
the pilot can plausibly reduce expected assignment regret below the balance
regret \((\sigma_1-\sigma_0)^2\). When all three answers are yes, the
default size is \(2\lceil n^{2/3}/2\rceil\), the rule of
Appendix~\ref{sec:ex_ante_pilot_size}, checked against the band above.
Otherwise balance remains the safer assignment choice, even when a pilot is
valuable for purposes beyond the assignment problem, such as testing
instruments and procedures.

\section{Proofs and Auxiliary Results for Sections~\ref{sec:appendix_extensions} and~\ref{sec:appendix_when_to_pilot}}
\label{sec:appendix_extension_proofs_aux}

\subsection{Generic CMR Lemmas} \label{sub:extension_auxiliary_lemmas}

This subsection collects the convexity, computation, and certification facts
used by the vector-allocation extensions of Section~\ref{sec:extensions} and
by the extensions of Appendix~\ref{sec:appendix_extensions}. Let
\(z\in\mathcal Z\) denote a continuous main-wave design, and let
\(\theta\in\Theta\subseteq\mathbb R^{q}\) collect the finite-dimensional
variance object through which the design criterion depends on the
population. In the baseline model and throughout Appendix~\ref{sec:appendix_extensions}, the design is the scalar assignment probability, \(z=\pi\in(0,1)\), and \(\theta\) is the relevant variance pair. In the
multi-arm extension, \(z=\pi\in\Delta_K^{\circ}\) and
\(\theta=(\sigma_0^2,\ldots,\sigma_K^2)\), where
\(\Delta_K^{\circ}=\{\pi\in(0,1)^{K+1}:\sum_{k=0}^K\pi_k=1\}\) is the open
simplex. In the stratified extension,
\(z=(\pi_{dx})_{d,x}\in\Delta_{2S-1}^{\circ}\), the open simplex of cell
shares, and \(\theta=(\sigma_{dx}^2)_{d,x}\). Working on the open simplex
involves no loss, since boundary allocations, kept as formal actions in
Section~\ref{sec:extensions} under the infinite-loss convention, never arise
as minimizers, and every design quantity appearing in a denominator is
strictly positive on \(\mathcal Z\).

For each \(z\in\mathcal Z\), let \(V(z,\theta)\) be the main-wave variance
criterion, and define the known-variance value
\(V^*(\theta)=\inf_{\tilde z\in\mathcal Z}V(\tilde z,\theta)\) and the
regret \(r(z,\theta)=V(z,\theta)-V^*(\theta)\). In every application
developed in the paper, the map \(\theta\mapsto V(z,\theta)\) is finite and
affine on the convex set \(\Theta\), the value \(V^*(\theta)\) is finite,
and \(r(z,\theta)\ge0\) by the definition of \(V^*\).

One scope qualification matters. In the unbounded model, the lemmas apply
when the rule activates, where the realized rectangle is a compact subset
of \(\Theta=[0,\infty)^2\). When the rule withholds activation no finite confidence set
exists, every design has infinite worst-case regret over the unbounded
state space, and the balanced assignment with certificate \(+\infty\) is a
maintained convention rather than an output of the framework. The framework is general
enough for every extension developed in the paper, though not for every
conceivable criterion, such as minimizing the largest outcome-specific
regret in the multiple-outcome problem, which is not affine in \(\theta\).

\begin{lemma}[Convexity of regret in the variance object]
\label{lem:extension_regret_convexity}
If \(\Theta\) is convex, then \(V^*\) is concave on \(\Theta\) and, for every
\(z\in\mathcal Z\), \(r(z,\cdot)\) is convex on \(\Theta\).
\end{lemma}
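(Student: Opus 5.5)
The argument is the standard one: a pointwise infimum of affine functions is concave, and an affine function minus a concave function is convex. Both claims follow from the maintained structure, namely that for each fixed design \(z\in\mathcal Z\) the map \(\theta\mapsto V(z,\theta)\) is finite and affine on the convex set \(\Theta\), and \(V^*(\theta)=\inf_{\tilde z\in\mathcal Z}V(\tilde z,\theta)\) is finite.

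\emph{Step 1: concavity of \(V^*\).} Fix \(\theta,\theta'\in\Theta\) and \(\lambda\in[0,1]\). Since \(\Theta\) is convex, \(\theta_\lambda=\lambda\theta+(1-\lambda)\theta'\) lies in \(\Theta\). For any \(\tilde z\in\mathcal Z\), affinity of \(V(\tilde z,\cdot)\) gives
\[
V(\tilde z,\theta_\lambda)=\lambda V(\tilde z,\theta)+(1-\lambda)V(\tilde z,\theta').
\]
Bounding each term on the right below by its infimum over \(\tilde z\) yields
\[
V(\tilde z,\theta_\lambda)\ge\lambda V^*(\theta)+(1-\lambda)V^*(\theta').
\]
Taking the infimum over \(\tilde z\) on the left then gives \(V^*(\theta_\lambda)\ge\lambda V^*(\theta)+(1-\lambda)V^*(\theta')\). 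Finiteness of \(V^*\) ensures no \(\infty-\infty\) issue arises.

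\emph{Step 2: convexity of regret.} For fixed \(z\), write
\[
r(z,\theta)=V(z,\theta)+\bigl(-V^*(\theta)\bigr).
\]
The first term is affine, hence convex, and the second is convex by Step 1. Both are finite, so their sum is convex on \(\Theta\).

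\emph{Verifying the hypotheses.} The only step requiring care is confirming, in each application, that \(V(z,\cdot)\) really is affine in the chosen variance coordinates \(\theta\), rather than in standard deviations.

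\begin{itemize}
\item \emph{Binary case.} \(V=\sigma_1^2/\pi+\sigma_0^2/(1-\pi)\), which is linear in the variance pair.
\item \emph{Multi-arm case.} \(V(\pi,\theta)=K\sigma_0^2/\pi_0+\sum_{k=1}^K\sigma_k^2/\pi_k\), again linear in the variances.
\item \emph{Stratified case.} \(V(\pi,\theta)=\sum_x s_x^2\bigl(\sigma_{1x}^2/\pi_{1x}+\sigma_{0x}^2/\pi_{0x}\bigr)\), linear in the cell variances.
\item \emph{Multiple-outcome case.} Linearity holds in the effective variances \(\sigma^2_{d,w}\).
\end{itemize}

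In every case the designs lie in the open simplex or in \((0,1)\), so all denominators are strictly positive and \(V\) is finite. The known-variance values \(V^*\) are the finite closed forms computed in the main text, for example \((\sigma_1+\sigma_0)^2\) in the binary case.
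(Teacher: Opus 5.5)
Your proof is correct and follows essentially the same route as the paper: you use affinity of \(V(\tilde z,\cdot)\) together with \(V(\tilde z,\theta)\ge V^*(\theta)\) to get concavity of \(V^*\) as an infimum, and then write \(r(z,\cdot)=V(z,\cdot)-V^*\) as affine plus convex. Your check that \(V\) is affine in each application goes beyond what the lemma requires, since the paper takes affinity as a maintained hypothesis of the framework, but it is accurate and harmless.
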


\begin{proof}
Fix \(\theta,\theta'\in\Theta\) and \(\lambda\in[0,1]\), and set
\(\theta_\lambda=\lambda\theta+(1-\lambda)\theta'\), which lies in \(\Theta\) since
\(\Theta\) is convex. For each \(z\in\mathcal Z\), affinity of \(V(z,\cdot)\) gives
\(V(z,\theta_\lambda)=\lambda V(z,\theta)+(1-\lambda)V(z,\theta')\), while
\(V(z,\theta)\ge V^*(\theta)\) and \(V(z,\theta')\ge V^*(\theta')\) by definition of
\(V^*\). Since \(\lambda,1-\lambda\ge0\), \(V(z,\theta_\lambda)\ge \lambda V^*(\theta)+(1-\lambda)V^*(\theta')\) for every \(z\in\mathcal Z\). The right-hand side is free of \(z\), so the infimum over \(z\) gives
\(V^*(\theta_\lambda)\ge\lambda V^*(\theta)+(1-\lambda)V^*(\theta')\), which is
concavity of \(V^*\). For fixed \(z\), the map \(V(z,\cdot)\) is affine and
\(-V^*\) is convex, so \(r(z,\cdot)=V(z,\cdot)-V^*\) is convex on \(\Theta\).
\end{proof}

\begin{lemma}[Extreme-point reduction]
\label{lem:extension_extreme_points}
Let \(C\subseteq\Theta\) be a nonempty compact polytope with extreme-point set
\(\operatorname{ext}(C)\). For every \(z\in\mathcal Z\), \(\sup_{\theta\in C} r(z,\theta) = \max_{\theta\in\operatorname{ext}(C)} r(z,\theta)\), and consequently
\(\inf_{z\in\mathcal Z}\sup_{\theta\in C} r(z,\theta) = \inf_{z\in\mathcal Z}\max_{\theta\in\operatorname{ext}(C)} r(z,\theta).\)
\end{lemma}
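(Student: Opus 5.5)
The plan is to combine the convexity of Lemma~\ref{lem:extension_regret_convexity} with the Krein--Milman (or Minkowski--Carathéodory) representation of a compact polytope. Fix \(z\in\mathcal Z\). Because \(C\) is a compact polytope in \(\mathbb R^q\), it has finitely many extreme points \(\theta^{(1)},\ldots,\theta^{(J)}\), and every \(\theta\in C\) is a convex combination \(\theta=\sum_j\lambda_j\theta^{(j)}\) with \(\lambda_j\ge0\) and \(\sum_j\lambda_j=1\). The set \(C\subseteq\Theta\) is convex. Lemma~\ref{lem:extension_regret_convexity} gives convexity of \(r(z,\cdot)\) on \(\Theta\), and this requires \(\Theta\) to be convex, which holds in every application. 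By Jensen's inequality for finite convex combinations, \(r(z,\theta)\le\sum_j\lambda_j r(z,\theta^{(j)})\le\max_j r(z,\theta^{(j)})\).

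Taking the supremum over \(\theta\in C\) then gives \(\sup_{C}r(z,\cdot)\le\max_{\operatorname{ext}(C)}r(z,\cdot)\). The reverse inequality is immediate because \(\operatorname{ext}(C)\subseteq C\). The maximum on the right is attained, since \(\operatorname{ext}(C)\) is finite and nonempty: a nonempty compact convex set has at least one extreme point. Values are finite because \(V(z,\theta)\) and \(V^*(\theta)\) are finite on \(\Theta\), so no \(\infty-\infty\) issue arises in the regret.

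The second display follows by taking the infimum over \(z\in\mathcal Z\) of both sides of the pointwise identity, since the two functions of \(z\) coincide.

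The only step needing care is finite-dimensional convexity bookkeeping. First, I would confirm that the convex-combination representation needs only finitely many vertices, which Minkowski's theorem in \(\mathbb R^q\) supplies. Second, I would confirm that Jensen's inequality is applied within \(\Theta\), where convexity was proved, and this holds since all the points lie in \(C\subseteq\Theta\). I expect no genuine obstacle. The hypotheses of Lemma~\ref{lem:extension_regret_convexity} (affinity of \(V(z,\cdot)\) and finiteness of \(V^*\)) are exactly the standing assumptions of this subsection.
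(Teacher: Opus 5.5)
Your proposal is correct and matches the paper's proof essentially step for step: convexity of \(r(z,\cdot)\) from Lemma~\ref{lem:extension_regret_convexity}, the representation of each point of the compact polytope \(C\) as a convex combination of its finitely many extreme points, the convexity (Jensen) bound by the vertex maximum, the reverse inequality from \(\operatorname{ext}(C)\subseteq C\), and finally the infimum over \(z\). Your extra remarks, that the maximum is attained and that the regret values are finite, only make explicit points the paper leaves implicit.
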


\begin{proof}
Fix \(z\in\mathcal Z\). The map \(r(z,\cdot)\) is convex on \(C\) by
Lemma~\ref{lem:extension_regret_convexity}, since \(C\subseteq\Theta\) is convex,
and \(\operatorname{ext}(C)=\{\theta^1,\ldots,\theta^L\}\) is finite because \(C\)
is a compact polytope. Any \(\theta\in C\) is a convex combination
\(\theta=\sum_{\ell}\lambda_\ell\theta^\ell\) with \(\lambda_\ell\ge0\) and
\(\sum_{\ell}\lambda_\ell=1\), so convexity gives
\(r(z,\theta)\le\sum_{\ell}\lambda_\ell r(z,\theta^\ell)\le\max_{\ell} r(z,\theta^\ell)\).
Taking the supremum over \(\theta\in C\) gives \(\sup_{\theta\in C} r(z,\theta)\le
\max_{\ell} r(z,\theta^\ell)=\max_{\theta\in\operatorname{ext}(C)} r(z,\theta)\), and the
reverse inequality holds because \(\operatorname{ext}(C)\subseteq C\), which is the
first identity. It holds for every \(z\in\mathcal Z\), so taking the infimum over
\(z\) on both sides gives the second.
\end{proof}

\begin{lemma}[Epigraph form of the CMR problem]
\label{lem:extension_epigraph}
Let \(C\subseteq\Theta\) be a nonempty compact polytope with extreme points
\(\operatorname{ext}(C)=\{\theta^1,\ldots,\theta^L\}\). Then
\[
    \inf_{z\in\mathcal Z}\sup_{\theta\in C} r(z,\theta)
    =
    \inf_{z\in\mathcal Z,\ t\in\mathbb R} t
    \quad\text{subject to}\quad
    r(z,\theta^\ell)\le t,\ \ \ell=1,\ldots,L,
\]
with \(r(z,\theta^\ell)=V(z,\theta^\ell)-V^*(\theta^\ell)\). If, in addition,
\(\mathcal Z\) is convex and \(z\mapsto V(z,\theta^\ell)\) is convex on
\(\mathcal Z\) for every \(\ell\), this epigraph problem is convex.
\end{lemma}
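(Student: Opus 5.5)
The argument has two parts: the equality of values, then convexity.

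For the equality, I would start from Lemma~\ref{lem:extension_extreme_points}, which gives, for every \(z\in\mathcal Z\),
\[
\sup_{\theta\in C}r(z,\theta)=\max_{\ell}r(z,\theta^\ell).
\]
Hence the left-hand side equals \(\inf_{z}\max_\ell r(z,\theta^\ell)\). It then suffices to show that \(\inf_z\max_\ell r(z,\theta^\ell)\) coincides with the epigraph value. I would prove this by two inequalities.

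First, for any feasible pair \((z,t)\), the constraints give \(\max_\ell r(z,\theta^\ell)\le t\). Taking infima, the minimax value is at most the epigraph value.

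Second, for any \(z\), the pair \((z,t_z)\) with \(t_z=\max_\ell r(z,\theta^\ell)\) is feasible. The value \(t_z\) is a finite real number because \(V(z,\theta^\ell)\) and \(V^*(\theta^\ell)\) are finite by the standing assumptions. So the epigraph value is at most \(t_z\) for every \(z\), and hence at most the minimax value.

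Combining the two inequalities gives equality, including the case where both sides are infima that are not attained.

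For convexity, the feasible set is
\[
\{(z,t)\in\mathcal Z\times\mathbb R:\ V(z,\theta^\ell)-V^*(\theta^\ell)\le t\ \ \forall\ell\}.
\]
For each \(\ell\), the map \((z,t)\mapsto V(z,\theta^\ell)-V^*(\theta^\ell)-t\) is convex on the convex set \(\mathcal Z\times\mathbb R\). This holds because \(V^*(\theta^\ell)\) is a constant in \(z\), \(V(\cdot,\theta^\ell)\) is convex by assumption, and \(-t\) is linear. Each sublevel set at zero is therefore convex, and the feasible set is a finite intersection of convex sets, hence convex. The objective \(t\) is linear. The problem is thus a convex program.

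The only point needing care is that the constants \(V^*(\theta^\ell)\) must be finite, so that the constraints are well defined; this follows from the framework's standing assumptions.
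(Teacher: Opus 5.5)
Your proposal is correct and follows essentially the same route as the paper: reduce the supremum to the vertex maximum via Lemma~\ref{lem:extension_extreme_points}, match the epigraph value to it (the paper notes that the least feasible \(t\) for fixed \(z\) is \(\max_\ell r(z,\theta^\ell)\), which your two inequalities spell out), and obtain convexity from the convex constraint functions \((z,t)\mapsto r(z,\theta^\ell)-t\) together with the linear objective. Your remark that \(V\) and \(V^*(\theta^\ell)\) must be finite is exactly the standing assumption the paper imposes for this framework.
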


\begin{proof}
By Lemma~\ref{lem:extension_extreme_points},
\(\sup_{\theta\in C} r(z,\theta)=\max_{\ell} r(z,\theta^\ell)\) for every
\(z\in\mathcal Z\). For fixed \(z\), a scalar \(t\) satisfies
\(r(z,\theta^\ell)\le t\) for all \(\ell\) if and only if
\(t\ge\max_{\ell} r(z,\theta^\ell)\), so the least feasible \(t\) equals
\(\max_{\ell} r(z,\theta^\ell)\). Taking the infimum over \(z\) gives the identity.
For the convexity claim, \(V^*(\theta^\ell)\) is constant in \(z\), so
\(z\mapsto r(z,\theta^\ell)=V(z,\theta^\ell)-V^*(\theta^\ell)\) is convex on
\(\mathcal Z\) under the stated hypothesis, and
\((z,t)\mapsto r(z,\theta^\ell)-t\) is convex as a convex function of \(z\) plus a
linear function of \(t\). Each such constraint defines a convex subset of
\(\mathcal Z\times\mathbb R\), and since \(\mathcal Z\) is convex and the objective
\(t\) is linear, the epigraph problem is convex.
\end{proof}

\begin{lemma}[Monotonicity under set inclusion]
\label{lem:extension_set_monotonicity}
Let \(C_1\subseteq C_0\subseteq\Theta\) be nonempty. Then, for every
\(z\in\mathcal Z\), \(\sup_{\theta\in C_1} r(z,\theta) \le \sup_{\theta\in C_0} r(z,\theta)\), and consequently
\(\inf_{z\in\mathcal Z}\sup_{\theta\in C_1} r(z,\theta) \le \inf_{z\in\mathcal Z}\sup_{\theta\in C_0} r(z,\theta).\)
\end{lemma}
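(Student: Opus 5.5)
The argument is essentially immediate from set inclusion, and I would carry it out in two short steps.

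First, fix an arbitrary design \(z\in\mathcal Z\). Every \(\theta\in C_1\) also lies in \(C_0\) because \(C_1\subseteq C_0\). Hence \(r(z,\theta)\le\sup_{\theta'\in C_0}r(z,\theta')\) for every \(\theta\in C_1\). Taking the supremum over \(\theta\in C_1\) gives the pointwise inequality \(\sup_{\theta\in C_1}r(z,\theta)\le\sup_{\theta\in C_0}r(z,\theta)\).

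Nonemptiness of \(C_1\) guarantees that the left supremum is over a nonempty set. The supremum is also bounded below, since \(r(z,\theta)\ge 0\) by the definition of \(V^*\). This rules out the convention \(\sup\emptyset=-\infty\), so both sides are well-defined elements of \([0,\infty]\).

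Second, the pointwise inequality holds for every \(z\in\mathcal Z\). An infimum preserves pointwise inequalities between functions on a common domain, so taking \(\inf_{z\in\mathcal Z}\) on both sides yields the second claim. No convexity, compactness, or affinity in \(\theta\) is needed; the lemma holds for arbitrary nonempty subsets of \(\Theta\).

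There is no real obstacle. The only point to state explicitly is that the inequalities are read in the extended reals \([0,\infty]\), so they remain valid even if some worst-case regret is infinite. This is the case, for instance, for boundary designs, or for unbounded sets in the setting of Appendix~\ref{sub:unbounded_continuous_outcomes}.

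For use elsewhere in the paper, the lemma gives the corollary cited in the delayed-outcome discussion. Enlarging the confidence rectangle, for example by raising \(\zeta_d\), weakly raises the CMR certificate \(U_{\mathrm{CMR}}\), because the certificate is exactly the inf–sup value over the rectangle.
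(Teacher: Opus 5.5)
Your proof is correct and follows the paper's argument: fix \(z\), note that a supremum over the subset \(C_1\) cannot exceed the supremum over \(C_0\), then take the infimum over \(z\in\mathcal Z\) on both sides. Your extra remarks (reading the inequalities in \([0,\infty]\), and the corollary that enlarging the rectangle weakly raises the certificate) are consistent with how the paper uses the lemma.
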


\begin{proof}
Fix \(z\in\mathcal Z\). Since \(C_1\subseteq C_0\), the supremum of \(r(z,\cdot)\)
over \(C_1\) is taken over a subset of \(C_0\), so
\(\sup_{\theta\in C_1} r(z,\theta)\le\sup_{\theta\in C_0} r(z,\theta)\). This holds
for every \(z\in\mathcal Z\), so taking the infimum over \(z\) on both sides gives
the second inequality.
\end{proof}

\begin{lemma}[Certificate validity]
\label{lem:extension_certificate_validity}
Let \(\widehat\Theta_\alpha(\omega)\subseteq\Theta\) be a nonempty random
confidence set for \(\theta(F)\), and suppose
\(\Pr_{P_F}\{\theta(F)\in\widehat\Theta_\alpha(\omega)\}\ge 1-\alpha\) for every \(F\) in
the maintained model class. For any measurable pilot-dependent design
\(\widehat z(\omega)\in\mathcal Z\), the reported certificate
\(\widehat U(\omega)=\sup_{\theta\in\widehat\Theta_\alpha(\omega)}
r(\widehat z(\omega),\theta)\) then satisfies \(\Pr_{P_F}\{\, r(\widehat z(\omega),\theta(F))\le\widehat U(\omega)\,\}\ge 1-\alpha\) for the same class of \(F\). In particular, this holds for the CMR design computed
from \(\widehat\Theta_\alpha(\omega)\).
\end{lemma}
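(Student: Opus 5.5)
The plan is to prove Lemma~\ref{lem:extension_certificate_validity} by an event-inclusion argument. This is the same argument used for part~(i) of Theorem~\ref{thm:cmr-certified-optimality}, now stated for an arbitrary design space \(\mathcal Z\), an arbitrary nonempty confidence set, and an arbitrary measurable design \(\widehat z\).

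Fix \(F\) in the maintained class and a realization \(\omega\) with \(\theta(F)\in\widehat\Theta_\alpha(\omega)\). Then \(\theta(F)\) is one of the points over which the supremum defining \(\widehat U(\omega)\) is taken. Hence \(r(\widehat z(\omega),\theta(F))\le\sup_{\theta\in\widehat\Theta_\alpha(\omega)}r(\widehat z(\omega),\theta)=\widehat U(\omega)\). This gives the inclusion
\[
\{\theta(F)\in\widehat\Theta_\alpha(\omega)\}\subseteq\{r(\widehat z(\omega),\theta(F))\le\widehat U(\omega)\}.
\]
The comparison holds in \([0,\infty]\), so it remains valid if \(\widehat U(\omega)=+\infty\), as under the unbounded convention. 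Monotonicity of \(P_F\) and the coverage hypothesis then give probability at least \(1-\alpha\).

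The only step needing care is measurability of the event on the right, since \(\widehat U\) is a supremum over a random set. I would handle it in one of two ways. First, one can read \(\Pr_{P_F}\) as inner probability, which the inclusion bounds below anyway. Second, in every application the set is a rectangle or hyperrectangle with measurable endpoints, so Lemma~\ref{lem:extension_extreme_points} turns the supremum into a maximum over finitely many vertices. Each vertex is a measurable function of \(\omega\), and \(r\) is continuous on the open design space, so \(\widehat U\) is measurable.

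For the CMR case, I would note that the CMR design is a measurable selection, which exists by the measurable maximum theorem argument in the proof of Proposition~\ref{prop:multi_arm_cmr}. In the two-arm case it is the closed form of Proposition~\ref{prop:cmr_assignment_rectangle}. The general statement then applies with \(\widehat z=p_{\mathrm{CMR}}\).
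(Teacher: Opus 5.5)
Your proof is correct and is the paper's argument. On the coverage event, \(\theta(F)\) is among the points over which \(\widehat U(\omega)\) takes its supremum, so the coverage event is contained in the certificate event; the paper likewise notes that optimality of \(\widehat z\) is never used, which yields the CMR case. Your measurability remarks, via the vertex reduction with measurable endpoints or via inner probability, go beyond the paper, which takes measurability of the certificate event for granted, but they do not change the argument.
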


\begin{proof}
Fix \(F\) in the maintained model class and let
\(\mathcal E=\{\theta(F)\in\widehat\Theta_\alpha(\omega)\}\), so
\(\Pr_{P_F}(\mathcal E)\ge 1-\alpha\). On \(\mathcal E\) the true variance object
\(\theta(F)\) lies in \(\widehat\Theta_\alpha(\omega)\), so
\(r(\widehat z(\omega),\theta(F))\le
\sup_{\theta\in\widehat\Theta_\alpha(\omega)} r(\widehat z(\omega),\theta)
=\widehat U(\omega)\). The coverage event is contained in
\(\{r(\widehat z(\omega),\theta(F))\le\widehat U(\omega)\}\), so
\(\Pr_{P_F}\{r(\widehat z(\omega),\theta(F))\le\widehat U(\omega)\}\ge \Pr_{P_F}(\mathcal E)\ge
1-\alpha\). The argument does not use optimality of \(\widehat z(\omega)\), so it
applies to any pilot-dependent design, including the CMR design and its rounded
implementation, whenever the certificate is computed from the same confidence set.
\end{proof}

\subsection{Computation for Inverse-Share Designs} \label{sub:extension_computation}

The binary CMR problem has a closed form because the design action is
scalar and, for a rectangle, only the two off-diagonal corners can bind.
The vector-allocation problems of Section~\ref{sec:extensions} lose this
structure. The binding configuration, which subset of arms or treatment-by-stratum cells is noisy, now moves with both the realized rectangle and the full allocation vector, so no general closed form is available.
Lemmas~\ref{lem:extension_extreme_points} and~\ref{lem:extension_epigraph}
still reduce each problem to a finite convex program, and the two
Section~\ref{sec:extensions} designs are instances of a single
inverse-share problem.

Let \(J\) be a finite set indexing the cell means the design must
estimate, let \(a_j>0\) be fixed weights, and let the design be
\(z\in\Delta_J^{\circ}\), the open simplex of shares indexed by \(J\). The
criterion is \(V(z,\theta)=\sum_{j\in J}a_j\theta_j/z_j\), with
known-variance value
\(V^*(\theta)=\bigl(\sum_{j\in J}\sqrt{a_j\theta_j}\bigr)^2\) by the
Cauchy--Schwarz inequality. With one variance interval per coordinate, the
confidence set is the hyperrectangle
\(\widehat\Theta_\alpha(\omega)=\prod_{j\in J}
[\underline\theta_j,\overline\theta_j]\), whose vertices fix each
coordinate at an endpoint,
\(\theta_j^v=(1-v_j)\underline\theta_j+v_j\overline\theta_j\) for
\(v\in\{0,1\}^{|J|}\). The CMR allocation solves the epigraph program
\begin{equation}\label{prog:inverse_share}
\min_{z\in\Delta_J^{\circ},\;t}\ t
\quad\text{s.t.}\quad
\sum_{j\in J}\frac{a_j\theta_j^v}{z_j}
-\Bigl(\sum_{j\in J}\sqrt{a_j\theta_j^v}\Bigr)^{2}
\le t,
\qquad v\in\{0,1\}^{|J|},
\end{equation}
which is convex by Lemma~\ref{lem:extension_epigraph}, since each
\(\theta_j^v/z_j\) is convex on the open simplex. Any minimizer
\(\widehat z\) is a continuous-share allocation, and the reported
certificate \(U_{\mathrm{CMR}}(\omega)=\max_v r(\widehat z,\theta^v)\)
equals the optimized \(\widehat t\). An interior minimizer exists in every
application below. Each coordinate's upper variance endpoint is strictly
positive, so as any share approaches zero, the vertex placing the positive
upper endpoint on that coordinate sends the worst-case objective to
infinity. The sublevel sets of the objective are therefore compact subsets
of the open simplex, and the minimum is attained.

Three instances cover the paper. With \(J=\{1,0\}\), the scalar \(\pi\) identified with the pair \(z=(\pi,1-\pi)\), and \(a_1=a_0=1\), the program is
the baseline binary problem, whose solution is the closed form of
Proposition~\ref{prop:cmr_assignment_rectangle}. Every extension of
Appendix~\ref{sec:appendix_extensions} is of this form and requires no
numerical optimization.

With \(J=\{0,1,\ldots,K\}\), \(z=\pi\in\Delta_K^{\circ}\), \(a_0=K\), and
\(a_k=1\) for \(k\ge1\), the program is the shared-control problem of
Subsection~\ref{sub:multiple_treatments_shared_control}, with
\(V^*(\theta^v)=\bigl(\sqrt{K\theta_0^v}
+\sum_{k=1}^K\sqrt{\theta_k^v}\bigr)^2\) and \(2^{K+1}\) vertex
constraints. Its minimizers are the multi-arm CMR allocations.

With \(J=\{(d,x):d\in\{0,1\},\,x=1,\ldots,S\}\),
\(z=(\pi_{dx})_{d,x}\in\Delta_{2S-1}^{\circ}\), and \(a_{dx}=s_x^2\), the
program is the stratified problem of
Subsection~\ref{sub:stratified_experiments}, with
\(V^*(\theta^v)=\bigl[\sum_{x=1}^S
s_x\bigl(\sqrt{\theta_{1x}^v}+\sqrt{\theta_{0x}^v}\bigr)\bigr]^2\) and
\(2^{2S}\) vertex constraints. Any minimizer \(\widehat\pi\) recovers both
stratified design margins, the sampling margin
\(\widehat\pi_{\cdot x}=\widehat\pi_{1x}+\widehat\pi_{0x}\) and the
assignment margin
\(\widehat\pi_{1\mid x}=\widehat\pi_{1x}/\widehat\pi_{\cdot x}\). The
vertex count grows with the number of strata, so direct enumeration suits
a modest \(S\).

Rounding any minimizer to integer counts, or imposing minimum-share
constraints, triggers recomputation of the certificate at the implemented
shares, which remains valid for the implemented design over the same
rectangle by Lemma~\ref{lem:extension_certificate_validity}.\footnote{The
reported certificate remains relative to the continuous known-variance
benchmark \(V^*\). If regret is instead measured against the best
integer-feasible design, the design set and \(V^*\) must both be replaced
by their integer-feasible counterparts. Minimum-share constraints preserve
the convexity of the program, while exact integer restrictions do not,
although the vertex reduction and the certificate validity continue to
hold.}

\subsection{Proofs for Additional Extensions}
\label{sub:proofs_appendix_extensions}

\hypertarget{proof:binary_folded_ordering}{}
\begin{proof}[Proof of Lemma~\ref{lem:binary_folded_ordering}]
The map \(\xi:[0,1/4]\to[0,1/2]\),
\(\xi(\sigma^2)=\bigl(1-\sqrt{1-4\sigma^2}\bigr)/2\), is continuous and
strictly increasing, and under success probability \(q=\xi(\sigma^2)\) the
folded count satisfies \(\{J_d\ge j\}=\{j\le X_d\le M_d-j\}\) with
\(X_d\sim\mathrm{Binomial}(M_d,q)\). It therefore suffices to show that, for
each fixed \(j\in\mathcal J_{M_d}\), the function
\(\varphi_j(q)=\Pr_q(j\le X_d\le M_d-j)\) is nondecreasing on \([0,1/2]\);
continuity is immediate because \(\varphi_j\) is a polynomial in \(q\).

For \(j=0\), \(\varphi_0\equiv1\). Fix \(1\le j\le\lfloor M_d/2\rfloor\) and
write \(M=M_d\). Differentiating the binomial survival function term by term
and telescoping gives, for \(k\ge1\),
\(\frac{d}{dq}\Pr_q(X_d\ge k) = M\binom{M-1}{k-1}q^{k-1}(1-q)^{M-k}.\)
Since \(\varphi_j(q)=\Pr_q(X_d\ge j)-\Pr_q(X_d\ge M-j+1)\) and
\(\binom{M-1}{M-j}=\binom{M-1}{j-1}\),
\(\varphi_j'(q) = M\binom{M-1}{j-1}\,[q(1-q)]^{\,j-1} \left[(1-q)^{M-2j+1}-q^{M-2j+1}\right].\)
For \(q\in[0,1/2]\) and \(j\le M/2\), the exponent satisfies
\(M-2j+1\ge1\) and \(1-q\ge q\), so the bracket is nonnegative and
\(\varphi_j\) is nondecreasing on \([0,1/2]\). Composing with \(\xi\) shows
that \(G_{\ge}(j;\sigma^2)=\varphi_j(\xi(\sigma^2))\) is continuous and
nondecreasing in \(\sigma^2\), and
\(G_{\le}(j;\sigma^2)=1-G_{\ge}(j+1;\sigma^2)\), with the convention
\(G_{\ge}(\lfloor M_d/2\rfloor+1;\cdot)\equiv0\), is continuous and
nonincreasing. The first-order stochastic dominance statement is the
monotonicity of \(G_{\ge}(j;\cdot)\) for every \(j\).
\end{proof}

\hypertarget{proof:binary_exact_rectangle}{}
\begin{proof}[Proof of Proposition~\ref{prop:binary_exact_rectangle}]
Fix \(F\in\mathcal F^{\mathrm{bin}}\) with arm-\(d\) success probability
\(q_d\) and variance \(v_d=\sigma_d^2(F)=q_d(1-q_d)\in[0,1/4]\). Under
\(P_F\), the folded count \(J_d\) has the folded-binomial law indexed by
\(v_d\), because \(q_d\) and \(1-q_d\) induce the same law for \(J_d\).

(i) \emph{Upper endpoint.} By Lemma~\ref{lem:binary_folded_ordering},
\(G_{\le}(j;\cdot)\) is nonincreasing, so the acceptance set
\(A_j=\{\sigma^2\in[0,1/4]:G_{\le}(j;\sigma^2)>b\}\) is an interval
containing \(0\), with supremum \(\overline\sigma_{M_d}^2(j;b)\). If
\(\overline\sigma_{M_d}^2(J_d;b)<v_d\), then \(v_d\notin A_{J_d}\), so
\(G_{\le}(J_d;v_d)\le b\). It remains to bound
\(\Pr_{P_F}\{G_{\le}(J_d;v_d)\le b\}\). If no \(j\in\mathcal J_{M_d}\)
satisfies \(G_{\le}(j;v_d)\le b\), this probability is zero. Otherwise,
since \(j\mapsto G_{\le}(j;v_d)\) is nondecreasing, the set of such \(j\) is
\(\{0,\ldots,j^*\}\) with \(j^*=\max\{j:G_{\le}(j;v_d)\le b\}\), and
\(\Pr_{P_F}\{G_{\le}(J_d;v_d)\le b\} = \Pr_{P_F}\{J_d\le j^*\} = G_{\le}(j^*;v_d) \le b.\)
Hence
\(\Pr_{P_F}\{\sigma_d^2(F)\le\overline\sigma_{M_d}^2(J_d;b)\}\ge1-b\).

\emph{Lower endpoint.} Let
\(B_j=\{\sigma^2\in[0,1/4]:G_{\ge}(j;\sigma^2)>b\}\). By
Lemma~\ref{lem:binary_folded_ordering}, \(G_{\ge}(j;\cdot)\) is
nondecreasing, so \(B_j\) is an interval ending at \(1/4\) when nonempty,
with infimum \(\underline\sigma_{M_d}^2(j;b)\). Suppose
\(\underline\sigma_{M_d}^2(J_d;b)>v_d\). If \(B_{J_d}\) is nonempty, then
\(v_d<\inf B_{J_d}\), so \(v_d\notin B_{J_d}\) and
\(G_{\ge}(J_d;v_d)\le b\). If \(B_{J_d}\) is empty, then
\(G_{\ge}(J_d;\sigma^2)\le b\) for every \(\sigma^2\), in particular at
\(v_d\). Either way the failure event is contained in
\(\{G_{\ge}(J_d;v_d)\le b\}\). Since \(j\mapsto G_{\ge}(j;v_d)\) is
nonincreasing, the set \(\{j:G_{\ge}(j;v_d)\le b\}\), when nonempty, is
\(\{j^*,\ldots,\lfloor M_d/2\rfloor\}\) with
\(j^*=\min\{j:G_{\ge}(j;v_d)\le b\}\), and
\(\Pr_{P_F}\{J_d\ge j^*\}=G_{\ge}(j^*;v_d)\le b\). Hence
\(\Pr_{P_F}\{\underline\sigma_{M_d}^2(J_d;b)\le\sigma_d^2(F)\}\ge1-b\).
Both inequalities hold for every \(q_d\in[0,1]\) and depend on \(F\) only
through the arm-\(d\) marginal, so the endpoints satisfy the one-sided
coverage requirement of
Subsection~\ref{sub:conditional_minimax_regret_procedure} restricted to
\(\mathcal F^{\mathrm{bin}}\).

(ii) The rectangle misses \(\theta(F)\) only if at least one of the four
one-sided bounds at level \(\alpha/4\) fails, so by part~(i) and the union
bound, which does not require independence across arms, the miss
probability is at most \(\alpha\). By
Lemma~\ref{lem:binary_folded_ordering}, \(G_{\le}(j;\cdot)\) is
continuous with \(G_{\le}(j;0)=1>b\), so \(A_j\) contains an interval
\([0,\varepsilon)\) with \(\varepsilon>0\) and
\(\overline\sigma_{M_d}^2(j;b)>0\) for every \(j\in\mathcal J_{M_d}\). Both
upper endpoints of \(\widehat\Theta_\alpha^{\mathrm{bin}}\) are therefore
strictly positive, and Proposition~\ref{prop:cmr_assignment_rectangle}
yields the unique interior assignment \(p_{\mathrm{CMR}}^{\mathrm{bin}}\)
and the certificate \(U_{\mathrm{CMR}}^{\mathrm{bin}}\). The certificate
guarantee follows from the coverage just established and
Lemma~\ref{lem:extension_certificate_validity}, applied with
\(\Theta=[0,1/4]^2\), \(\widehat z(\omega)=p_{\mathrm{CMR}}^{\mathrm{bin}}(\omega)\), and
\(\widehat\Theta_\alpha^{\mathrm{bin}}\) as the confidence set. Finally,
\(U_{\mathrm{CMR}}^{\mathrm{bin}}\le1/4\) follows as in the proof of
Theorem~\ref{thm:cmr-certified-optimality}(ii), since
\(\widehat\Theta_\alpha^{\mathrm{bin}}\subseteq[0,1/4]^2\).
\end{proof}

\hypertarget{proof:unbounded_impossibility}{}
\begin{proof}[Proof of Proposition~\ref{prop:unbounded_impossibility}]
Under the fixed-arm pilot design, the arm-\(d\) outcomes are \(M_d\) i.i.d.
draws from the arm-\(d\) marginal of \(F\)
(Subsection~\ref{sub:states_actions_pilot_based_rules}). Let
\(\varepsilon_d\in(0,1/2]\) solve
\(\varepsilon(1-\varepsilon)=1/(\psi_d+3)\), which exists since
\(\psi_d\ge1\) implies \(1/(\psi_d+3)\le1/4\). We first show that
\(M_d<(\psi_d+3)\log(1/\alpha)/4\) implies
\((1-\varepsilon_d)^{M_d}>\alpha\). Since \(\varepsilon_d\le1/2\), we have
\(1-\varepsilon_d\ge1/2\) and hence
\(\varepsilon_d=1/[(\psi_d+3)(1-\varepsilon_d)]\le2/(\psi_d+3)\).
Combining with \(\log\!\left(1/(1-\varepsilon)\right)\le2\varepsilon\) for
\(\varepsilon\in(0,1/2]\) gives
\(M_d\log\!\left(\frac{1}{1-\varepsilon_d}\right) <\frac{(\psi_d+3)\log(1/\alpha)}{4}\cdot\frac{4}{\psi_d+3} =\log(1/\alpha),\)
which is the claim.

Fix a constant
\(c\in\mathbb R\) and suppose instead that \(\overline{\sigma}_d^2=B<\infty\)
at the realization in which all arm-\(d\) pilot outcomes equal \(c\). For
\(a>0\), let \(G_{a}^{c}\) place mass \(1-\varepsilon_d\) at \(c\) and mass
\(\varepsilon_d\) at \(c+a\), so that its variance is
\(\varepsilon_d(1-\varepsilon_d)a^2>0\). Direct computation gives its
kurtosis as
\(\frac{\varepsilon_d^3+(1-\varepsilon_d)^3}{\varepsilon_d(1-\varepsilon_d)}
=\frac{1}{\varepsilon_d(1-\varepsilon_d)}-3=\psi_d\), free of \(a\) and of
\(c\), since kurtosis is invariant to location and scale. Let \(F\) have
arm-\(d\) marginal \(G_{a}^{c}\) with \(a\) large enough that
\(\varepsilon_d(1-\varepsilon_d)a^2>B\), and any arm-\((1-d)\) marginal
satisfying Assumption~\ref{ass:kurtosis}, so that \(F\) obeys the
assumption. Under \(P_F\), all arm-\(d\) pilot outcomes equal \(c\) with
probability \((1-\varepsilon_d)^{M_d}>\alpha\), and on that event
\(\overline{\sigma}_d^2=B<\sigma_d^2(F)\). The miss probability therefore
exceeds \(\alpha\), contradicting the coverage requirement. Since \(c\) was
arbitrary, the bound is infinite at every constant realization.
\end{proof}

\hypertarget{proof:unbounded_mom}{}
\begin{proof}[Proof of Lemma~\ref{lem:unbounded_mom}]
Write \(W_{di}=\tfrac12\left(Y_{d,2i-1}-Y_{d,2i}\right)^2\),
\(i=1,\ldots,\lfloor M_d/2\rfloor\), for the halved squared differences
that form the estimator, discarding at most one observation.
Nonnegativity is immediate, since each \(W_{di}\) is nonnegative and the
median of nonnegative block means is nonnegative. Fix \(F\) satisfying
Assumption~\ref{ass:kurtosis} and write \(\sigma^2=\sigma_d^2(F)\) and
\(\mu_4=\mathbb E_F[(Y(d)-\mu_d(F))^4]\le\psi_d\sigma^4\). Under the
fixed-arm pilot design, the paired differences are built from disjoint
pairs of i.i.d. draws, so the \(W_{di}\) are i.i.d. across \(i\).

First, the moments of \(W_{di}\). Writing \(A\) and \(B\) for two
independent copies of \(Y(d)-\mu_d(F)\), the difference
\(Y_{d,2i-1}-Y_{d,2i}\) is distributed as \(A-B\), so
\(\mathbb E[W_{di}]=\mathbb E[(A-B)^2]/2=\sigma^2\), and, expanding and
using independence with \(\mathbb E[A]=\mathbb E[B]=0\),
\(\mathbb E[(A-B)^4]=2\mu_4+6\sigma^4\). Hence
\(\mathbb E[W_{di}^2]=\mu_4/2+3\sigma^4/2\) and
\(\operatorname{Var}(W_{di}) =\frac{\mu_4+\sigma^4}{2} \le\frac{(\psi_d+1)\,\sigma^4}{2}.\)

Second, the block means. Each block mean \(\bar W_j\) averages \(b_d\) i.i.d.
pairs, so \(\operatorname{Var}(\bar W_j)\le(\psi_d+1)\sigma^4/(2b_d)\). With
\(\rho_d^2=2(\psi_d+1)/b_d\), Chebyshev's inequality gives
\(p_0 =\Pr\!\left(|\bar W_j-\sigma^2|>\rho_d\,\sigma^2\right) \le\frac{(\psi_d+1)\sigma^4/(2b_d)}{\rho_d^2\,\sigma^4} =\frac{(\psi_d+1)}{2b_d\rho_d^2} =\frac14.\)

Third, the median. At least half of the block means lie weakly above the
median and at least half lie weakly below it, so if
\(|\widehat v_d-\sigma^2|>\rho_d\sigma^2\), at least
\(\lceil k_d/2\rceil\) block means
must deviate from \(\sigma^2\) by more than \(\rho_d\sigma^2\). The
indicators of these deviations are i.i.d. Bernoulli with success
probability \(p_0\le1/4\), so by Hoeffding's inequality
\citep{hoeffding1963},
\[
\begin{aligned}
\Pr\!\left(|\widehat v_d-\sigma^2|>\rho_d\,\sigma^2\right)
&\le\Pr\!\left(\sum_{j=1}^{k_d}
\mathbf 1\{|\bar W_j-\sigma^2|>\rho_d\sigma^2\}\ge\frac{k_d}{2}\right)\\
&\le\exp\!\left(-2k_d\left(\tfrac12-\tfrac14\right)^2\right)
=e^{-k_d/8}
\le\frac{\alpha}{2},
\end{aligned}
\]
using \(k_d=\lceil 8\log(2/\alpha)\rceil\). This is the claimed deviation
bound.
\end{proof}

\hypertarget{proof:unbounded_cmr}{}
\begin{proof}[Proof of Proposition~\ref{prop:unbounded_cmr}]
For each arm \(d\), let \(E_d=\{\omega:|\widehat v_d(\omega)-\sigma_d^2(F)|
\le\rho_d\,\sigma_d^2(F)\}\) and \(E=E_1\cap E_0\). By
Lemma~\ref{lem:unbounded_mom} and the union bound,
\(\Pr_{P_F}(E)\ge1-\alpha\).

(i) Fix \(\omega\in E\). Since \(\sigma_d^2(F)>0\) by
Assumption~\ref{ass:kurtosis} and \(\rho_d<1\), the deviation inequality
gives \(\widehat v_d(\omega)\ge(1-\rho_d)\sigma_d^2(F)>0\), so the rule
activates. The same inequality inverts to
\(\underline{\sigma}_d^2(\omega)\le\sigma_d^2(F)\le\overline{\sigma}_d^2(\omega)\)
for each arm, so \(\theta(F)\in\widehat\Theta_\alpha(\omega)\). The
endpoints in \eqref{eq:unbounded_endpoints} are positive and finite, so
Proposition~\ref{prop:cmr_assignment_rectangle} delivers an interior
assignment and a finite certificate, and since the certificate is the
supremum of regret over a set containing \(\theta(F)\), the event-wise
argument of Lemma~\ref{lem:extension_certificate_validity} gives
\(r(p_{\mathrm{CMR}}(\omega),\theta(F))\le U_{\mathrm{CMR}}(\omega)\).
Hence \(E\) is contained in the event of the display.

(ii) Write \(\pi^*=\pi^*(\theta(F))\) and set
\(c_\theta=\min\{1/4,\,\min\{\pi^*,1-\pi^*\}/2\}\). Fix \(\omega\in E\) and
suppose \(\rho_1+\rho_0\le c_\theta\), so \(\rho_d\le1/4\) for each arm. On
\(E\),
\(\sqrt{\widehat v_d(\omega)}\in\sigma_d(F)\,[\sqrt{1-\rho_d},\sqrt{1+\rho_d}]\),
so the standard-deviation endpoints satisfy
\[
\underline{\sigma}_d,\overline{\sigma}_d
\in\sigma_d(F)\left[\sqrt{\tfrac{1-\rho_d}{1+\rho_d}},
\sqrt{\tfrac{1+\rho_d}{1-\rho_d}}\right]
\subseteq\sigma_d(F)\,[1-2\rho_d,\,1+2\rho_d],
\]
where the inclusion holds for \(\rho_d\le1/4\) by squaring. The midpoint
\(s_d=(\underline{\sigma}_d+\overline{\sigma}_d)/2\) therefore satisfies
\(s_d=\sigma_d(F)(1+e_d)\) with \(|e_d|\le2\rho_d\). Since
\(p_{\mathrm{CMR}}=s_1/(s_1+s_0)\), \(p_{\mathrm{CMR}}-\pi^* =\frac{\sigma_1\sigma_0\,(e_1-e_0)}{(s_1+s_0)(\sigma_1+\sigma_0)}\), and combining \(\sigma_1\sigma_0\le(\sigma_1+\sigma_0)^2/4\),
\(|e_1-e_0|\le2(\rho_1+\rho_0)\), and
\(s_1+s_0\ge(\sigma_1+\sigma_0)/2\) yields
\(|p_{\mathrm{CMR}}-\pi^*|\le\rho_1+\rho_0\).

For the regret bound, \(\rho_1+\rho_0\le\min\{\pi^*,1-\pi^*\}/2\) and the
assignment bound give \(p_{\mathrm{CMR}}\ge\pi^*/2\) and
\(1-p_{\mathrm{CMR}}\ge(1-\pi^*)/2\), so
\(p_{\mathrm{CMR}}(1-p_{\mathrm{CMR}})\ge\pi^*(1-\pi^*)/4\), and the regret
identity \eqref{eq:regret_allocation_mistake_baseline} gives
\(r(p_{\mathrm{CMR}},\theta(F))
\le4(\sigma_1+\sigma_0)^2(\rho_1+\rho_0)^2/[\pi^*(1-\pi^*)]\). For the
certificate, the endpoint bounds give
\(\overline{\sigma}_1\overline{\sigma}_0 -\underline{\sigma}_1\underline{\sigma}_0 \le\sigma_1\sigma_0 \left[(1+2\rho_1)(1+2\rho_0)-(1-2\rho_1)(1-2\rho_0)\right] =4\sigma_1\sigma_0(\rho_1+\rho_0),\)
and
\((\underline{\sigma}_1+\overline{\sigma}_1)
(\underline{\sigma}_0+\overline{\sigma}_0)
\ge4\sigma_1\sigma_0(1-2\rho_1)(1-2\rho_0)\ge\sigma_1\sigma_0\), so the
closed form \eqref{eq:cmr_closed_form_certificate} gives
\(U_{\mathrm{CMR}}\le16\sigma_1\sigma_0(\rho_1+\rho_0)^2\). Taking
\(C_\theta=\max\!\left\{4(\sigma_1+\sigma_0)^2/[\pi^*(1-\pi^*)],\,
16\sigma_1\sigma_0\right\}\) proves part~(ii).

(iii) With \(\alpha\) fixed, the block count \(k_d\) is fixed while
\(b_d\to\infty\), so
\(\rho_d=\sqrt{2(\psi_d+1)/b_d}=O(M_d^{-1/2})\) deterministically, and
\(\rho_d<1\) eventually, so the planning condition holds for all large
pilots. Each block mean satisfies
\(\bar W_j-\sigma_d^2=O_p(b_d^{-1/2})=O_p(M_d^{-1/2})\) by Chebyshev's
inequality, and since the median lies weakly between the smallest and
largest of the \(k_d\) block means and \(k_d\) is fixed,
\(|\widehat v_d-\sigma_d^2| \le\max_{1\le j\le k_d}|\bar W_j-\sigma_d^2| =O_p\!\left(M_d^{-1/2}\right).\)
In particular \(\widehat v_d\overset{p}{\to}\sigma_d^2>0\), so
\(\Pr_{P_F}(\widehat v_d=0\text{ for some }d)\to0\) and the rule
activates with probability tending to one. Because the rule keeps
balance and reports no certificate only on this vanishing-probability
event, it suffices to establish the three \(O_p\) statements when the
rule activates. There,
\(|\sqrt{\widehat v_d}-\sigma_d|
=|\widehat v_d-\sigma_d^2|/(\sqrt{\widehat v_d}+\sigma_d)
=O_p(M_d^{-1/2})\), and since
\((1\pm\rho_d)^{-1/2}=1\mp\rho_d/2+O(\rho_d^2)\) with
\(\rho_d=O(M_d^{-1/2})\), the endpoints \eqref{eq:unbounded_endpoints}
satisfy
\(|\underline{\sigma}_d-\sigma_d|+|\overline{\sigma}_d-\sigma_d|
=O_p(M_d^{-1/2})\) for each arm. This is the same endpoint contraction
as in the proof of Theorem~\ref{thm:cmr-neyman-recovery}, and the
arguments given there for the assignment and regret rates use only this
contraction, the closed form \eqref{eq:cmr_closed_form_assignment}, the
positivity of \(\sigma_1\) and \(\sigma_0\), and the regret identity
\eqref{eq:regret_allocation_mistake_baseline}, so they apply verbatim and give the first
two displayed rates. The certificate numerator bound used there relies
on endpoints lying in \([0,1/2]\) and must be replaced. The gap in the
closed form \eqref{eq:cmr_closed_form_certificate} satisfies
\[
\overline{\sigma}_1\overline{\sigma}_0
-\underline{\sigma}_1\underline{\sigma}_0
=\overline{\sigma}_0\left(\overline{\sigma}_1-\underline{\sigma}_1\right)
+\underline{\sigma}_1\left(\overline{\sigma}_0-\underline{\sigma}_0\right)
=O_p\!\left(M_1^{-1/2}+M_0^{-1/2}\right),
\]
since \(\overline{\sigma}_0\) and \(\underline{\sigma}_1\) converge in
probability to \(\sigma_0\) and \(\sigma_1\) by the contraction, and the
certificate numerator is the square of this gap. The denominator
converges in probability to \(4\sigma_1\sigma_0>0\), again by the
contraction. Hence
\(U_{\mathrm{CMR}}=O_p(M_1^{-1}+M_0^{-1})\), completing part~(iii).
\end{proof}

\hypertarget{proof:multiple_outcome_cmr}{}
\begin{proof}[Proof of Proposition~\ref{prop:multiple_outcome_cmr}]
The construction involves \(4K\) one-sided bounds, two per arm--outcome
pair, each with failure probability at most \(\alpha/(4K)\), so by the
union bound they hold simultaneously with probability at least
\(1-\alpha\). On that event, weighted sums with nonnegative weights
preserve the per-outcome inequalities, so
\(\underline{\sigma}_{d,w}^2(\omega)\le\sigma_{d,w}^2
\le\overline{\sigma}_{d,w}^2(\omega)\) for each arm and
\(\theta_w\in\widehat\Theta_\alpha(\omega)\). The Maurer--Pontil upper
endpoints are strictly positive at every realization and the weights
sum to one, so \(\overline{\sigma}_{d,w}^2(\omega)>0\) for each arm and
Proposition~\ref{prop:cmr_assignment_rectangle} applies. The
certificate guarantee follows from
Lemma~\ref{lem:extension_certificate_validity}.
\end{proof}

\hypertarget{proof:delayed_impossibility}{}
\begin{proof}[Proof of Proposition~\ref{prop:delayed_impossibility}]
Fix \(F\) and an arm \(d\). For the upper endpoint, let \(F'\) have the
same joint distribution of \((S(1),S(0))\) as \(F\) and, independently
of \((S(1),S(0))\), let \(Y(d)\) equal \(0\) or \(1\) with probability
one half each, so \(\sigma_d^2(F')=1/4\). The pilot records only
short-run outcomes, so its distribution depends on the state only
through the joint distribution of \((S(1),S(0))\), and \(P_{F'}=P_F\).
Since the endpoints take values in \([0,1/4]\) by the coverage
requirements, one-sided coverage applied at \(F'\) gives
\(\Pr_{P_F}\bigl(\overline{\sigma}_d^2(\omega)=1/4\bigr)
=\Pr_{P_{F'}}\bigl(\sigma_d^2(F')\le\overline{\sigma}_d^2(\omega)\bigr)
\ge1-\alpha/4\). For the lower endpoint, let \(F''\) agree with \(F\) on
\((S(1),S(0))\) and set \(Y(d)\equiv1/2\), so \(\sigma_d^2(F'')=0\), and
the same argument gives
\(\Pr_{P_F}\bigl(\underline{\sigma}_d^2(\omega)=0\bigr)\ge1-\alpha/4\).
The union bound over the four endpoints gives
\(\Pr_{P_F}\bigl(\widehat\Theta_\alpha(\omega)=[0,1/4]^2\bigr)
\ge1-\alpha\).
\end{proof}

\hypertarget{proof:delayed_cmr}{}
\begin{proof}[Proof of Proposition~\ref{prop:delayed_cmr}]
On the event that the four one-sided Maurer--Pontil bounds for the
short-run standard deviations hold, which has probability at least
\(1-\alpha\) by the union bound,
\(\underline{\sigma}_{d,S}(\omega)\le\sigma_{d,S}(F)
\le\overline{\sigma}_{d,S}(\omega)\) for each arm.
Assumption~\ref{ass:surrogacy} gives
\(\sigma_{d,S}(F)-\zeta_d\le\sigma_d(F)\le\sigma_{d,S}(F)+\zeta_d\),
hence
\(\underline{\sigma}_{d,S}(\omega)-\zeta_d\le\sigma_d(F)
\le\overline{\sigma}_{d,S}(\omega)+\zeta_d\). Since
\(\sigma_d(F)\in[0,1/2]\), applying the positive part to the lower
endpoint and the cap to the upper preserves the bracketing, and squaring
gives \(\theta(F)\in\widehat\Theta_\alpha(\omega)\). The upper endpoints
are at least the Maurer--Pontil upper endpoints, hence strictly positive,
so Proposition~\ref{prop:cmr_assignment_rectangle} applies, and the
certificate guarantee follows from
Lemma~\ref{lem:extension_certificate_validity}.
\end{proof}

\subsection{Proofs for Appendix~\ref{sec:appendix_when_to_pilot}}
\label{sub:proofs_when_to_pilot}

\hypertarget{proof:no-uniform-pilot}{}
\begin{proof}[Proof of Proposition~\ref{prop:no-uniform-pilot}]
Write \(p_0=\delta(\omega_0)\neq\tfrac12\), and let \(S\subset[0,1]\) be the
finite set of outcome values appearing in \(\omega_0\). Construct a
distribution \(H\) on \([0,1]\) as follows. If \(S\) has at least two
elements, let \(H\) place probability \(1/|S|\) on each element of \(S\). If
\(S=\{y_0\}\), let \(H\) place probability \(1/2\) on \(y_0\) and \(1/2\) on
any \(y_1\in[0,1]\setminus\{y_0\}\). In either case \(H\) has at least two
support points, so \(\sigma^2:=\operatorname{Var}_H(Y)>0\), and every
outcome value appearing in \(\omega_0\) is an atom of \(H\).

Let \(F_0=H\otimes H\in\mathcal F\), so that both potential-outcome
marginals equal \(H\). Then \(\sigma_1(F_0)=\sigma_0(F_0)=\sigma>0\), the
balance regret vanishes, and \(\mathcal L_0(F_0)=0\).

The realization \(\omega_0\) has positive probability under \(F_0\). Its
treatment-label pattern is one of the finitely many, equally likely
patterns of the fixed-arm-size pilot design, and conditional on the labels,
the outcomes are independent draws from \(H\) whose realized values are
atoms of \(H\). Hence \(P_{F_0,m}\{\omega_0\}>0\).

The regret of the action \(p_0\) at \(F_0\) is strictly positive. If
\(p_0\in(0,1)\), the regret identity \eqref{eq:regret_allocation_mistake_baseline} gives
\(r(p_0,\theta(F_0))=\sigma^2(1-2p_0)^2/[p_0(1-p_0)]>0\). If
\(p_0\in\{0,1\}\), the infinite-variance convention of
Subsection~\ref{sub:main_wave_variance_neyman_allocation} applies because
both arms have positive variance, and the regret is again strictly
positive. Since regret is nonnegative,
\(R_m(\delta,F_0) \;\ge\; P_{F_0,m}\{\omega_0\}\;r\big(p_0,\theta(F_0)\big) \;>\;0.\)
Therefore
\(\mathcal L^{\mathrm{pool}}_m(\delta,F_0)
=(n-m)R_m(\delta,F_0)/n^2>0=\mathcal L_0(F_0)\).
\end{proof}

\hypertarget{proof:rate-optimal-pilot}{}
\begin{proof}[Proof of Theorem~\ref{thm:rate-optimal-pilot}]
Throughout, write
\(R^{\mathrm{mmr}}_m:=R^{\mathrm{mmr}}_{m/2,m/2}\) for the minimax-regret
value of the balanced size-\(m\) pilot.
Proposition~\ref{prop:mmr-lower-bound} with \(M_1=M_0=m/2\) gives
\begin{equation}
R^{\mathrm{mmr}}_m\;\ge\;2\sqrt2\,c\,m^{-1/2},
\label{eq:proof_mmr_floor}
\end{equation}
and Theorem~\ref{thm:cmr-competitive-risk} with the same arm sizes gives
\begin{equation}
\sup_{F\in\mathcal F}R_m\big(p_m^{\mathrm{CMR}},F\big)
\;\le\;2\sqrt2\,C_\alpha\,m^{-1/2}.
\label{eq:proof_cmr_rate}
\end{equation}
Because outcomes lie in \([0,1]\), \(\sigma_d(F)\in[0,1/2]\) and
\((\sigma_1-\sigma_0)^2\le1/4\), with equality at any
\(F^{A}\in\mathcal F\) whose treated marginal is Bernoulli\((1/2)\) and
whose control marginal is degenerate, so that \(\theta(F^{A})=(1/4,0)\).

\emph{Step 1: floors at a fixed \(m\).}
Fix \(m\in\mathcal M\setminus\{0\}\) and \(\delta\in\mathcal D_0(m)\).
Evaluating the pooled loss at \(F^{A}\), where the balance regret equals
\(1/4\) and \(R_m(\delta,F^{A})\ge0\), and separately dropping the
nonnegative pilot term and using
\(\sup_{F\in\mathcal F}R_m(\delta,F)\ge R^{\mathrm{mmr}}_m\) together with
\eqref{eq:proof_mmr_floor}, gives
\begin{equation}
\sup_{F\in\mathcal F}\mathcal L^{\mathrm{pool}}_m(\delta,F)
\;\ge\;
\max\left\{\frac{m}{4n^2},\;
\frac{2\sqrt2\,c\,(n-m)}{n^2\sqrt m}\right\}.
\label{eq:proof_plan_floor}
\end{equation}

\emph{Step 2: CMR at a fixed \(m\).}
Bounding the balance regret by \(1/4\) and applying
\eqref{eq:proof_cmr_rate},
\begin{equation}
\sup_{F\in\mathcal F}\mathcal L^{\mathrm{pool}}_m\big(p_m^{\mathrm{CMR}},F\big)
\;\le\;
\frac{m}{4n^2}+\frac{2\sqrt2\,C_\alpha\,(n-m)}{n^2\sqrt m}.
\label{eq:proof_cmr_fixed_m}
\end{equation}

\emph{Step 3: feasibility and size of \(m^{\circ}_n\).}
Fix \(n\ge8\) and set \(x=n^{2/3}\), so that \(x\ge4\). The smallest even
integer no smaller than \(x\) is \(2\lceil x/2\rceil\), and
\(x\le m^{\circ}_n<x+2\le2x\). Since \(n-n^{2/3}\ge4\) for \(n\ge8\), also
\(m^{\circ}_n<x+2\le n-2\). Hence \(4\le m^{\circ}_n\le n-2\), so
\(m^{\circ}_n\in\mathcal M\), which justifies the identity in
\eqref{eq:pilot_size_rule}. Also \(n-m^{\circ}_n>n-x-2\ge n/4\),
since \((3/4)n-n^{2/3}\ge2\) for \(n\ge8\).

\emph{Step 4: upper bound for the explicit plan.}
Apply \eqref{eq:proof_cmr_fixed_m} at \(m=m^{\circ}_n\). Using
\(m^{\circ}_n\le2n^{2/3}\), \(\sqrt{m^{\circ}_n}\ge n^{1/3}\), and
\(n-m^{\circ}_n\le n\),
\[
\sup_{F\in\mathcal F}
\mathcal L^{\mathrm{pool}}_{m^{\circ}_n}\big(p_{m^{\circ}_n}^{\mathrm{CMR}},F\big)
\;\le\;
\frac{2n^{2/3}}{4n^2}+\frac{2\sqrt2\,C_\alpha\,n}{n^2\,n^{1/3}}
\;=\;
\Big(\tfrac12+2\sqrt2\,C_\alpha\Big)n^{-4/3}
\;=\;
C_{1,\alpha}\,n^{-4/3}.
\]
The plan \((m^{\circ}_n,p_{m^{\circ}_n}^{\mathrm{CMR}})\) is feasible, so
\(\mathcal L^{\star}_n\) is bounded by its worst-case loss, which proves
the two right-hand inequalities in \eqref{eq:rate_chain}.

\emph{Step 5: floor for every plan.}
If \(m=0\), then
\(\sup_{F\in\mathcal F}\mathcal L_0(F)=1/(4n)\ge\tfrac14\,n^{-4/3}\). If
\(m\ge n^{2/3}\), the first term in \eqref{eq:proof_plan_floor} gives at
least \(n^{2/3}/(4n^2)=\tfrac14\,n^{-4/3}\). If \(0<m<n^{2/3}\), then
\(n-m\ge n-n^{2/3}\ge n/2\) for \(n\ge8\), and the second term in
\eqref{eq:proof_plan_floor} gives at least
\((n/2)\cdot2\sqrt2\,c\,m^{-1/2}/n^2=\sqrt2\,c/(n\sqrt m)
\ge\sqrt2\,c\,n^{-4/3}\), where the last inequality uses \(m<n^{2/3}\).
Hence every plan has worst-case loss at least \(c_1n^{-4/3}\) with
\(c_1=\min\{1/4,\sqrt2\,c\}\), and minimizing over plans proves the
left-hand inequality in \eqref{eq:rate_chain}.

\emph{Step 6: necessity of the \(n^{2/3}\) scaling.}
Fix \(Q<\infty\) and let the plans \((m_n,\delta_n)\) satisfy
\(\sup_{F\in\mathcal F}
\mathcal L^{\mathrm{pool}}_{m_n}(\delta_n,F)\le Q\,n^{-4/3}\)
for all large \(n\), and consider any
\(n\ge\max\{8,512Q^3\}\) at which the bound holds, so that
\(n^{1/3}\ge8Q\). Since \(1/(4n)>Q\,n^{-4/3}\) whenever \(n^{1/3}>4Q\),
the no-pilot plan violates the bound, so \(m_n>0\) and
\eqref{eq:proof_plan_floor} applies. Its first term gives
\(m_n/(4n^2)\le Q\,n^{-4/3}\), hence \(m_n\le4Q\,n^{2/3}\), and
\(4Q\,n^{2/3}\le n/2\) because \(n^{1/3}\ge8Q\), so \(n-m_n\ge n/2\). The
second term then gives \(\sqrt2\,c/(n\sqrt{m_n})\le Q\,n^{-4/3}\), hence
\(m_n\ge2c^2Q^{-2}\,n^{2/3}\). Therefore
\(2c^2Q^{-2}\,n^{2/3}\le m_n\le4Q\,n^{2/3}\) for all large \(n\), which
is the claim \(m_n\asymp n^{2/3}\) with implied constants depending only
on \(Q\).

\emph{Step 7: design-only accounting.}
Fix \(m\in\mathcal M\setminus\{0\}\) and \(\delta\in\mathcal D_0(m)\),
and recall the design-only loss
\(\mathcal L_m(\delta,F)
=R_m(\delta,F)/(n-m)+V^{*}(\theta(F))\,m/[n(n-m)]\), where
\(V^{*}(\theta(F))=(\sigma_1+\sigma_0)^2\in[0,1]\) and
\(V^{*}(\theta(F^{A}))=1/4\). Using \(1/(n-m)\ge1/n\), evaluating at
\(F^{A}\) and separately using
\(\sup_{F\in\mathcal F}R_m(\delta,F)\ge R^{\mathrm{mmr}}_m\) with
\eqref{eq:proof_mmr_floor} gives
\[
\sup_{F\in\mathcal F}\mathcal L_m(\delta,F)
\;\ge\;
\max\left\{\frac{m}{4n^2},\;
\frac{2\sqrt2\,c}{n\sqrt m}\right\},
\]
which is at least the right-hand side of \eqref{eq:proof_plan_floor}.
Repeating Step 5 with this floor, and Step 6 with the second branch
\(2\sqrt2\,c/(n\sqrt m)\) in place of \(\sqrt2\,c/(n\sqrt m)\), which no
longer requires \(n-m\ge n/2\), yields the same constant \(c_1\) and the same necessity bounds under
the design-only accounting. For
the upper bound, Step 3 gives \(n-m^{\circ}_n>n/4\), so with
\(V^{*}\le1\), \eqref{eq:proof_cmr_rate}, \(m^{\circ}_n\le2n^{2/3}\), and
\(\sqrt{m^{\circ}_n}\ge n^{1/3}\),
\[
\sup_{F\in\mathcal F}
\mathcal L_{m^{\circ}_n}\big(p_{m^{\circ}_n}^{\mathrm{CMR}},F\big)
\;\le\;
\frac{2\sqrt2\,C_\alpha}{\sqrt{m^{\circ}_n}\,\big(n-m^{\circ}_n\big)}
+\frac{2n^{2/3}}{n\big(n-m^{\circ}_n\big)}
\;\le\;
\frac{8\sqrt2\,C_\alpha}{n^{4/3}}
+\frac{8}{n^{4/3}}
\;=\;
C_{2,\alpha}\,n^{-4/3},
\]
which proves the design-only chain with
\(C_{2,\alpha}=8+8\sqrt2\,C_\alpha\).
\end{proof}

\section{Pilot Use in Economics RCTs} \label{app:empirics}

This appendix documents the data and classification procedure behind the
pilot-use statistics cited in the text. The frame is the April 2026 monthly
snapshot of the AEA RCT Registry
\citep[DOI 10.7910/DVN/PWNOZZ][]{aearegistry2026}, restricted to the
10{,}905 trials first registered between 2016 and 2025. Because the registry
has no structured field for pilot use, I screen nine free-text fields for
terms indicating a pilot, feasibility study, or multi-wave design. I then use
a large language model to review high- and medium-confidence registrations
under a fixed rubric with structured output. I also examine individually the
24 registrations classified as using pilot evidence to set the
treatment-assignment probability. The screening patterns, prompts, rubric,
model identifier, and trial-level classifications are included in the
replication files.

I assess the classifications in two exercises. First, I review 30
high-confidence registrations that are separate from the 30 hand-coded
registrations used to develop the procedure. My review agrees with the
automated classification in 29 cases, with a 95\% Wilson interval of
83--99\%. Second, I review 45 randomly selected low- and no-confidence
registrations and find no missed pilots, placing a 95\% Wilson upper bound of
about 8\% on the false-negative rate in these groups. Because informal pilot
activity may go unmentioned in registry text, the resulting prevalence
estimates are best interpreted as approximate lower bounds.

Of the 10{,}905 trials, 991 report running or using a pilot. This corresponds
to 9.1\% of the registry, with a 95\% Wilson interval of 8.6--9.6\%. Most
reported pilots inform treatment refinement, field logistics, power
calculations, or sample-size decisions. Only 24 trials report using pilot
evidence to set the treatment-assignment probability. These cases represent
2.4\% of reported pilots and 0.22\% of the full registry, with a 95\% interval
of 0.15--0.33\%. None reports using a feasible-Neyman or
variance-proportional calculation. Pilot size is available for 399 reported
pilots. The median is 300 observations, the interquartile range is
100--1{,}102, and about one tenth report fewer than 30 observations. Among the
eight assignment-probability cases reporting a pilot size, the median is 118.
These are precisely the sample sizes at which estimated variances can remain
too noisy to support unrestricted plug-in adaptation.

{\small
\renewcommand{\refname}{References for the Online Appendix}
\setlength{\bibsep}{3pt}
\begin{singlespace}
\putbib
\end{singlespace}
}
\end{bibunit}


\begin{thebibliography}{35}
\providecommand{\natexlab}[1]{#1}
\providecommand{\url}[1]{\texttt{#1}}
\expandafter\ifx\csname urlstyle\endcsname\relax
  \providecommand{\doi}[1]{doi: #1}\else
  \providecommand{\doi}{doi: \begingroup \urlstyle{rm}\Url}\fi

\bibitem[Abel et~al.(2020)Abel, Burger, and Piraino]{abel2020value}
Martin Abel, Rulof Burger, and Patrizio Piraino.
\newblock The value of reference letters: Experimental evidence from {South
  Africa}.
\newblock \emph{American Economic Journal: Applied Economics}, 12\penalty0
  (3):\penalty0 40--71, 2020.
\newblock \doi{10.1257/app.20180666}.

\bibitem[Andrews and Chen(2025)]{andrews2025certified}
Isaiah Andrews and Jiafeng Chen.
\newblock Certified decisions, 2025.
\newblock URL \url{https://arxiv.org/abs/2502.17830}.

\bibitem[Armstrong(2026)]{armstrong2022asymptotic}
Timothy~B. Armstrong.
\newblock Asymptotic efficiency bounds for a class of experimental designs,
  2026.
\newblock URL \url{https://arxiv.org/abs/2205.02726}.

\bibitem[Bai(2022)]{bai2022optimality}
Yuehao Bai.
\newblock Optimality of matched-pair designs in randomized controlled trials.
\newblock \emph{American Economic Review}, 112\penalty0 (12):\penalty0
  3911--3940, 2022.
\newblock \doi{10.1257/aer.20201856}.

\bibitem[Berger(1985)]{berger1985statistical}
James~O. Berger.
\newblock \emph{Statistical Decision Theory and Bayesian Analysis}.
\newblock Springer Series in Statistics. Springer, New York, NY, 2 edition,
  1985.
\newblock ISBN 978-0-387-96098-2.
\newblock \doi{10.1007/978-1-4757-4286-2}.

\bibitem[Bertrand and Mullainathan(2004)]{bertrand2004emily}
Marianne Bertrand and Sendhil Mullainathan.
\newblock Are {Emily} and {Greg} more employable than {Lakisha} and {Jamal}?
  {A} field experiment on labor market discrimination.
\newblock \emph{American Economic Review}, 94\penalty0 (4):\penalty0 991--1013,
  2004.
\newblock \doi{10.1257/0002828042002561}.

\bibitem[Bonferroni(1936)]{bonferroni1936teoria}
Carlo Bonferroni.
\newblock Teoria statistica delle classi e calcolo delle probabilit{\`a}.
\newblock \emph{Pubblicazioni del R. Istituto Superiore di Scienze Economiche e
  Commerciali di Firenze}, 8:\penalty0 3--62, 1936.

\bibitem[Cai and Rafi(2024)]{cai2024performance}
Yong Cai and Ahnaf Rafi.
\newblock On the performance of the {Neyman} allocation with small pilots.
\newblock \emph{Journal of Econometrics}, 242\penalty0 (1):\penalty0 105793,
  2024.
\newblock \doi{10.1016/j.jeconom.2024.105793}.

\bibitem[Chernozhukov et~al.(2025)Chernozhukov, Lee, Rosen, and
  Sun]{chernozhukov2025policy}
Victor Chernozhukov, Sokbae Lee, Adam~M Rosen, and Liyang Sun.
\newblock Policy learning with confidence, 2025.
\newblock URL \url{https://arxiv.org/abs/2502.10653}.

\bibitem[Cytrynbaum(2026)]{cytrynbaum2021optimal}
Max Cytrynbaum.
\newblock Fine stratification of survey experiments, 2026.
\newblock URL \url{https://arxiv.org/abs/2111.08157}.

\bibitem[Dai et~al.(2023)Dai, Gradu, and Harshaw]{dai2023clip}
Jessica Dai, Paula Gradu, and Christopher Harshaw.
\newblock Clip-{OGD}: An experimental design for adaptive {Neyman} allocation
  in sequential experiments.
\newblock In \emph{Advances in Neural Information Processing Systems},
  volume~36, 2023.
\newblock arXiv:2305.17187.

\bibitem[Hahn(1998)]{hahn1998role}
Jinyong Hahn.
\newblock On the role of the propensity score in efficient semiparametric
  estimation of average treatment effects.
\newblock \emph{Econometrica}, 66\penalty0 (2):\penalty0 315--331, 1998.
\newblock \doi{10.2307/2998560}.

\bibitem[Hahn et~al.(2011)Hahn, Hirano, and Karlan]{hahn2011adaptive}
Jinyong Hahn, Keisuke Hirano, and Dean Karlan.
\newblock Adaptive experimental design using the propensity score.
\newblock \emph{Journal of Business \& Economic Statistics}, 29\penalty0
  (1):\penalty0 96--108, 2011.
\newblock \doi{10.1198/jbes.2009.08161}.

\bibitem[Higbee(2025)]{higbee2024experimental}
Samuel~D. Higbee.
\newblock Experimental design for policy choice, 2025.
\newblock URL
  \url{https://samuelhigbee.github.io/assets/papers/higbee_samuel_jmp.pdf}.
\newblock Working paper.

\bibitem[Hu et~al.(2024)Hu, Zhu, Brunskill, and Wager]{hu2024minimax}
Yuchen Hu, Henry Zhu, Emma Brunskill, and Stefan Wager.
\newblock Minimax-regret sample selection in randomized experiments.
\newblock In \emph{Proceedings of the 25th ACM Conference on Economics and
  Computation}, pages 1209--1235, 2024.

\bibitem[Hull(2026)]{hull2026optimal}
Peter Hull.
\newblock Optimal experimental design and estimation when potential outcomes
  are bounded, 2026.
\newblock URL \url{https://arxiv.org/abs/2608.09812}.

\bibitem[Imbens and Rubin(2015)]{imbens2015causal}
Guido~W Imbens and Donald~B Rubin.
\newblock \emph{Causal inference in statistics, social, and biomedical
  sciences}.
\newblock Cambridge university press, 2015.

\bibitem[Lim and Park(2026)]{limpark2026dynamic}
Cheaheon Lim and Yechan Park.
\newblock Dynamically consistent statistical decisions, 2026.
\newblock URL \url{https://arxiv.org/abs/2607.10519}.

\bibitem[Manski(2004)]{manski2004statistical}
Charles~F Manski.
\newblock Statistical treatment rules for heterogeneous populations.
\newblock \emph{Econometrica}, 72\penalty0 (4):\penalty0 1221--1246, 2004.
\newblock \doi{10.1111/j.1468-0262.2004.00530.x}.

\bibitem[Manski(2021)]{manski2021econometrics}
Charles~F Manski.
\newblock Econometrics for decision making: Building foundations sketched by
  {Haavelmo} and {Wald}.
\newblock \emph{Econometrica}, 89\penalty0 (6):\penalty0 2827--2853, 2021.
\newblock \doi{10.3982/ecta17985}.

\bibitem[Manski and Tetenov(2016)]{manski2016sufficient}
Charles~F Manski and Aleksey Tetenov.
\newblock Sufficient trial size to inform clinical practice.
\newblock \emph{Proceedings of the National Academy of Sciences}, 113\penalty0
  (38):\penalty0 10518--10523, 2016.
\newblock \doi{10.1073/pnas.1612174113}.

\bibitem[Martinez-Taboada and Ramdas(2025)]{martinez2025sharp}
Diego Martinez-Taboada and Aaditya Ramdas.
\newblock Sharp empirical {Bernstein} bounds for the variance of bounded random
  variables, 2025.
\newblock URL \url{https://arxiv.org/abs/2505.01987}.

\bibitem[Maurer and Pontil(2009)]{maurer2009empirical}
Andreas Maurer and Massimiliano Pontil.
\newblock Empirical {Bernstein} bounds and sample variance penalization.
\newblock In \emph{Proceedings of the 22nd Annual Conference on Learning
  Theory}, 2009.
\newblock URL \url{https://arxiv.org/abs/0907.3740}.

\bibitem[McDiarmid(1989)]{mcdiarmid1989method}
Colin McDiarmid.
\newblock On the method of bounded differences.
\newblock In Johannes Siemons, editor, \emph{Surveys in Combinatorics, 1989},
  volume 141 of \emph{London Mathematical Society Lecture Note Series}, pages
  148--188. Cambridge University Press, Cambridge, 1989.

\bibitem[Miguel and Kremer(2004)]{miguel2004worms}
Edward Miguel and Michael Kremer.
\newblock Worms: Identifying impacts on education and health in the presence of
  treatment externalities.
\newblock \emph{Econometrica}, 72\penalty0 (1):\penalty0 159--217, 2004.
\newblock \doi{10.1111/j.1468-0262.2004.00481.x}.

\bibitem[Montiel~Olea et~al.(2026)Montiel~Olea, Qiu, and
  Stoye]{montielolea2026decision}
Jos{\'e}~Luis Montiel~Olea, Chen Qiu, and J{\"o}rg Stoye.
\newblock Decision theory for treatment choice problems with partial
  identification.
\newblock \emph{Review of Economic Studies}, 2026.
\newblock \doi{10.1093/restud/rdag015}.
\newblock Advance article.

\bibitem[Neyman(1934)]{neyman1992two}
Jerzy Neyman.
\newblock On the two different aspects of the representative method: The method
  of stratified sampling and the method of purposive selection.
\newblock \emph{Journal of the Royal Statistical Society}, 97\penalty0
  (4):\penalty0 558--625, 1934.
\newblock \doi{10.2307/2342192}.

\bibitem[Pukelsheim(2006)]{pukelsheim2006optimal}
Friedrich Pukelsheim.
\newblock \emph{Optimal Design of Experiments}, volume~50 of \emph{Classics in
  Applied Mathematics}.
\newblock Society for Industrial and Applied Mathematics, Philadelphia, 2006.
\newblock \doi{10.1137/1.9780898719109}.

\bibitem[Savage(1951)]{savage1951theory}
Leonard~J Savage.
\newblock The theory of statistical decision.
\newblock \emph{Journal of the American Statistical Association}, 46\penalty0
  (253):\penalty0 55--67, 1951.
\newblock \doi{10.1080/01621459.1951.10500768}.

\bibitem[Schlag(2006)]{schlag2006eleven}
Karl~H. Schlag.
\newblock {ELEVEN}: Tests needed for a recommendation.
\newblock Economics Working Paper ECO 2006/2, European University Institute,
  2006.

\bibitem[{\v{S}}id{\'a}k(1967)]{vsidak1967rectangular}
Zbyn{\v{e}}k {\v{S}}id{\'a}k.
\newblock Rectangular confidence regions for the means of multivariate normal
  distributions.
\newblock \emph{Journal of the American Statistical Association}, 62\penalty0
  (318):\penalty0 626--633, 1967.

\bibitem[Stoye(2009)]{stoye2009minimax}
J{\"o}rg Stoye.
\newblock Minimax regret treatment choice with finite samples.
\newblock \emph{Journal of Econometrics}, 151\penalty0 (1):\penalty0 70--81,
  2009.
\newblock \doi{10.1016/j.jeconom.2009.02.013}.

\bibitem[Tabord-Meehan(2023)]{tabord2023stratification}
Max Tabord-Meehan.
\newblock Stratification trees for adaptive randomisation in randomised
  controlled trials.
\newblock \emph{Review of Economic Studies}, 90\penalty0 (5):\penalty0
  2646--2673, 2023.
\newblock \doi{10.1093/restud/rdac083}.

\bibitem[Thornton(2008)]{thornton2008demand}
Rebecca~L. Thornton.
\newblock The demand for, and impact of, learning {HIV} status.
\newblock \emph{American Economic Review}, 98\penalty0 (5):\penalty0
  1829--1863, 2008.
\newblock \doi{10.1257/aer.98.5.1829}.

\bibitem[Zhao(2024)]{zhao2024adaptive}
Jinglong Zhao.
\newblock Adaptive {Neyman} allocation.
\newblock In \emph{Proceedings of the 25th ACM Conference on Economics and
  Computation}, 2024.
\newblock \doi{10.1145/3670865.3673535}.
\newblock arXiv:2309.08808.

\end{thebibliography}


\begin{thebibliography}{19}
\providecommand{\natexlab}[1]{#1}
\providecommand{\url}[1]{\texttt{#1}}
\expandafter\ifx\csname urlstyle\endcsname\relax
  \providecommand{\doi}[1]{doi: #1}\else
  \providecommand{\doi}{doi: \begingroup \urlstyle{rm}\Url}\fi

\bibitem[Aliprantis and Border(2006)]{aliprantis2006infinite}
Charalambos~D. Aliprantis and Kim~C. Border.
\newblock \emph{Infinite Dimensional Analysis: A Hitchhiker's Guide}.
\newblock Springer, 3rd edition, 2006.

\bibitem[{American Economic Association}(2026)]{aearegistry2026}
{American Economic Association}.
\newblock Registrations in the {AEA} {RCT} {Registry}.
\newblock Harvard Dataverse, 2026.
\newblock URL \url{https://doi.org/10.7910/DVN/PWNOZZ}.
\newblock April 2026 monthly snapshot.

\bibitem[Athey et~al.(2026)Athey, Chetty, Imbens, and Kang]{athey2024surrogate}
Susan Athey, Raj Chetty, Guido~W. Imbens, and Hyunseung Kang.
\newblock The surrogate index: Combining short-term proxies to estimate
  long-term treatment effects more rapidly and precisely.
\newblock \emph{The Review of Economic Studies}, 93\penalty0 (4):\penalty0
  2284--2312, 2026.
\newblock \doi{10.1093/restud/rdaf087}.

\bibitem[Bahadur and Savage(1956)]{bahadur1956nonexistence}
Raghu~R. Bahadur and Leonard~J. Savage.
\newblock The nonexistence of certain statistical procedures in nonparametric
  problems.
\newblock \emph{The Annals of Mathematical Statistics}, 27\penalty0
  (4):\penalty0 1115--1122, 1956.
\newblock \doi{10.1214/aoms/1177728077}.

\bibitem[Chetty et~al.(2016)Chetty, Hendren, and Katz]{chetty2016effects}
Raj Chetty, Nathaniel Hendren, and Lawrence~F. Katz.
\newblock The effects of exposure to better neighborhoods on children: New
  evidence from the {Moving to Opportunity} experiment.
\newblock \emph{American Economic Review}, 106\penalty0 (4):\penalty0 855--902,
  2016.

\bibitem[Clopper and Pearson(1934)]{clopper1934}
C.~J. Clopper and E.~S. Pearson.
\newblock The use of confidence or fiducial limits illustrated in the case of
  the binomial.
\newblock \emph{Biometrika}, 26\penalty0 (4):\penalty0 404--413, 1934.
\newblock \doi{10.1093/biomet/26.4.404}.

\bibitem[Gart(1970)]{gart1970}
John~J. Gart.
\newblock A locally most powerful test for the symmetric folded binomial
  distribution.
\newblock \emph{Biometrics}, 26\penalty0 (1):\penalty0 129--138, 1970.
\newblock \doi{10.2307/2529049}.

\bibitem[Hahn et~al.(2011)Hahn, Hirano, and Karlan]{hahn2011adaptive}
Jinyong Hahn, Keisuke Hirano, and Dean Karlan.
\newblock Adaptive experimental design using the propensity score.
\newblock \emph{Journal of Business \& Economic Statistics}, 29\penalty0
  (1):\penalty0 96--108, 2011.
\newblock \doi{10.1198/jbes.2009.08161}.

\bibitem[Hoeffding(1963)]{hoeffding1963}
Wassily Hoeffding.
\newblock Probability inequalities for sums of bounded random variables.
\newblock \emph{Journal of the American Statistical Association}, 58\penalty0
  (301):\penalty0 13--30, 1963.
\newblock \doi{10.1080/01621459.1963.10500830}.

\bibitem[Kawato and Sakaguchi(2026)]{kawato2026priorfree}
Kentaro Kawato and Shosei Sakaguchi.
\newblock Prior-free sample size design for test-and-roll experiments, 2026.
\newblock URL \url{https://arxiv.org/abs/2605.02414}.

\bibitem[Kling et~al.(2007)Kling, Liebman, and Katz]{kling2007experimental}
Jeffrey~R. Kling, Jeffrey~B. Liebman, and Lawrence~F. Katz.
\newblock Experimental analysis of neighborhood effects.
\newblock \emph{Econometrica}, 75\penalty0 (1):\penalty0 83--119, 2007.

\bibitem[Lehmann and Romano(2005)]{lehmann2005testing}
Erich~Leo Lehmann and Joseph~P Romano.
\newblock \emph{Testing statistical hypotheses}.
\newblock Springer, third edition, 2005.

\bibitem[Lugosi and Mendelson(2019)]{lugosi2019mean}
G{\'a}bor Lugosi and Shahar Mendelson.
\newblock Mean estimation and regression under heavy-tailed distributions: A
  survey.
\newblock \emph{Foundations of Computational Mathematics}, 19\penalty0
  (5):\penalty0 1145--1190, 2019.
\newblock \doi{10.1007/s10208-019-09427-x}.

\bibitem[Mantel(1970)]{mantel1970}
Nathan Mantel.
\newblock An alternative test for the symmetric folded binomial distribution.
\newblock \emph{Biometrics}, 26\penalty0 (4):\penalty0 848--851, 1970.
\newblock \doi{10.2307/2528731}.

\bibitem[Maurer and Pontil(2009)]{maurer2009empirical}
Andreas Maurer and Massimiliano Pontil.
\newblock Empirical {Bernstein} bounds and sample variance penalization.
\newblock In \emph{Proceedings of the 22nd Annual Conference on Learning
  Theory}, 2009.
\newblock URL \url{https://arxiv.org/abs/0907.3740}.

\bibitem[Neyman(1937)]{neyman1937outline}
Jerzy Neyman.
\newblock Outline of a theory of statistical estimation based on the classical
  theory of probability.
\newblock \emph{Philosophical Transactions of the Royal Society of London.
  Series A, Mathematical and Physical Sciences}, 236\penalty0 (767):\penalty0
  333--380, 1937.
\newblock \doi{10.1098/rsta.1937.0005}.

\bibitem[Porzio and Ragozini(2009)]{porzio2009}
Giovanni~C. Porzio and Giancarlo Ragozini.
\newblock On the stochastic ordering of folded binomials.
\newblock \emph{Statistics \& Probability Letters}, 79\penalty0 (9):\penalty0
  1299--1304, 2009.
\newblock \doi{10.1016/j.spl.2009.01.021}.

\bibitem[Tabord-Meehan(2023)]{tabord2023stratification}
Max Tabord-Meehan.
\newblock Stratification trees for adaptive randomisation in randomised
  controlled trials.
\newblock \emph{Review of Economic Studies}, 90\penalty0 (5):\penalty0
  2646--2673, 2023.
\newblock \doi{10.1093/restud/rdac083}.

\bibitem[Tsybakov(2009)]{tsybakov2009introduction}
Alexandre~B. Tsybakov.
\newblock \emph{Introduction to Nonparametric Estimation}.
\newblock Springer, 2009.

\end{thebibliography}
\end{document}